\newcommand{\N}{\mathbb{N}}
\definecolor{mygray}{gray}{0.8}
\definecolor{light-gray}{gray}{0.9}
\newcommand{\hilight}[1]{\colorbox{light-gray}{#1}}
\newcommand{\mhilight}[1]{\mathrm{\hilight{$#1$}}}
\newcommand{\gjadd}[1]{{#1}}
\newcommand\ex[2]{#1^\mathit{#2}}
\newcommand\ind[1]{\mathcal{I}_{#1}} 
\newcommand\ass[2]{\mathit{assert}(#1\sim #2)}
\newcommand\tick{\mathit{time}}
\newcommand\pav[1]{\mathsf{Vis}_{{P}}(#1)}
\newcommand\oav[1]{\mathsf{Vis}_{{O}}(#1)}
\newcommand\krbs[3]{\mathcal{K}^{#3}\llbracket{#1,#2}\rrbracket}
\newcommand\vrbs[2]{\mathcal{V}^{#2}\llbracket{#1}\rrbracket}
\newcommand{\mapstoi}[1]{\mapsto}
\newcommand\rarr\rightarrow
\newcommand{\amin}[1]{\todo[inline,color=green!10]{AM: #1}}
\newcommand{\boldemph}[1]{\emph{\textbf{#1}}}
\newcommand{\defeq}{\triangleq}
\newcommand{\sep}{~|~}
\newcommand{\dom}[1]{\mathrm{dom}(#1)}
\newcommand\img[1]{\mathrm{img}(#1)}
\newcommand\ctxter[1]{\lesssim_\mathit{ter}^{#1}}
\newcommand\ctxerr[1]{\lesssim_\mathit{err}^{#1}}
\newcommand\ciupre[2]{\lesssim_\mathit{#2}^{#1(\mathit{ciu})}}
\newcommand\ctxpre[2]{\lesssim_\mathit{#2}^{#1}}
\newcommand\ctxequ[2]{\cong_\mathit{#2}^{#1}}
\newcommand{\red}{\rightarrow}
\newcommand{\ered}{\rightarrow}
\newcommand{\subst}[2]{\{#2 / #1 \}}
\newcommand{\substF}[1]{\{#1\}}
\newcommand{\ciuapperr}{\lesssim_\mathit{ciu,err}}
\newcommand{\ciuapprox}{\lesssim_\mathit{ciu}}
\newcommand{\Int}{\mathrm{Int}}
\newcommand{\Unit}{\mathrm{Unit}}
\newcommand{\Bool}{\mathrm{Bool}}
\newcommand{\reftype}[1]{\mathrm{ref} #1}
\newcommand{\conttype}[1]{\mathrm{cont}\ #1}
\newcommand{\unit}{()}
\newcommand{\nb}[1]{\widehat{#1}}
\newcommand{\trueML}{\mathbf{tt}}
\newcommand{\falseML}{\mathbf{ff}}
\newcommand{\fun} [2] {\lambda #1. #2}
\newcommand{\fix} [4] {\mathbf{rec} \ #1^{#2}(#3) .#4}
\newcommand \typingTerm[3] {#1 \vdash #2 : #3 }
\newcommand \seq[2] {#1 \vdash #2}
\newcommand \cseq[3] {#1 \vdash #2 \div #3}
\newcommand{\err}{\mathit{err}}
\newcommand{\ter}{\mathit{ter}}
\newcommand{\opredobs}[1]{\Downarrow_{\mathit{#1}}}
\newcommand{\ltsobs}[2]{\Downarrow_{#1}^{#2}}
\newcommand{\semobs}[2]{\downarrow_{#1}^{#2}}
\newcommand{\opredtererr}{\opredobs{\err}}
\newcommand{\ltstererr}[1]{\ltsobs{\err}{#1}}
\newcommand{\opredter}{\opredobs{\ter}}
\newcommand{\ltster}[1]{\ltsobs{\ter}{#1}}
\newcommand{\semter}[1]{\semobs{\ter}{#1}}
\newcommand\herr{\hat{\err}}
\newcommand \pair [2]{\langle #1,#2\rangle}
\newcommand \tuples [1]{\langle #1\rangle}
\newcommand \ifte [3] {\mathrm{if}\ #1\ #2\ #3}
\newcommand \nuref[1]{\mathrm{ref}\,#1}
\newcommand \nureft[2]{\mathrm{ref}_{#1}\,{#2}}
\newcommand \cont[1]{\mathrm{cont}\,#1}
\newcommand \contt[2]{\mathrm{cont}_{#1}\,{#2}}
\newcommand \throw {\mathrm{throw}}
\newcommand \throwto[2] {\throw \ \mathrm{#1} \ \mathrm{to} \ #2}
\newcommand \throwtot[3]{\throw_{#1} \ \mathrm{#2} \ \mathrm{to} \ #3}
\newcommand \callcc {\mathrm{call/cc}}
\newcommand \callcct[1]{\mathrm{call/cc}_{#1}}
\newcommand\letin[2]{\mathrm{let}\,#1\,\mathrm{in}\,#2}
\newcommand \Heap {\mathrm{Heap}}
\newcommand \BLoc {\mathbf{Loc}}
\newcommand \Var {\mathbf{Var}}
\newcommand{\HOSC}{{\mathrm{HOSC}}}
\newcommand{\HOS}{\mathrm{HOS}}
\newcommand{\GOS}{\mathrm{GOS}}
\newcommand{\GOSC}{\mathrm{GOSC}}
\newcommand{\FOSC}{\mathrm{FOSC}}
\newcommand{\emptymap}{\cdot}
\newcommand{\pmap}{\rightharpoonup}
\newcommand{\emptyheap}{\epsilon}
\newcommand{\Values}{\mathrm{Vals}}
\newcommand{\EContexts}{\mathrm{ECtxs}}
\newcommand{\AVal}[2]{\mathbf{AVal}_{#2}(#1)}
\newcommand\num[1]{\mathsf{Num}(#1)}
\newcommand\fnam[1]{\mathsf{Fun}(#1)}
\newcommand{\FNames}{\mathrm{FNames}}
\newcommand{\CNames}{\mathrm{CNames}}
\newcommand{\Names}{\mathrm{Names}}
\newcommand{\act}{\mathbf{a}}
\newcommand{\ansP}[2]{\bar{#1}(#2)}
\newcommand{\questP}[3]{\bar{#1}(#2,#3)}
\newcommand{\ansO}[2]{#1(#2)}
\newcommand{\questO}[3]{#1(#2,#3)}
\newcommand\perm[3]{#1\sim_{#3} #2}
\newcommand{\emptytrace}{\epsilon}
\newcommand{\tr}{{\tt t}}
\newcommand\errn{\diamond}
\newcommand\tern{\circ}
\newcommand{\finalName}{\tern}
\newcommand\cutout[1]{}
\newcommand\cconf[2]{\mathsf{C}_{#1}^{#2}}
\newcommand{\CC}{\mathbf{C}}
\newcommand{\CCc}{\mathbf{D}}
\newcommand{\conf}[1]{\langle #1\rangle}
\newcommand{\ired}[1]{\xrightarrow{#1}}
\newcommand{\iRed}[1] {\xRightarrow{#1}}
\newcommand{\Tr}[1]{\mathbf{Tr}(#1)}
\newcommand{\TrR}[2]{\mathbf{Tr}_{#1}(#2)}
\newcommand{\Treven}[1]{\mathbf{Tr}^{\mathrm{even}}(#1)}
\newcommand{\TrReven}[2]{\mathbf{Tr}^{\mathrm{even}}_{#1}(#2)}
\newcommand\trsem[2]{{\TrR{#1}{#2}}}
\newcommand\rest[2]{#1\upharpoonright #2}
\newcommand{\orderCont}[1]{\prec_{#1}}
\newcommand{\View}{\mathcal{V}}
\newcommand{\ViewF}{\mathcal{F}}
\def\doublewedge{\mathrel{\wedge\!\!\!\wedge}}
\newcommand{\mergeConf} [2] {#1 \doublewedge #2}
\newcommand\nfp{n_{\mathit{FP}}}
\newcommand\ncp{n_{\mathit{CP}}}
\newcommand\nfo{n_{\mathit{FO}}}
\newcommand\nco{n_{\mathit{CO}}}
\newcommand{\support}{\nu}
\newcommand{\xx}{\mathbf{x}}
\newcommand{\yy}{\mathbf{y}}
\newcommand{\inccwl}{\mathrm{inc}}
\newcommand{\getcwl}{\mathrm{get}}
\newcommand\fpr[1]{\mathit{fpr}_{#1}}
\newcommand\cpr[1]{\mathit{cpr}_{#1}}
\newcommand\foor[1]{\mathit{for}_{#1}}
\newcommand\cor[1]{\mathit{cor}_{#1}}
\newcommand\pr[1]{\mathit{pr}_{#1}}
\newcommand\oor[1]{\mathit{for}_{#1}}
\newcommand\kkr[1]{\mathit{kr}_{#1}}
\newcommand\ccr[1]{\mathit{cr}_{#1}}
\newcommand\topo[1]{\mathit{top}_O(#1)}
\newcommand\topp[1]{\mathit{top}_P(#1)}
\begin{document}
\title{Complete trace models of state and control}
%
%
\author{Guilhem Jaber\inst{1} \and
Andrzej S. Murawski\inst{2}}
\authorrunning{G.~Jaber\and A.~S.~Murawski}
%
\institute{Universit{\'e} de Nantes, LS2N CNRS, Inria, France\and University of Oxford, UK}
\maketitle              
\begin{abstract}
We consider a hierarchy of four typed call-by-value languages with either higher-order or ground-type references and with either $\callcc$ or no control operator.

Our first result is a fully abstract trace model for the most expressive setting, featuring both higher-order references and $\callcc$, constructed in the spirit of operational game semantics.
Next we examine the impact of suppressing higher-order references and callcc in contexts and provide an operational explanation for the game-semantic conditions known as 
visibility and bracketing respectively.
This allows us to refine the original model to provide fully abstract trace models of interaction with contexts that need not use higher-order references or $\callcc$.
Along the way, we discuss the relationship between error- and termination-based contextual testing in each case, and relate the two to trace and complete trace equivalence respectively.

Overall, the paper provides a systematic development of operational game semantics for all four cases, which represent the state-based face of the so-called semantic cube.

\keywords{contextual equivalence, operational game semantics, higher-order references, control operators}  
\end{abstract}
%
%
%

\section{Introduction}

Research into contextual equivalence has a long tradition in programming language theory,
due to its fundamental nature and applicability to numerous verification tasks, such as the correctness of compiler optimisations.
Capturing contextual equivalence mathematically, i.e. the \emph{full abstraction} problem~\cite{Mil77},
has been an important driving force in denotational semantics, which led, among others,  to the development of game semantics~\cite{AJM00,HO00}.
Game semantics models computation through sequences of question- and answer-moves by two players, traditionally called O and P, 
who play the role of the context and the program respectively.
Because of its interactive nature, it has often been referred to as a middle ground between denotational and operational semantics.

Over the last three decades the game-semantic approach has led to numerous fully abstract models
for a whole spectrum of programming paradigms.
Most papers in this strand follow a rather abstract pattern when presenting the models,
emphasing structure and compositionality, often developing a correspondence with a categorical framework along the way to facilitate proofs.
The operational intuitions behind the games are somewhat obscured in this presentation,
and left to be discovered through a deeper exploration of proofs.

In contrast, \emph{operational game semantics} aims to define models in which the interaction between
the term and the environment is described through a carefully instrumented labelled transition system (LTS), 
built using the syntax and operational semantics of the relevant language. Here, the derived trace semantics
can be shown to be fully abstract. In this line of work, the dynamics is described more directly and provides
operational intuitions about the meaning of moves, while not immediately giving structural insights about 
the structure of the traces.

In this paper, we follow the operational approach and present a whole hierarchy of trace models for 
higher-order languages with varying access to higher-order state and control.
As a vehicle for our study, we use $\HOSC$, a call-by-value higher-order language  equipped 
with general references and continuations.
We also consider its sublanguages $\GOSC$, $\HOS$ and $\GOS$, obtained respectively by restricting storage
to ground values, by removing continuations, and by imposing both restrictions.
We study contextual testing of a class of $\HOSC$ terms using
contexts from each of the languages $\xx\in\{\HOSC,\GOSC,\HOS,\GOS\}$; we write $\xx$ to refer to each case.
Our working notion of convergence will be error reachability, where an error is represented by a free variable.
Accordingly, at the technical level,  we will study a family of equivalence relations $\ctxequ{\xx}{err}$,
each corresponding to contextual testing with contexts from $\xx$, where contexts have the extra power
to abort the computation.

Our main results are trace models $\trsem{\xx}{\seq{\Gamma}{M}}$ 
for each $\xx\in\{\HOSC,\GOSC,$ $\HOS,\GOS\}$,  which  capture $\ctxequ{\xx}{err}$ through trace equivalence:
\[
\seq{\Gamma}{M_1\ctxequ{\xx}{err} M_2} \textrm{ if and only if }
\trsem{\xx}{\seq{\Gamma}{M_1}} = \trsem{\xx}{\seq{\Gamma}{M_2}}.
\]
It turns out that, for contexts with control (i.e. $\xx\in\{\HOSC,\GOSC\}$),
$\ctxequ{\xx}{err}$ coincides with the standard notion of contextual equivalence
based on termination,  written $\ctxequ{\xx}{ter}$.
However, in the other two cases, the former is strictly more discriminating than the latter.
We explain how to account for this difference in the trace-based setting,
using \emph{complete} traces.

A common theme that has emerged in game semantics is
the comparative study of the power of contexts, as it  turned out possible to identify combinatorial conditions, namely \emph{visibility}~\cite{AM97b}
and $\emph{bracketing}$~\cite{Lai97}, that correspond to contextual testing in the absence
of general references and control constructs respectively. 
In brief, visibility states that not all moves can be played,
but only those that are enabled by a ``visible part'' of the interaction, which could be thought of as
functions currently in scope.
Bracketing in turn imposes a discipline on answers, requiring that the topmost question be answered first.
In the paper, we provide an operational reconstruction of both conditions.

Overall, we propose a unifying framework for studying higher-order languages with state and control,
which we hope will make the techniques of (operational) game semantics clearer to the wider community.
The construction of  the fully abstract LTSs  is by no means automatic, as  there is no general methodology for
extracting trace semantics from game models. Some attempts in that direction have been reported  in~\cite{LS14},
but the type discipline discussed there is far too weak to be applied to the languages we study. As the most immediate
precursor to our work, we see the trace model of contextual interactions between $\HOS$ contexts and
$\HOS$ terms from~\cite{Lai07}. In comparison, the models developed in this paper are more general,
as they consider the interaction between $\HOSC$ terms and contexts drawn from any of the four languages ranged over by $\xx$.

In the 1990s, Abramsky proposed a research programme, originally called the \emph{semantic cube}~\cite{A97}, 
which concerned investigating extensions of the purely functional programming language PCF along various axes.  
From this angle, the present paper is an operational study of a \emph{semantic diamond} of languages with state,
with $\GOS$ at the bottom, extending towards $\HOSC$ at the top, either via $\GOSC$ or $\HOS$.


\section{$\HOSC$}\label{sec:hosc}

The main objects of our study will be the language  $\HOSC$ along with its fragments $\GOSC$, $\HOS$ and $\GOS$.
$\HOSC$ is a higher-order programming language equipped with general references and continuations.

\paragraph{Syntax}
 \begin{figure}[t]
\[\begin{array}{l}
\begin{array}{@{}l@{}ll}
   \sigma, \tau  & \defeq & 
   \Unit \sep \Int \sep \Bool \sep \reftype{\tau} \sep \tau \times \sigma \sep \tau \rarr \sigma  \sep \conttype{\tau} \\
   U,V & \defeq & 
   \unit \sep \trueML \sep \falseML \sep \nb{n} \sep x \sep \ell  
   \sep \pair{U}{V} \sep \lambda x^\tau.{M}  \sep \fix{y}{}{x^\tau}{M} \sep \contt{\tau}{K}  \\
    M,N & \defeq &
    V\sep  \pair{M}{N}  \sep \pi_i M  \sep M N \sep \nureft{\tau}{M} \sep !M \sep M := N \sep  \ifte{M_1}{M_2}{M_3} 
    \sep M\oplus N \sep  M \boxdot N\\
    && \sep M=N  \sep \callcct{\tau}(x.M) \sep \throwtot{\tau}{M}{N} \\
    K & \defeq & \bullet \sep  \pair{V}{K} \sep \pair{K}{M} \sep \pi_i K 
    \sep V K \sep K M \sep  \nureft{\tau} K \sep !K \sep V := K \sep K := M \sep  \ifte{K}{M}{N}\\
    && \sep K\oplus M \sep V \oplus K\sep  K\boxdot M \sep V \boxdot K \sep K = M \sep V = K \sep
    \throwtot{\tau}{V}{K} \sep  \throwtot{\tau}{K}{M}\\
    C & \defeq & \bullet \sep \pair{M}{C} \sep \pair{C}{M}\sep \pi_i C \sep  \fun{x^\tau}{C} \sep \fix{y}{}{x^\tau}{C} 
    \sep M C \sep C M \sep \nureft{\tau}{C} \sep !C\\
    && \sep C := M \sep M := C\sep  \ifte{C}{M}{N}\sep \ifte{M}{C}{N}\sep \ifte{M}{N}{C} \sep C\oplus M\sep M\oplus C\\
    && \sep C\boxdot M\sep M\boxdot C \sep C=M \sep M=C \sep \callcct{\tau}(x.C) \sep  \throwtot{\tau}{C}{M}\sep \throwtot{\tau}{M}{C}
   \end{array} \\
\end{array}\]

\begin{flushleft}
Notational conventions: \,\, $x,y \in \Var$, \,\,$\ell \in \BLoc$, \,\,$n \in \mathbb{Z}$, \,\,$i\in\{1,2\}$, \,\,$\oplus \in \{+,-,*\}$, \,\,$\boxdot \in \{=,<\}$

Syntactic sugar: $\letin{x=M}{N}$ stands for $(\lambda x.N)M$ (if $x$ does not occur in $N$ we also write $M;N$)
\end{flushleft}
\caption{$\HOSC$ syntax}
\label{fig:def-hosc}
\end{figure}
$\HOSC$ syntax is given in Figure~\ref{fig:def-hosc}. 
Assuming countably infinite sets $\BLoc$ (locations) and $\Var$ (variables),
$\HOSC$ typing judgments take  the form $\seq{\Sigma;\Gamma}{M:\tau}$, where 
$\Sigma$ and $\Gamma$ are finite partial functions that assign types to locations and variables respectively.
We list all the typing rules in the Appendix.
In typing judgements, we often write $\Sigma$ as shorthand for $\Sigma;\emptyset$ (closed)
and $\Gamma$ as shorthand for $\emptyset;\Gamma$ (location-free).
Similarly, $\seq{}{M:\tau}$ means $\seq{\emptyset;\emptyset}{M:\tau}$.

\paragraph{Operational semantics} A heap $h$ is a finite  type-respecting map from $\BLoc$ to values. 
We write $h:(\Sigma;\Gamma)$,  if  $\dom{\Sigma}\subseteq \dom{h}$ and $\seq{\Sigma;\Gamma}{h(\ell):\sigma}$ for $(\ell,\sigma)\in \Sigma$, 
The operational semantics of $\HOSC$ reduces pairs $(M,h)$, where $\seq{\Sigma;\Gamma}{M:\tau}$
and $h:(\Sigma;\Gamma)$. The rules are  given in Figure~\ref{fig:opred},
where $\{\cdot\}$ denotes (capture-avoiding) substitution.
We write $(M,h)\opredter$ if there exist $V,h'$ such that $(M,h)\red^\ast (V,h')$ and $V$ is a value.

\cutout{
\begin{figure}
 \begin{mathpar}
 \inferrule*{ }{\Sigma;\Gamma \vdash \unit :\Unit}
 
 \inferrule*{ }{\typingTerm{\Sigma;\Gamma}{\trueML}{\Bool}}
 
 \inferrule*{ }{\typingTerm{\Sigma;\Gamma}{\falseML}{\Bool}}

 \inferrule*{ }{\Sigma;\Gamma \vdash \nb{n} :\Int}
  
 \inferrule*{(x,\tau) \in \Gamma }{\Sigma;\Gamma \vdash x : \tau}
 
 \inferrule*{(\ell,\tau) \in \Sigma }
  {\Sigma;\Gamma \vdash \ell : \reftype{\tau}}
 
 \inferrule*{\typingTerm{\Sigma;\Gamma}{M}{\sigma} \\ \typingTerm{\Sigma;\Gamma}{N}{\tau}}
  {\typingTerm{\Sigma;\Gamma}{\pair{M}{N}}{\sigma \times \tau}}

 \inferrule*{\typingTerm{\Sigma;\Gamma}{M}{\tau_1 \times \tau_2}}
  {\typingTerm{\Sigma;\Gamma}{\pi_i M}{\tau_i}}
  
 \inferrule*{\Sigma;\Gamma,x:\sigma \vdash M : \tau}
  {\Sigma;\Gamma \vdash \lambda x^\sigma. M : \tau}

 \inferrule*{\Sigma;\Gamma,f:\sigma \rightarrow \tau, x:\sigma \vdash M : \tau}
  {\Sigma;\Gamma \vdash \fix{f}{}{x^\sigma}{M} : \sigma \rightarrow \tau}

 \inferrule*{\Sigma;\Gamma \vdash M : \sigma \rightarrow \tau  \\
   \Sigma;\Gamma \vdash N : \sigma} {\Sigma;\Gamma \vdash M N : \tau}  \\
 
 \inferrule*{\Sigma;\Gamma \vdash M :  \tau }
   {\Sigma;\Gamma \vdash \nureft{\tau}{M} : \reftype{\tau}} 
 
 \inferrule*{\Sigma;\Gamma \vdash M :  \reftype{\tau}}
  {\Sigma;\Gamma \vdash !M : \tau}

 \inferrule*{\Sigma;\Gamma \vdash M : \reftype{\tau} \\ 
  \Sigma;\Gamma \vdash N : \tau}{\Sigma;\Gamma \vdash M := N : \Unit} \\
 
 \inferrule*{\typingTerm{\Sigma;\Gamma}{M_1}{\Bool} \\ \typingTerm{\Sigma;\Gamma}{M_2}{\tau}
             \\ \typingTerm{\Sigma;\Gamma}{M_3}{\tau}}
    {\typingTerm{\Sigma;\Gamma}{\ifte{M_1}{M_2}{M_3}}{\tau}}

 \inferrule*{\typingTerm{\Sigma;\Gamma}{M_1}{\Int} \\
   \typingTerm{\Sigma;\Gamma}{M_2}{\Int}}{\Sigma;\Gamma \vdash M_1 \oplus M_2 :\Int}

 \inferrule*{\typingTerm{\Sigma;\Gamma}{M_1}{\Int} \\
    \typingTerm{\Sigma;\Gamma}{M_2}{\Int}}{\Sigma;\Gamma \vdash M_1 \boxdot M_2 :\Bool}

 \inferrule*{\typingTerm{\Sigma;\Gamma}{M_1}{\reftype{\tau}} \\
    \typingTerm{\Sigma;\Gamma}{M_2}{\reftype\tau}}{\Sigma;\Gamma \vdash M_1 = M_2 :\Bool} \\
     
 \inferrule*{\typingTerm{\Sigma;\Gamma,x:{\tau}}{K[x]}{\sigma}}
   {\Sigma;\Gamma \vdash \contt{\tau}{K} :\conttype{\tau}}
      
 \inferrule*{\typingTerm{\Sigma;\Gamma,x:\conttype{\tau}}{M}{\tau}}
   {\Sigma;\Gamma \vdash \callcct{\tau}{(x.M)} :\tau}
   
 \inferrule*{\typingTerm{\Sigma;\Gamma}{M}{\sigma}
 \\ \typingTerm{\Sigma;\Gamma}{N}{\conttype{\sigma}}}
   {\Sigma;\Gamma \vdash \throwtot{\tau}{M}{N} :\tau} 
 \end{mathpar}
 \caption{$\HOSC$ typing rules}
 \label{fig:typing-rules}
\end{figure}
}

\begin{figure}[t]
$\begin{array}{@{}c}
\begin{array}{l@{}|l}
\begin{array}{@{}l@{}l@{}l}
     (K[(\lambda x^\sigma.{M}) V],h) &  \red  &  (K[M\subst{x}{V}],h)
\\
     (K[\pi_i \pair{V_1}{V_2}],h) &  \red  & (K[V_i],h) 
\\
     (K[\ifte{\trueML}{M_1}{M_2}],h) &  \red  & (K[M_1],h)  
\\
     (K[\ifte{\falseML}{M_1}{M_2}],h) &  \red  & (K[M_2],h)      
\\  
     (K[\widehat{n} \oplus \widehat{m}],h) & \red & (K[\widehat{n \oplus m}],h)
\\     
     (K[\hat{n} \boxdot \hat{m}],h)  &  \red  &  (K[b],h) 
\\
 \multicolumn{3}{l}{\text{with } b = \trueML \text{ if } n \boxdot m, \text{otherwise } b = \falseML}
 \\
   (K[\callcc{(x^\tau.M)}],h) &\red& (K[M\subst{x}{\contt{\tau}{K}}],h)
\end{array} &
\begin{array}{@{}l@{}l@{}l}
     (K[!\ell],h)  &  \red  &  (K[h(\ell)],h)
\\
     (K[\nuref{V}],h)  &  \red  &   (K[\ell],h \cdot [\ell \mapsto V]) 
\\
     (K[\ell := V],h)  &  \red  &  (K[\unit],h[\ell \mapsto V])  
\\
     (K[\ell = \ell'],h)  &  \red  &  (K[b],h) 
\\
 \multicolumn{3}{l}{\text{with } b = \trueML \text{ if } \ell = \ell', \text{otherwise } b = \falseML}
 \\
 \multicolumn{3}{l}{(K[(\underbrace{\fix{y}{}{x^\sigma}{M}}_{U}) V],h) \red (K[M\{V/x,U/y\}],h)}
 \\
\multicolumn{3}{l}{ (K[\throwtot{\tau}{V}{\contt{\tau}{K'}}],h) \red (K'[V],h)}
\end{array}\\
\end{array}\\
\end{array}$
\caption{Operational reduction for $\HOSC$}
\label{fig:opred}
\end{figure}
We distinguish the following fragments of $\HOSC$.
\begin{definition}
\begin{itemize}
\item $\GOSC$ types are $\HOSC$ types except that reference types
are restricted to $\reftype{\iota}$, where $\iota$ is given by the grammar
$\iota \defeq  \Unit \sep \Int \sep \Bool \sep \reftype{\iota}$.
$\GOSC$ terms are  $\HOSC$ terms whose typing derivations (i.e. not only the final typing judgments) rely on $\GOSC$ types only.
$\GOSC$ is a superset of $\FOSC$~\cite{DNB12}, which also includes references to references (the $\reftype{\iota}$ case above).
\item $\HOS$ types are $\HOSC$ types that do not feature the $\mathrm{cont}$ constructor. 
$\HOS$ terms are $\HOSC$ terms whose typing derivations rely on $\HOS$ types only.
Consequently, $\HOS$ terms never have subterms of the form
 $\callcct{\tau}(x.M)$, $\throwtot{\tau}{M}{N}$ or $\contt{\tau}{K}$.
 \item $\GOS$  is the intersection of $\HOS$ and $\GOSC$, both for types and terms, i.e.
there are no continuations and storage is restricted to values of type $\iota$, defined above.
\end{itemize}
\end{definition}
\begin{definition}
Given a $\HOSC$ term $\seq{\Gamma}{M:\tau}$, we refer to  types in $\Gamma$ and $\tau$ as \boldemph{boundary types}.
Let $\xx\in\{\HOSC,\GOSC,\HOS,$ $\GOS\}$.  We say that a $\HOSC$ term $\seq{\Gamma}{M:\tau}$ has an $\xx$ boundary
if all of its boundary types are from $\xx$.
\end{definition}
\begin{remark}
Note that typing derivations of $\HOSC$ terms with an $\xx$ boundary may 
contain arbitrary $\HOSC$ types
as long as the final typing judgment uses types from $\xx$ only.
Consequently, if $\xx\neq \HOSC$, $\HOSC$ terms with an $\xx$ boundary form a strict superset of $\xx$.
\end{remark}
Next we introduce several notions of contextual testing for $\HOSC$-terms, using various kinds of contexts.
For a start, we introduce the classic notion of contextual approximation based on observing termination.
The notions are parameterized by $\xx$, indicating which language is used to build the testing contexts.
We write $\seq{\Gamma}{C:\tau\rarr\tau'}$ if $\seq{\Gamma,x:\tau}{C[x]:\tau'}$, and $\cseq{\Gamma}{C}{\tau}$
if $\seq{\Gamma}{C:\tau\rarr\tau'}$ for some $\tau'$.
\begin{definition}[Contextual Approximation]
Let $\xx\in\{\HOSC,\GOSC,\HOS$, $\GOS\}$.
Given $\HOSC$ terms $\Gamma \vdash M_1,M_2:\tau$ with an $\xx$ boundary,
we define $\Gamma\vdash M_1\ctxter{\xx} M_2$ to hold,
when for all contexts $\cseq{}{C}{\tau}$ built from the syntax of $\xx$,
if $(C[M_1],\emptyheap) \opredter$ then $(C[M_2],\emptyheap) \opredter$.
\end{definition}
We also consider another way of testing, based on observing whether a program can reach a breakpoint (error point) inside a context.
Technically, the breakpoints are represented as occurrences of a special free error variable $\err:\Unit\rarr\Unit$.
Reaching a breakpoint then corresponds to convergence to a stuck configuration of the form $(K[\err()],h)$:
we write $(M,h)\opredtererr$ if there exist $K,h'$ such that $(M,h)\rarr^\ast (K[\err()],h')$. 
\begin{definition}[Contextual Approximation through Error]
Let $\xx \in \{ \HOSC,$ $\FOSC,$ $\HOS,$ $\GOS\}$.
Given $\HOSC$ terms $\Gamma \vdash M_1,M_2:\tau$  with an $\xx$ boundary and $\err\not\in\dom{\Gamma}$,
we define $\Gamma\vdash M_1\ctxerr{\xx} M_2$ to hold,
when for all contexts $\cseq{\err:\Unit\rarr\Unit}{C}{\tau}$ {built from $\xx$-syntax},
if $(C[M_1],\emptyheap) \opredtererr$ then $(C[M_2],\emptyheap) \opredtererr$.
\end{definition}
For the languages in question, it will turn out that 
$\ctxerr{\xx}$ is at least as discriminating as $\ctxter{\xx}$ for each  $\xx\in\{\HOSC,\GOSC,\HOS,\GOS\}$,
and that they coincide for $\xx\in\{\HOSC,\GOSC\}$.
We will write $\ctxequ{\xx}{\err}$ and $\ctxequ{\xx}{\ter}$ for the associated equivalence relations.

For higher-order languages with state and control, it is well known that
contextual testing can be restricted to evaluation contexts
after instantiating the free variables of terms to closed values (the so-called \emph{closed instances of use}, CIU).
Let us write $\seq{\Sigma,\Gamma'}{\gamma:\Gamma}$ 
for substitutions $\gamma$ such that, for any $(x,\sigma_x)\in\Gamma$,
the term $\gamma(x)$ is a value satisfying $\seq{\Sigma;\Gamma'}{\gamma(x):\sigma_x}$.
Then $M\substF{\gamma}$ stands for the outcome of applying $\gamma$ to $M$.
\begin{definition}[CIU Approximation]
Let $\xx\in\{\HOSC,\GOSC,\HOS,\GOS\}$ and
let $\seq{\Gamma}{M_1,M_2:\tau}$ be $\HOSC$ terms with an $\xx$ boundary.
\begin{itemize}
\item $\Gamma \vdash M_1 \ciupre{\xx}{\ter} M_2:\tau$, when for all $\Sigma,h, K,\gamma$, all built from $\xx$ syntax,
 such that  $h:\Sigma$, $\cseq{\Sigma}{K}{\tau}$,
 and  $\seq{\Sigma}{\gamma:\Gamma}$,
 we have $(K[M_1\substF{\gamma}],h) \opredter$ implies $(K[M_2\substF{\gamma}],h) \opredter$.
\item We write $\Gamma \vdash M_1 \ciupre{\xx}{\err} M_2:\tau$,
when for all $\Sigma,h, K,\gamma$, all built from $\xx$ syntax, 
such that  $h:\Sigma;\hat{\err}$,\, $\cseq{\Sigma;{\hat{\err}}}{K}{\tau}$,\,
 and  $\seq{\Sigma;\hat{\err}}{\gamma:\Gamma}$,
 we have $(K[M_1\substF{\gamma}],h) \opredtererr$ implies $(K[M_2\substF{\gamma}],h) \opredtererr$,
 where $\err\not\in\dom{\Gamma}$ and $\hat{\err}$ stands for $\err:\Unit\rarr\Unit$.
 \end{itemize}
\end{definition}
Results stating that ``CIU tests suffice'' are referred to as CIU lemmas.
A general framework for obtaining such results for higher-order languages  with effects was developed in~\cite{HMST95,Tal98}.
The results stated therein are for termination-based testing, i.e. $\opredter$, but adapting them to $\opredtererr$ is not problematic.
\begin{lemma}[CIU Lemma]\label{lem:ciu}
Let $\xx\in\{\HOSC,\GOSC,\HOS,\GOS\}$ and $\yy\in \{\ter,\err\}$. Then we have
$\seq{\Gamma}{M_1\ctxpre{\xx}{\yy} M_2}$ iff $\seq{\Gamma}{M_1 \ciupre{\xx}{\yy} M_2}$.
\end{lemma}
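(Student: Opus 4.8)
The plan is to prove the two inclusions separately. One direction, namely $\ciupre{\xx}{\yy}\ \Rightarrow\ \ctxpre{\xx}{\yy}$, is the substantive one; the other, $\ctxpre{\xx}{\yy}\ \Rightarrow\ \ciupre{\xx}{\yy}$, is essentially immediate, since a CIU test $(K[M\substF{\gamma}],h)$ can be simulated by a single general context. Concretely, for the easy direction I would, given $\Sigma$, $h$, $K$, $\gamma$ built from $\xx$-syntax, construct an $\xx$-context $C$ that first allocates the locations of $\dom{\Sigma}$, initialises them according to $h$ (using the fact that $h$ is itself built from $\xx$-syntax, so each $h(\ell)$ is an $\xx$-value up to the location renaming), substitutes $\gamma$ for the free variables of $\Gamma$, and finally plugs the hole into $K$. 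One has to be slightly careful that heap values may be mutually recursive through the store, so the allocation must be done in two passes (first allocate all cells with dummy contents of the right type, then back-patch), but this is routine. With such a $C$ one gets $(C[M_i],\emptyheap)\red^\ast (K[M_i\substF{\gamma}],h)$ for $i=1,2$, whence $\ctxpre{\xx}{\yy}$ gives the CIU conclusion.

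For the hard direction, the idea is \emph{not} to reprove a CIU lemma from scratch but to invoke the existing machinery of~\cite{HMST95,Tal98}: their framework yields, for a higher-order language with the relevant effects (here: general references and first-class continuations) and for termination-based testing $\opredter$, that contextual approximation coincides with CIU approximation, provided one checks the standard ``compatibility/congruence'' conditions on the operational semantics. So the first step is to instantiate that framework to $\HOSC$ (the largest language), verifying that the reduction rules of Figure~\ref{fig:opred} fit the required format — in particular that evaluation contexts compose, that reduction is deterministic and closed under the expected context/substitution lemmas, and that the applicative-bisimulation-style or logical-relations-style argument underpinning the CIU lemma goes through. The second step is to transfer this from $\HOSC$ to each sublanguage $\xx\in\{\GOSC,\HOS,\GOS\}$: here the key observation is that the terms $M_1,M_2$ are always $\HOSC$ terms (only with an $\xx$ boundary), and only the \emph{testing material} — contexts, heaps, continuations, substitutions — is drawn from $\xx$. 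Thus the CIU lemma is needed ``relative to'' the class of $\xx$-contexts, and one must check that the constructions in the proof (decomposing a context, performing CIU reductions, re-assembling) stay within $\xx$-syntax. Since $\xx$ is closed under the term formers and evaluation-context formers it contains, and since the boundary-type discipline is exactly what guarantees that an $\xx$-context can legally be composed with a $\HOSC$ term of that boundary type, this closure goes through. The third step is to adapt everything from $\opredter$ to $\opredtererr$: as the excerpt already notes, this is unproblematic — one simply replaces ``evaluates to a value'' by ``reaches a stuck configuration $K[\err()]$'' as the observable, treats $\err$ as an extra free variable of type $\Unit\rarr\Unit$ that is never instantiated, and checks that the new observable is still invariant under the same context and substitution manipulations (it is, because $K[\err()]$ is stuck and $\err\notin\dom{\Gamma}$, so substituting $\gamma$ or plugging into an evaluation context cannot destroy or create such a configuration except in the intended way).

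The main obstacle I anticipate is the bookkeeping in the second step: making precise the sense in which the cited CIU lemma applies when the \emph{terms} live in $\HOSC$ but the \emph{contexts} are constrained to a sublanguage $\xx$. The framework of~\cite{HMST95,Tal98} is naturally stated for a single language, so one needs to argue either (a) that the relevant proof is parametric in the set of allowed contexts as long as that set is closed under the operations used in the proof, or (b) that one can simply run the $\HOSC$-level CIU argument and observe that all contexts/heaps/substitutions it produces from $\xx$-data remain $\xx$-data. Option (b) is cleaner and is what I would pursue; the one genuinely delicate point is that the ``deep'' typing derivations of $\HOSC$ terms with an $\xx$ boundary may use non-$\xx$ types internally (as the Remark stresses), so one must be sure that the CIU constructions never inspect those internal types — only boundary types — which is indeed the case, since CIU reductions only ever substitute $\xx$-values at the $\Gamma$-variables and plug into $\xx$-evaluation-contexts at the boundary type $\tau$. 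Once this is pinned down, the remainder is a matter of citing the existing results and performing the $\err$-adaptation, both of which are routine.
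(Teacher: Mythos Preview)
Your proposal is correct and follows essentially the same route as the paper. The paper, like you, handles the $\ctxpre{\xx}{\yy}\Rightarrow\ciupre{\xx}{\yy}$ direction by observing that CIU tests are a special case of contextual tests, and handles the converse by establishing that $\ciupre{\xx}{\yy}$ is a precongruence with respect to $\xx$-contexts, following the pattern of~\cite{HMST95,Tal98}.

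The one difference worth noting is that the paper does not treat the cited framework as a black box to be instantiated at $\HOSC$ and then ``transferred'' to sublanguages. Instead, it re-proves the precongruence directly and \emph{parametrically} in $\xx$ and $\yy$: it gives separate compatibility lemmas (Op CIU, Lambda CIU, fix CIU, $\callcc$ CIU), each stated for $\ciupre{\xx}{\yy}$ and each proved by a short case analysis/induction on reduction length, and then assembles these into the precongruence by induction on the structure of the $\xx$-context $C$. This is exactly your option~(b) made explicit: since $C$ is built from $\xx$-formers only, one only ever invokes the compatibility lemma for those formers, and the asymmetry (terms in $\HOSC$, contexts in $\xx$) never causes trouble because the inductive argument manipulates only the context side. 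Your anticipated ``main obstacle'' thus dissolves once you commit to re-running the argument rather than citing it; the paper's presentation makes that commitment from the start, which also makes the $\err$-adaptation immediate (the same case analyses go through with the extra stuck form $K[\err()]$).
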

The preorders $\ctxpre{\xx}{\err}$ will be the central object of study in the paper.
Among others, we shall provide their alternative characterizations 
using trace semantics.
The characterizations will apply to a class of terms that we call \emph{cr-free}.
\begin{definition}
A $\HOSC$ term $\seq{\Gamma}{M:\tau}$ is \boldemph{cr-free} if it does not contain occurrences of
$\contt{\sigma}{K}$ and locations, and its boundary types  are $\mathrm{cont}$- and $\mathrm{ref}$-free.
\end{definition}
We stress that the boundary restriction applies to $\Gamma$ and $\tau$ only, and
subterms of $M$ may well contain arbitrary $\HOSC$ types and occurrences  of 
$\mathrm{ref}_{\sigma}$, $\callcct{\sigma}$, $\mathrm{throw}_\sigma$ for any $\sigma$.
The majority of $\HOSC/\GOSC/\HOS/\GOS$ examples studied in the literature, e.g.~\cite{PS98,ADR09,DNB12}, are actually cr-free.
We will revisit some of them as Examples~\ref{ex:callback}, \ref{ex:counter}, \ref{ex:bracket}.
The fact that cr-free terms may not contain subterms $\contt{\tau}{K}$ or $\ell$ is not really a restriction, as
$\contt{\tau}{K}$ and $\ell$ being more of a run-time construct than a feature meant to be used directly by programmers.
Finally, we note that the boundary of a cr-free term is an $\xx$ boundary for any $\xx\in\{\HOSC,\GOSC,\HOS,\GOS\}$.
Thus, we can consider approximation between cr-terms for 
any $\xx$ from the range, i.e. the notions $\ctxpre{\xx}{\err}$, $\ctxpre{\xx}{\ter}$ are all applicable.
Consequently, cr-free terms provide a common setting in which the discriminating power of $\HOSC,\GOSC,\HOS$ and $\GOS$ contexts
can be compared. We discuss the scope for extending our results outside of the cr-free fragment,
and for richer type systems, in Section~\ref{sec:extensions}.


\section{HOSC[HOSC]}\label{sec:hoschosc}

Recall that $\ctxpre{\HOSC}{err}$ concerns testing $\HOSC$ terms with $\HOSC$ contexts.
Accordingly, we call this  case $\HOSC[\HOSC]$. 
For $\cont_{\sigma}(K)$-free terms, we show that
$\ctxerr{\HOSC}$ and $\ctxter{\HOSC}$ coincide, which follows from the lemma below.
\begin{lemma}\label{lem:ctxerr}
Suppose $\seq{\Gamma}{M_1, M_2}$ be $\HOSC$ terms not containing any occurrences of $\cont_{\tau}(K)$.
\begin{enumerate}
\item $\seq{\Gamma}{M_1\ctxerr{\xx} M_2}$ implies $\seq{\Gamma}{M_1\ctxter{\xx} M_2}$, for $\xx\in\{\HOSC,\GOSC, \HOS,\GOS\}$.
\item $\seq{\Gamma}{M_1\ctxter{\xx} M_2}$ implies $\seq{\Gamma}{M_1\ctxerr{\xx} M_2}$, for $\xx\in\{\HOSC,\GOSC\}$.
\end{enumerate}
\end{lemma}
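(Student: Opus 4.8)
The plan is to derive both implications from the CIU Lemma (Lemma~\ref{lem:ciu}), so that in each case it suffices to compare behaviour under evaluation contexts $K$, closed value substitutions $\gamma$, and heaps $h$ built from the relevant syntax, rather than arbitrary contexts $C$. For part (1), I would argue contrapositively: assume $\seq{\Gamma}{M_1\not\ctxter{\xx}M_2}$, so by the CIU Lemma there are $\Sigma,h,K,\gamma$ from $\xx$ with $(K[M_1\substF\gamma],h)\opredter$ but $(K[M_2\substF\gamma],h)\not\opredter$. The key device is to instrument $K$ with the error variable: since $\err:\Unit\rarr\Unit$ is available in the error-testing setting, I replace $K$ by the context $K' \defeq (\lambda z^{\tau'}.\err())\,K$ (where $\tau'$ is the return type of $K$), i.e. I run $K[\,-\,]$ to a value and then fire $\err()$. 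Then $(K'[N\substF\gamma],h)\opredtererr$ if and only if $(K[N\substF\gamma],h)\opredter$, because the only way to reach a stuck $\err()$ configuration is to first drive $K[N\substF\gamma]$ to a value — here I use crucially that $N\in\{M_1,M_2\}$ contains no $\cont_\tau(K)$ subterms and (being a legitimate source term) no other occurrences of $\err$, so no spurious error can be reached, and no captured continuation can smuggle control out of $K'$ in a way that bypasses the wrapper. This yields $(K'[M_1\substF\gamma],h)\opredtererr$ and $(K'[M_2\substF\gamma],h)\not\opredtererr$, so by the CIU Lemma again $\seq{\Gamma}{M_1\not\ctxerr{\xx}M_2}$. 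Since $K'$ uses only $\xx$-syntax plus the $\err$ variable (which is allowed in error-contexts for every $\xx$ in the stated range), this works uniformly for $\xx\in\{\HOSC,\GOSC,\HOS,\GOS\}$.

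For part (2), again by the CIU Lemma I start from $\seq{\Gamma}{M_1\not\ctxerr{\xx}M_2}$, obtaining $\Sigma,h,K,\gamma$ from $\xx$ (now also mentioning $\err$) with $(K[M_1\substF\gamma],h)\opredtererr$ but $(K[M_2\substF\gamma],h)\not\opredtererr$. The goal is to turn "reaching $K'[\err()]$" into "terminating with a value". Here is where $\xx\in\{\HOSC,\GOSC\}$ matters: using $\callcc$ I capture, at the very top, a continuation that discards its argument and returns a fixed value, and I bind $\err$ to the function that throws to that continuation. Concretely, I would test $M_i$ with the context $C \defeq \callcct{\tau''}(k.\, (\lambda \err^{\Unit\rarr\Unit}. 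K[\,-\,])\,(\lambda z^\Unit.\throwtot{\Unit}{\unit}{k}))$, choosing the answer type $\tau''$ so that reaching $\err()$ inside $K$ throws control back to $k$ and the whole term converges to $\unit$; conversely, if $K[M_i\substF\gamma]$ never reaches $\err()$, then the wrapper behaves exactly as the plain run, so convergence of the instrumented term forces $(K[M_i\substF\gamma],h)\opredtererr$ (using once more that $M_i$ is $\cont$-free and $\err$-free, so the only occurrences of $\err$ are the ones we introduced). This gives $(C[M_1\substF\gamma],h)\opredter$ and $(C[M_2\substF\gamma],h)\not\opredter$, and the CIU Lemma delivers $\seq{\Gamma}{M_1\not\ctxter{\xx}M_2}$; note $C$ stays within $\GOSC$-syntax when $K,h,\gamma$ do, since $\callcc$ and $\throw$ at type $\Unit$ involve no higher-order storage.

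The main obstacle is making precise the "only way to get stuck on $\err()$ is via the wrapper" claim — i.e. that the instrumentation in each direction is sound and complete with respect to the original observation. This requires a small operational lemma: for a $\cont_\tau(K)$-free, $\err$-free term $M$, every reduction sequence of an instrumented configuration either mirrors a reduction of the uninstrumented one or reaches the wrapper in the intended way; the $\cont$-freeness hypothesis is exactly what prevents a captured continuation stored earlier from re-entering $K$ after the wrapper has been passed, and prevents control escaping the $\callcc$-delimited region in part (2). I would discharge this by a routine induction on reduction length, tracking the shape of the evaluation context and the set of locations/continuations reachable in the heap, and observing that fresh continuation names introduced by the wrapper are the only ones that can transfer control across the instrumentation boundary.
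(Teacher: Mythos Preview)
Your overall strategy---argue by contraposition and instrument the testing context to convert one observation into the other---is the paper's. But two steps do not go through as written.

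In part~(2), you never rule out the possibility that $(K[M_2\substF\gamma],h)\not\opredtererr$ holds because $(K[M_2\substF\gamma],h)$ \emph{terminates normally} rather than diverges. In that case your wrapper $\callcc(k.\,(\lambda\err.K[\bullet])(\lambda z.\throwto{()}{k}))$ also terminates to a value, so you do not obtain $(C[M_2\substF\gamma],h)\not\opredter$. The paper's construction appends a divergent tail,
\[
C'[\bullet]\;=\;\callcc\bigl(y.\;C_{;\Omega}[\bullet]\{\,(\lambda z.\throwto{()}{y})/\err\,\};\,\Omega\bigr),
\]
so that the only route to convergence is via the $\err$-triggered throw.

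In both parts, instrumenting only the evaluation context $K$ is not enough: the substitution $\gamma$ and the heap $h$ may themselves contain values $\cont_\sigma(K'')$ (the $\cont$-freeness hypothesis constrains $M_1,M_2$, not the test data supplied by the CIU lemma), and a throw to such a value escapes your wrapper entirely. Concretely, in part~(1) this breaks the implication $(K[M_1\substF\gamma],h)\opredter\Rightarrow(K'[M_1\substF\gamma],h)\opredtererr$, and in part~(2) it lets $C[M_2\substF\gamma]$ terminate outside the $\callcc$-delimited region. The paper works at the level of general contexts $C$ and applies a global rewrite $C_{;\err}$ (resp.\ $C_{;\Omega}$) that replaces every occurrence of $\cont_\sigma(-)$ in $C$ by $\cont_\sigma(-;\err)$ (resp.\ $\cont_\sigma(-;\Omega)$); the $\cont$-freeness of $M_i$ is needed precisely because this rewrite cannot reach inside the hole. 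If you prefer to stay in CIU form, you must apply the analogous rewrite to $K$, $\gamma$ and $h$ simultaneously---and in part~(2) also substitute for $\err$ throughout all three, not just in $K$.
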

In what follows, after introducing several preliminary notions,
we shall design a labelled transition system (LTS) whose traces will turn out 
to capture contextual interactions involved  in testing cr-free terms according to $\ctxpre{\HOSC}{err}$.
This will enable us to capture $\ctxpre{\HOSC}{err}$ via trace inclusion.
Actions of the LTS will  refer to functions and continuations in a symbolic way, using typed names.

\subsection{Names and abstract values}
\begin{definition}
Let $\FNames=\biguplus_{\sigma,\sigma'}\FNames_{\sigma\rarr\sigma'}$ 
be the set of \boldemph{function names}, partitioned into mutually disjoint countably infinite
sets  $\FNames_{\sigma\rarr\sigma'}$.
We will use $f,g$ to range over $\FNames$ , and write $f:\sigma\rarr\sigma'$ for $f\in\FNames_{\sigma\rarr\sigma'}$. 

Analogously, let $\CNames =\biguplus_{\sigma} \CNames_{\sigma}$  be the set of \boldemph{continuation names}.
We will use $c,d$ to range over $\CNames$, and write $c:\sigma$ for $c\in \CNames_\sigma$.
Note that the constants represent continuations, so the ``real'' type of $c$ is $\conttype{\sigma}$, but we write
$c:\sigma$ for the sake of brevity.
We assume that $\CNames,\FNames$ are disjoint and let  $\Names=\FNames \uplus \CNames$.
Elements of $\Names$ will be weaved into various constructions in the paper, e.g. terms, heaps, etc.
We will then write $\nu(X)$ to refer to the set of names used in some entity $X$.
 \end{definition}
 Because of the shape of boundary types in cr-free terms and, in particular, the presence of product types, 
 the  values that will be exchanged between the context and the program take the form of 
 tuples consisting of $()$, integers, booleans and functions.
 To describe such scenarios, we introduce the notion of \boldemph{abstract values}, which are patterns 
 that match such values. Abstract values are generated by the grammar
 \[
 A,B \defeq \unit \sep \trueML \sep \falseML \sep \nb{n} \sep f \sep \pair{A}{B} 
 \]
 with the proviso that,  in any abstract value,  a name may occur at most once.
 As function names are intrinsically typed, we can assign types to abstract values in the obvious way, writing $A:\tau$.

\subsection{Actions and traces}

Our LTS will be based on  four kinds of actions, listed below.
Each action will be equipped with a \boldemph{polarity}, which is either Player (P) or Opponent (O).
P-actions describing interaction steps made by a tested term, while O-actions involve the context.
\begin{itemize}
 \item \boldemph{Player Answer} (PA) $\ansP{c}{A}$, where $c: {\sigma}$ and $A :{\sigma}$.
 This action corresponds to the term sending  an abstract value $A$ through a continuation name $c$. 
 \item \boldemph{Player Question} (PQ) $\questP{f}{A}{c}$, where 
 $f: {\sigma \rightarrow \sigma'}$, $A :{\sigma}$ and $c:\sigma'$.
Here, an abstract value $A$  and a continuation name $c$ are sent by the term through a function name $f$. 
 \item \boldemph{Opponent Answer} (OA) $\ansO{c}{A}$, $c: {\sigma}$ then $A:{\sigma}$.
 In this case, an abstract value $A$ is received from the environment via 
 the continuation name $c$.
 
 \item \boldemph{Opponent Question} (OQ) $\questO{f}{A}{c}$, where
  $f : {\sigma \rightarrow \sigma'}$, $A:{\sigma}$ and $c: {\sigma'}$.
 Finally, this action corresponds to receiving an abstract value $A$ and a continuation name $c$ from the environment   through a function name $f$.
\end{itemize}
In what follows, $\act$ is used to range over actions.
 We will say that a name is \boldemph{introduced} by an action $\act$ if it is sent or received in $\act$.
 If $\act$ is an O-action (resp. P-action), we say that the name was introduced by O (resp. P). 
 An action $\act$ is \boldemph{justified} by another action $\act'$ if the name that $\act$ uses to 
 communicate, i.e. $f$ in questions ($\questP{f}{A}{c}$, $\questO{f}{A}{c}$) 
 and $c$ in answers ($\ansP{c}{A}$, $\ansO{c}{A}$),  has been introduced by $\act'$.
 
We will work with sequences of actions of a very special shape, specified below. 
The definition assumes two given sets of names, $N_P$ and $N_O$,
which represent names that have already been introduced by P and O respectively.
 \begin{definition}
Let $N_O, N_P\subseteq \Names$.  An  $(N_O,N_P)$-\boldemph{trace} is a sequence $t$ of actions such that:
 \begin{itemize}
  \item the actions alternate between Player and Opponent actions;
  \item no name is introduced twice; 
  \item names from $N_O, N_P$ need no introduction;
    \item if an action  $\act$ uses a name to communicate then 
    \begin{itemize}
    \item $\act=\questP{f}{A}{c}$ ($f\in N_O$) or $\act=\ansP{c}{A}$ ($c\in N_O$) or 
    $\act=\questO{f}{A}{c}$ ($f\in N_P$) or $\act=\ansO{c}{A}$ ($c\in N_P$) or
    \item the name has been introduced by an earlier action $\act'$ of opposite polarity.
 \end{itemize}
 \end{itemize}
\end{definition}
Note that, due to the shape of actions, a continuation name can only be introduced/justified by a question.
Moreover, because names are never introduced twice, if $\act'$ justifies $\act$ then $\act'$ is uniquely determined in a given trace.
Readers familiar with game semantics will recognize that traces are very similar
to alternating justified sequences except that traces need not be started by O.
\begin{example}\label{ex:trace}
Let $(N_O,N_P)=(\{c\},\emptyset)$ where $c:\tau = ((\Unit\rarr\Unit)\rarr\Unit)\times(\Unit\rarr\Int)$. Then the following sequence is an
$(N_O,N_P)$-trace:
\[
\tr_1 =\ansP{c}{\pair{g_1}{g_2}}\,\,\,
\questO{g_1}{f_1}{c_1}\,\,\,
\questP{f_1}{()}{c_2}\,\,\,
\ansO{c_2}{()}\,\,\,
\ansP{c_1}{()}\,\,\,
\ansO{c_2}{()}\,\,\,
\ansP{c_1}{()}\,\,\,
\questO{g_2}{()}{c_3}\,\,\,
\ansP{c_3}{2}
\]
where $g_1:  (\Unit\rarr\Unit)\rarr\Unit$, $g_2:\Unit\rarr\Int$, $f_1:\Unit\rarr\Unit$, $c_1,c_2:\Unit$, $c_3:\Int$.
\end{example}


\subsection{Extended syntax and reduction}

We extend the definition of $\HOSC$ presented in Figure~\ref{fig:opred}
to take into account these names.
We refine the operational reduction using continuation names to keep track
of the toplevel continuation.
We list all the changes below.
\begin{itemize}
 \item Function names are added to the syntax as \emph{constants}. 
 Since they are meant to represent values, they are also considered to be syntactic values in the extended language.
{ \[
\frac{f\in\FNames_{\sigma\rarr\sigma'}}{\seq{\Sigma;\Gamma}{f:\sigma\rarr\sigma'}}
\]}
  \item Continuation names are \emph{not}  terms on their own. Instead,
  they are built into  the syntax via a new construct $\contt{\sigma}{(K,c)}$, 
 { subject to the following typing rule.
 \[
 \frac{\seq{\Sigma;\Gamma}{K:\sigma\rarr\sigma'}\quad c\in \CNames_{\sigma'}}{
 \seq{\Sigma;\Gamma}{\contt{\sigma}{(K,c)}:\conttype{\sigma}}}
 \]}
$\contt{\sigma}{(K,c)}$ is a staged continuation that first evaluates terms inside $K$
and, if this produces a value, the value is passed to $c$. 
This operational meaning will be implemented  through a suitable reduction rule,
to be discussed next. $\contt{\sigma}{(K,c)}$ is also regarded as a value.
Note that we remove the old construct $\contt{\sigma}{K}$ from the extended syntax.
\item The operational semantics $\ered$ underpinning the LTS is based on triples 
$(M,c,h)$ such that $\Sigma;\Gamma \vdash M:\sigma$, $c\in\CNames_\sigma$ and $h:\Sigma$.
The continuation name $c$ is used to represent the surrounding context, which is left abstract.
The previous operational rules $\red$ are embedded into the new reduction $\ered$ using the rule below.
 \[\begin{array}{l}
 \inferrule*{(M,h) \red (M',h')}{(M,c,h) \ered (M',c,h')}
  \end{array}\]
The two reduction rules related to continuations, previously used to define $\red$,
are \emph{not} included. Instead we use the following rules, which take advantage of the extended syntax.
\[\begin{array}{rcl}
 (K[\callcct{\tau}{(x.M)}],c,h) &\ered& (K[M\subst{x}{\contt{\tau}{(K,c)}}],c, h) \\
  (K[\throwtot{\tau}{V}{\contt{\tau}{(K',c')}}],c,h) &\ered& (K'[V],c',h)
\end{array}\]
\end{itemize}


\subsection{Configurations}

We write $\Values$ for the extended set of syntactic values, i.e. $\FNames\subseteq \Values$.
Let $\EContexts$ stand for the set of extended evaluation contexts, defined as $K$ in Figure~\ref{fig:def-hosc} taking the extended definition of values into account.
Before defining the transition relation of our LTS, we discuss the shape of configurations, providing intuitions behind each component.

\boldemph{Passive configurations} take the form $\conf{\gamma,\xi,\phi,h}$ and are meant to represent stages at which the environment is to make a move.
\begin{itemize}
\item $\gamma:(\FNames \pmap \Values) \uplus (\CNames \pmap \EContexts)$ is a finite map.  It will play the role of an environment that
relates function names communicated to the environment (i.e. those introduced by P) to syntactic values,
and continuation names introduced by P to evaluation contexts.
\item $\xi:(\CNames \pmap \CNames)$ is a finite map. It complements the role of $\gamma$ for continuation names and indicates the continuation to which the outcome
of applying $\gamma(c)$ should be passed.
\item $\phi\subseteq\Names$. The set $\phi$ will be used to collect all the names used in the interaction, regardless of which participant introduced them. 
Following our description above, those introduced by O will correspond to $\phi\setminus\dom{\gamma}$.
\end{itemize}
The components satisfy healthiness conditions, implied by their role in the system. Let $\Sigma=\dom{h}$.
\begin{itemize}
\item If $f:\dom{\gamma}\cap \FNames_{\sigma\rarr\sigma'}$ then $\gamma(f)$ is a value such that $\seq{\Sigma}{\gamma(f):\sigma\rarr\sigma'}$.
\item $\dom{\xi}=\dom{\gamma}\cap \CNames$.
\item If $c:\dom{\gamma}\cap \CNames_{\sigma}$ and $\seq{\Sigma}{\gamma(c):\sigma\rarr\sigma'}$ then
$\xi(c)\in\CNames_{\sigma'}$.
\item Finally, names introduced by the environment and communicated to the program may end up in the environments and the heap:
$\nu(\img{\gamma}), \nu(\img{\xi})$, $\nu(\img{h})\subseteq \phi\setminus \dom{\gamma}$.
\end{itemize}
\boldemph{Active configurations} take the form $\conf{M,c,\gamma,\xi,\phi,h}$ and represent interaction steps of the term.
The $\gamma,\xi,\phi,h$ components have already been described above. For $M$ and $c$, given $\Sigma=\dom{h}$, we will have
$\seq{\Sigma;\emptyset}{M:\sigma}$, $c\in \CNames_\sigma$ and $\nu(M)\cup\{c\}\subseteq \phi\setminus\dom{\gamma}$.

\subsection{Transitions}

\cutout{\begin{itemize}
\item $\sem{\tau}$ is the set of 
the pairs $(A,\phi)$, where $A$ is an abstract value of type $\tau$ and $\phi$
consists of all function names occurring in $A$.
  \[\begin{array}{l}
   \sem{\Unit} \defeq  \{(\unit,\varnothing)\} \qquad
   \sem{\Bool} \defeq \{(\trueML,\varnothing),(\falseML,\varnothing)\} \qquad
   \sem{\Int}  \defeq  \{(\nb{n},\varnothing) \sep n \in \mathbb{Z}\} \\
   \sem{\tau \rarr \sigma}  \defeq  \{(f,\{f\}) \sep f \in \FNames_{\tau \rarr \sigma}\} \\
   \sem{\tau \times \sigma}  \defeq  \{(\pair{A_1}{A_2},\phi_1 \cup \phi_1) \sep 
   (A_1,\phi_1) \in \sem{\tau},\, (A_2,\phi_2) \in \sem{\sigma},\, \phi_1 \cap \phi_2 = \varnothing\} \\
  \end{array}
 \]
 
\item  $\sem{\Gamma}$ contains  triples of the form $(A,\gamma,\phi)$, 
 where $A$ is an abstract value,
 $\gamma$ is a substitution of abstract values for each variable in $\dom{\Gamma}$,
 and $\phi$ consists of function names in $A$. 
 \[
  \begin{array}{lll}
 \sem{\emptymap} & \defeq & \{((),\emptymap,\varnothing)\} \\
 \sem{(x:\tau),\Gamma} & \defeq & \{\pair{A}{\vec{A}},\gamma \cdot [x \mapsto A],\phi \cup \phi' 
  \sep (A,\phi) \in \sem{\tau}, (\vec{A},\gamma,\phi') \in \sem{\Gamma}, \phi \cap \phi' = \varnothing\}
  \end{array}
\]
\end{itemize}}
Observe that any closed value $V$ of a $\mathrm{cont}$- and $\mathrm{ref}$-free  type $\sigma$ can be decomposed into an abstract value $A$ (pattern)
 and the corresponding substitution $\gamma$ (matching). The set of all such decompositions,
 written  $\AVal{V}{\sigma}$, is defined below.
 Given a value $V$ of  a (cr-free) type $\sigma$, $\AVal{V}{\sigma}$ contains all pairs $(A,\gamma)$ such that
  $A$ is an abstract value and $\gamma:\nu(A)\rarr\Values$ is a substitution such that $A\{\gamma\} = V$.
More concretely,
\[
\begin{array}{lll}
 \AVal{V}{\sigma} & \defeq & \{(V,\emptyset)\} \quad 
  \text{ for } \sigma\in\{\Unit,\Bool,\Int\} \\
 \AVal{V}{\sigma \rarr \sigma'} & \defeq & 
  \{(f,[f \mapsto V]) \sep f \in \FNames_{\sigma \rarr \sigma'}\} \\
 \AVal{\pair{U}{V}}{\sigma \times \sigma'} & \defeq & 
 \{(\pair{A_1}{A_2},\gamma_1 \cdot \gamma_2) \sep 
 (A,\gamma_1) \in \AVal{U}{\sigma},\,(A_2,\gamma_2) \in \AVal{V}{\sigma'}\}
\end{array}
\]
Note that, by writing $\cdot$, we mean to implicitly require that the function domains be disjoint.
Similarly, when writing $\uplus$, we stipulate that the argument sets be disjoint.
\begin{example}
Let $\sigma = {(\Int\rarr\Bool)\times(\Int\times(\Unit\rarr\Int))}$ and
$V\equiv\pair{\lambda x^\Int.x \neq 1}{\pair{2}{{\lambda x^\Unit.3}}}$. Then
$\AVal{V}{\sigma}$ equals
\[
\{  (\pair{f}{\pair{2}{g}}, [ f\mapsto (\lambda x^\Int.x \neq 1)]\cdot [g\mapsto (\lambda x^\Unit.3)])\,\,\,|\,\,\,  
f\in\FNames_{\Int\rarr\Unit},  \, g\in \FNames_{\Unit\rarr\Int}\}.
\]
\end{example}
Finally, we  present the transitions of, what we call the $\HOSC[\HOSC]$ LTS, in  Figure~\ref{fig:lts-hosc}.
\begin{example}
We analyze the (PQ) rule below in more detail. 
\[\begin{array}{lllll}
   (PQ) & \conf{K[fV],c,\gamma,\xi,\phi,h} & \ired{\questP{f}{A}{c'}} & 
    \conf{\gamma \cdot\gamma'\cdot[c' \mapsto K],\xi\cdot [c' \mapsto c],
    \phi\uplus\nu(A)\uplus \{c'\},h} \\
    & \multicolumn{3}{l}{\text{ when } f:\sigma \rarr \sigma',\, (A,\gamma') \in \AVal{V}{\sigma} \text{ and } c':\sigma'}
    \end{array}\]
    The use of $\uplus$ in $\phi\uplus\nu(A)\uplus \{c'\}$ is meant to
highlight the requirement that the names introduced in $\questP{f}{A}{c'}$, i.e. $\nu(A)\cup\{c'\}$, should be fresh and disjoint from $\phi$.
Moreover, note how  $\gamma$ and $\xi$ are updated.
In general, $\gamma,\xi,h$ are updated during P-actions.
\end{example}

\begin{figure}[t]
  \[ \begin{array}{l|lll}
   (P\tau) & \conf{M,c,\gamma,\xi,\phi,h} & \ired{\ \tau \ } & 
     \conf{N,c',\gamma,\xi,\phi,h'} \\
   & \multicolumn{3}{l}{\text{ when } (M,c,h) \ered (N,c',h')}\\
   (PA) & \conf{V,c,\gamma,\xi,\phi,h} & \ired{\ansP{c}{A}} & 
     \conf{\gamma \cdot \gamma',\xi,\phi\uplus\nu(A),h} \\
   & \multicolumn{3}{l}{\text{ when } c:\sigma,\,  (A,\gamma') \in \AVal{V}{\sigma}}\\   
   (PQ) & \conf{K[fV],c,\gamma,\xi,\phi,h} & \ired{\questP{f}{A}{c'}} & 
    \conf{\gamma \cdot\gamma'\cdot[c' \mapsto K],\xi\cdot [c' \mapsto c],
    \phi\uplus\nu(A)\uplus \{c'\},h} \\
    & \multicolumn{3}{l}{\text{ when } f:\sigma \rarr \sigma',\, (A,\gamma') \in \AVal{V}{\sigma},\, c':\sigma'}\\
   (OA) & \conf{\gamma,\xi,\phi,h} & \ired{\ansO{c}{A}} & 
    \conf{K[A],c',\gamma,\xi,\phi \uplus \nu(A),h} \\
    & \multicolumn{3}{l}{\text{ when }  c:\sigma,\,A:\sigma,\,\gamma(c) = K,\,   \xi(c)=c'}\\
   (OQ) & \conf{\gamma,\xi,\phi,h} & \ired{\questO{f}{A}{c}} & 
     \conf{V A,c,\gamma,\xi,\phi \uplus \nu(A)  \uplus \{c\},h} \\
     & \multicolumn{3}{l}{\text{ when } f:\sigma \rarr \sigma',\,  A:\sigma,\, c:\sigma',\,\gamma(f) = V }\\[3mm]
%
%
\multicolumn{4}{l}{\text{NB $c:\sigma$ stands for $c\in\CNames_\sigma$.}}
  \end{array}\]
  \caption{HOSC[HOSC] LTS}
  \label{fig:lts-hosc}
  \end{figure}
\begin{definition}
Given two configurations  $\CC,\CC'$,
 we write $\CC \iRed{\act} \CC'$  if $\CC {\ired{\tau}}^\ast \CC'' \ired{\act} \CC'$,
 with $\ired{\tau}^\ast$ representing multiple (possibly none) $\tau$-actions.
 This notation is extended to sequences of actions: given 
 $\tr = \act_1  \ldots \act_n$, we write $\CC \iRed{\tr} \CC'$,
 if there exist $\CC_1,\ldots,\CC_{n-1}$ such that
 $\CC \iRed{\act_1} \CC_1 \cdots \CC_{n-1}\iRed{\act_n} \CC'$. 
 We define $\TrR{\HOSC}{\CC}=\{ \tr \,|\, \textrm{there exists $\CC'$ such that } \CC\iRed{\tr}\CC'\}$.
 \end{definition}

 \begin{lemma}
 Suppose $\CC=\conf{\gamma,\xi,\phi,h}$ or $\CC=\conf{M,c,\gamma,\xi,\phi,h}$ are configurations.
 Then  elements of $\TrR{\HOSC}{\CC}$ are $(\phi\setminus\dom{\gamma},\dom{\gamma})$-traces.
  \end{lemma}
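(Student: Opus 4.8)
The plan is to prove the claim by induction on the length of $\tr$, using an induction hypothesis strong enough to carry along the relationship between a reached configuration and the names already introduced. Write $N_O=\phi\setminus\dom\gamma$ and $N_P=\dom\gamma$. I will show: whenever $\CC\iRed{\tr}\CC'$, where $\CC'$ has components $\gamma',\xi',\phi',h'$ (and $M',c'$ if active), then (i) $\tr$ is an $(N_O,N_P)$-trace; (ii) $\CC'$ still satisfies the healthiness conditions appropriate to its shape; (iii) $\dom{\gamma'}=N_P\uplus P_{\tr}$ and $\phi'=\phi\uplus P_{\tr}\uplus O_{\tr}$, where $P_{\tr}$ and $O_{\tr}$ collect the names introduced by the P-actions, resp. O-actions, of $\tr$, so that $\phi'\setminus\dom{\gamma'}=N_O\uplus O_{\tr}$; and (iv) $\CC'$ is active exactly when $\tr$ is empty with $\CC$ active or $\tr$ ends in an O-action (dually for passive). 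Clause (i), as $\tr$ ranges over $\TrR{\HOSC}{\CC}$, is the statement of the lemma.

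For $\tr=\epsilon$ all clauses are immediate: $\CC'$ differs from $\CC$ only through $\tau$-steps, and an inspection of $\ered$ shows that it never creates a fresh function or continuation name — substitution and the reduction rules only rearrange existing subterms, $\callcct{\tau}(x.M)$ reuses the ambient $K$ and $c$, and reference allocation touches only $h$, not $\nu$ of any term — so $\gamma,\xi,\phi$ and the active shape are unchanged and healthiness is preserved. For the inductive step write $\tr=\tr_0\act$ with $\CC\iRed{\tr_0}\CC''\iRed{\act}\CC'$, let $\gamma'',\xi'',\phi'',h''$ be the components of $\CC''$, and apply the induction hypothesis to $\tr_0$. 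By the shapes of the rules in Figure~\ref{fig:lts-hosc} an active configuration offers only $(P\tau),(PA),(PQ)$ and a passive one only $(OA),(OQ)$; combined with clause (iv) for $\tr_0$ this gives alternation and clause (iv) for $\tr$. It remains to inspect the rule producing $\act$.

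Suppose $\act$ is produced by $(PA)$ or $(PQ)$, so $\CC''$ is active and, by healthiness, $\nu(M'')\cup\{c''\}\subseteq\phi''\setminus\dom{\gamma''}$. In particular the name used to communicate — $c$ in $\ansP{c}{A}$, or $f$ in $\questP{f}{A}{c'}$ — lies in $\phi''\setminus\dom{\gamma''}$, which by clause (iii) for $\tr_0$ equals $N_O\uplus O_{\tr_0}$; hence it is in $N_O$ or was introduced by an earlier O-action, exactly the condition the trace definition imposes on a P-action. The $\uplus$ annotations in the rule make $\nu(A)$ and the freshly created continuation name disjoint from $\phi''$, so no name is introduced twice and clause (iii) is re-established with these names added to $P_{\tr}$; and a routine check of how $\gamma$ and $\xi$ are extended (by the matching from $\AVal{V}{\sigma}$, and by $[c'\mapsto K]$, $\xi\cdot[c'\mapsto c]$ in the $(PQ)$ case) using the typing of $K[fV]$ and $V$ re-establishes healthiness for the passive $\CC'$. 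Symmetrically, if $\act$ is produced by $(OA)$ or $(OQ)$ then $\CC''$ is passive and the side condition of the rule puts the communicating name ($c$, resp. $f$) in $\dom{\gamma''}=N_P\uplus P_{\tr_0}$, i.e. in $N_P$ or introduced by an earlier P-action, as required of an O-action; $\nu(A)$ and $c$ are fresh by the $\uplus$ annotations; and the new active $\CC'$ is healthy because its term is $K[A]$ with $K=\gamma''(c)$, continuation $\xi''(c)$ (resp. $VA$ with $V=\gamma''(f)$, continuation $c$), and $\nu(\img{\gamma''})\cup\nu(\img{\xi''})\subseteq\phi''\setminus\dom{\gamma''}$ while the fresh names are disjoint from $\dom{\gamma''}=\dom{\gamma'}$, so $\nu(M')\cup\{c'\}\subseteq\phi'\setminus\dom{\gamma'}$.

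The conceptual content here is light, and the work is bookkeeping. The key realisation — and the place where care is needed — is that the healthiness invariants on configurations already encode precisely what the trace conditions require: an active configuration's term mentions only O-introduced (or initial-$N_O$) names, and a passive configuration stores every P-introduced name in $\dom\gamma$ while its codomains and heap mention only O-names. So the main obstacle is the discipline of checking that every transition — notably the silent $\ered$-steps, which must not mint names, and the two O-rules, which rebuild an active configuration out of $\gamma$ and $\xi$ — re-establishes those invariants, and that the $\uplus$ decorations in Figure~\ref{fig:lts-hosc} discharge all the freshness obligations, keeping $\phi\setminus\dom\gamma$ and $\dom\gamma$ synchronised with O- and P-introduction respectively. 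Alternation, by contrast, is immediate from the rule shapes and needs no invariant of its own.
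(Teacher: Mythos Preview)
Your proof is correct. The paper states this lemma without proof, treating it as a routine consequence of the LTS design and the healthiness conditions on configurations; your argument makes explicit precisely the bookkeeping the authors leave implicit, namely that the invariants built into the definition of (active and passive) configurations are preserved by every transition rule and coincide with the clauses defining an $(N_O,N_P)$-trace. One small point of presentation: in your inductive step you write $\CC\iRed{\tr_0}\CC''\iRed{\act}\CC'$ and then reason about the components $M'',c''$ of $\CC''$, but $\iRed{\act}$ hides a prefix of $\tau$-steps, so the configuration actually firing $\act$ may differ from $\CC''$ in its $M,c,h$ components. You have already argued in the base case that $\tau$-steps preserve healthiness and leave $\gamma,\xi,\phi$ untouched, so the reasoning goes through unchanged for that intermediate configuration; it would be cleaner to say so explicitly rather than silently identify it with $\CC''$.
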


\begin{figure}
  {    \[
  \begin{array}{l|l}
  \begin{array}{rl}
   \ex{M_1}{cwl}: &  \mathrm{let \ x \ = ref \ 0 \ in} \\
 & \mathrm{let \ b \ = ref \ \falseML \ in} \\   
  &\mathrm{\langle \lambda f. } \
  \mathrm{if \ \lnot(!b) \ then} \\
  &\quad \mathrm{b := \trueML;} \, \mhilight{\mathrm{f();x:=!x+1}}; \\
  &\quad \mathrm{b := \falseML;}\\
  &\quad\mathrm{else \ () ,\,\lambda \_:\Unit.!x\rangle} \\
  \end{array} &
  \begin{array}{rl}
\ex{M_2}{cwl}:  &  \mathrm{let \ x \ = ref \ 0 \ in} \\ 
&  \mathrm{let \ b \ = ref \ \falseML \ in} \\  
&  \mathrm{\langle \lambda f. }  
 \ \mathrm{if \ \lnot(!b) \ then} \\
&  \quad \mathrm{b := \trueML;} \,
  \mhilight{\mathrm{let \ n = !x \ in \ f(); x:=n+1}};\\
&  \quad \mathrm{b := \falseML;} \\  
 & \quad \mathrm{else \ ()\,,\lambda \_:\Unit.!x\rangle} \\
  \end{array}
  \end{array}
  \]}\hspace{-1em}
  \caption{Callback-with-lock Example~\cite{ADR09}}
 \label{fig:ex-cwl} 
\end{figure}

  \begin{example}\label{ex:deriv}
  In Figure~\ref{fig:trace},
  we show that the trace from Example~\ref{ex:trace} is generated by 
  the configuration  $\CC\defeq\conf{\ex{M_1}{cwl},c,\emptyset,\emptyset,\{c\},\emptyset}$, where
  $\ex{M_1}{cwl}$ is given in Figure~\ref{fig:ex-cwl}.
We write 
  $\inccwl \defeq \lambda f. \ifte{\lnot(!\ell_b)}{(\ell_b:=\trueML; f(); \ell_x:=!\ell_x+1;\ell_b :=\falseML)}{()}$, 
  $\getcwl \defeq\lambda \_. !\ell_x$
  and $c:((\Unit\rarr\Unit)\rarr\Unit)\times(\Unit\rarr\Int)$.
  \gjadd{It is interesting to notice that in this interaction, Opponent uses  the continuation $N$ twice,
  incrementing the counter $x$ by two. The second time, it does it without having to
  call $\inccwl$ again, but rather by using the continuation name $c_2$.}
\begin{figure}
  \[\renewcommand\arraystretch{0.7}
  \begin{array}{rll}
  \CC = & \conf{\ex{M_1}{cwl},c,\emptyset,\emptyset,\{c\},\emptyset}\\
  \ired{\tau^\ast} & \conf{\pair{\inccwl}{\getcwl}, c, \emptyset, \emptyset, \{c\}, [\ell_b \mapsto \falseML, \ell_x \mapsto 0]}\\
  \ired{\ansP{c}{\pair{g_1}{g_2}}} & \conf{\gamma_1, \emptyset, \{c,g_1,g_2\}, [\ell_b \mapsto \falseML, \ell_x \mapsto 0]}
&  \text{ with } \gamma_1 = [g_1\mapsto \inccwl, g_2\mapsto \getcwl],\\
\ired{\questO{g_1}{f_1}{c_1}} &\conf{\inccwl f_1, c_1, \gamma_1, \emptyset, \phi_2,  [\ell_b \mapsto \falseML, \ell_x \mapsto 0]}
& \text{ with } \phi_2 = \{c,g_1,g_2,f_1,c_1\}\\
\ired{\tau^\ast} & \conf{f_1();N, c_1, \gamma_1, \emptyset, \phi_2,  [\ell_b \mapsto \trueML, \ell_x \mapsto 0]}
& \text{ with } N = \ell_x:=!\ell_x+1;\ell_b:=\falseML\\
\ired{\questP{f_1}{()}{c_2}} & \conf{\gamma_2, \xi, \phi_3,  [\ell_b \mapsto \trueML, \ell_x \mapsto 0]}
&  \text{ with } \gamma_2 = \gamma_1\cdot [c_2\mapsto \bullet;N],  \\
\ired{\ansO{c_2}{()}} &  \conf{(); N, c_1, \gamma_2, \xi, \phi_3,  [\ell_b \mapsto \trueML, \ell_x \mapsto 0]}
& \quad \xi = [c_2\mapsto c_1]  \text{ and } \phi_3 = \phi_2 \uplus \{c_2\}\\
\ired{\tau^\ast} &  \conf{(), c_1, \gamma_2, \xi, \phi_3,  [\ell_b \mapsto \falseML, \ell_x \mapsto 1]}\\
\ired{\ansP{c_1}{()}}& \conf{\gamma_2, \xi, \phi_3,  [\ell_b \mapsto \falseML, \ell_x \mapsto 1]}\\
\ired{\ansO{c_2}{()}} &  \conf{(); N, c_1, \gamma_2, \xi, \phi_3,  [\ell_b \mapsto \falseML, \ell_x \mapsto 1]}\\
\ired{\tau^\ast} &  \conf{(), c_1, \gamma_2, \xi, \phi_3,  [\ell_b \mapsto \falseML, \ell_x \mapsto 2]}\\
\ired{\ansP{c_1}{()}}& \conf{\gamma_2, \xi, \phi_3,  [\ell_b \mapsto \falseML, \ell_x \mapsto 2]}\\
\ired{\questO{g_2}{()}{c_3}}&\conf{\getcwl (), c_3, \gamma_2, \xi,\phi_4,  [\ell_b \mapsto \falseML, \ell_x \mapsto 2]}
& \text{ with }\phi_4= \phi_3\uplus\{c_3\}\\
\ired{\tau^\ast}&\conf{2, c_3, \gamma_2, \xi, \phi_4,  [\ell_b \mapsto \falseML, \ell_x \mapsto 2]}\\
\ired{\ansP{c_3}{2}} &\conf{\gamma_2, \xi, \phi_4,  [\ell_b \mapsto \falseML, \ell_x \mapsto 2]}\\
  \end{array}
  \]
  \caption{Trace derivation in the HOSC[HOSC] LTS}\label{fig:trace}
  \end{figure}
 \end{example}
 \begin{remark}\label{rem:invar}
 Due to the freedom of name choice, note that
 $\TrR{\HOSC}{\CC}$ is closed under type-preserving renamings that preserve names from $\CC$.
 \end{remark}

 \subsection{Correctness and full abstraction}
 
 We define two kinds of special configurations that will play an important role in spelling out correctness results  for the HOSC[HOSC] LTS.
Let $\Gamma=\{x_1:\sigma_1,\cdots, x_k:\sigma_k\}$.
A map $\rho$ from $\{x_1,\cdots, x_k\}$ to the set of abstract values will be called a \boldemph{$\Gamma$-assignment}
provided, for all $1\le i\neq j\le k$, we have $\rho(x_i):\sigma_i$ and  $\nu(\rho(x_i))\cap \nu(\rho(x_j))=\emptyset$.

\begin{definition}[Program configuration]
Given a $\Gamma$-assignment $\rho$, a cr-free HOSC term $\seq{\Gamma}{M:\tau}$ and  $c:\tau$, 
we define the active configuration  $\cconf{M}{\rho,c}$ by
 $\cconf{M}{\rho,c} = \conf{M\{\rho\}, c, \emptyset, \emptyset, \nu(\rho)\cup\{c\}, \emptyset}$.
 \end{definition}
 Note that traces from $\TrR{\HOSC}{\cconf{M}{\rho,c}}$ will be $( \nu(\rho)\cup\{c\},\emptyset)$-traces.
 \begin{definition}
 The \boldemph{$\HOSC[\HOSC]$ trace semantics} of a cr-free HOSC term $\seq{\Gamma}{M:\tau}$ is defined to be
 \[
 \trsem{\HOSC}{\seq{\Gamma}{M:\tau}} = \{  ((\rho,c) ,t)\,|\, \textrm{$\rho$ is a $\Gamma$-assignment},\,c:\tau,\,t\in \TrR{\HOSC}{\cconf{M}{\rho,c}} \}.
 \]
 \end{definition}
 \begin{example}
  Recall  the term $\seq{}{\ex{M_1}{cwl}:\tau}$ from Example~\ref{ex:deriv}, 
  the trace $\tr_1$ and the configuration $\CC$ such that $\tr_1\in\TrR{\HOSC}{\CC}$.
 Because $\ex{M_1}{cwl}$ is closed ($\Gamma=\emptyset$), the only $\Gamma$-assignment is the empty map $\emptyset$.
 Thus, $\CC=\cconf{\ex{M_1}{cwl}}{\emptyset, c}$, so $((\emptyset,c), \tr_1)\in \trsem{\HOSC}{\seq{}{\ex{M_1}{cwl}:\tau}}$.
 \end{example}

Having defined active configurations associated to terms, we now turn to defining passive configurations 
associated to contexts.
Let us fix ${\errn\in\FNames_{\Unit\rarr\Unit}}$ and, for each $\sigma$, a continuation name $\tern_\sigma \in \CNames_\tau$.
Let ${\Large\circ}=\bigcup_\sigma \{\tern_\sigma  \}$.
Intuitively, the names $\errn$ will correspond to $\opredtererr$ and $\tern_\sigma$ to $\opredter$.

Recall that $\hat{\err}$ stands for $\err:\Unit\rarr\Unit$.
Given a heap $h:\Sigma;\herr$,
an evaluation context $\seq{\Sigma;\herr}{K:\tau\rarr\tau'}$ 
and a substitution $\seq{\Sigma;\herr}{\gamma:\Gamma}$ (as in the definition of $\ciupre{\HOSC}{err}$), 
let us replace every occurrence of $\contt{\sigma}{K'}$ inside $h,K,\gamma$
with $\contt{\sigma}{(K',\circ_{\sigma'})}$, if $K'$ has type $\sigma\rarr\sigma'$.
Moreover, let us replace every occurrence of the variable $\err$ with the function name $\errn$.
This is done to adjust $h,K,\gamma$ to the extended syntax of the LTS:
the upgraded versions are called $h_\circ, \gamma_\circ, K_\circ$.

Next we define the set $\AVal{\gamma}{\Gamma}$ of  all disjoint decompositions of values from $\gamma_\circ$ into abstract values and 
the corresponding matchings. Recall that $\Gamma=\{x_1:\sigma_1,\cdots, x_k:\sigma_k\}$. Below
$\vec{A_i}$ stands for $(A_1,\cdots, A_k)$, and $\vec{\gamma_i}$ for $(\gamma_1,\cdots,\gamma_k)$.
\[\begin{array}{rcl}
\AVal{\gamma}{\Gamma} = \{ \quad
(\vec{A_i},\vec{\gamma_i}) &| &(A_i,\gamma_i)\in\AVal{\gamma_\circ(x_i)}{\sigma_i},\,\,i=1,\cdots,k;\\
&&  \nu(A_1),\cdots, \nu(A_k) \textrm{ mutually disjoint and without $\errn$}\quad  \} 
\end{array}\]
\begin{definition}[Context configuration]
Given $\Sigma,\, h:\Sigma;\herr$,\,
$\seq{\Sigma;\herr}{K:\tau\rarr\tau'}$,\,
$\seq{\Sigma;\herr}{\gamma:\Gamma}$,\,
$(\vec{A_i},\vec{\gamma_i})\in \AVal{\gamma}{\Gamma}$
and $c:\tau$ ($c\not\in\circ$), the corresponding configuration $\cconf{h, K,\gamma}{\vec{\gamma_i}, c}$ is defined by
\[
\cconf{h, K,\gamma}{\vec{\gamma_i}, c} = \conf{
\biguplus_{i=1}^k \gamma_i \uplus \{ c \mapsto K_\circ\},
\{ c\mapsto \tern_{\tau'}\},\\
\biguplus_{i=1}^k \nu(A_i) \uplus \{c\} \uplus \circ \uplus \{\errn\}, h_\circ
}.
\]
\end{definition}
Intuitively, the names $\nu(A_i)$ correspond to calling function values extracted from $\gamma$,
whereas $c$ corresponds to $K$.
Note that traces in $\TrR{\HOSC}{\cconf{h, K,\gamma}{\vec{\gamma_i}, c}}$ 
will be $( \circ \uplus \{\errn\}, \biguplus_{i=1}^k \nu(A_i) \uplus \{c\})$-traces.

In preparation for the next result, we introduce the following shorthands.
\begin{itemize}
\item Given a $(N_O,N_P)$-trace $t$, we write $t^\bot$ for the $(N_P,N_O)$-trace obtained by changing the polarity of each name:
$\questO{f}{A}{c'}$ becomes $\questP{f}{A}{c'}$ (and vice versa) and $\ansO{c}{A}$ becomes $\ansP{c}{A}$ (and vice versa).
\item Given $(\vec{A_i},\vec{\gamma_i})\in \AVal{\gamma}{\Gamma}$,
we define a $\Gamma$-assignment $\rho_{\vec{A_i}}$  by $\rho_{\vec{A_i}}(x_i)= A_i$.
Note that $\support{(\rho_{\vec{A_i}})}=\biguplus_{i=1}^k \dom{\gamma_i}$.
\end{itemize}

\begin{lemma}[Correctness]\label{lem:cor}
Let $\seq{\Gamma}{M:\tau}$ be a cr-free $\HOSC$ term,
let $\Sigma,h,K,\gamma$ be as above,
$(\vec{A_i},\vec{\gamma_i})\in \AVal{\gamma}{\Gamma}$,
and $c:\tau$ ($c\not\in\tern$).
Then
\begin{itemize}
\item 
$(K[M\substF{\gamma}],h)\opredtererr$ iff there exist $t,c'$ such that 
$t\in \TrR{\HOSC}{\cconf{M}{\rho_{\vec{A_i}},c}}$ and $t^\bot\, \questP{\errn}{()}{c'} \in \TrR{\HOSC}{\cconf{h, K,\gamma}{\vec{\gamma_i}, c} }$.
\item 
$(K[M\substF{\gamma}],h)\opredter$ iff there exist $t,A,\sigma$ such that 
$t\in \TrR{\HOSC}{\cconf{M}{\rho_{\vec{A_i}},c}}$ and $t^\bot\, \ansP{\tern_{\sigma}}{A} \in \TrR{\HOSC}{\cconf{h, K,\gamma}{\vec{\gamma_i}, c}}$.
\end{itemize}
Moreover, $t$ satisfies $\nu(t)\cap (\circ\cup\{\errn\})=\emptyset$.
\end{lemma}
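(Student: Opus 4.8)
The plan is to prove a single, more general \emph{bisimulation-style composition lemma} relating runs of the operational semantics $(K[M\{\gamma\}],h)\ered^\ast\cdots$ to pairs of runs in the LTS, one from an active (program) configuration and one from a passive (context) configuration, and then read off the two bullets as the special cases where the composite run ends in $\err()$ or in a value. Concretely, I would define, for a program configuration $\CC_P=\cconf{M}{\rho_{\vec{A_i}},c}$ and a matching context configuration $\CC_O=\cconf{h,K,\gamma}{\vec{\gamma_i},c}$, the notion of their \emph{amalgamation} $\mergeConf{\CC_P}{\CC_O}$: glue the two environments $\gamma,\xi$ of $\CC_O$ onto the term/heap of $\CC_P$ by applying the substitutions, reconstructing the real heap and the real surrounding evaluation context from $\xi$ and the $\circ_\sigma$/$\errn$ tags. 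The key invariant is that whenever $\CC_P\iRed{t}\CC_P'$ and $\CC_O\iRed{t^\bot}\CC_O'$ are \emph{matching} partial runs (same trace up to polarity, compatible name sets), the amalgamation $\mergeConf{\CC_P'}{\CC_O'}$ is a well-formed $\HOSC$ configuration $(N,h')$ reachable from $(K[M\{\gamma\}],h)$ by $\ered^\ast$, and conversely every such reachable configuration arises this way from a unique pair of matching partial runs (unique because, as noted after the trace definition, the justifier of each action is determined, and name choices are fixed once we quotient by the renaming of Remark~\ref{rem:invar}).

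First I would set up this amalgamation and prove the \textbf{soundness direction} (runs of the LTS yield an operational run) by induction on the length of $t$: a P-action of $\CC_P$ is matched by the corresponding O-action of $\CC_O$, and one checks case by case ($P\tau$/$PA$/$PQ$ against $OA$/$OQ$) that the amalgamated configuration takes the expected $\ered$-steps — a $\tau$-action is literally an $\ered$-step, a $PQ$ followed by the matching $OQ$ corresponds to the context's stored value being plugged into $K[f\,V]$, an $OA$ after a $PA$ corresponds to returning through the recorded continuation context, and the bookkeeping of $\xi$ exactly reproduces the "staged continuation" semantics of $\contt{\sigma}{(K,c)}$. Then the \textbf{completeness direction}: given an operational run from $(K[M\{\gamma\}],h)$, decompose each configuration along the syntactic frontier between "term material" (descended from $M$) and "context material" (descended from $K,\gamma,h$), and show that each $\red$-step either stays inside one side (a $\tau$-action on that side) or crosses the frontier exactly when the term calls a context-provided function / returns to a context continuation (a P-action) or vice versa (an O-action); collect these crossing steps into the trace $t$. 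The fact that $M$ is \emph{cr-free} is used precisely here: the boundary types carry no $\mathrm{ref}$ or $\mathrm{cont}$, so the values crossing the frontier are tuples of base values and function names decomposable via $\AVal{\cdot}{\cdot}$, and no location or reified continuation of the term ever leaks to the context uncontrolled — this is what guarantees $\nu(t)\cap(\circ\cup\{\errn\})=\emptyset$ and that the frontier decomposition is well-defined and stable under reduction.

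Once the composition lemma holds, the two bullets follow by instrumenting the endpoint. For the error case: $(K[M\{\gamma\}],h)\opredtererr$ means the run reaches $(K'[\err()],h')$; since $\err$ was replaced by the context function name $\errn$ and $\errn\in\nu(\CC_O)\setminus\dom{\gamma}$ belongs to the O-side, the configuration sits on the context side about to perform a call to $\errn$, i.e. $\CC_O$ can do one more action $\questP{\errn}{()}{c'}$ after $t^\bot$ (it is a P-action \emph{for $\CC_O$} since $\errn$ is a context name), while $\CC_P$ has just done $t$; conversely such a trailing $\questP{\errn}{()}{c'}$ forces the amalgamated run to have reached $K'[\err()]$. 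For the termination case: $(K[M\{\gamma\}],h)\opredter$ means the run reaches a value $(V',h')$ with the top-level continuation being the abstract $\tern_{\tau'}$ recorded in $\xi$ for $c$, so $\CC_O$ can perform $\ansP{\tern_\sigma}{A}$ for a suitable decomposition $(A,\cdot)\in\AVal{V'}{\sigma}$ — this is again a P-action for $\CC_O$ since $\tern_\sigma\in\circ$ is on the O-side — and conversely. The main obstacle I expect is the completeness direction of the composition lemma: making the "syntactic frontier" between term- and context-material fully precise and proving it is preserved by every reduction rule (especially $\callcc$/$\throw$ steps, which move evaluation contexts around and interact with the $\xi$-indirection), and correctly accounting for names introduced by earlier actions of the opposite polarity that have since been passed back and forth. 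Everything else — the $\AVal{\cdot}{\cdot}$ calculations, the healthiness conditions on configurations, the freshness side-conditions — is routine bookkeeping once the invariant is stated correctly.
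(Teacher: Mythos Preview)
Your proposal is essentially the paper's approach: the paper defines explicit \emph{composite configurations} carrying $\gamma_P,\gamma_O$ separately (your ``amalgamation''), an associated composite LTS, and a flattening map $\theta$ into $(\text{term},\text{heap})$ pairs, then proves $\theta$ is a functional bisimulation between the composite LTS and operational reduction, together with lemmas splitting composite transitions into matching pairs of individual-LTS transitions. One point worth sharpening: your ``syntactic frontier'' is exactly the $\gamma_P/\gamma_O$ split maintained in the composite configuration, and the paper never needs to \emph{recover} it from a flat intermediate term---it keeps the separation structurally throughout and only flattens via $\theta$ in one direction---which sidesteps the provenance-tracking difficulty you anticipate for $\callcc$/$\throw$.
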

Intuitively, the lemma above confirms that the potential of a term to converge is determined by its traces.
Accordingly, we have:
 \begin{theorem}[Soundness]\label{hoscsound}
 For any cr-free $\HOSC$ terms $\seq{\Gamma}{M_1,M_2}$, if \\
  $\trsem{\HOSC}{\seq{\Gamma}{M_1}}\subseteq$ $\trsem{\HOSC}{\seq{\Gamma}{M_2}}$ then 
$\seq{\Gamma}{M_1 \ciupre{\HOSC}{err} M_2}$.
  \end{theorem}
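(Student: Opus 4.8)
The statement to prove (Theorem~\ref{hoscsound}) reduces, by the CIU Lemma (Lemma~\ref{lem:ciu}), to showing that trace inclusion implies CIU approximation. So the plan is: assume $\trsem{\HOSC}{\seq{\Gamma}{M_1}}\subseteq\trsem{\HOSC}{\seq{\Gamma}{M_2}}$, fix an arbitrary CIU test $(\Sigma,h,K,\gamma)$ built from $\HOSC$ syntax with $h:\Sigma;\herr$, $\cseq{\Sigma;\herr}{K}{\tau}$, $\seq{\Sigma;\herr}{\gamma:\Gamma}$, and show $(K[M_1\substF{\gamma}],h)\opredtererr$ implies $(K[M_2\substF{\gamma}],h)\opredtererr$. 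The bridge is the Correctness Lemma (Lemma~\ref{lem:cor}): convert the test data into a context configuration $\cconf{h,K,\gamma}{\vec{\gamma_i},c}$ after picking some decomposition $(\vec{A_i},\vec{\gamma_i})\in\AVal{\gamma}{\Gamma}$ and a fresh $c:\tau$ with $c\notin\tern$. By the first bullet of Lemma~\ref{lem:cor} applied to $M_1$, there exist $t$ and $c'$ with $t\in\TrR{\HOSC}{\cconf{M_1}{\rho_{\vec{A_i}},c}}$ and $t^\bot\,\questP{\errn}{()}{c'}\in\TrR{\HOSC}{\cconf{h,K,\gamma}{\vec{\gamma_i},c}}$.

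**Transferring the trace.**
Now observe that $((\rho_{\vec{A_i}},c),t)\in\trsem{\HOSC}{\seq{\Gamma}{M_1}}$ directly by the definition of the trace semantics, since $\rho_{\vec{A_i}}$ is a $\Gamma$-assignment, $c:\tau$, and $t\in\TrR{\HOSC}{\cconf{M_1}{\rho_{\vec{A_i}},c}}$. By the assumed inclusion, $((\rho_{\vec{A_i}},c),t)\in\trsem{\HOSC}{\seq{\Gamma}{M_2}}$, i.e. $t\in\TrR{\HOSC}{\cconf{M_2}{\rho_{\vec{A_i}},c}}$ as well (the pairing data $(\rho,c)$ is carried along in the semantics, so it is literally the same $\rho_{\vec{A_i}}$ and $c$). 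We may further note, using the last clause of Lemma~\ref{lem:cor}, that $\nu(t)\cap(\circ\cup\{\errn\})=\emptyset$, which is exactly the side condition needed to compose $t$ against the context configuration without name clashes. Then apply the first bullet of Lemma~\ref{lem:cor} in the \emph{other} direction, now for $M_2$: since $t\in\TrR{\HOSC}{\cconf{M_2}{\rho_{\vec{A_i}},c}}$ and $t^\bot\,\questP{\errn}{()}{c'}\in\TrR{\HOSC}{\cconf{h,K,\gamma}{\vec{\gamma_i},c}}$ (the context configuration is unchanged, since it depends only on $h,K,\gamma$), the ``if'' direction of that bullet yields $(K[M_2\substF{\gamma}],h)\opredtererr$. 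This closes the argument.

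**Where the real work sits.**
In this write-up the heavy lifting has been packaged into Lemma~\ref{lem:cor}, so the soundness proof itself is a short composition of definitions plus one application of the hypothesis. If the Correctness Lemma were not yet available, the main obstacle would be proving it: one needs a \emph{composition} (or ``zipping'') argument showing that a reduction sequence of $(K[M\substF{\gamma}],h)$ to an $\err$-stuck configuration can be decomposed into an alternating interaction between the active configuration $\cconf{M}{\rho_{\vec{A_i}},c}$ and the passive configuration $\cconf{h,K,\gamma}{\vec{\gamma_i},c}$, and conversely that any such pair of matching trace runs recomposes into a genuine reduction reaching $K[\errn()]$. This requires an invariant tying the abstract-value decompositions $\gamma_\circ$ to the actual substitution $\gamma$, tracking how continuation names in $\xi$ encode the evaluation-context stack, and a careful bookkeeping of freshness — essentially the correctness-and-definability core of operational game semantics. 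But for the present statement I would simply invoke Lemma~\ref{lem:cor} and Lemma~\ref{lem:ciu} as given, and the only genuinely new content is the observation that the $(\rho,c)$-indexing of $\trsem{\HOSC}{-}$ makes the trace-inclusion hypothesis directly applicable to the specific $\rho_{\vec{A_i}},c$ produced by the Correctness Lemma.
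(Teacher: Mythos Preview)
Your proposal is correct and follows essentially the same route as the paper's proof: apply Lemma~\ref{lem:cor} left-to-right for $M_1$ to obtain the witness trace $t$, transfer $t$ to $M_2$ via the trace-inclusion hypothesis, and then apply Lemma~\ref{lem:cor} right-to-left for $M_2$. One minor remark: the theorem already targets $\ciupre{\HOSC}{err}$ directly, so invoking Lemma~\ref{lem:ciu} in your first sentence is unnecessary (it only becomes relevant later, for the full-abstraction corollary).
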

  \cutout{
Note that in Lemma~\ref{lem:cor}, because of alternation between O and P,  $t^\bot\, \questP{\errn}{()}{c'}$ will be of even length
and $t$ of odd length.
Consequently, to establish the theorem above, it  would suffice to assume inclusion for odd-length traces.
However, due to the construction of the LTS, odd-length trace inclusion 
for  $\trsem{\HOSC}{\seq{\Gamma}{M_1}}$, $\trsem{\HOSC}{\seq{\Gamma}{M_2}}$ 
implies even-length trace inclusion.
This is because, after an odd-length trace from an active configuration,
the next action will be by O. Such  transitions can always fire
and the label  depends only on the set of names encountered earlier. }

 To prove the converse, we need to know that every odd-length trace generated by a term  actually participates
 in a contextual interaction. This will follow from the lemma below. Note that $\opredtererr$ relies on even-length
 traces from the context (Lemma~\ref{lem:cor}).
 \begin{lemma}[Definability]\label{hoscdef}
Suppose $\phi\uplus\{\errn\}\subseteq\FNames$
and $t$ is an even-length $(\circ\uplus \{\errn\},\phi\uplus\{c\})$-trace starting with an O-action.
There exists a passive configuration $\CC$ such that the even-length traces
 $\TrR{\HOSC}{\CC}$ are exactly the even-length prefixes of $t$
(along with all renamings that preserve types and $\phi\uplus \{c\}\uplus \circ\uplus\{\errn\}$, cf. Remark~\ref{rem:invar}).
Moreover, $\CC=\conf{\gamma_\circ\cdot [c\mapsto K_\circ], \{ c\mapsto \tern_{\tau'} \}, \phi\uplus \{c\}\uplus \circ \uplus \{\errn\},h_\circ}$,
where 
$h,K,\gamma$ are built from $\HOSC$ syntax.
\end{lemma}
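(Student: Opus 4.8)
The plan is to proceed by induction on the length of the even-length trace $t$, building the syntax ($h$, $K$, $\gamma$) incrementally so that it reproduces exactly the prefixes of $t$ (up to the permitted renamings). Since $t$ starts with an O-action and we are constructing a passive configuration, the first step is an O-move; the inductive construction will reconstruct, for each O-action in $t$, the P-response that follows and store enough information in $\gamma$, $\xi$ to make the next O-action fireable and force the right P-reply after it.

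First I would set up the base case: the empty trace, where $\CC$ is simply $\conf{\gamma_\circ\cdot[c\mapsto K_\circ],\{c\mapsto\tern_{\tau'}\},\phi\uplus\{c\}\uplus\circ\uplus\{\errn\},h_\circ}$ with $K$, $\gamma$, $h$ trivial (e.g. $K=\bullet$, everything else empty except the $\err$-related bookkeeping). For the inductive step, consider $t = \act_O\,\act_P\,t'$. The O-action $\act_O$ is either $\questO{f}{A}{c_1}$ (with $f\in\phi\uplus\{c\}$, so $f$ is a name the context must already own a value for) or $\ansO{c_1}{A}$ (with $c_1\in\phi\uplus\{c\}$, a continuation name the context owns). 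In the question case, the OQ rule fires from the passive configuration provided $\gamma(f)$ is a value that, when applied to $A$ under continuation $c_1$, P-reduces and then performs exactly $\act_P$; in the answer case, the OA rule requires $\gamma(c_1)=K_1$ and $\xi(c_1)=c_2$ so that $K_1[A]$ under $c_2$ reduces to perform $\act_P$. The heart of the argument is thus: \emph{given the required P-action $\act_P$ and the fresh names it must introduce, write down a term/context that, under the extended reduction $\ered$, silently reaches a redex firing precisely $\act_P$.} This is the standard ``definability by gadgets'' step: if $\act_P = \questP{g}{B}{c'}$, one uses a context of the form $\bullet$ applied so that the next observable step calls the O-name $g$ with the (syntactic realisation of the) abstract value $B$; if $\act_P = \ansP{c'}{B}$, one simply returns the realisation of $B$ to the continuation $c'$. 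Crucially, because the trace is well-formed ($(N_O,N_P)$-trace conditions), the name $g$ used to communicate in $\act_P$ was introduced by an \emph{earlier} O-action, hence is already in the domain of $\gamma$ — actually, here it is the other way: in a passive configuration it is P who answers, so a PQ must use an O-name $g\in\phi\setminus\dom\gamma$, which is guaranteed available by the trace discipline. One then threads the continuation of $\act_P$ into $\xi$ and the evaluation context fragment into $\gamma$, and recursively realises $t'$.

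The key bookkeeping obligations to discharge along the way are: (i) \emph{freshness} — every name introduced by a P-action in the constructed run is genuinely new, which follows because $t$ is an $(N_O,N_P)$-trace so introduces each name once, and we are free to pick the LTS's fresh names to match; (ii) \emph{typing} — each gadget term/context has the type dictated by the name it communicates through (using that abstract values are intrinsically typed and that $\AVal{-}{-}$ realises any abstract value of type $\sigma$ by a genuine value of type $\sigma$); (iii) \emph{no spurious traces} — the constructed $\CC$ must generate \emph{only} prefixes of $t$ and their renamings, so the gadgets must be deterministic up to name choice, i.e. after an O-move the P-reduction is forced and reaches a unique next action. Points (ii) and the ``exactly the prefixes'' half of (iii) are where care is needed: the gadget realising $\act_P$ must not accidentally expose an O-name in a way that permits an alternative O-continuation inconsistent with $t$; this is handled by the LTS design, since after an odd-length trace the only enabled transitions are O-transitions whose labels are determined by the available name sets, and we have arranged those name sets to match $t$.

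I expect the main obstacle to be the precise definability gadget in the presence of both $\callcc$/$\throw$ and higher-order store: we must realise an \emph{arbitrary} P-action — in particular a PQ that sends a tuple abstract value $B$ containing several function names together with a continuation name $c'$ — by a single $\HOSC$ evaluation-context-plus-value, and simultaneously record in $\gamma$ the ``rest of the computation'' $K$ (mapped to $c'$) so that a later OA on $c'$ resumes correctly. The subtlety is that one O-action may fork the interaction: after P sends $c'$, O may answer $c'$ several times (the trace is not bracketed for $\HOSC$!), so $\gamma(c')=K$ must be a genuine, re-entrant evaluation context. This is exactly the role of the extended construct $\contt{\sigma}{(K,c)}$ introduced earlier, and the proof will lean on it: the gadget for a PQ packages its continuation as such a staged continuation, and the healthiness conditions on passive configurations (notably $\dom\xi=\dom\gamma\cap\CNames$ and the image-name constraints) are maintained by construction. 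Once the gadget lemma is in place, assembling $h$, $K$, $\gamma$ from the per-step pieces — with $h$ accumulating any store the gadgets need (in fact none beyond the $\err$-related part, since the gadgets can be written heap-free) — is routine, and the final configuration has exactly the advertised shape $\conf{\gamma_\circ\cdot[c\mapsto K_\circ],\{c\mapsto\tern_{\tau'}\},\phi\uplus\{c\}\uplus\circ\uplus\{\errn\},h_\circ}$.
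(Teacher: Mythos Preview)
Your proposal has a genuine gap centred on the claim that ``the gadgets can be written heap-free''. This is false for $\HOSC$ contexts, and the forward induction you sketch cannot work without the heap.

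The problem is that $\gamma(f)$ for each $f\in\phi$ is \emph{fixed once and for all} in the initial passive configuration. If the trace $t$ contains two O-questions on the same $f$, say $\questO{f}{A_1}{c_1}$ at step~$3$ and $\questO{f}{A_2}{c_2}$ at step~$7$, the single value $\gamma(f)$ must respond differently each time (performing $p_3$ after the first, $p_7$ after the second). A pure term cannot do this; you need mutable state to track ``which call is this''. The same issue arises for continuation names: $\gamma(c')$ is fixed, yet O may answer $c'$ repeatedly and each occurrence demands a different P-response. Your forward induction decomposes $t=\act_O\,\act_P\,t'$ and says ``recursively realise $t'$'', but the recursive call yields a \emph{new} $\gamma'$, and you give no mechanism for gluing $\gamma'$ back into the already-fixed $\gamma$.

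The paper's construction addresses exactly this: it allocates, in the heap, one reference per P-name ($\fpr{j},\cpr{j}$) whose contents are \emph{rewritten at every step} to hold the code for the next required response, and one reference per O-name ($\foor{j},\cor{j}$) to store the name once O introduces it (via $\callcc$ for continuation names). The invariant is $\gamma(f_P^j)=\lambda x.(!\fpr{j})x$, so the fixed $\gamma$ simply defers to the current heap, and a $\mathit{setheap}(i{+}1)$ block updates all references after each exchange. The induction then runs in \emph{reverse} ($i=n$ down to $0$), because the code at step $i$ must embed the (already-constructed) code for step $i{+}1$. Your identification of re-entrant continuations as the obstacle is on the right track, but $\contt{\sigma}{(K,c)}$ alone does not solve it; the heap is essential, and with it the whole architecture of the argument changes.
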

\begin{proof}[Sketch]
The basic idea is to use references in order to record all continuation and function names introduced by the environment.
For continuations, the use of $\callcct{\tau}$ is essential.
Once stored in the heap, the names can be accessed by terms when needed in P-actions.
The availability of $\mathrm{throw}$ and references to all O-continuations means that arbitrary answer actions can be scheduled when needed.
\end{proof}

 \begin{theorem}[Completeness]\label{hosccomplete}
 For any cr-free $\HOSC$ terms $\seq{\Gamma}{M_1,M_2}$,
 $\seq{\Gamma}{M_1 \ciupre{\HOSC}{err} M_2}$  implies
 $\trsem{\HOSC}{\seq{\Gamma}{M_1}}\subseteq \trsem{\HOSC}{\seq{\Gamma}{M_2}}$.
  \end{theorem}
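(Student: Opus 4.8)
The plan is to prove the inclusion elementwise: fix a $\Gamma$-assignment $\rho$, a continuation name $c:\tau$ and a trace $t\in\TrR{\HOSC}{\cconf{M_1}{\rho,c}}$, and show $t\in\TrR{\HOSC}{\cconf{M_2}{\rho,c}}$. By Remark~\ref{rem:invar} (applied to both $\cconf{M_1}{\rho,c}$ and $\cconf{M_2}{\rho,c}$, which share the name set $\nu(\rho)\cup\{c\}$) it suffices to treat the case where the names of $\rho$, of $t$, and the name $c$ all avoid $\circ\cup\{\errn\}$; one then renames back. I would first reduce to odd-length traces: if $t=t'\,\act$ with $t'$ of odd length and $\act$ an $O$-action, then after $t'$ the configuration $\cconf{M_2}{\rho,c}$ is passive, and the name $\act$ uses to communicate ($f$ for an $OQ$, $c$ for an $OA$) was introduced by a $P$-action of $t'$, hence recorded in the $\gamma$- and $\xi$-components reached after $t'$, exactly as over $M_1$; so the $(OA)/(OQ)$ transition labelled $\act$ fires again, and it is enough to show $t'\in\TrR{\HOSC}{\cconf{M_2}{\rho,c}}$. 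The empty trace is trivial, so assume $t$ has odd length.

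Next I would synthesise a context realising $t$. Since $t$ starts with a $P$-action and has odd length, $u\defeq t^\bot\,\questP{\errn}{()}{c'}$, with $c'\in\CNames_\Unit$ fresh, is an even-length $(\circ\cup\{\errn\},\nu(\rho)\cup\{c\})$-trace starting with an $O$-action, whose pre-introduced $P$-names are $\nu(\rho)\cup\{c\}$. Applying Lemma~\ref{hoscdef} with $\phi=\nu(\rho)$ (note $\nu(\rho)\cup\{\errn\}\subseteq\FNames$, as abstract values carry only function names) yields $h,K,\gamma$ built from $\HOSC$ syntax and the passive configuration $\CC$ it describes, whose even-length traces are exactly the even-length prefixes of $u$, up to type-preserving renamings fixing $\nu(\rho)\cup\{c\}\cup\circ\cup\{\errn\}$. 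Choosing for each $x_i$ the decomposition $(\rho(x_i),\gamma_\circ\!\upharpoonright\!\nu(\rho(x_i)))\in\AVal{\gamma(x_i)}{\sigma_i}$ produces $(\vec{A_i},\vec{\gamma_i})\in\AVal{\gamma}{\Gamma}$ with $\rho_{\vec{A_i}}=\rho$ and $\CC=\cconf{h,K,\gamma}{\vec{\gamma_i},c}$.

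Now I would run this context against $M_1$, cross the preorder, and read off the trace of $M_2$. Since $t\in\TrR{\HOSC}{\cconf{M_1}{\rho,c}}$ and $u=t^\bot\,\questP{\errn}{()}{c'}\in\TrR{\HOSC}{\CC}$, the error clause of Lemma~\ref{lem:cor} gives $(K[M_1\substF{\gamma}],h)\opredtererr$. As $h,K,\gamma$ are $\HOSC$ syntax with $h:\Sigma;\herr$, $\cseq{\Sigma;\herr}{K}{\tau}$ and $\seq{\Sigma;\herr}{\gamma:\Gamma}$, the hypothesis $\seq{\Gamma}{M_1\ciupre{\HOSC}{err}M_2}$ yields $(K[M_2\substF{\gamma}],h)\opredtererr$. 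The other direction of Lemma~\ref{lem:cor} (same $\CC$, same $\vec{\gamma_i}$) then produces $t'$ and $c''$ with $t'\in\TrR{\HOSC}{\cconf{M_2}{\rho,c}}$ and $(t')^\bot\,\questP{\errn}{()}{c''}\in\TrR{\HOSC}{\CC}$. By the exactness clause of Lemma~\ref{hoscdef}, this last trace is a type-preserving renaming, fixing $\nu(\rho)\cup\{c\}\cup\circ\cup\{\errn\}$, of an even-length prefix of $u$; but $\errn$ occurs in $u$ only in its last action, so the prefix is $u$ itself, whence $t'$ is a type-preserving renaming of $t$ fixing $\nu(\rho)\cup\{c\}$. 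Remark~\ref{rem:invar} now gives $t\in\TrR{\HOSC}{\cconf{M_2}{\rho,c}}$, which completes the argument.

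The step I expect to be the main obstacle is this final identification of the trace $t'$ extracted from $M_2$'s convergence with the original $t$: it depends crucially on the precision of Lemma~\ref{hoscdef} (that $\CC$ generates \emph{exactly} the prefixes of $u$) together with the observation that $\errn$ is used exactly once, at the end of $u$, so that $u$ is the only even-length prefix terminating in an $\errn$-question. A secondary point that needs care is verifying that the passive configuration returned by Lemma~\ref{hoscdef} genuinely has the form of a context configuration $\cconf{h,K,\gamma}{\vec{\gamma_i},c}$ for a decomposition with $\rho_{\vec{A_i}}=\rho$, so that both applications of Lemma~\ref{lem:cor} are about one and the same $\CC$; the reduction to odd-length traces and the renaming normalisation are routine but should be spelled out.
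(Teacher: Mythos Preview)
Your proposal is correct and follows essentially the same route as the paper: apply Lemma~\ref{hoscdef} to $t^\bot\,\questP{\errn}{()}{c'}$, use Lemma~\ref{lem:cor} in both directions across the CIU hypothesis, and recover $t$ from the exactness of the definability construction together with name invariance. One cosmetic point: your initial normalisation step (forcing $\rho,c,t$ to avoid $\circ\cup\{\errn\}$) needs the full equivariance of the LTS under type-preserving permutations rather than Remark~\ref{rem:invar} alone, since the latter only covers permutations fixing $\nu(\rho)\cup\{c\}$; the paper handles this by renaming only the trace (substituting fresh $\errn',\tern'$ for $\errn,\tern$ in $t$), but your variant is equally valid once the stronger equivariance is invoked.
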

Theorems~\ref{hoscsound},~\ref{hosccomplete} (along with Lemmas~\ref{lem:ciu},~\ref{lem:ctxerr}) imply the following full abstraction results.
\begin{corollary}[HOSC Full Abstraction]
Suppose $\seq{\Gamma}{M_1,M_2}$ are cr-free $\HOSC$ terms. Then
$\trsem{\HOSC}{\seq{\Gamma}{M_1}}\subseteq \trsem{\HOSC}{\seq{\Gamma}{M_2}}$ iff
$\seq{\Gamma}{M_1 \ctxpre{\HOSC}{err} M_2}$ iff
$\seq{\Gamma}{M_1 \ctxpre{\HOSC}{ter} M_2}$.
\end{corollary}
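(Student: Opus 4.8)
The plan is to derive the corollary purely by assembling results already in hand --- Theorems~\ref{hoscsound} and~\ref{hosccomplete}, the CIU Lemma~\ref{lem:ciu}, and Lemma~\ref{lem:ctxerr} --- since all the real work has been carried out there. First I would record a well-definedness remark: because $\seq{\Gamma}{M_1,M_2}$ are cr-free, their common boundary is an $\xx$ boundary for every $\xx\in\{\HOSC,\GOSC,\HOS,\GOS\}$, so in particular the preorders $\ctxpre{\HOSC}{err}$, $\ctxpre{\HOSC}{ter}$, $\ciupre{\HOSC}{err}$ and the trace semantics $\trsem{\HOSC}{\cdot}$ are all meaningful for $M_1$ and $M_2$.

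Next I would establish that $\trsem{\HOSC}{\seq{\Gamma}{M_1}}\subseteq\trsem{\HOSC}{\seq{\Gamma}{M_2}}$ iff $\seq{\Gamma}{M_1\ctxpre{\HOSC}{err}M_2}$. From left to right: trace inclusion gives $\seq{\Gamma}{M_1\ciupre{\HOSC}{err}M_2}$ by Theorem~\ref{hoscsound} (Soundness), and then the CIU Lemma~\ref{lem:ciu} instantiated at $\xx=\HOSC$, $\yy=\err$ yields $\seq{\Gamma}{M_1\ctxpre{\HOSC}{err}M_2}$. From right to left I would run the two steps in reverse: Lemma~\ref{lem:ciu} turns $\ctxpre{\HOSC}{err}$ into $\ciupre{\HOSC}{err}$, and Theorem~\ref{hosccomplete} (Completeness) turns the latter into trace inclusion.

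It then remains to identify $\ctxpre{\HOSC}{err}$ with $\ctxpre{\HOSC}{ter}$. Here I would invoke Lemma~\ref{lem:ctxerr}, noting that its hypothesis --- no occurrences of $\contt{\tau}{K}$ --- is satisfied since cr-free terms contain no such subterms by definition. Part~(1) (at $\xx=\HOSC$) gives $\ctxpre{\HOSC}{err}\subseteq\ctxpre{\HOSC}{ter}$, and part~(2) (again at $\xx=\HOSC$) gives the reverse inclusion, so the two preorders coincide. Chaining this with the previous paragraph closes the three-way equivalence.

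I do not anticipate a genuine obstacle: the argument is bookkeeping --- keeping track of which generic instances of the lemmas are being used (always $\xx=\HOSC$; $\yy=\err$ for the CIU Lemma) and checking that ``cr-free'' supplies precisely what Lemma~\ref{lem:ctxerr} and the soundness/completeness theorems demand. The mathematical substance lies upstream, in the Correctness Lemma~\ref{lem:cor} and the Definability Lemma~\ref{hoscdef} that underpin Theorems~\ref{hoscsound} and~\ref{hosccomplete}; the corollary is merely their packaging.
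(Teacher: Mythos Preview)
Your proposal is correct and follows exactly the approach indicated in the paper, which simply states that the corollary is implied by Theorems~\ref{hoscsound} and~\ref{hosccomplete} together with Lemmas~\ref{lem:ciu} and~\ref{lem:ctxerr}. Your write-up is in fact more explicit than the paper's, spelling out which instance of each lemma is used and checking that cr-freeness supplies the $\contt{\tau}{K}$-freeness hypothesis of Lemma~\ref{lem:ctxerr}.
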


\begin{example}[Callback with lock~\cite{ADR09}]\label{ex:callback}
{Recall the term $\seq{}{\ex{M_1}{cwl}: ((\Unit\rarr\Unit)\rarr\Unit)\times(\Unit\rarr\Int)}$ from 
Example~\ref{ex:deriv}, given in Figure~\ref{fig:ex-cwl}.}
%
We had $\tr_1=\ansP{c}{\pair{g_1}{g_2}}$
$\questO{g_1}{f_1}{c_1}$
$\questP{f_1}{()}{c_2}$
$\ansO{c_2}{()}$
$\ansP{c_1}{()}$
$\ansO{c_2}{()}$
$\ansP{c_1}{()}$
$\questO{g_2}{()}{c_3}$
$\ansP{c_3}{2}$ $\in \TrR{\HOSC}{\cconf{\ex{M_1}{cwl}}{{\emptyset,c}}}$.

Define $\tr_2$ to be $\tr_1$ except that its last  action $\ansP{c_3}{2}$  is replaced with $\ansP{c_3}{1}$.
Observe that  
$\tr_1 \in \TrR{\HOSC}{\cconf{\ex{M_1}{cwl}}{{\emptyset,c}}}\setminus\TrR{\HOSC}{\cconf{\ex{M_2}{cwl}}{{\emptyset,c}}}$
and
$\tr_2 \in \TrR{\HOSC}{\cconf{\ex{M_2}{cwl}}{{\emptyset,c}}} \setminus \TrR{\HOSC}{\cconf{\ex{M_1}{cwl}}{{\emptyset,c}}}$,
i.e. by the Corollary above the terms are incomparable wrt $\ctxpre{\HOSC}{err}$.
However, they are  equivalent wrt $\ctxpre{\xx}{err}$ for $\xx\in\{\GOSC,\HOS,\GOS\}$~\cite{DNB12}.
\end{example}
The above Corollary also provides a handle to reason about equivalence via trace equivalence. 
Sometimes this can be done directly on the LTS, especially {when $\gamma$ 
can be kept bounded}.


\begin{example}[Counter~\cite{PS98}]\label{ex:counter}
For $i\in\{1,2\}$,
consider the terms $\seq{}{M_i: (\Unit\rarr\Unit)\times (\Unit\rarr\Int)}$ given by
 $M_i\equiv \letin{x=\nuref{0}}{\pair{\mathsf{inc}_i}{\mathsf{get}_i}}$,
 where $\mathsf{inc}_1\equiv (\lambda y. x:=!x+1)$, $\mathsf{inc}_2\equiv (\lambda y. x:=!x-1)$, 
 $\mathsf{get}_1\equiv \lambda z. !x$, $\mathsf{get}_2\equiv\lambda z. -!x$.
 In this case, $\TrR{\HOSC}{\cconf{M_i}{{\emptyset,c}}}$ contains (prefixes of)
traces of the form $\ansP{c}{\pair{g}{h}}\, t$, where $t$ is built from 
segments of two kinds: either $\questO{g}{()}{c_i}$ $\ansP{c_i}{()}$
 or $\questO{h}{()}{c_i'}$ $\ansP{c_i'}{n}$, where the $c_i$s and $c_i'$s are pairwise different.
  Moreover,  in the latter case,  $n$ must be equal to the number of preceding actions of the form $\questO{g}{()}{c_i}$.
 For this example, trace equality could be established by induction on the length of trace.
 Consequently, $M_1\ctxequ{\HOSC}{err} M_2$.
 \end{example}


\section{GOSC[HOSC]}\label{sec:goschosc}

Recall that $\GOSC$ is the fragment of $\HOSC$ in which general storage is restricted to values of \emph{ground} type, 
i.e. arithmetic/boolean constants, the associated reference names, references to those names and so on.
In what follows, we are going to provide characterizations of  $\ctxpre{\GOSC}{err}$ via  trace inclusion.
Recall that, by Lemma~\ref{lem:ctxerr}, $\ctxpre{\GOSC}{err} = \ctxpre{\GOSC}{ter}$.
Note that we work in an asymmetric setting with terms belonging to $\HOSC$ being more powerful than contexts.

We start off by identifying several technical consequences of the restriction to $\GOSC$ syntax.
First we observe that $\GOSC$ internal reductions never contribute extra names.
\begin{lemma}\label{lem:goscinv}
Suppose $(M,c,h)\ered (M',c',h')$, where $M$ is a $\GOSC$ term and $h$ is a $\GOSC$ heap.
Then $\nu(M)\cup\{c\}\supseteq \nu(M')\cup\{c'\}$.
\end{lemma}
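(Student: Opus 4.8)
The plan is to proceed by case analysis on the reduction rule used to derive $(M,c,h)\ered (M',c',h')$. Recall that $\ered$ is defined either by embedding an ordinary $\red$-step (which leaves the continuation name fixed, $c'=c$), or by one of the two control rules for $\callcc$ and $\mathrm{throw}$. For every $\red$-rule I would check that the right-hand side mentions no name not already present on the left: the $\beta$-rule, projection, conditional, arithmetic, reference allocation/dereference/assignment, location equality, and the recursion-unfolding rule all substitute existing subterms (or produce fresh \emph{locations}, not names in $\Names$) into an existing evaluation context, so $\nu(M')\subseteq\nu(M)$ and $c'=c$; for the heap-reading rule $K[!\ell]\red K[h(\ell)]$ the crucial point is that $h$ is a $\GOSC$ heap, so $h(\ell)$ has ground reference type and hence contains no function or continuation names at all — this is exactly where the $\GOSC$ restriction on storable values is used. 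Here I would also note the healthiness convention that the names occurring in a stored value are already among the names of the configuration, but in fact for $\GOSC$ heaps the stored values carry no names whatsoever, which makes the argument cleaner.

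The two control rules require a little more care but still go through. For $(K[\callcct{\tau}(x.M_0)],c,h)\ered(K[M_0\subst{x}{\contt{\tau}{(K,c)}}],c,h)$ the continuation name stays $c$, and the term on the right is built from $K$, $M_0$ and the \emph{already-present} name $c$; no new name appears, so $\nu(M')\cup\{c'\}\subseteq\nu(M)\cup\{c\}$. For $(K[\throwtot{\tau}{V}{\contt{\tau}{(K',c')}}],c,h)\ered(K'[V],c',h)$ the continuation name changes from $c$ to $c'$, but $c'$ is not new: it occurs inside the staged continuation $\contt{\tau}{(K',c')}$, which is a subterm of the left-hand side, so $c'\in\nu(M)$; likewise $K'$ and $V$ are subterms of the left-hand side, hence $\nu(K'[V])\subseteq\nu(M)$. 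Therefore $\nu(M')\cup\{c'\}\subseteq\nu(M)\cup\{c\}$ in this case as well.

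The only genuine subtlety — and the step I expect to be the main obstacle — is the dereference rule: one must be sure that \emph{every} reachable $\GOSC$ configuration really has a $\GOSC$ heap storing only name-free values, so that reading from the heap can never reintroduce a function or continuation name. This is an invariant of $\GOSC$ reduction rather than a property of a single step, so strictly speaking the lemma as stated (one step, with the hypothesis that $h$ is a $\GOSC$ heap) is fine, but I would make explicit that a $\GOSC$ heap's image consists of values of type $\iota$, which are name-free, and that ground-type reduction preserves this. Once that observation is in place, the remaining cases are the routine substitution bookkeeping sketched above, and the inclusion $\nu(M)\cup\{c\}\supseteq\nu(M')\cup\{c'\}$ follows in each case, completing the proof.
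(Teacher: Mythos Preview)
Your proposal is correct and follows essentially the same approach as the paper: a case analysis on the $\ered$ rules, isolating dereference $K[!\ell]\red K[h(\ell)]$ as the one case that genuinely needs the $\GOSC$ hypothesis (a $\GOSC$ heap stores only values of type $\iota$, hence name-free). Your treatment of the $\callcc$ and $\mathrm{throw}$ rules is more explicit than the paper's one-line dismissal, but the argument is the same; your final paragraph's worry about an invariant is, as you yourself note, unnecessary since the hypothesis already gives you a $\GOSC$ heap for the single step in question.
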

\begin{proof}
By case analysis. All defining rules for $\ered$, with the  exception of the $  (K[!\ell],h)   \red  (K[h(\ell)],h)$ rule, 
are easily seen to satisfy the Lemma (no function or continuation names are added).
However, if the heap is restricted to storing elements of type $\iota$ (as in $\GOSC$)
then  $h(\ell)$ will never contain a name, so the Lemma follows.
\end{proof}
The lemma has interesting consequences for the shape of traces generated by  the context
configurations $\cconf{h, K,\gamma}{\vec{\gamma_i}, c}$ if they are built from $\GOSC$ syntax.
Recall that P-actions have the form $\questP{f}{A}{c'}$ or $\ansP{c}{A}$,
where $f,c$ are names introduced by O.
It turns out that when $h,K,\gamma$ are restricted to $\GOSC$, more can be said about the origin of the names
in traces generated by $\cconf{h, K,\gamma}{\vec{\gamma_i}, c}$:
they will turn out to come from a restricted set of names introduced by O, which we identify below.
The definition below is based on following the justification structure of a trace -- recall that one action 
is said to justify another if the former introduces a name that is used for communication in the latter.
\begin{definition}
Suppose $\phi\uplus\{\errn\}\subseteq\FNames$ and $c\in\CNames$.
Let $t$ be an odd-length $(\circ\uplus \{\errn\}, \phi\uplus\{c\})$-trace starting with an O-action.
The set   $\pav{t}$ of {P-visible names} of $t$ is defined as follows.
\[\begin{array}{rclcl}
\pav{t \,\,\, \ansO{c'}{A'}} & = & \{\errn\}\cup \circ \cup \nu(A') &\qquad& c'=c\\
\pav{t \,\,\,\questP{f''}{A''}{c'}\,\,\, t'\,\,\, \ansO{c'}{A'}} & = & \pav{t} \cup \nu(A') && c'\neq c\\
\pav{t \,\,\, \questO{f'}{A'}{c'}} & = & \{\errn \}\cup\circ\cup\nu(A')\cup\{c'\}  &&f'\in\phi\\
\pav{t \,\,\,\questP{f''}{A''}{c''}\,\,\, t'\,\,\, \questO{f'}{A'}{c'}} & = & \pav{t}\cup\nu(A')\cup\{c'\}  &&f'\in\nu(A'')\\
\pav{t \,\,\,\ansP{c''}{A''}\,\,\, t'\,\,\, \questO{f'}{A'}{c'}} & = & \pav{t}\cup\nu(A')\cup\{c'\}  &&f'\in\nu(A'')
\end{array}\]
\end{definition}


Note that, in the inductive cases, the definition follows links between names introduced by P 
and the point of their introduction, names introduced in-between are ignored.
Here readers familiar with game semantics will notice similarity to the notion of P-view~\cite{HO00}.

Next we specify a property of traces that will turn out to be satisfied by configurations corresponding to $\GOSC$ contexts.
\begin{definition}
Suppose $\phi\uplus\{\errn\}\subseteq\FNames$ and $c\in\CNames$.
Let $t$ be a $(\circ\uplus \{\errn\},\phi\uplus\{c\})$-trace starting with an O-action.
$t$ is called \boldemph{P-visible} if
\begin{itemize}
\item for any even-length prefix $t' \,\questP{f}{A}{c}$ of $t$, we have $f \in\pav{t'}$,
\item for any even-length prefix $t'\,\ansP{c}{A}$ of $t$, we have $c\in\pav{t'}$.
\end{itemize}
\end{definition}

\begin{lemma}\label{lem:pvis}
Consider $\CC=\cconf{h, K,\gamma}{\vec{\gamma_i}, c}$, where $h,K,\gamma$ are from $\GOSC$ 
and $(\vec{A_i},\vec{\gamma_i})\in \AVal{\gamma}{\Gamma}$.
Then all traces in  $\TrR{\HOSC}{\CC}$ are P-visible.
\end{lemma}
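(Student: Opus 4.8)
The plan is to proceed by induction on the length of the trace $t \in \TrR{\HOSC}{\CC}$, strengthening the statement into an invariant that describes, along a trace, the relationship between the names appearing in the "live" part of the configuration $\CC$ and the set of P-visible names $\pav{t'}$ of the corresponding trace prefix $t'$. Concretely, at each passive point of the interaction the configuration has the form $\conf{\gamma', \xi', \phi', h'}$, and at each active point $\conf{N, d, \gamma', \xi', \phi', h'}$. I would formulate the invariant: whenever the configuration is passive after an odd-length prefix $t'$, every name occurring in $\nu(h')$ together with, for the currently "active continuation stack", all names reachable is contained in $\pav{t'}$; and whenever it is active with running term $N$ and toplevel continuation $d$, we have $\nu(N)\cup\{d\}\subseteq \pav{t'}$ (suitably truncated to the last O-action). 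The two clauses of P-visibility are then immediate consequences of this invariant applied at the moment a P-action fires: a $\questP{f}{A}{c}$ transition can only fire from an active configuration $\conf{K[fV],d,\ldots}$ with $f\in\nu(K[fV])$, and an $\ansP{c}{A}$ transition fires from $\conf{V,c,\ldots}$ with $c=d$ the toplevel continuation; in both cases the invariant gives $f$ (resp. $c$) $\in\pav{t'}$.

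The crucial input is Lemma~\ref{lem:goscinv}: internal $\ered$-steps of a $\GOSC$ term over a $\GOSC$ heap never create fresh names, so during a run of $\tau$-actions the set $\nu(N)\cup\{d\}$ can only shrink. This is what lets the invariant survive the $(P\tau)$ transitions. For the O-actions I would check that they extend $\pav{}$ exactly enough: an $(OA)$ step $\conf{\gamma',\xi',\phi',h'}\ired{\ansO{c'}{A'}}\conf{K'[A'],d',\ldots}$ with $\gamma'(c')=K'$, $\xi'(c')=d'$ introduces $\nu(A')$ into the running term, and the defining equations for $\pav{t'\,\ansO{c'}{A'}}$ add precisely $\nu(A')$ (and the $K'$, $d'$ were recorded earlier when $c'$ was handed to O — this is where one uses that $K'$, $d'$ were stored in $\gamma',\xi'$ at a P-action whose justifier lies on the P-view, matching the "follow the justification pointer" structure of the $\pav{}$ definition). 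Similarly $(OQ)$ adds $\nu(A')\cup\{c'\}$ and invokes $\gamma'(f')=V$, a value that was deposited into $\gamma'$ at an earlier P-action; here one must argue $V$'s free names are already P-visible, again by reading off the recursive clause of $\pav{}$. The $(PA)$ and $(PQ)$ steps grow $\gamma',\xi'$ and $\phi'$ with the freshly introduced names and also need checking that the newly stored evaluation context $K$ (in $(PQ)$) has all its names in $\pav{}$ — but that is exactly the active-configuration clause of the invariant applied before the step.

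The main obstacle, I expect, is getting the invariant on the heap and on $\img{\gamma'}$ stated correctly so that it is genuinely preserved. The subtlety is that names enter the heap only via P-actions (since $\GOSC$ heaps store ground data and O-received names are threaded through $h'$ only when P itself writes them), and a P-action can only write names it currently "sees", i.e. names in $\pav{}$ of the current prefix; so a bookkeeping argument is needed that the heap at a passive point contains only names that were P-visible at the preceding P-move, hence remain in $\pav{}$ of all later prefixes (since $\pav{}$ along O-moves is monotone once you pass the same justification structure — one has to verify $\pav{}$ does not lose names it already contained, which follows from the recursive clauses always being of the form $\pav{t}\cup(\text{new names})$ except at the base clauses which reset to $\{\errn\}\cup\circ\cup\ldots$; here one checks the reset clauses only drop names that had become unreachable, i.e. belonged to abandoned branches of the justification tree, and such names cannot be in $\nu(h')$ by the ground-storage restriction). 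Once this heap/environment invariant is pinned down, the rest is a routine transition-by-transition verification.
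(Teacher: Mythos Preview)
Your overall strategy matches the paper's: induction on the trace with a strengthened invariant, using Lemma~\ref{lem:goscinv} to handle $(P\tau)$ steps, and then reading off P-visibility from the active-configuration clause at the moment a P-action fires. However, the invariant you sketch is not quite the right one, and the attempt to repair it via a monotonicity argument for $\pav{\cdot}$ does not go through.

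The issue is that $\pav{\cdot}$ is \emph{not} monotone along the trace: the clause $\pav{t\,\questP{f''}{A''}{c'}\,t'\,\ansO{c'}{A'}} = \pav{t}\cup\nu(A')$ jumps back to $t$ and discards everything introduced in $\questP{f''}{A''}{c'}$ and in $t'$. So ``$\pav{}$ does not lose names it already contained'' is false in general, and your passive-configuration invariant (``names reachable from the active continuation stack are in $\pav{t'}$ for the current $t'$'') cannot be maintained across such jumps. Relatedly, the worry about names in $h'$ is a red herring: a $\GOSC$ heap stores only values of ground type $\iota$, so $\nu(h')=\emptyset$ throughout and there is nothing to track there.

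The fix, which is what the paper does in its auxiliary Lemma~\ref{lem:visaux}, is to make the invariant on $\gamma',\xi'$ \emph{time-indexed} rather than relative to the current prefix: for each $n\in\dom{\gamma'}$, if $n$ was introduced in the P-action $p_j$, then $\nu(\gamma'(n))\subseteq\pav{o_1\cdots o_j}$ (and likewise $\xi'(n)\in\pav{o_1\cdots o_j}$ for continuation names). This is preserved trivially by later transitions, since $\gamma'(n)$ never changes once set. When O later plays an action justified by $n$, the recursive definition of $\pav{\cdot}$ jumps back precisely to $\pav{o_1\cdots o_j}$ and adds the fresh names of the O-action, which is exactly enough to cover $\nu(\gamma'(n))$ plus the new material. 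No monotonicity is needed.
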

\cutout{
\begin{proof}[Proof sketch]
We reason by induction on the length of the trace  but, to enable an inductive proof, the statement above needs to be
generalized to components $M',c',\gamma',\xi'$  of reachable configurations. For example, 
 if an active configuration with $M',c'$ is reached from $\CC_O$ via $t$,
then $\nu{(M')}\cup\{c'\}\subseteq \pav{t}$. Similarly, if names $f',c'$ are introduced in a P-action after $t$, one shows
that $\nu(\gamma'(f')),\nu(\gamma'(c')),\xi(c')$ must come from $\pav{t}$. 
\end{proof}}
The Lemma above shows that contextual interactions with $\GOSC$ contexts rely on restricted traces.
We shall now modify the  $\HOSC[\HOSC]$ LTS to capture the restriction. Note that, from the perspective of the term,
the above constraint is a constraint on the use of names by O (context), so we need to talk about O-available names instead.
This dual notion is defined below.
\begin{definition}
Suppose $\phi\subseteq\FNames$ and $c\in \CNames$.
Let $t$ be a $(\phi\uplus\{c\},\emptyset)$-trace of odd length.
The set   $\oav{t}$ of \boldemph{O-visible names} of $t$ is defined as follows.
\[\begin{array}{rclcl}
\oav{t \,\,\, \ansP{c'}{A'}} & = & \nu(A') && c'=c\\
\oav{t \,\,\,\questO{f''}{A''}{c'}\,\,\, t'\,\,\, \ansP{c'}{A'}} & = & \oav{t} \cup \nu(A') && c'\neq c\\
\oav{t \,\,\, \questP{f'}{A'}{c'}} & = & \nu(A')\cup\{c'\}  &&f'\in\phi\\
\oav{t \,\,\,\questO{f''}{A''}{c''}\,\,\, t'\,\,\, \questP{f'}{A'}{c'}} & = & \oav{t}\cup\nu(A')\cup\{c'\}  &&f'\in\nu(A'')\\
\oav{t \,\,\,\ansO{c''}{A''}\,\,\, t'\,\,\, \questP{f'}{A'}{c'}} & = & \oav{t}\cup\nu(A')\cup\{c'\}  &\qquad &f'\in\nu(A'')\\
\end{array}\]
Analogously, a $(\phi\uplus\{c\},\emptyset)$-trace $t$ is \boldemph{O-visible} if, for any 
even-length prefix $t' \,\questO{f}{A}{c}$ of $t$, we have $f \in\oav{t'}$ and, 
for any even-length prefix $t'\,\ansO{c}{A}$ of $t$, we have $c\in\oav{t'}$.
\end{definition}
\begin{example}\label{ex:ovis}
Recall the trace 
\[
\tr_1=\ansP{c}{\pair{g_1}{g_2}}\,\,\,
\questO{g_1}{f_1}{c_1}\,\,\,
\questP{f_1}{()}{c_2}\,\,\,
\ansO{c_2}{()}\,\,\,
\ansP{c_1}{()}\,\,\,
\ansO{c_2}{()}\,\,\,
\ansP{c_1}{()}\,\,\,
\questO{g_2}{()}{c_3}\,\,\,
\ansP{c_3}{2}
\]
from previous examples. 
Observe that 
\[\begin{array}{rcl}
\oav{
\ansP{c}{\pair{g_1}{g_2}}\,\,
\questO{g_1}{f_1}{c_1}\,\,
\questP{f_1}{()}{c_2}} &= & \{ g_1,g_2, c_2\}\\
\oav{
\ansP{c}{\pair{g_1}{g_2}}\,\,
\questO{g_1}{f_1}{c_1}\,\,
\questP{f_1}{()}{c_2}\,\,
\ansO{c_2}{()}\,\,
\ansP{c_1}{()}} &= & \{ g_1,g_2\}
\end{array}
\]
Consequently, the first use of $\ansO{c_2}{()}$ in $\tr_1$ does not violate O-visibility, but the second one does.
\end{example}
\begin{figure}[t]
\[
\begin{array}{l|l@{}ll}
 (P\tau) & \conf{M,c,\gamma,\xi,\phi,h,\ViewF}  &\quad  \ired{\ \tau \ }
 & \conf{N,c',\gamma,\xi,\phi,h',\ViewF} \\
 & \multicolumn{3}{l}{\text{ when } (M,c,h) \ered (N,c',h')}\\
 (PA) & \conf{V,c,\gamma,\xi,\phi,h,\ViewF} 
 & \ired{\ansP{c}{A}} 
 & \conf{\gamma \cdot \gamma', \xi,
     \phi\uplus\nu(A),h, \ViewF, \ViewF(c) \uplus \nu(A)} \\
 & \multicolumn{3}{l}{\text{ when } c:\sigma
 \text{ and } (A,\gamma') \in \AVal{V}{\sigma}}\\
 (PQ) & \conf{K[fV],c,\gamma,\xi,\phi,h,\ViewF} 
 & \ired{\questP{f}{A}{c'}} 
 & \conf{  \gamma \cdot\gamma'\cdot [c' \mapsto K],
    \xi\cdot[c' \mapsto c],\phi\uplus\phi',h, \ViewF, \ViewF(f) \uplus \phi' } \\
 & \multicolumn{3}{l}{\text{ when } f:\sigma \rarr \sigma', \,(A,\gamma') \in \AVal{V}{\sigma}, \, c':\sigma' \text{ and } \phi'=\nu(A)\uplus\{c'\}}\\
 (OA) & \conf{\gamma,\xi,\phi,h,\ViewF,\View} 
 & \ired{\ansO{c}{A}} 
 & \conf{K[A],c',\gamma,\xi,
    \phi \uplus \nu(A),h, \ViewF\cdot[\nu(A)\mapsto \View]} \\
 & \multicolumn{3}{l}{\text{ when } c \in \View,\, c:\sigma,\, A:\sigma,\, \gamma(c) = K,\, \xi(c)=c'}\\
 (OQ) & \conf{\gamma,\xi,\phi,h, \ViewF,\View} 
 & \ired{\questO{f}{A}{c}}
 & \conf{V A,c,\gamma,\xi,   \phi \uplus \phi',h,\ViewF \cdot [\phi' \mapsto \View]} \\
 & \multicolumn{3}{l}{\text{ when } f \in \View,\, f:\sigma \rarr \sigma',\, A:\sigma,\, c:\sigma',\,
  \gamma(f) = V   \text{ and } \phi' = \nu(A)\uplus \{c\}}\\
\cutout{
 (IOQ) & \conf{\Gamma \vdash M:\tau} 
 & \ired{\questOinit{\vec{A}}{c}} 
 & \conf{M\{\delta\},c,\emptymap,\emptymap             ,[c \mapsto \varnothing],
    \phi \uplus \{c\},\emptymap} \\
 & \multicolumn{3}{l}{\text{ when } (\vec{A},\delta,\phi) \in \sem{\Gamma}\}}}\\
  \multicolumn{4}{l}{\text{Given $N\subseteq\Names$, $[N\mapsto\View]$ stands for the map $[n\mapsto\View\,|\, n\in N]$.}}
  \end{array}
 \]

\caption{$\GOSC[\HOSC]$ LTS}\label{fig:gosc}
\end{figure}
In Figure~\ref{fig:gosc}, we present a new LTS, called the  $\GOSC[\HOSC]$ LTS, 
which will turn out to capture $\ctxpre{\GOSC}{err}$ through trace inclusion.
It is obtained from the $\HOSC[\HOSC]$ LTS by restricting O-actions to those that rely on O-visible names.
Technically, this is done by enriching configurations with an additional component $\ViewF$, which maintains
historical information about O-available names immediately before each O-action.
After each P-action, $\ViewF$ is accessed to calculate the current set $\View$ of O-available names 
according to the definition of O-availability and only O-actions compatible with O-availability are allowed to proceed
(due to the $f\in\View$, $c\in\View$ side conditions). We write $\TrR{\GOSC}{\CC}$ for the set of traces generated 
from $\CC$ in the $\GOSC[\HOSC]$ LTS.

Recall that, given a $\Gamma$-assignment $\rho$, term $\seq{\Gamma}{M:\tau}$ and  $c\in\CNames_\tau$, 
the active configuration  $\cconf{M}{\rho,c}$ was defined by
 $\cconf{M}{\rho,c} = \conf{M\{\rho\}, c, \emptyset, \emptyset, \nu(\rho)\cup\{c\}, \emptyset}$.
 We need to upgrade it to the LTS by initializing the new component to the empty map:
$\cconf{M,\mathit{vis}}{\rho,c}= \conf{M\{\rho\}, c, \emptyset, \emptyset, \nu(\rho)\cup\{c\}, \emptyset,\emptyset}$.
 \begin{definition}
 The \boldemph{$\GOSC[\HOSC]$ trace semantics} of a cr-free HOSC term $\seq{\Gamma}{M:\tau}$ is defined to be
 \[
 \trsem{\GOSC}{\seq{\Gamma}{M:\tau}} = \{  ((\rho,c) ,t)\,|\, \textrm{$\rho$ is a $\Gamma$-assignment},\,c:\tau,\,t\in 
 \TrR{\GOSC}{\cconf{M,\mathit{vis}}{\rho,c}} \}.
 \]
   \end{definition}
 By construction, it follows that
\begin{lemma}\label{lem:ovis}
$t\in \TrR{\GOSC}{\cconf{M,\mathit{vis}}{\rho,c}}$ iff $t\in  \TrR{\HOSC}{\cconf{M}{\rho,c}}$ and $t$ is O-visible.
\end{lemma}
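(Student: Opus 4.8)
The $\GOSC[\HOSC]$ LTS of Figure~\ref{fig:gosc} differs from the $\HOSC[\HOSC]$ LTS only by (a) carrying the extra components $\ViewF$ (in every configuration) and $\View$ (in passive ones), whose values are a deterministic function of the underlying $\HOSC[\HOSC]$ run, and (b) guarding the two O-transitions with the side conditions $f\in\View$ (for OQ) and $c\in\View$ (for OA). So the lemma is, as stated, essentially by construction. Concretely, forgetting $\ViewF,\View$ turns a $\GOSC[\HOSC]$ run from $\cconf{M,\mathit{vis}}{\rho,c}$ into a $\HOSC[\HOSC]$ run from $\cconf{M}{\rho,c}$ on the same trace, and conversely a $\HOSC[\HOSC]$ run can be re-decorated with the unique matching $\ViewF,\View$; this re-decoration extends to a legal $\GOSC[\HOSC]$ run precisely when no O-guard is violated. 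Hence it suffices to show that along any run the guards hold exactly when the trace produced so far is O-visible, i.e. that $\View$ always equals the correct O-available set.

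The plan is to establish, by induction on run length from $\cconf{M,\mathit{vis}}{\rho,c}$, the following bookkeeping invariant: for a passive configuration reached along an odd-length trace $t$ its $\View$-component equals $\oav{t}$; and in every reachable configuration, for each name $n$ introduced by some O-action, $\ViewF(n)=\oav{t_n}$, where $t_n$ is the (odd-length, P-action-terminated) prefix immediately preceding the O-action that introduced $n$, with the convention $\ViewF(n)=\emptyset$ for the names $\nu(\rho)\cup\{c\}$ present at the outset --- this is how the look-ups $\ViewF(c),\ViewF(f)$ in the (PA)/(PQ) rules are to be read when $c,f$ are initial. The $\ViewF$ part is preserved by $\tau$-steps and P-actions (which do not disturb existing $\ViewF$-entries) and by O-actions, which map exactly the names they introduce to the then-current $\View$, which by the $\View$ part equals $\oav{}$ of the trace before that O-action, i.e. $t_n$ for the new names $n$. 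For the $\View$ part, a P-action updates $\View$ to $\ViewF(n)\cup(\text{freshly introduced names})$, where $n$ is the name the P-action communicates through; since a P-question communicates through an O-introduced (or initial) function name and a P-answer through an O-introduced (or initial) continuation name, one checks that this equals $\oav{t\,\act}$ by matching against the corresponding clause in the definition of $\oav{\cdot}$ --- the ``jump back'' in that recursion lands exactly on the O-action that introduced $n$, whose prefix witnesses $\ViewF(n)=\oav{t_n}$, and the base clauses ($f\in\phi$, $c'=c$) correspond to $\ViewF(n)=\emptyset$.

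Granting the invariant, the forward direction is immediate: projecting gives $t\in\TrR{\HOSC}{\cconf{M}{\rho,c}}$, and for every even-length prefix $t'\,\act$ of $t$ with $\act$ an O-action, the side condition that enabled $\act$ reads, by the invariant, $f\in\oav{t'}$ (if $\act$ is a question) or $c\in\oav{t'}$ (if an answer), which is exactly O-visibility. For the converse, take a $\HOSC[\HOSC]$ run for an O-visible $t$, decorate it with the forced $\ViewF,\View$ components, and note that at each O-step the guard $f\in\View$ / $c\in\View$ is met because $\View=\oav{t'}$ for the relevant prefix $t'$ (invariant) and $f\in\oav{t'}$ / $c\in\oav{t'}$ (O-visibility of $t$); so the decorated run is a legal $\GOSC[\HOSC]$ run producing $t$. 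The only real work, and the main obstacle, is the invariant --- in particular aligning the recursion of $\oav{\cdot}$, which jumps back to the O-action that first handed out the name used by the current P-action, with the fact that $\ViewF$ records, at the moment each name is handed to the program, the O-available set current at that time; a secondary fiddly point is reading $\ViewF$ on the initial names $\nu(\rho)\cup\{c\}$ as $\emptyset$ and matching this against the base clauses of $\oav{\cdot}$.
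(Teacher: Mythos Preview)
Your proposal is correct and is precisely the detailed unfolding of what the paper means by ``By construction, it follows that'': the $\GOSC[\HOSC]$ LTS was designed so that $\View$ tracks $\oav{t}$, and you have carefully articulated the invariant (on $\ViewF$ and $\View$) that makes this so, including the edge case of reading $\ViewF$ on the initial names as $\emptyset$. The paper gives no proof beyond that one-line remark, so your argument is essentially the same approach, just spelled out.
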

Noting that the witness trace $t$ from Lemma~\ref{lem:cor} is O-visible iff $t^\bot\, \questP{\errn}{()}{c'}$ is P-visible,
we can conclude that, for $\GOSC$, the traces relevant to $\opredtererr$ are O-visible, which yields:
 \begin{theorem}[Soundness]\label{goscsound}
 For any cr-free $\HOSC$ terms $\seq{\Gamma}{M_1,\,\, M_2}$,
 if\\ $\trsem{\GOSC}{\seq{\Gamma}{M_1}} \subseteq \trsem{\GOSC}{\seq{\Gamma}{M_2}}$ then
  $\seq{\Gamma}{M_1 \ciupre{\GOSC}{err} M_2}$.
  \end{theorem}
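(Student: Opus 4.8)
The plan is to reduce the Soundness theorem for $\GOSC[\HOSC]$ to the already-established machinery for $\HOSC[\HOSC]$, exploiting the two intervening lemmas. By the CIU Lemma (Lemma~\ref{lem:ciu}) it suffices to establish $\seq{\Gamma}{M_1\ciupre{\GOSC}{err}M_2}$ directly, so I would fix $\Sigma, h, K, \gamma$ all built from $\GOSC$ syntax, with $h:\Sigma;\herr$, $\cseq{\Sigma;\herr}{K}{\tau}$ and $\seq{\Sigma;\herr}{\gamma:\Gamma}$, assume $(K[M_1\substF{\gamma}],h)\opredtererr$, and aim to derive $(K[M_2\substF{\gamma}],h)\opredtererr$.

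First I would pick a decomposition $(\vec{A_i},\vec{\gamma_i})\in\AVal{\gamma}{\Gamma}$ and let $\rho=\rho_{\vec{A_i}}$ be the induced $\Gamma$-assignment, and fix $c:\tau$ with $c\not\in\circ$. By the Correctness Lemma (Lemma~\ref{lem:cor}) applied to $M_1$, the assumption $(K[M_1\substF{\gamma}],h)\opredtererr$ yields a trace $t$ and a continuation name $c'$ with $t\in\TrR{\HOSC}{\cconf{M_1}{\rho,c}}$ and $t^\bot\,\questP{\errn}{()}{c'}\in\TrR{\HOSC}{\cconf{h,K,\gamma}{\vec{\gamma_i},c}}$. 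Since $h,K,\gamma$ are from $\GOSC$, Lemma~\ref{lem:pvis} tells us that $t^\bot\,\questP{\errn}{()}{c'}$ is P-visible. The key observation, flagged in the text just before the statement, is that P-visibility of $t^\bot\,\questP{\errn}{()}{c'}$ is equivalent to O-visibility of $t$; this is a purely combinatorial fact about the duality $(-)^\bot$ swapping the roles of O and P and the corresponding symmetry between $\pav{\cdot}$ and $\oav{\cdot}$ in their respective definitions (the inductive clauses are literally mirror images). Granting this, $t$ is an O-visible trace in $\TrR{\HOSC}{\cconf{M_1}{\rho,c}}$, so by Lemma~\ref{lem:ovis} we get $t\in\TrR{\GOSC}{\cconf{M_1,\mathit{vis}}{\rho,c}}$, hence $((\rho,c),t)\in\trsem{\GOSC}{\seq{\Gamma}{M_1}}$. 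The trace-inclusion hypothesis then gives $((\rho,c),t)\in\trsem{\GOSC}{\seq{\Gamma}{M_2}}$, i.e. $t\in\TrR{\GOSC}{\cconf{M_2,\mathit{vis}}{\rho,c}}$, and by Lemma~\ref{lem:ovis} again $t\in\TrR{\HOSC}{\cconf{M_2}{\rho,c}}$. Finally, invoking the Correctness Lemma in the reverse direction for $M_2$ — using the \emph{same} witnesses $t$, $c'$ and the \emph{same} context configuration $\cconf{h,K,\gamma}{\vec{\gamma_i},c}$, which still contains $t^\bot\,\questP{\errn}{()}{c'}$ — we conclude $(K[M_2\substF{\gamma}],h)\opredtererr$, as required.

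A few details need care. One is that Lemma~\ref{lem:cor} requires the witness trace $t$ to satisfy $\nu(t)\cap(\circ\cup\{\errn\})=\emptyset$; this is guaranteed by the ``Moreover'' clause of Lemma~\ref{lem:cor} for the $M_1$ direction, and it is exactly the hypothesis needed when feeding $t$ back into the $M_2$ direction, so the round trip is legitimate. Another is that the context configuration $\cconf{h,K,\gamma}{\vec{\gamma_i},c}$ is shared between the $M_1$ and $M_2$ invocations: since it depends only on $h,K,\gamma$ and not on the term, nothing needs rebuilding. One should also remember to quantify correctly: the CIU statement demands the conclusion for \emph{all} admissible $\Sigma,h,K,\gamma$, and for each such tuple we are free to choose any $(\vec{A_i},\vec{\gamma_i})\in\AVal{\gamma}{\Gamma}$ (this set is nonempty by construction), so the argument above, carried out for an arbitrary such tuple, suffices.

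I expect the main obstacle to be the precise verification that P-visibility of $t^\bot\,\questP{\errn}{()}{c'}$ coincides with O-visibility of $t$. This is asserted in the prose preceding the theorem but deserves an explicit inductive argument: one shows by induction along the justification structure of $t$ that for every odd-length prefix $s$ of $t$, $\oav{s}$ and $\pav{s^\bot}$ describe the same set of names, matching the five clauses of each definition against each other (answer-to-top, answer-across, question-from-$\phi$, question-justified-by-a-$P$-answer, question-justified-by-a-$P$-question on one side corresponding to the analogous $O$-clauses on the other), and noting that the appended final action $\questP{\errn}{()}{c'}$ on the context side corresponds to the step at which $M_1$'s last P-action (or the implicit toplevel) fires. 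Everything else in the proof is bookkeeping with the already-proved lemmas.
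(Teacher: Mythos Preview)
Your proof is correct and follows essentially the same route as the paper's own argument: fix $\GOSC$ data $h,K,\gamma$, invoke Lemma~\ref{lem:cor} for $M_1$ to obtain $t$, use Lemma~\ref{lem:pvis} to get P-visibility of $t^\bot\,\questP{\errn}{()}{c'}$ and hence O-visibility of $t$, pass through Lemma~\ref{lem:ovis} and the trace-inclusion hypothesis to transfer $t$ to $M_2$, then run Lemma~\ref{lem:cor} backwards. One minor remark: the theorem statement already targets $\ciupre{\GOSC}{err}$, so your opening appeal to the CIU Lemma is unnecessary (though harmless).
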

  
  \cutout{
    \begin{proof}
Suppose $\trsem{\GOSC}{\seq{\Gamma}{M_1}}\subseteq \trsem{\GOSC}{\seq{\Gamma}{M_2}}$.
Consider  $\Sigma, h,K,\gamma$ (as in the definition of $\ciuapperr^\GOSC$) such that $(K[M_1\substF{\gamma}],h)\opredtererr$.
In particular, $h,K,\gamma$ consist of $\GOSC$ syntax.
Suppose $(\vec{A_i},\vec{\gamma_i})\in\AVal{\gamma}{\Gamma}$ and $c:\sigma$ ($c\neq\tern$).
By Lemma~\ref{lem:cor} (left-to-right), there exist $t,c'$ such that
$t\in \TrR{\HOSC}{\cconf{M_1}{\rho_{\vec{A_i}},c}}$ and $t^\bot\, \questP{\errn}{()}{c'} \in \TrR{\HOSC}{\cconf{h, K,\gamma}{\vec{\gamma_i}, c} }$.
By  Lemma~\ref{lem:pvis}, $t^\bot\, \questP{\errn}{()}{c'}$ is P-visible. Thus, $t$ is O-visible and, by Lemma~\ref{lem:ovis} (right-to-left),  
$t\in\trsem{\GOSC}{\cconf{M_1}{\rho_{\vec{A_i}},c}}$.
From $\trsem{\GOSC}{\seq{\Gamma}{M_1}}\subseteq \trsem{\GOSC}{\seq{\Gamma}{M_2}}$, we get $t\in \TrR{\GOSC}{\cconf{M_2}{\rho_{\vec{A_i}},c}}$.
By Lemma~\ref{lem:ovis} (left-to-right), $t\in \TrR{\HOSC}{\cconf{M_2}{\rho_{\vec{A_i}},c}}$.
Because $t\in \TrR{\HOSC}{\cconf{M_2}{\rho_{\vec{A_i}},c}}$ and 
$t^\bot\, \questP{\errn}{()}{c'} \in \TrR{\HOSC}{\cconf{h, K,\gamma}{\vec{\gamma_i}, c} }$, by Lemma~\ref{lem:cor} (right-to-left), we
can conclude $(K[M_2\substF{\gamma}],h)\opredtererr$. Thus, $\seq{\Gamma}{M_1 \ciuapperr^\GOSC M_2}$.
  \end{proof}}
To prove the converse, we need a new definability result. This time we are only allowed to use $\GOSC$ syntax, 
but the target is also more modest: we are only aiming to capture P-visible traces.
  \begin{lemma}[Definability]\label{goscdef}
Suppose $\phi\uplus\{\errn\}\subseteq\FNames$
and $t$ is an even-length \emph{P-visible} $(\circ\uplus \{\errn\},\phi\uplus\{c\})$-trace starting with an O-action.
There exists a passive configuration $\CC$ such that the even-length traces
 in $\TrR{\HOSC}{\CC}$ are exactly the even-length prefixes of $t$
(along with all renamings that preserve types and $\phi\uplus \{c\}\uplus \circ\uplus\{\errn\}$).
Moreover, $\CC=\conf{\gamma_\circ\cdot [c\mapsto K_\circ], \{ c\mapsto \tern_{\tau'} \}, \phi\uplus \{c\}\uplus \circ \uplus \{\errn\},h_\circ}$,
where 
$h,K,\gamma$ are built from $\GOSC$ syntax.
\end{lemma}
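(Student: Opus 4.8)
The plan is to follow the pattern of the HOSC definability argument (Lemma~\ref{hoscdef}), refining the construction so that it uses only ground store -- which is exactly what P-visibility makes possible. I would argue by induction on the length of $t$, generalising the statement to arbitrary intermediate interaction states (a set of names together with their polarities and the justification structure so far), so that the inductive step has something to recurse on; the real claim is that from any such state there is a $\GOSC$ context realising precisely those continuations of the current prefix that stay inside $t$. The construction builds the context as a reactive $\GOSC$ program and maintains the invariant that, after a prefix $t'$, the set $\pav{t'}$ of P-visible names coincides with the set of names lexically in scope at the current program point. Concretely: every function name P introduces -- in a P-question $\questP{f}{A}{c'}$, or while matching an O-question $\questO{f}{A}{c}$ -- is realised by a $\lambda$-abstraction whose body is the remaining context strategy, so that the names available when O later calls that function are exactly the ones that were in scope when the abstraction was created, namely the P-view at that point; and every continuation P exports (the continuation component of a P-question) is captured with $\callcc$, so that O may re-invoke it arbitrarily often, the routing of such re-invocations being done with $\throw$. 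Because $t$ is P-visible, the side conditions $f\in\pav{t'}$ and $c\in\pav{t'}$ guarantee that every move the context is required to make refers to a name currently in scope, so -- unlike in Lemma~\ref{hoscdef} -- there is never a need to stash function values or continuations in the heap.

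The heap of the constructed context is then needed only for ground data: a single $\Int$-reference serves as a global step counter, so that repeated calls to the same function name, or repeated invocations of the same $\callcc$-captured continuation, dispatch (via an $\ifte{}{}{}$ cascade on the counter value) to the code for the correct occurrence in $t$; a handful of further ground cells record the ground constituents of the abstract values received from O that P must resend later outside their lexical scope. All of this is legal $\GOSC$ syntax, since the store holds only values of ground type and $\GOSC$ retains $\callcc$ and $\throw$ (it is $\GOSC$, not $\GOS$). Assembling the pieces yields $h,K,\gamma$ built from $\GOSC$ syntax; applying the upgrade of Section~\ref{sec:hoschosc} (replacing $\contt{\sigma}{K'}$ by $\contt{\sigma}{(K',\tern_{\sigma'})}$ and $\err$ by $\errn$) produces the configuration $\CC$ of the stated form. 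Correctness -- that the even-length traces of $\CC$ are exactly the even-length prefixes of $t$ together with their type-preserving, $\phi\uplus\{c\}\uplus\circ\uplus\{\errn\}$-fixing renamings -- is checked by a simulation argument on prefixes: each even-length prefix $t'$ of $t$ drives $\CC$ to a passive configuration whose only fireable O-moves are those extending $t'$ within $t$, and each such extension forces, through the step counter, exactly the P-move prescribed by $t$.

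The main obstacle is the bookkeeping behind the invariant ``$\pav{t'}$ equals the names in scope''. One has to verify that following justification pointers through the P-view -- precisely the inductive clauses defining $\pav{\cdot}$ -- mirrors the nesting of $\lambda$-binders and of $\callcc$-captured continuations in the term being built, and that this correspondence is preserved by every transition, in particular by re-entry into an already-used function or continuation dispatched by the step counter. Getting the freshness and renaming side right -- binding the fresh names O supplies in each $A$ to fresh pattern variables in a way that is consistent across re-entries, and so honouring the ``all renamings'' clause -- is where most of the care goes, exactly as in Lemma~\ref{hoscdef}; the genuinely new point to isolate is that P-visibility is the precise property that lets the construction go through using ground store alone.
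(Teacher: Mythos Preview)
Your proposal is correct and follows essentially the same approach as the paper: a single $\reftype{\Int}$ step counter for dispatch, $\lambda$-bindings to capture O-introduced function names, $\callcc$ to capture O-introduced continuation names, and $\throw$ to realise P-answers, with P-visibility guaranteeing that every name P must use is lexically in scope at the point it is needed. One minor point: the ``handful of further ground cells'' you mention for recording ground constituents of O's abstract values are not needed---since the trace $t$ is fixed, all ground values appearing in it are known at construction time, so the context can hardcode them (the $\ass{x}{A}$ check and the constants in the $V_A$'s); the paper's construction accordingly uses \emph{only} the single step-counter cell in the heap.
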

\cutout{
\begin{lemma}[Definability]\label{goscdef}
Suppose $\phi\uplus\{\errn\}\subseteq\FNames$,
and $t=o_1 p_1\cdots o_n p_n$ is a P-visible $(\{\errn,\tern\},\phi\uplus\{c\})$-trace such that $\tern\not\in\nu(t)$.
There exists a passive configuration $\CC_O$ built from $\GOSC$ syntax such that
 $\TrReven{\HOSC}{\CC_O}$ consists of all even-length prefixes of $t$
(along with their renamings via type-preserving permutations that fix $\phi\uplus\{c,\errn,\tern\}$, cf. Lemma~\ref{lem:inv}).
Moreover, $\CC_O=\conf{\gamma_O, \{ c\mapsto \tern \}, \phi\uplus \{c,\errn,\tern\},h}$,
where $\dom{\gamma_O}=\phi\uplus\{c\}$, 
$\nu({\gamma})=\{\errn\}$ and $\nu({h}) =\emptyset$.
\end{lemma}}
\begin{proof}[Sketch]
This time we cannot rely on references to recall on demand all continuation and function names introduced by the environment.
However, because $t$ is P-visible, it turns the uses of the names can be captured 
through variable bindings ($\lambda x.\cdots$ for function and $\callcct{\tau}{(x.\dots)}$ for continuation names).
Using $\mathrm{throw}$, we can then force an arbitrary answer action, as long as it uses a P-available name.
To select the right action at each step, we branch on the value of a single global reference of type $\reftype{\,\Int}$ 
that keeps track of the number of steps simulated so far.
\end{proof}
Completeness now follows because, for a potential  O-visible witness $t$ from Lemma~\ref{lem:cor},
one can create a corresponding context by invoking the Definability result for $t^\bot\, \questP{\errn}{()}{c'}$.
It is crucial that the addition of $\questP{\errn}{()}{c'}$ does not break P-visibility ($\errn$ is P-visible).
 \begin{theorem}[Completeness]\label{gosccomplete}
 For any cr-free $\HOSC$ terms $\seq{\Gamma}{M_1,M_2}$,
 if  $\seq{\Gamma}{M_1 \ciupre{\GOSC}{err} M_2}$ then 
 $\trsem{\GOSC}{\seq{\Gamma}{M_1}}\subseteq \trsem{\GOSC}{\seq{\Gamma}{M_2}}$.
  \end{theorem}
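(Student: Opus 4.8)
The plan is to mirror the structure of the proof of Theorem~\ref{hosccomplete}, substituting the $\GOSC$-specific definability result (Lemma~\ref{goscdef}) for the $\HOSC$ one (Lemma~\ref{hoscdef}), and the visibility machinery (Lemmas~\ref{lem:pvis},~\ref{lem:ovis}) for the unconstrained traces. Concretely, we prove the contrapositive: assuming $\trsem{\GOSC}{\seq{\Gamma}{M_1}}\not\subseteq \trsem{\GOSC}{\seq{\Gamma}{M_2}}$, we construct a $\GOSC$ CIU-context witnessing $\seq{\Gamma}{M_1 \not\ciupre{\GOSC}{err} M_2}$; the conclusion then follows by Lemma~\ref{lem:ciu} together with Lemma~\ref{lem:ctxerr} (which identifies $\ctxpre{\GOSC}{err}$ and $\ctxpre{\GOSC}{ter}$, so CIU-testing with $\err$ suffices).

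First I would pick a pair $((\rho,c),t)$ lying in $\trsem{\GOSC}{\seq{\Gamma}{M_1}}$ but not in $\trsem{\GOSC}{\seq{\Gamma}{M_2}}$, so $t\in\TrR{\GOSC}{\cconf{M_1,\mathit{vis}}{\rho,c}}$; by Lemma~\ref{lem:ovis}, $t\in\TrR{\HOSC}{\cconf{M_1}{\rho,c}}$ and $t$ is O-visible. Without loss of generality (closing the trace under the transitions available after an odd-length trace, exactly as in the HOSC case) we may take $t$ of odd length, ending in a P-action; and by the freshness invariance of Remark~\ref{rem:invar} we may assume $\nu(t)\cap(\circ\cup\{\errn\})=\emptyset$. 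The key observation, already flagged in the paragraph before Theorem~\ref{goscsound}, is that $t$ is O-visible iff $t^\bot\,\questP{\errn}{()}{c'}$ is P-visible; adding the terminal action $\questP{\errn}{()}{c'}$ preserves P-visibility precisely because $\errn\in\pav{\cdot}$ always. Hence $t^\bot\,\questP{\errn}{()}{c'}$ is an even-length P-visible $(\circ\uplus\{\errn\},\phi\uplus\{c\})$-trace starting with an O-action, where $\phi=\nu(\rho)$ and we have matched the name bookkeeping so that $\rho=\rho_{\vec{A_i}}$ for a suitable $(\vec{A_i},\vec{\gamma_i})$.

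Next I would apply Lemma~\ref{goscdef} to $t^\bot\,\questP{\errn}{()}{c'}$ to obtain a passive configuration $\CC=\cconf{h,K,\gamma}{\vec{\gamma_i},c}$, built from $\GOSC$ syntax, whose even-length traces are exactly the even-length prefixes of $t^\bot\,\questP{\errn}{()}{c'}$ (up to renaming). In particular $t^\bot\,\questP{\errn}{()}{c'}\in\TrR{\HOSC}{\CC}$. Since $t\in\TrR{\HOSC}{\cconf{M_1}{\rho_{\vec{A_i}},c}}$, the left-to-right direction of Lemma~\ref{lem:cor} gives $(K[M_1\substF{\gamma}],h)\opredtererr$. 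For $M_2$: if we had $(K[M_2\substF{\gamma}],h)\opredtererr$, then the right-to-left direction of Lemma~\ref{lem:cor} would yield some $s\in\TrR{\HOSC}{\cconf{M_2}{\rho_{\vec{A_i}},c}}$ with $s^\bot\,\questP{\errn}{()}{c''}\in\TrR{\HOSC}{\CC}$; by the characterisation of $\TrR{\HOSC}{\CC}$ from Lemma~\ref{goscdef}, $s^\bot\,\questP{\errn}{()}{c''}$ must be (a renaming of) a prefix of $t^\bot\,\questP{\errn}{()}{c'}$, forcing $s=t$ (after renaming), and by Lemma~\ref{lem:pvis} $s^\bot\,\questP{\errn}{()}{c''}$ is P-visible, hence $s=t$ is O-visible, hence by Lemma~\ref{lem:ovis} $t\in\TrR{\GOSC}{\cconf{M_2,\mathit{vis}}{\rho,c}}$, i.e. $((\rho,c),t)\in\trsem{\GOSC}{\seq{\Gamma}{M_2}}$ — contradicting our choice of $t$. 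Therefore $(K[M_2\substF{\gamma}],h)\not\opredtererr$, so the $\GOSC$ data $(\Sigma,h,K,\gamma)$ with the $\vec{\gamma_i}$-decomposition witness $\seq{\Gamma}{M_1\not\ciupre{\GOSC}{err}M_2}$, as required.

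The main obstacle is the $\GOSC$-specific definability step (Lemma~\ref{goscdef}), and in particular the careful matching of names: one must ensure that the configuration $\CC$ produced by Lemma~\ref{goscdef} arises as $\cconf{h,K,\gamma}{\vec{\gamma_i},c}$ for some genuine $\GOSC$ CIU-data $h,K,\gamma$ and some $(\vec{A_i},\vec{\gamma_i})\in\AVal{\gamma}{\Gamma}$ with $\rho_{\vec{A_i}}=\rho$, so that Lemma~\ref{lem:cor} is literally applicable on both sides. This is exactly where P-visibility of $t^\bot\,\questP{\errn}{()}{c'}$ is indispensable: it is what allows the context to be expressed in $\GOSC$ at all, recovering O-introduced names from variable bindings ($\lambda x.\cdots$ and $\callcct{\tau}(x.\cdots)$) rather than from higher-order store. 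The remaining reasoning — alternation, the odd/even length reduction, closure under renaming — is routine and identical in spirit to the HOSC[HOSC] argument.
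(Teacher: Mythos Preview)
Your proposal is correct and follows essentially the same approach as the paper: both arguments pivot on the $\GOSC$ definability lemma (Lemma~\ref{goscdef}) applied to the P-visible trace $t^\bot\,\questP{\errn}{()}{c'}$, combined with Lemma~\ref{lem:cor} and the visibility bridge of Lemma~\ref{lem:ovis}; the paper phrases it as a direct proof (assume $\ciupre{\GOSC}{err}$, show trace inclusion) whereas you argue the contrapositive, but the content is identical. Two cosmetic nits: the theorem is already stated for $\ciupre{\GOSC}{err}$, so Lemmas~\ref{lem:ciu} and~\ref{lem:ctxerr} are not needed here, and your references to the ``left-to-right'' and ``right-to-left'' directions of Lemma~\ref{lem:cor} are swapped.
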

\cutout{\begin{proof}
We follow the same path as in the proof of Theorem~\ref{hosccomplete} except that, in this case, we will have 
$t,t_1\in \TrR{\GOSC}{\cconf{M_1}{\rho_{\vec{A_i}},c}}$.
Consequently, we can conclude that $t_2 =t_1^{\bot}\,\questP{\errn}{()}{c'}$ is P-visible and invoke Lemma~\ref{goscdef} (instead of Lemma~\ref{hoscdef})
to obtain $C_O$ that corresponds to $h,K,\gamma$ from $\GOSC$. 
Because $k,K,\gamma$ are in $\GOSC$, we can then appeal to the assumption $\seq{\Gamma}{M_1 \ciuapperr^\GOSC M_2}$
and complete the proof like for Theorem~\ref{hosccomplete}.
\end{proof}}
Altogether, Theorems~\ref{goscsound},~\ref{gosccomplete} (along with Lemma~\ref{lem:ciu}) imply 
the following result.
\begin{corollary}[$\GOSC$ Full Abstraction]
Suppose $\seq{\Gamma}{M_1,M_2}$ are cr-free $\HOSC$ terms. Then
 $\trsem{\GOSC}{\seq{\Gamma}{M_1}}\subseteq \trsem{\GOSC}{\seq{\Gamma}{M_2}}$ iff
$\seq{\Gamma}{M_1 \ciupre{\GOSC}{err} M_2}$ iff
$\seq{\Gamma}{M_1 \ctxpre{\GOSC}{err} M_2}$.
\end{corollary}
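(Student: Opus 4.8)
The plan is to derive the corollary as a routine consequence of the three results that immediately precede it, by chaining two bidirectional equivalences.

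First I would establish that trace inclusion and CIU approximation coincide, i.e.\ that $\trsem{\GOSC}{\seq{\Gamma}{M_1}}\subseteq \trsem{\GOSC}{\seq{\Gamma}{M_2}}$ holds iff $\seq{\Gamma}{M_1 \ciupre{\GOSC}{err} M_2}$ does. The forward direction is exactly Theorem~\ref{goscsound} (Soundness) and the converse is Theorem~\ref{gosccomplete} (Completeness); both apply verbatim, since the hypothesis of the corollary --- that $\seq{\Gamma}{M_1,M_2}$ are cr-free $\HOSC$ terms --- is precisely the hypothesis of those theorems, and (as observed after the definition of cr-freeness) a cr-free term has an $\xx$ boundary for every $\xx$ in the range, so $\ciupre{\GOSC}{err}$ is well-defined on $M_1,M_2$. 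Then I would invoke the CIU Lemma (Lemma~\ref{lem:ciu}) with $\xx := \GOSC$ and $\yy := \err$ to obtain $\seq{\Gamma}{M_1 \ctxpre{\GOSC}{err} M_2}$ iff $\seq{\Gamma}{M_1 \ciupre{\GOSC}{err} M_2}$. Composing the two equivalences by transitivity yields the three-way statement.

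There is no real obstacle at the level of the corollary; the content sits entirely inside the two cited theorems, and if I had to flag the most delicate junction there it is the bridge between the $\HOSC[\HOSC]$ development of Section~\ref{sec:hoschosc} and the $\GOSC[\HOSC]$ one. Concretely, one must check that a witness trace $t$ for $\opredtererr$ produced by the Correctness Lemma~\ref{lem:cor} is O-visible exactly when $t^\bot\,\questP{\errn}{()}{c'}$ is P-visible, so that Lemma~\ref{lem:pvis} (traces of a $\GOSC$ context configuration are P-visible) on the context side matches Lemma~\ref{lem:ovis} ($\GOSC[\HOSC]$ traces are the O-visible $\HOSC[\HOSC]$ traces) on the term side, and dually that the $\GOSC$ Definability Lemma~\ref{goscdef} only needs to realise P-visible traces because appending $\questP{\errn}{()}{c'}$ keeps the trace P-visible (as $\errn$ is always P-visible). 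All of this bookkeeping is already discharged in Theorems~\ref{goscsound} and~\ref{gosccomplete}, so the proof of the corollary itself is just the short chaining described above; one could additionally fold in $\ctxpre{\GOSC}{ter}$ by appealing to Lemma~\ref{lem:ctxerr}, but that extra equivalence is not part of the stated corollary.
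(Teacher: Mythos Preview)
Your proposal is correct and follows exactly the approach indicated in the paper, which derives the corollary directly from Theorems~\ref{goscsound} and~\ref{gosccomplete} together with Lemma~\ref{lem:ciu}. Your additional commentary about the O-/P-visibility bridge is accurate but, as you note, already absorbed into the proofs of those theorems.
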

\begin{example}
In the \emph{Callback with lock} example (Example~\ref{ex:callback}), 
we exhibited traces $\tr_1, \tr_2$ that separated $\ex{M_1}{cwl},\ex{M_2}{cwl}$ 
wrt $\ctxpre{\HOSC}{err}$.
Example~\ref{ex:ovis} shows that neither trace is O-visible, i.e. they cannot be found in $\trsem{\GOSC}{\seq{\Gamma}{M_1}}$
or $\trsem{\GOSC}{\seq{\Gamma}{M_2}}$. Thus, the two traces cannot be used
to separate $\ex{M_1}{cwl}, \ex{M_2}{cwl}$ wrt $\ctxpre{\GOSC}{err}$. As already mentioned, this is in fact impossible: 
we have $\seq{}{\ex{M_1}{cwl}\ctxequ{\GOSC}{err} \ex{M_2}{cwl}}$.
\end{example}
\begin{example}[Well-bracketed state change~\cite{ADR09}]\label{ex:bracket}
Consider the following two terms
\[\begin{array}{rcl}
\ex{M_1}{wbsc}&\defeq&\letin{x=\nuref{0}}{\lambda f. (x:=0; f(); x:=1; f(); !x)}\\
\ex{M_2}{wbsc}&\defeq &\lambda f. (f();f(); 1).
\end{array}\]
of type $\tau=(\Unit\rarr\Unit)\rarr\Int$,  let 
\[
\tr_3=\ansP{c}{g}\,\,\,\,
\questO{g}{f_1}{c_1}\,\,\,\,
\questP{f_1}{()}{c_2}\,\,\,\,
\ansO{c_2}{()}\,\,\,\,
\questP{f_1}{()}{c_3}\,\,\,\,
\questO{g}{f_2}{c_4}\,\,\,\,
\questP{f_2}{()}{c_5}\,\,\,\,
\ansO{c_3}{()}\,\,\,\,
\ansP{c_1}{0}\]
and  let $\tr_4$ be obtained from $\tr_3$ by changing $0$ in the last action to $1$.
One can check that both traces are O-visible: in particular, the action $\ansO{c_3}{()}$ is not a violation
because 
\[\oav{ \ansP{c}{g}\,\,
\questO{g}{f_1}{c_1}\,\,
\questP{f_1}{()}{c_2}\,\,
\ansO{c_2}{()}\,\,
\questP{f_1}{()}{c_3}\,\,
\questO{g}{f_2}{c_4}\,\,
\questP{f_2}{()}{c_5}} = \{ g, c_3,c_5 \}.
\]
Moreover, we have 
$\tr_3 \in \TrR{\GOSC}{\cconf{\ex{M_1}{wbsc}}{{\emptyset,c}}}\setminus\TrR{\GOSC}{\cconf{\ex{M_2}{wbsc}}{{\emptyset,c}}}$
and
$\tr_4 \in \TrR{\GOSC}{\cconf{\ex{M_2}{wbsc}}{{\emptyset,c}}} \setminus \TrR{\GOSC}{\cconf{\ex{M_1}{wbsc}}{{\emptyset,c}}}$.
By the Corollary above, we can conclude that $\ex{M_1}{wbsc},\ex{M_2}{wbsc}$ are incomparable wrt $\ctxpre{\GOSC}{err}$.
However, they turn out to be $\ctxequ{\HOS}{err}$- and $\ctxequ{\GOS}{err}$-equivalent.
\end{example}


\section{HOS[HOSC]}\label{sec:hoshosc}

Recall that $\HOS$ is the fragment of $\HOSC$ that does not feature continuation types and the associated syntax.
In what follows we are going to provide alternative characterisations of $\ctxpre{\HOS}{err}$ and $\ctxpre{\HOS}{ter}$ 
in terms of trace inclusion and complete trace inclusion respectively. 

We start off by identifying several technical consequences of the restriction to $\HOS$ syntax.
First we observe that $\HOS$ internal reductions never change the associated continuation name.
\begin{lemma}\label{lem:hosinv}
If $(M,c,h)\ered (M',c',h')$,  $M$ is a $\HOS$ term and $h$ is a $\HOS$ heap then $c=c'$.
\end{lemma}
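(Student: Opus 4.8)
The plan is to proceed by case analysis on the reduction rule used to derive $(M,c,h)\ered(M',c',h')$, exactly mirroring the proof of Lemma~\ref{lem:goscinv}. Recall that $\ered$ is generated by two kinds of rules: the rule lifting an ordinary $\red$-step, namely $(M,h)\red(M',h')$ entailing $(M,c,h)\ered(M',c,h')$, and the two continuation-specific rules for $\callcc$ and $\throw$. For every rule obtained by lifting a $\red$-step, the continuation name is simply carried over unchanged by construction, so $c=c'$ holds trivially. Hence the only rules that could possibly change the continuation name are the two genuinely new ones:
\[
(K[\callcct{\tau}{(x.M)}],c,h)\ered(K[M\subst{x}{\contt{\tau}{(K,c)}}],c,h)
\qquad
(K[\throwtot{\tau}{V}{\contt{\tau}{(K',c')}}],c,h)\ered(K'[V],c',h).
\]

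The key observation is that neither of these two rules can fire when $M$ is a $\HOS$ term. By definition, $\HOS$ terms never contain subterms of the form $\callcct{\tau}(x.M)$, $\throwtot{\tau}{M}{N}$, or $\contt{\tau}{K}$, since their typing derivations rely only on $\HOS$ types, which lack the $\mathrm{cont}$ constructor. So I would argue that if $M$ is a $\HOS$ term, then $M$ contains no redex of the first rule (no $\callcc$ subterm) and no redex of the second rule (no $\throw$ subterm). Consequently the only applicable $\ered$-steps are those lifted from $\red$, and for all of these $c'=c$. This gives the conclusion directly.

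One mild subtlety I would want to address: the reduction operates on the \emph{extended} syntax, in which $\contt{\sigma}{(K,c)}$ can appear. So I should note that $\HOS$ terms, viewed inside the extended language, still contain no such subterms — being $\HOS$ is a syntactic restriction preserved under the relevant typing, and the extended constructs are exactly the control features excluded by the $\HOS$ restriction. I would also remark that in the active-configuration setting of the LTS, the term component $M$ of a configuration reachable from a $\HOS$ program configuration remains $\HOS$-like in the sense that it never acquires $\callcc$/$\throw$/$\cont$ subterms through $\ered$ (a one-line induction using the fact that none of the $\ered$-rules introduce such subterms from a $\HOS$ starting point), though for the bare statement of this lemma a single reduction step is all that is needed.

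The main (and only) obstacle is really just the careful bookkeeping of which reduction rules can fire — there is no deep content here, just a clean case split. The analogue of the $(K[!\ell],h)\red(K[h(\ell)],h)$ complication that arose in Lemma~\ref{lem:goscinv} does not even appear, because we are tracking the continuation name $c$ rather than the set of names, and dereferencing the heap leaves $c$ untouched regardless of what the heap stores. So this proof is strictly simpler than that of Lemma~\ref{lem:goscinv}.
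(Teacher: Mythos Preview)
Your proof is correct and follows essentially the same approach as the paper's one-line argument. The only minor difference is that the paper observes more precisely that the $\callcc$ rule leaves $c$ unchanged anyway, so the sole rule that \emph{could} alter $c$ is the $\throw$ rule---hence excluding $\callcc$ (while harmless) is not actually needed.
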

\begin{proof}
The only rule that could change $c$ is the rule for $\mathrm{throw}$, but it is not part of $\HOS$.
\end{proof}
The lemma has a bearing on the shape of traces generated by  the (passive) 
configurations $\cconf{h, K,\gamma}{\vec{\gamma_i}, c}$ corresponding to $\HOS$ contexts.
In the presence of $\mathrm{throw}$ and storage for continuations, it was possible for P to play answers involving 
arbitrary continuation names introduced by O. By Lemma~\ref{lem:hosinv}, in $\HOS$ this will be restricted to the continuation name
of the current configuration, which will restrict the shape of possible traces.
Below we identify the continuation name $\topp{t}$ that becomes the relevant name after trace $t$.
If the last move was an O-question then the continuation name introduced by that move will become that name.
Otherwise, we track a chain of answers and questions, similarly to the definition of P-visibility.

Observe that, because $h,K,\gamma$ are from $\HOS$,
$\cconf{h, K,\gamma}{\vec{\gamma_i}, c}$ will generate
$( \{\circ_{\tau'},\errn\}, \phi\uplus \{c\})$-traces, where $\tau'$ is the result type of $K$, because $h_\circ=h,K_\circ=K, \gamma_\circ=\gamma$.
\begin{definition}\label{def:pbra}
Suppose $\phi\uplus\{\errn\}\subseteq\FNames$ and $c\in\CNames$.
Let $t$ be a $(\{\circ_{\tau'},\errn\}, \phi\uplus\{c\})$-trace of odd length starting with an O-action.
The continuation name $\topp{t}$ is defined as follows.
\[\begin{array}{rcll}
\topp{t \,\, \ansO{c}{A}} & = & \tern_{\tau'}\\
\topp{t_1 \,\,\questP{f}{A''}{c'}\,\, t_2\,\, \ansO{c'}{A'}} & = & \topp{t_1}\\
\topp{t\,\, \questO{f}{A'}{c'}} &=& c'
\end{array}\]
We say that a  $(\{\circ_{\tau'}\cup \{\errn\}, \phi\uplus\{c\})$-trace $t$  starting with an O-action is \boldemph{P-bracketed} if, 
for any prefix $t' \,\,\ansP{c'}{A}$ of $t$  (i.e. any prefix ending 
with a P-answer), we have $c' =\topp{t'}$.
\end{definition}

\begin{lemma}\label{lem:pbracket}
Consider $\CC=\cconf{h, K,\gamma}{\vec{\gamma_i}, c}$, where $h,K,\gamma$ are from $\HOS$ 
and $(\vec{A_i},\vec{\gamma_i})\in \AVal{\gamma}{\Gamma}$.
Then all  traces in $\TrR{\HOSC}{\CC}$ are P-bracketed.
\end{lemma}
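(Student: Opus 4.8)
The plan is to prove the statement by induction on the length of the traces generated from $\CC$, after strengthening it to a claim about the components of every reachable configuration; the bare statement ``every P-answer is bracketed'' is not inductive on its own. As a preliminary I would observe that, since $h$, $K$ and $\gamma$ are built from $\HOS$ syntax, a routine preservation argument shows that every configuration reachable from $\CC$ is again built from $\HOS$ syntax (no transition can introduce $\callcc$, $\mathrm{throw}$ or a continuation value), so that Lemma~\ref{lem:hosinv} applies to every internal reduction step along the way. Writing $\tau'$ for the result type of $K$, traces from $\CC$ start with an O-action, hence odd-length prefixes end at active configurations and even-length ones at passive configurations.

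The invariant I would maintain, for every configuration reached from $\CC$ via a trace $t$, has two clauses: (i) if the configuration is active with running continuation name $c'$, then $c'=\topp{t}$; and (ii) for every $d$ in the domain of its $\xi$-component, either $d=c$ and $\xi(d)=\tern_{\tau'}$, or $d$ is introduced by a unique $(PQ)$-action that splits $t$ as $t_1\,\questP{f}{A}{d}\,t_2$ and then $\xi(d)=\topp{t_1}$. Intuitively, $\xi$ records for each live continuation name the name to which control returns, and (ii) says this return structure is exactly the stack discipline underlying $\topp{\cdot}$. The Lemma is then immediate from (i): a $(PA)$-transition on a name $c'$ fires from an active configuration reached via the relevant odd-length prefix $t'$, whose running continuation name is that very $c'$, so $c'=\topp{t'}$ by (i), which is the P-bracketing condition for the prefix $t'\,\ansP{c'}{A}$.

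For the induction, the base case $t=\epsilon$ is trivial as $\CC$ is passive with $\xi=\{c\mapsto\tern_{\tau'}\}$. In the inductive step I would case on the last transition. An internal $(P\tau)$ step changes neither $t$ nor $\xi$ and, by Lemma~\ref{lem:hosinv}, not the running continuation name, so both clauses persist. Transitions $(PA)$, $(OA)$ and $(OQ)$ leave $\xi$ unchanged and append an action that falls into the ``$t_2$'' part of every already-introduced $d$, so (ii) is preserved; a $(PQ)$-transition sets $\xi(d)=c'$ for the fresh $d$, where $c'$ is the current running continuation name, equal to $\topp{t}$ by (i), which is what (ii) demands for the split with $t_1=t$. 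For (i), the only transitions producing an active configuration are $(OQ)$ — where the new running name is precisely the name the action creates, namely $\topp{t\,\questO{f}{A}{c'}}$ by definition — and $(OA)$ on some $d$ — where the new running name is $\xi(d)$, which by (ii) equals $\tern_{\tau'}$ or $\topp{t_1}$, and in both cases this is $\topp{t\,\ansO{d}{A}}$ by the two answer clauses in the definition of $\topp{\cdot}$.

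The main obstacle is the usual one for results of this flavour: pinning down the exact form of the strengthened invariant. One has to look past the single ``top'' continuation name and track the whole $\xi$-stack, phrasing clause (ii) so that it meshes with the recursion of $\topp{\cdot}$ on O-answers; with the wrong formulation the $(OA)$ case cannot be closed. Once the invariant is right, every case is a short check, exactly as in the visibility argument behind Lemma~\ref{lem:pvis}, with Lemma~\ref{lem:hosinv} playing the role that the ground-storage restriction played there.
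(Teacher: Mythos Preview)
Your proposal is correct and follows essentially the same route as the paper: both strengthen the claim to an invariant asserting that the running continuation name of every reachable active configuration equals $\topp{t}$ and that $\xi$ records, for each P-introduced continuation name $d$, the value $\topp{t_1}$ where $t_1$ is the prefix preceding $d$'s introduction; both then close each transition case in the same way, invoking Lemma~\ref{lem:hosinv} for $(P\tau)$. The only cosmetic difference is that you state the $\xi$-invariant globally (for all $d\in\dom{\xi}$), whereas the paper states it just for the most recently introduced name and relies on the monotone growth of $\xi$ to recover the general fact via the inductive hypothesis at the earlier point of introduction.
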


The Lemma above characterizes the restrictive nature of contextual interactions with $\HOS$ contexts.
Next we shall constrain the  $\HOSC[\HOSC]$ LTS accordingly to capture the restriction. Note that, from the point of view of the term,
the above-mentioned constraint concerns the use of continuation names by O (the context), so we need to talk about O-bracketing instead.
This dual notion of ``a top name for O'' is specified below.
\begin{definition}
Suppose $\phi\subseteq\FNames$ and $c\in \CNames$.
Let $t$ be a $(\phi\uplus\{c\},\emptyset)$-trace of odd length.
The continuation name $\topo{t}$ is defined as follows.
In the first case, the value is $\bot$ (representing ``none''), 
because $c$ is the top continuation passed by the environment to the term (if it gets answered
there is nothing left to answer).
\[\begin{array}{rcll}
\topo{t \,\, \ansP{c}{A}} & = & \bot\\
\topo{t_1 \,\,\questO{f}{A''}{c'}\,\, t_2\,\, \ansP{c'}{A'}} & = & \topo{t_1}\\
\topo{t\,\, \questP{f}{A'}{c'}} &=& c'
\end{array}\]
We say that a  $(\phi\uplus\{c\},\emptyset)$-trace $t$ is \boldemph{O-bracketed} if, for any prefix $t' \,\,\ansP{c'}{A}$ of $t$  (i.e. any prefix ending 
with a P-answer), we have $c' = \topo{t'}$.
\end{definition}
\begin{figure}[t]
\[  \begin{array}{l|lll}
   (P\tau) & \conf{M,c,\gamma,\xi, \phi,h} & \ired{\ \tau \ } & 
     \conf{N,c',\gamma,\xi, \phi,h'} \\
   & \multicolumn{3}{l}{\text{ when } (M,c,h) \ered (N,c',h')}\\
   (PA) & \conf{V,c,\gamma,\xi,\phi,h} & \ired{\ansP{c}{A}} & 
     \conf{\gamma \cdot \gamma',\xi,\phi\uplus\nu(A),h,c'} \\
      & \multicolumn{3}{l}{\text{ when } c:\sigma,\,  (A,\gamma') \in \AVal{V}{\sigma},\,\xi(c)=c' } \\  
   (PQ) & \conf{K[fV],c,\gamma,\xi,\phi,h} & \ired{\questP{f}{A}{c'}} & 
    \conf{\gamma \cdot\gamma'\cdot[c' \mapsto K],\xi\cdot [c' \mapsto c],
    \phi\uplus\nu(A)\uplus \{c'\},h, c'} \\
    & \multicolumn{3}{l}{\text{ when } f:\sigma \rarr \sigma',\, (A,\gamma') \in \AVal{V}{\sigma},\, c':\sigma'}\\
   (OA) & \conf{\gamma,\xi,\phi,h,c''} & \ired{\ansO{c}{A}} & 
    \conf{K[A],c',\gamma,\xi,\phi \uplus \nu(A),h} \\
    & \multicolumn{3}{l}{\text{ when }  c=c'',\, c:\sigma,\,A:\sigma,\,\gamma(c) = K,\,   \xi(c)=c'}\\
   (OQ) & \conf{\gamma,\xi,\phi,h,c''} & \ired{\questO{f}{A}{c}} & 
     \conf{V A,c,\gamma,\xi\cdot [c\mapsto c''],\phi \uplus \nu(A)  \uplus \{c\},h} \\
     & \multicolumn{3}{l}{\text{ when } f:\sigma \rarr \sigma',\,  A:\sigma,\, c:\sigma',\,\gamma(f) = V }\\[3mm]
  \end{array}
 \]
\caption{$\HOS[\HOSC]$ LTS}\label{fig:hos}
\end{figure}

In Figure~\ref{fig:hos}, we present a new LTS, called the  $\HOS[\HOSC]$ LTS, 
which will turn out to capture $\ctxpre{\HOS}{err}$.
It is obtained from the $\HOSC[\HOSC]$ LTS by restricting O-actions to those that satisfy O-bracketing.
Technically, this is done by enriching passive configurations with a component for storing  the current 
value of $\topo{t}$.  In order to maintain this information, we need to know which continuation will
become the top one if P plays an answer. This can be done with a map that maps continuations 
introduced by O to other continuations. Because its flavour is similar to $\xi$ (which is a map from continuations
introduced by P) we integrate this information into $\xi$. The $c=c''$ side condition then enforces O-bracketing.
We shall write $\TrR{\HOS}{\CC}$ for the set of traces generated 
from $\CC$ in the $\HOS[\HOSC]$ LTS.

Recall that, given a $\Gamma$-assignment $\rho$, term $\seq{\Gamma}{M:\tau}$ and  $c:\tau$, 
the active configuration  $\cconf{M}{\rho,c}$ was defined by
 $\cconf{M}{\rho,c} = \conf{M\{\rho\}, c, \emptyset, \emptyset, \nu(\rho)\cup\{c\}, \emptyset}$.
 We upgrade it to the new LTS by setting
$\cconf{M,\mathit{bra}}{\rho,c}= \conf{M\{\rho\}, c, \emptyset,[c\mapsto \bot], \nu(\rho)\cup\{c\}, \emptyset,\emptyset}$.
This initializes $\xi$ in such a way that, after $\ansP{c}{A}$ is played, the extra component will be set to $\bot$,
where $\bot$ is a special element not in $\CNames$.


{
 \begin{definition}
 The \boldemph{$\HOS[\HOSC]$ trace semantics} of a cr-free HOSC term $\seq{\Gamma}{M:\tau}$ is defined to be
 \[
 \trsem{\HOS}{\seq{\Gamma}{M:\tau}} = \{  ((\rho,c) ,t)\,|\, \textrm{$\rho$ is a $\Gamma$-assignment},\,c:\tau,\,t\in 
 \TrR{\HOS}{\cconf{M,\mathit{bra}}{\rho,c}} \}.
 \]
   \end{definition}}
 By construction, it follows that
\begin{lemma}\label{lem:obracket}
$t\in \TrR{\HOS}{\cconf{M,\mathit{bra}}{\rho,c}}$ iff $t\in  \TrR{\HOSC}{\cconf{M}{\rho,c}}$ and $t$ is O-bracketed.
\end{lemma}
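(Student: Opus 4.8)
The plan is to prove the lemma by a routine simulation argument, exploiting that the $\HOS[\HOSC]$ LTS is a conservative extension of the $\HOSC[\HOSC]$ LTS: passive configurations additionally carry a component $c'' \in \CNames \uplus \{\bot\}$; rule (OQ) additionally records the entry $c\mapsto c''$ in $\xi$ for the O-introduced continuation $c$ it uses (so $\xi$ now ranges over both P- and O-introduced continuations); and rule (OA) is gated by the extra side condition $c = c''$. First I would fix the projection $\pi$ sending a $\HOS[\HOSC]$ configuration to the $\HOSC[\HOSC]$ one obtained by deleting $c''$ (when present) and restricting $\xi$ to $\dom{\gamma}\cap\CNames$, and then establish, by induction on the length of $t$, the invariant: if $\cconf{M,\mathit{bra}}{\rho,c}\iRed{t}\CC$ in the $\HOS[\HOSC]$ LTS, then $\cconf{M}{\rho,c}\iRed{t}\pi(\CC)$ in the $\HOSC[\HOSC]$ LTS, the trace $t$ is O-bracketed, and whenever $\CC$ is passive its $c''$-component equals $\topo{t}$; more strongly, the restriction of the $\xi$ of $\CC$ to O-introduced continuations encodes the stack of pending continuations, so that $\topo{}$ of every prefix of $t$ is determined by the configurations traversed.

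For the left-to-right direction I would simply project a given $\HOS[\HOSC]$ run. Each of (P$\tau$), (PA), (PQ) and (OQ) has exactly the same enabling condition in both systems --- the $\HOS[\HOSC]$ versions only perform extra bookkeeping on $\xi$ and $c''$, which never disables a transition (the lookups $\xi(c)$ they require are always defined, by Lemma~\ref{lem:hosinv} and the $\mathit{bra}$-initialisation) --- so $\pi$ turns the run into a $\HOSC[\HOSC]$ run on the same trace, giving $t\in\TrR{\HOSC}{\cconf{M}{\rho,c}}$; and the side condition $c=c''$ of (OA), combined with $c''=\topo{t'}$ at the relevant prefix $t'$ from the invariant, is precisely the defining clause of O-bracketedness, so $t$ is O-bracketed. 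For the converse I would take an O-bracketed $t\in\TrR{\HOSC}{\cconf{M}{\rho,c}}$ and reconstruct the missing data step by step along a generating run: the successive values of $c''$ and the O-entries of $\xi$ are forced by the actions of $t$ through rules (PA), (PQ) and (OQ), while O-bracketedness of $t$ guarantees exactly that each use of (OA) in the reconstruction satisfies $c=c''$; hence the reconstructed run is legal in $\HOS[\HOSC]$, so $t\in\TrR{\HOS}{\cconf{M,\mathit{bra}}{\rho,c}}$.

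The hard part will be the inductive maintenance of the clause ``$c''$-component $=\topo{t}$'', i.e.\ checking that the entry $c\mapsto c''$ installed by (OQ) when O questions through $c$ --- morally ``answering $c$ reinstates the earlier top'' --- together with the lookup $\xi(c)=c'$ performed by (PA) when P answers through $c$, faithfully reproduces the chain-following clauses defining $\topo{}$: that answering a continuation pops back to the top attached to the enclosing O-question, with the $\mathit{bra}$-initialisation $\xi=[c\mapsto\bot]$ taking care of the base clause $\topo{t\,\ansP{c}{A}}=\bot$. This is closely parallel to the way the history component $\ViewF$ is maintained in the $\GOSC[\HOSC]$ LTS (cf.\ Lemma~\ref{lem:ovis}), and I expect it to be dischargeable by a case analysis on the last action of $t$.
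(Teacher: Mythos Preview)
Your proposal is correct and is exactly the natural spelling-out of what the paper leaves implicit: the lemma is introduced there with ``By construction, it follows that'' and no further proof is given, so your projection/reconstruction argument together with the invariant $c''=\topo{t}$ is precisely what one would write to justify that line.

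One point does need fixing, however. Your appeal to Lemma~\ref{lem:hosinv} to argue that the lookup $\xi(c)$ in rule (PA) is always defined is misplaced: that lemma is about \emph{$\HOS$} terms, whereas here $M$ is a $\HOSC$ term and may well change the active continuation via $\throw$ during (P$\tau$). The claim you need is still true, but for a different reason. The right invariant is: (i) every O-introduced continuation name lies in $\dom{\xi}$ --- the initial $c$ by the $\mathit{bra}$-initialisation $[c\mapsto\bot]$, and every later one by the update in (OQ) --- and $\xi$ is only ever extended; and (ii) the active continuation in any reachable active configuration is always an O-introduced name. Point (ii) holds because the active continuation changes only via (OQ) (to a fresh O-name), via (OA) (to the $c'$ stored by (PQ), which was the active continuation at the time (PQ) fired), or via $\throw$ in (P$\tau$) (to a name captured by $\callcc$, again a previously active continuation); an easy induction on the run length then gives (ii), and with (i) you get that $\xi(c)$ is defined whenever (PA) fires. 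With this correction your reconstruction for the right-to-left direction goes through.
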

Noting that the witness trace $t$ from Lemma~\ref{lem:cor} is O-bracketed iff $t^\bot\, \questP{\errn}{()}{c'}$ is P-bracketed,
we can conclude that, for $\HOS$, the traces relevant to $\opredtererr$ are O-bracketed, which yields:
 \begin{theorem}[Soundness]\label{hossound}
 For any cr-free $\HOSC$ terms $\seq{\Gamma}{M_1, M_2}$,  if\\ $\trsem{\HOS}{\seq{\Gamma}{M_1}}$ $\subseteq$ $\trsem{\HOS}{\seq{\Gamma}{M_2}}$ then
  $\seq{\Gamma}{M_1 \ciupre{\HOS}{err} M_2}$.
  \end{theorem}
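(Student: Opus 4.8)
The plan is to transport the proof of Theorem~\ref{goscsound} verbatim, with \emph{bracketing} playing the role that \emph{visibility} played there. Unfolding the definition of $\ciupre{\HOS}{err}$, I would fix $\Sigma$, a heap $h:\Sigma;\herr$, an evaluation context $\cseq{\Sigma;\herr}{K}{\tau}$ and a substitution $\seq{\Sigma;\herr}{\gamma:\Gamma}$, all built from $\HOS$ syntax, such that $(K[M_1\substF{\gamma}],h)\opredtererr$; the goal is $(K[M_2\substF{\gamma}],h)\opredtererr$. Next I would pick some $(\vec{A_i},\vec{\gamma_i})\in\AVal{\gamma}{\Gamma}$ — such a decomposition exists because each $\gamma(x_i)$ is a closed value of $\mathrm{cont}$- and $\mathrm{ref}$-free boundary type, hence splits into an abstract value and a matching — together with a continuation name $c:\tau$, $c\notin\tern$. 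The induced $\rho_{\vec{A_i}}$ is then a $\Gamma$-assignment.

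First I would invoke Lemma~\ref{lem:cor} (error clause, left to right): there are $t$ and $c'$ with $t\in\TrR{\HOSC}{\cconf{M_1}{\rho_{\vec{A_i}},c}}$ and $t^\bot\,\questP{\errn}{()}{c'}\in\TrR{\HOSC}{\cconf{h,K,\gamma}{\vec{\gamma_i},c}}$. Since $h,K,\gamma$ are $\HOS$ syntax, Lemma~\ref{lem:pbracket} makes $t^\bot\,\questP{\errn}{()}{c'}$ $P$-bracketed. By the observation recorded just above the statement — flipping polarities carries $\topo{\cdot}$ to $\topp{\cdot}$ and the trailing $P$-\emph{question} $\questP{\errn}{()}{c'}$ imposes no bracketing obligation and does not alter $\topp{\cdot}$ at any earlier $P$-answer — this is equivalent to $t$ being $O$-bracketed. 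Hence, by Lemma~\ref{lem:obracket} (right to left), $t\in\TrR{\HOS}{\cconf{M_1,\mathit{bra}}{\rho_{\vec{A_i}},c}}$, so $((\rho_{\vec{A_i}},c),t)\in\trsem{\HOS}{\seq{\Gamma}{M_1}}$.

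Now the hypothesis $\trsem{\HOS}{\seq{\Gamma}{M_1}}\subseteq\trsem{\HOS}{\seq{\Gamma}{M_2}}$ gives $((\rho_{\vec{A_i}},c),t)\in\trsem{\HOS}{\seq{\Gamma}{M_2}}$, i.e. $t\in\TrR{\HOS}{\cconf{M_2,\mathit{bra}}{\rho_{\vec{A_i}},c}}$, and Lemma~\ref{lem:obracket} (left to right) upgrades this to $t\in\TrR{\HOSC}{\cconf{M_2}{\rho_{\vec{A_i}},c}}$. Feeding this $t$ and the already-established $t^\bot\,\questP{\errn}{()}{c'}\in\TrR{\HOSC}{\cconf{h,K,\gamma}{\vec{\gamma_i},c}}$ back into Lemma~\ref{lem:cor} (error clause, right to left) yields $(K[M_2\substF{\gamma}],h)\opredtererr$, which is exactly what was required.

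Since the argument is a direct structural transport of the $\GOSC$ case, I do not anticipate a genuine obstacle at the level of this proof; the substance sits in the lemmas it consumes. The point that most deserves care is the equivalence ``$t$ is $O$-bracketed if and only if $t^\bot\,\questP{\errn}{()}{c'}$ is $P$-bracketed'': one must check it is insensitive to the appended $\questP{\errn}{()}{c'}$ and that the dual notion of bracketing from Definition~\ref{def:pbra} is precisely the invariant enforced by the $\HOS[\HOSC]$ LTS (the $\xi\cdot[c\mapsto c'']$ update in $(OQ)$ together with the $c=c''$ side condition in $(OA)$) — this last matching is what Lemma~\ref{lem:obracket} packages. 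The other load-bearing fact, Lemma~\ref{lem:pbracket}, ultimately rests on Lemma~\ref{lem:hosinv}: lacking $\mathrm{throw}$, a $\HOS$ context can only answer along the continuation name it currently holds.
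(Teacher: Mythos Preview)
Your proposal is correct and follows essentially the same approach as the paper: fix $\HOS$ data $h,K,\gamma$, invoke Lemma~\ref{lem:cor} to obtain $t$ and $t^\bot\,\questP{\errn}{()}{c'}$, use Lemma~\ref{lem:pbracket} to get P-bracketing and hence O-bracketing of $t$, pass through Lemma~\ref{lem:obracket} and the trace-inclusion hypothesis, and close with Lemma~\ref{lem:cor} in the other direction. Your added remarks on why appending the P-question preserves the bracketing correspondence and on the role of Lemma~\ref{lem:hosinv} are accurate elaborations of steps the paper leaves implicit.
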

For the converse,  we establish another definability result, this time for a P-bracketed trace.
 \begin{lemma}[Definability]\label{lem:hosdef}
Suppose $\phi\uplus\{\errn\}\subseteq\FNames$
and $t$ is an even-length \emph{P-bracketed} $( \{\circ_{\tau'},\errn\},\phi\uplus\{c\})$-trace starting with an O-action.
There exists a passive configuration $\CC$ such that the even-length traces
 $\TrR{\HOSC}{\CC}$ are exactly the even-length prefixes of $t$
(along with all renamings that preserve types and $\phi\uplus \{c,\circ_{\tau'},\errn\}$).
Moreover, $\CC=\conf{\gamma\cdot [c\mapsto K], \{ c\mapsto \tern_{\tau'} \}, \phi\uplus \{c,\circ_{\tau'},\errn\},h}$,
where 
$h,K,\gamma$ are built from $\HOS$ syntax.
\end{lemma}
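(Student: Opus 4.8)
The plan is to realise $t$ by a step-by-step ``stepper'' construction, in the spirit of Lemmas~\ref{hoscdef} and~\ref{goscdef}, but using only features available in $\HOS$: higher-order references, and neither $\callcc$ nor $\throw$. Write $t=o_1p_1\cdots o_np_n$ for the O/P-alternating decomposition, with $o_1$ first. We keep a single reference $\ell_{\mathit{cnt}}:\reftype{\,\Int}$, initialised to $0$, recording how many rounds of $t$ have been simulated, and, for every function name O may introduce along $t$, a dedicated reference cell that will hold the value eventually bound to that name. Since we are \emph{not} restricted to $P$-visible traces here, a later $P$-question may need to call an O-function that is no longer in lexical scope, which is why O-functions are stashed in the heap (as in the $\HOSC$ case) rather than kept in $\lambda$-bindings (as in $\GOSC$). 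Dually, since $\HOS$ cannot name a continuation or $\throw$ to one, $P$ can only ever answer the continuation currently ``on top'', and this is exactly where $P$-bracketedness of $t$ is used: a $P$-answer is discharged simply by evaluating to a value inside the surrounding, suspended evaluation context, and $P$-bracketedness guarantees this is the continuation $t$ demands.

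Concretely, each $\gamma(f)$ (for $f\in\phi$), and likewise the $\letin{x=\bullet}{\cdots}$ skeleton playing the role of $K$, is a piece of $\HOS$ code that, on being (re)entered, increments $\ell_{\mathit{cnt}}$, reads the round index $j$, checks that the O-action just made agrees with $o_j$ --- the same communicating name, and, for the constant parts of the abstract value, the same integers and booleans (diverging otherwise) --- writes the O-function names carried by $o_j$ into their cells, and then performs $p_j$. Performing $p_j$ means: for a $P$-question $\questP{g}{B}{c'}$, fetch $g$ from its cell and apply it to a realisation of $B$, the fresh continuation $c'$ being left implicit as the hole of the evaluation context thereby suspended; for $\questP{\errn}{()}{c'}$, apply the free variable $\err$ to $()$; and for a $P$-answer $\ansP{c'}{A}$, evaluate to a realisation of $A$. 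Each fresh $P$-function name occurring in such a $B$ or $A$ is realised by a further copy of the same generic dispatch code, tagged with its own identity. No $\callcc$ or $\throw$ is used, so $h,K,\gamma$ lie in $\HOS$; and since $\HOS$ admits references of every type, the cells storing O-functions are well typed.

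Two points need care. The first is tightness: that among even-length traces $\TrR{\HOSC}{\CC}$ is exactly the set of even-length prefixes of $t$, up to renaming. Here I would argue that any deviation by O --- a wrong communicating name, a wrong constant inside an abstract value, or any available O-move other than $o_j$ --- trips the guard in the freshly entered body and makes the context diverge, so no even-length extension arises; conversely $o_j$ can always be replayed and the deterministic dispatch then emits $p_j$. The second point, and the real crux, is the continuation bookkeeping: I would maintain, by induction along $t$, the invariant that in every reachable configuration the current continuation name equals $\topp{\cdot}$ of the simulated prefix, and, more generally, that the $\xi$-chain mirrors the way $\topp{\cdot}$ evolves --- which holds because $\HOS$ evaluation is stack-shaped and $\xi$ faithfully records the nesting of suspended evaluation contexts. $P$-bracketedness of $t$ says precisely that whenever $t$ performs $\ansP{c'}{A}$ after a prefix $t'$ one has $c'=\topp{t'}$, so the ``evaluate to a value'' implementation of that action returns to the right continuation. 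Establishing and propagating this invariant across $P$-questions (which extend the chain) and O-answers (which shorten it, possibly orphaning more recently suspended contexts) is where I expect the main work to lie.
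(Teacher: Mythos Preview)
Your approach is correct and matches the paper's in spirit: both hinge on the observation that P-bracketedness means every P-answer targets exactly the current continuation, so no $\throw$ (and hence no $\callcc$, and no continuation-storing references) is ever needed; what remains of the HOSC construction is pure HOS syntax with higher-order references used to stash O-functions. The paper frames this as a simplification argument --- take the HOSC definability construction of Lemma~\ref{hoscdef} verbatim, note that the $\throw$ in the $D_i\to E_i$ step is from $\topp{o_1\cdots o_{i+1}}$ to $c_O^{j'}$, which coincide under P-bracketing, hence $\throw$ is trivial and $\callcc/\cor{j}$ become dead code --- whereas you describe the resulting HOS construction directly; the substance and the invariant you identify (current continuation $=\topp{\cdot}$) are the same.
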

\begin{proof}[Sketch]
Our argument for $\HOSC$ is structured in such a way that, for a P-bracketed trace, there is no need for continuations 
(throwing and continuation capture are not necessary).
\end{proof}
Completeness now follows because, for a potential witness trace $t$ from Lemma~\ref{lem:cor},
one can create a corresponding context by invoking the Definability result for $t^\bot\, \questP{\errn}{()}{c'}$.
It is crucial that the addition of $\questP{\errn}{()}{c'}$ does not break P-bracketing (it does not, because the action is a question).
 \begin{theorem}[Completeness]\label{hoscomplete}
 For any cr-free $\HOSC$ terms $\seq{\Gamma}{M_1,M_2}$,
 if  $\seq{\Gamma}{M_1 \ciupre{\HOS}{err} M_2}$ then 
 $\trsem{\HOS}{\seq{\Gamma}{M_1}}$ $\subseteq$ $\trsem{\HOS}{\seq{\Gamma}{M_2}}$.
  \end{theorem}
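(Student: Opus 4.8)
The plan is to follow the route of the proofs of Theorems~\ref{hosccomplete} and~\ref{gosccomplete}, substituting P-bracketing for P-visibility and using the $\HOS$ definability result (Lemma~\ref{lem:hosdef}) in place of Lemma~\ref{hoscdef}. Assume $\seq{\Gamma}{M_1\ciupre{\HOS}{err}M_2}$, fix a $\Gamma$-assignment $\rho$ with $\rho(x_i)=A_i$ and $c:\tau$, and write $\phi=\bigcup_i\nu(A_i)$; we must show $\TrR{\HOS}{\cconf{M_1,\mathit{bra}}{\rho,c}}\subseteq\TrR{\HOS}{\cconf{M_2,\mathit{bra}}{\rho,c}}$. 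As for $\HOSC$, I would first argue that it suffices to treat odd-length traces $t$: after an odd-length $t$ both term configurations are passive, and which O-action may fire next -- together with the top-continuation component tested by O-bracketing in the $\HOS[\HOSC]$ LTS -- is fixed by the names occurring in $t$, so every O-extension of $t$ legal from $M_1$ is legal from $M_2$. Then, given an odd-length $t\in\TrR{\HOS}{\cconf{M_1,\mathit{bra}}{\rho,c}}$ (which starts and ends with a P-action), Lemma~\ref{lem:obracket} gives $t\in\TrR{\HOSC}{\cconf{M_1}{\rho,c}}$ with $t$ O-bracketed.

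Next I would cross to the context side. Picking a fresh continuation name $c'$, the sequence $t^\bot\,\questP{\errn}{()}{c'}$ is, by the length and polarity of $t$, an even-length $(\{\tern_{\tau'},\errn\},\phi\uplus\{c\})$-trace starting with an O-action, and it is P-bracketed: O-bracketedness of $t$ makes $t^\bot$ P-bracketed in the sense of Definition~\ref{def:pbra}, and appending the \emph{question} $\questP{\errn}{()}{c'}$ cannot break P-bracketing, which constrains only P-answers. Lemma~\ref{lem:hosdef} then supplies a passive configuration $\CC$, of the shape stated there with $h,K$ and the accompanying function-name environment all built from $\HOS$ syntax, whose even-length traces are exactly the even-length prefixes of $t^\bot\,\questP{\errn}{()}{c'}$ (up to renamings fixing $\phi\uplus\{c,\tern_{\tau'},\errn\}$). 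I would then recognise $\CC$ as a context configuration $\cconf{h,K,\gamma'}{\vec{\gamma_i},c}$: take $\gamma_i$ to be the restriction of the function-name environment of $\CC$ to $\nu(A_i)$ and the CIU substitution $\gamma'$ with $\gamma'(x_i)=A_i\substF{\gamma_i}$; this is a $\HOS$ substitution with $(\vec{A_i},\vec{\gamma_i})\in\AVal{\gamma'}{\Gamma}$ and $\rho_{\vec{A_i}}=\rho$, and its upgraded data is precisely $\CC$.

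To finish, from $t\in\TrR{\HOSC}{\cconf{M_1}{\rho,c}}$ and $t^\bot\,\questP{\errn}{()}{c'}\in\TrR{\HOSC}{\CC}$, the right-to-left direction of Lemma~\ref{lem:cor} gives $(K[M_1\substF{\gamma'}],h)\opredtererr$; since $h,K,\gamma'$ are $\HOS$ syntax, the hypothesis $\seq{\Gamma}{M_1\ciupre{\HOS}{err}M_2}$ yields $(K[M_2\substF{\gamma'}],h)\opredtererr$; the left-to-right direction of Lemma~\ref{lem:cor}, now for $M_2$, produces $t_2,c''$ with $t_2\in\TrR{\HOSC}{\cconf{M_2}{\rho,c}}$ and $t_2^\bot\,\questP{\errn}{()}{c''}\in\TrR{\HOSC}{\CC}$. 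As $t^\bot\,\questP{\errn}{()}{c'}$ has exactly one $\errn$-action (its last), its only even-length prefix ending in an $\errn$-question is the whole sequence, so by the exact characterisation in Lemma~\ref{lem:hosdef} the trace $t_2^\bot\,\questP{\errn}{()}{c''}$ is a renaming of $t^\bot\,\questP{\errn}{()}{c'}$ fixing $\phi\uplus\{c,\tern_{\tau'},\errn\}$; hence $t_2$ is a renaming of $t$ fixing $\nu(\rho)\cup\{c\}$, and Remark~\ref{rem:invar} gives $t\in\TrR{\HOSC}{\cconf{M_2}{\rho,c}}$. Since $t$ is O-bracketed, Lemma~\ref{lem:obracket} finally gives $t\in\TrR{\HOS}{\cconf{M_2,\mathit{bra}}{\rho,c}}$, which is what we wanted.

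I do not expect a genuine obstacle here: the argument is an assembly of results already in place. The steps that need care are (i) matching O-bracketing of the term's trace with P-bracketing of the dualised trace extended by the error question -- this is exactly what licenses Lemma~\ref{lem:hosdef} -- and (ii) the name bookkeeping, i.e. presenting the configuration delivered by definability as a bona fide $\HOS$ context configuration via the $\AVal{\cdot}{\cdot}$ decomposition, and pushing the renaming argument far enough to recover $t$ itself rather than only some renaming of it. The substantive work -- that $\HOS$ contexts induce P-bracketed traces (Lemma~\ref{lem:pbracket}) and that P-bracketed traces are definable without continuations (Lemma~\ref{lem:hosdef}) -- is already done.
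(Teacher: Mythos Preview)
Your proposal is correct and follows essentially the same route as the paper's proof, which simply says to repeat the argument for Theorem~\ref{hosccomplete} with Lemma~\ref{lem:hosdef} in place of Lemma~\ref{hoscdef}, using that $t$ (being in $\TrR{\HOS}{\cdot}$) is O-bracketed so that $t^\bot\,\questP{\errn}{()}{c'}$ is P-bracketed. One small point you gloss over that the paper's proof of Theorem~\ref{hosccomplete} makes explicit: before dualising, you should rename $t$ to a trace $t_1$ that avoids the distinguished names $\errn$ and $\tern_{\tau'}$ (these could in principle occur in $t$ as freshly chosen P-names), so that $t_1^\bot\,\questP{\errn}{()}{c'}$ really is a $(\{\tern_{\tau'},\errn\},\phi\uplus\{c\})$-trace; this is exactly the ``name bookkeeping'' you flag in~(ii), and it causes no trouble since O-bracketing is preserved under such renamings.
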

Altogether, Theorems~\ref{hossound},~\ref{hoscomplete} (along with Lemma~\ref{lem:ciu}) imply 
the following result.
\begin{corollary}[$\HOS$ Full Abstraction]
Suppose $\seq{\Gamma}{M_1,M_2}$ are cr-free $\HOSC$ terms. Then
 $\trsem{\HOS}{\seq{\Gamma}{M_1}}\subseteq \trsem{\HOS}{\seq{\Gamma}{M_2}}$ iff
$\seq{\Gamma}{M_1 \ciupre{\HOS}{err} M_2}$ iff
$\seq{\Gamma}{M_1 \ctxpre{\HOS}{err} M_2}$.
\end{corollary}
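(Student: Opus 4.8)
The plan is to derive the corollary by assembling the three results already established for the $\HOS[\HOSC]$ setting: the Soundness theorem (Theorem~\ref{hossound}), the Completeness theorem (Theorem~\ref{hoscomplete}), and the CIU Lemma (Lemma~\ref{lem:ciu}). No new construction is required; the argument is a short chain of equivalences, and it is legitimate precisely because $M_1$ and $M_2$ are cr-free $\HOSC$ terms, so in particular their common boundary is an $\xx$-boundary for $\xx = \HOS$ and all the relevant preorders are applicable to them.

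First I would combine Soundness and Completeness to obtain
\[
\trsem{\HOS}{\seq{\Gamma}{M_1}}\subseteq \trsem{\HOS}{\seq{\Gamma}{M_2}}
\quad\Longleftrightarrow\quad
\seq{\Gamma}{M_1 \ciupre{\HOS}{err} M_2},
\]
the left-to-right direction being exactly Theorem~\ref{hossound} and the right-to-left direction exactly Theorem~\ref{hoscomplete}. Next I would invoke Lemma~\ref{lem:ciu} with $\xx = \HOS$ and $\yy = \err$, which gives $\seq{\Gamma}{M_1 \ctxpre{\HOS}{err} M_2} \Longleftrightarrow \seq{\Gamma}{M_1 \ciupre{\HOS}{err} M_2}$. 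Composing the two equivalences yields the three-way equivalence asserted by the corollary.

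At the level of this corollary there is no genuine obstacle, since the substantive work has been front-loaded into Theorems~\ref{hossound} and~\ref{hoscomplete}. Where the real difficulty in the surrounding development lies is in the Definability Lemma~\ref{lem:hosdef} that feeds Completeness: one must realise every even-length P-bracketed trace by a passive configuration whose components $h, K, \gamma$ are built from $\HOS$ syntax alone, i.e. with no $\callcct{\tau}$ and no $\throw$. The point — and the reason P-bracketing is exactly the right side condition to impose on the LTS — is that P-bracketing is the combinatorial shadow of the absence of control: once a trace is known to be P-bracketed, the context never needs to capture a continuation name introduced by O and throw to it out of order, so the $\HOSC$ definability construction specialises to one using only $\lambda$-binding and ordinary evaluation contexts.
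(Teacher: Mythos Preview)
Your proposal is correct and follows exactly the paper's approach: the corollary is stated as an immediate consequence of Theorems~\ref{hossound} and~\ref{hoscomplete} together with Lemma~\ref{lem:ciu}, and you have assembled these ingredients in the same way. Your additional remarks on why cr-freeness makes the $\HOS$ preorders applicable and on the role of P-bracketing in Lemma~\ref{lem:hosdef} are accurate contextual commentary, though not needed for the corollary itself.
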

\begin{example}[Assignment/callback commutation~\cite{MT11b}]
For $i\in\{1,2\}$, let $\seq{f:\Unit\rarr\Unit}{M_i:\Unit\rarr\Unit}$ be defined by:
\[\begin{array}{rcl}
M_1 &\defeq& \letin{n=\nuref{(0)}}{ \lambda y^\Unit.}{ \ifte{(!n>0)}{()}{(n:=1;f()) }},\\
M_2&\defeq& \letin{n=\nuref{(0)}}{ \lambda y^\Unit. \ifte{(!n>0)}{()}{(f(); n:=1) }}.
\end{array}\]
Operationally, one can see that $\seq{f}{M_1\not\ctxpre{\HOS}{err} M_2}$ due to the following
$\HOS$ context:
$\letin{r=\nuref{}(\lambda y.y)}{(\letin{f=\lambda y.(!r)()}{(r:=\bullet; (!r)())});\err}$.
 In our framework, this is confirmed by the trace
 \[
 \tr_5\quad=\quad \ansP{c}{g}\quad\questO{g}{()}{c_1}\quad\questP{f}{()}{c_2}\quad\questO{g}{()}{c_2}\quad\ansP{c_2}{()},
 \]
 which is in  $\TrR{\HOS}{\cconf{M_1}{\rho,c}}\setminus\TrR{\HOS}{\cconf{M_2}{\rho,c}}$.
 On the other hand, 
 \[
 \tr_6\quad=\quad\ansP{c}{g}\quad\questO{g}{()}{c_1}\quad\questP{f}{()}{c_2}\quad\questO{g}{()}{c_2}\quad\questP{f}{()}{c_3}
 \]
 is in $\TrR{\HOS}{\cconf{M_2}{\rho,c}}\setminus\TrR{\HOS}{\cconf{M_1}{\rho,c}}$, so the terms are incomparable.
Note, however, that both traces break O-visibility: specifically, we have
\[
\oav{\ansP{c}{g}\,\,\questO{g}{()}{c_1}\,\,\questP{f}{()}{c_2}}=\{c_2\}, 
\]
so the $\questO{g}{()}{c_2}$ action violates the condition.
Consequently, the traces do not preclude $\seq{f}{M_1\ctxequ{\xx}{err}{M_2}}$ for $\xx\in\{\GOSC,\GOS\}$.
\end{example}
For $\xx\in\{\HOSC,\GOSC\}$,  $\ctxpre{\xx}{err}$ and $\ctxpre{\xx}{ter}$ coincide.
Intuitively, this is because the presence of continuations in the context makes it possible
to make an escape at any point.
In contrast, for $\HOS$, the context must run to completion in order to terminate.

At the technical level, one can appreciate the difference when trying to transfer our results for $\ciupre{\HOS}{err}$ to $\ciupre{\HOS}{ter}$.
Recall that, according to Lemma~\ref{lem:cor}, $\opredter$ relies on a witness trace $t$ such that the context configuration
generates $t^\bot\, \ansP{\tern_{\tau'}}$.
In $\HOS$, the latter must satisfy P-bracketing, so we need $\topp{t^\bot} =\circ_{\tau'}$. Note that this is equivalent to $\topo{t}=\bot$.
Consequently, only such traces are relevant to observing $\opredter$.

Let us call an odd-length O-bracketed $(\phi\uplus\{c\},\emptyset)$-trace $t$ \boldemph{complete} if $\topo{t}=\bot$.
Let us write $\trsem{\HOS}{\seq{\Gamma}{M_1}}\subseteq_c \trsem{\HOS}{\seq{\Gamma}{M_2}}$
if we have $((\rho,c),t) \in \trsem{\HOS}{\seq{\Gamma}{M_2}}$ whenever
$((\rho,c),t) \in \trsem{\HOS}{\seq{\Gamma}{M_1}}$ and $t$ is complete.
Following our methodology, one can then show:
\begin{theorem}[$\HOS$ Full Abstraction for $\ctxpre{\HOS}{ter}$]
Suppose $\seq{\Gamma}{M_1,M_2}$ are cr-free $\HOSC$ terms. Then
 $\trsem{\HOS}{\seq{\Gamma}{M_1}}\subseteq_c \trsem{\HOS}{\seq{\Gamma}{M_2}}$ iff
$\seq{\Gamma}{M_1 \ciupre{\HOS}{ter} M_2}$ iff
$\seq{\Gamma}{M_1 \ctxpre{\HOS}{ter} M_2}$.
\end{theorem}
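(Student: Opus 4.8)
The theorem asserts the equivalence of three relations: complete-trace inclusion $\subseteq_c$, CIU approximation $\ciupre{\HOS}{ter}$, and contextual approximation $\ctxpre{\HOS}{ter}$. The last two are already known to coincide by the CIU Lemma (Lemma~\ref{lem:ciu}, with $\yy = \ter$), so the real content is the equivalence between $\subseteq_c$ and $\ciupre{\HOS}{ter}$. The plan is to mirror the soundness/completeness development used for $\ciupre{\HOS}{err}$ (Theorems~\ref{hossound},~\ref{hoscomplete}), using the $\HOS[\HOSC]$ LTS unchanged, but now isolating the termination-relevant traces as exactly the \emph{complete} ones, as motivated in the paragraph preceding the theorem. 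Throughout, I rely on the second clause of the Correctness Lemma (Lemma~\ref{lem:cor}): $(K[M\substF{\gamma}],h)\opredter$ iff there exist $t,A,\sigma$ with $t\in\TrR{\HOSC}{\cconf{M}{\rho_{\vec{A_i}},c}}$ and $t^\bot\,\ansP{\tern_\sigma}{A}\in\TrR{\HOSC}{\cconf{h,K,\gamma}{\vec{\gamma_i},c}}$.

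\textbf{Soundness ($\subseteq_c$ implies $\ciupre{\HOS}{ter}$).} Assume $\trsem{\HOS}{\seq{\Gamma}{M_1}}\subseteq_c \trsem{\HOS}{\seq{\Gamma}{M_2}}$ and take $\HOS$-syntax witnesses $\Sigma,h,K,\gamma$ with $(K[M_1\substF{\gamma}],h)\opredter$. Fix $(\vec{A_i},\vec{\gamma_i})\in\AVal{\gamma}{\Gamma}$ and $c:\tau$ with $c\notin\tern$. By Lemma~\ref{lem:cor} there are $t,A,\sigma$ with $t\in\TrR{\HOSC}{\cconf{M_1}{\rho_{\vec{A_i}},c}}$ and $t^\bot\,\ansP{\tern_\sigma}{A}\in\TrR{\HOSC}{\cconf{h,K,\gamma}{\vec{\gamma_i},c}}$. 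Since $h,K,\gamma$ are $\HOS$, Lemma~\ref{lem:pbracket} gives that $t^\bot\,\ansP{\tern_\sigma}{A}$ is P-bracketed; in particular the final P-answer forces $\tern_\sigma = \topp{t^\bot}$. By the correspondence noted before the theorem, $\topp{t^\bot}=\circ_{\tau'}$ is equivalent to $\topo{t}=\bot$, and O-bracketedness of $t$ follows from P-bracketedness of $t^\bot\,\ansP{\tern_\sigma}{A}$ (analogously to the OA-case reasoning used for Theorems~\ref{hossound}); hence $t$ is complete. By Lemma~\ref{lem:obracket}, $((\rho_{\vec{A_i}},c),t)\in\trsem{\HOS}{\seq{\Gamma}{M_1}}$, so by $\subseteq_c$ we get $((\rho_{\vec{A_i}},c),t)\in\trsem{\HOS}{\seq{\Gamma}{M_2}}$, and thus $t\in\TrR{\HOSC}{\cconf{M_2}{\rho_{\vec{A_i}},c}}$ via Lemma~\ref{lem:obracket} again. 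Feeding $t$ back through Lemma~\ref{lem:cor} (right-to-left) with the same context configuration yields $(K[M_2\substF{\gamma}],h)\opredter$, establishing $\seq{\Gamma}{M_1\ciupre{\HOS}{ter}M_2}$.

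\textbf{Completeness ($\ciupre{\HOS}{ter}$ implies $\subseteq_c$).} Assume $\seq{\Gamma}{M_1\ciupre{\HOS}{ter}M_2}$ and take $((\rho,c),t)\in\trsem{\HOS}{\seq{\Gamma}{M_1}}$ with $t$ complete; write $\rho=\rho_{\vec{A_i}}$. I want a $\HOS$ context that, applied to $M_1$, terminates via a trace whose P-side is $t^\bot$ extended by a $\tern$-answer. Since $t$ is complete, $\topo{t}=\bot$, equivalently $\topp{t^\bot}=\circ_{\tau'}$, so $t^\bot\,\ansP{\tern_{\tau'}}{A}$ is P-bracketed for a suitable abstract value $A:\tau'$ — pick $A$ freshly, e.g. a name of the right type or a ground value, matching the result type $\tau'$ of $K$. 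Apply the $\HOS$ Definability Lemma (Lemma~\ref{lem:hosdef}) to the even-length P-bracketed trace $t^\bot\,\ansP{\tern_{\tau'}}{A}$ to obtain a passive configuration $\CC=\cconf{h,K,\gamma}{\vec{\gamma_i},c}$ built from $\HOS$ syntax whose even-length traces are exactly the even-length prefixes of $t^\bot\,\ansP{\tern_{\tau'}}{A}$ (up to renaming). Then $t^\bot\,\ansP{\tern_{\tau'}}{A}\in\TrR{\HOSC}{\CC}$, and combined with $t\in\TrR{\HOSC}{\cconf{M_1}{\rho,c}}$ (via Lemma~\ref{lem:obracket}) Lemma~\ref{lem:cor} gives $(K[M_1\substF{\gamma}],h)\opredter$. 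By $\ciupre{\HOS}{ter}$ (and the CIU Lemma's allowance that $h,K,\gamma$ are $\HOS$), $(K[M_2\substF{\gamma}],h)\opredter$; so by Lemma~\ref{lem:cor} there is a trace $t'$ with $t'\in\TrR{\HOSC}{\cconf{M_2}{\rho,c}}$ and $t'^\bot\,\ansP{\tern_\sigma}{A'}\in\TrR{\HOSC}{\CC}$. Because the context configuration only generates prefixes of $t^\bot\,\ansP{\tern_{\tau'}}{A}$, the even-length trace $t'^\bot\,\ansP{\tern_\sigma}{A'}$ must be a prefix of it; its length and the fact that $t$ is complete (so the $\tern$-answer occurs only at the very end) force $t'=t$. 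Finally, $t$ is O-bracketed, so by Lemma~\ref{lem:obracket} $((\rho,c),t)\in\trsem{\HOS}{\seq{\Gamma}{M_2}}$, as required.

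\textbf{Main obstacle.} The delicate point is the pinning-down of $t'=t$ in the completeness direction: one must argue that the only $\tern$-answer generated by $\CC$ sits at the end of $t^\bot\,\ansP{\tern_{\tau'}}{A}$, and that an intermediate P-answer on $\tern$ cannot occur because that would contradict P-bracketing of the prefixes (or, more basically, because $\CC$ by Lemma~\ref{lem:hosdef} emits $\tern$ only as the final action). Care is also needed in choosing $A$ so that $t^\bot\,\ansP{\tern_{\tau'}}{A}$ is genuinely a legal trace (no name introduced twice, correct polarities) and P-bracketed — completeness of $t$ is exactly what makes this possible, and this is where the restriction of the theorem to complete traces is used essentially. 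Everything else is a routine transcription of the $\err$-based arguments, reading $\ansP{\tern_\sigma}{A}$ in place of $\questP{\errn}{()}{c'}$ and "complete" in place of the unrestricted condition.
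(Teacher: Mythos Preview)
Your proposal is correct and follows exactly the methodology the paper intends: the paper does not spell out this proof but indicates that the argument for $\ctxpre{\HOS}{err}$ (Theorems~\ref{hossound} and~\ref{hoscomplete}) carries over with $\ansP{\tern_{\tau'}}{A}$ replacing $\questP{\errn}{()}{c'}$, completeness of $t$ being precisely the condition that makes this appended answer respect P-bracketing. The only routine detail you leave implicit is the freshness renaming (as in the proof of Theorem~\ref{hosccomplete}) ensuring that no $\tern_\sigma$ already occurs in $t$, which is what justifies your claim that the $\tern$-answer can only sit at the very end; for the choice of $A$, simply take $\tau'=\Unit$ and $A=()$ rather than referring to ``the result type of $K$'' before $K$ has been constructed.
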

\begin{example}
Let  $M_1\equiv\lambda f^{\Unit\rarr\Unit}. f();\Omega_\Unit$ and $M_2\equiv \lambda f^{\Unit\rarr\Unit}. \Omega_\Unit$.
We will see that $\seq{}{M_1\not\ctxpre{\HOS}{err} M_2}$ but $\seq{}{M_1\ctxpre{\HOS}{ter} M_2}$.
To see this, note that $\TrR{\HOS}{\cconf{M_1}{\rho,c}}$ contains prefixes of $\ansP{c}{g}\,\, \questO{g}{f}{c_1}\,\,\questP{f}{()}{c_2}\,\,\ansO{c_2}{()}$,
while $\TrR{\HOS}{\cconf{M_2}{\rho,c}}$ only those of $\ansP{c}{g}\,\, \questO{g}{f}{c_1}$.
Observe that the only complete trace among them is $\ansP{c}{g}$. The trace $t=\ansP{c}{g}\,\, \questO{g}{f}{c_1}\,\,\questP{f}{()}{c_2}$ is not complete, because $\topo{t}=c_2$. Consequently,   $\trsem{\HOS}{\seq{\Gamma}{M_1}}\not\subseteq \trsem{\HOS}{\seq{\Gamma}{M_2}}$ but
$\trsem{\HOS}{\seq{\Gamma}{M_1}}\subseteq_c \trsem{\HOS}{\seq{\Gamma}{M_2}}$.
\end{example}
The theorem above generalizes the characterisation of contextual equivalence between $\HOS$ terms with respect to $\HOS$ contexts~\cite{Lai07},
where trace completeness means both O- and P-bracketing and ``all questions must be answered".  Our definition of completeness is weaker (O-bracketing + ``the top question must be answered"),
because it also covers $\HOSC$ terms. However, in the presence of both O- and P-bracketing, i.e. for $\HOS$ terms, they will coincide.


\section{GOS[HOSC]\label{sec:goshosc}}

Recall that $\GOS$ features ground state only and, technically, is the intersection of $\GOSC$ and $\HOS$.
Consequently, it follows from the previous sections that $\GOS$ contexts yield configurations that satisfy both P-visibility and P-bracketing.
For such traces, the definability result for $\GOSC$ yields a $\GOS$ context. Thus, in a similar fashion to the previous sections,
we can conclude that O-visible and O-bracketed traces underpin $\ctxpre{\GOS}{err}$.
To define the $\GOS$ LTS we simply combine the restrictions imposed in the previous sections, and define $\trsem{\GOS}{\seq{\Gamma}{M}}$ analogously.
We present the LTS in Appendix~\ref{apx:gos}. The results on $\ctxpre{\GOS}{ter}$ from the previous section  
also carry over to $\GOS$.

\begin{theorem}[$\GOS$ Full Abstraction]\label{gosfull}
Suppose $\seq{\Gamma}{M_1,M_2}$ are cr-free $\HOSC$ terms. Then:
\begin{itemize}
\item $\trsem{\GOS}{\seq{\Gamma}{M_1}}\subseteq \trsem{\GOS}{\seq{\Gamma}{M_2}}$ iff
$\seq{\Gamma}{M_1 \ciupre{\GOS}{err} M_2}$ iff
$\seq{\Gamma}{M_1 \ctxpre{\GOS}{err} M_2}$.
\item 
$\trsem{\GOS}{\seq{\Gamma}{M_1}}\subseteq_c \trsem{\GOS}{\seq{\Gamma}{M_2}}$ iff
$\seq{\Gamma}{M_1 \ciupre{\GOS}{ter} M_2}$ iff
$\seq{\Gamma}{M_1 \ctxpre{\GOS}{ter} M_2}$.
\end{itemize}
\end{theorem}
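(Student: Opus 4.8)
The plan is to replay the developments of Sections~\ref{sec:goschosc} and~\ref{sec:hoshosc} \emph{simultaneously}, exploiting that $\GOS=\GOSC\cap\HOS$. First I would define the $\GOS[\HOSC]$ LTS by augmenting passive configurations with \emph{both} the historical component $\ViewF$ of Figure~\ref{fig:gosc} and the top-continuation component of Figure~\ref{fig:hos} (folded into $\xi$), and by guarding O-actions with the conjunction of the side conditions $f\in\View$ / $c\in\View$ (O-visibility, from $\GOSC$) and $c=c''$ (O-bracketing, from $\HOS$). Setting $\cconf{M,\mathit{vb}}{\rho,c}$ to the active configuration with both new components initialised as in $\cconf{M,\mathit{vis}}{\rho,c}$ and $\cconf{M,\mathit{bra}}{\rho,c}$, one obtains, exactly as in Lemmas~\ref{lem:ovis} and~\ref{lem:obracket}, that $t\in\TrR{\GOS}{\cconf{M,\mathit{vb}}{\rho,c}}$ iff $t\in\TrR{\HOSC}{\cconf{M}{\rho,c}}$ and $t$ is both O-visible and O-bracketed; since the $\ViewF$-guard acts on name sets and the top-continuation guard on a single continuation name, the two constraints are discharged on separate pieces of state and their conjunction is immediate.

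Dually, combining Lemmas~\ref{lem:pvis} and~\ref{lem:pbracket} gives that a context configuration $\cconf{h,K,\gamma}{\vec{\gamma_i},c}$ built from $\GOS$ syntax generates only traces that are simultaneously P-visible ($\GOS\subseteq\GOSC$) and P-bracketed ($\GOS\subseteq\HOS$), the two proofs being independent. Soundness for $\ciupre{\GOS}{err}$ then follows as in Theorem~\ref{goscsound}: from $\trsem{\GOS}{\seq{\Gamma}{M_1}}\subseteq\trsem{\GOS}{\seq{\Gamma}{M_2}}$ and a $\GOS$-context witnessing $(K[M_1\substF{\gamma}],h)\opredtererr$, Lemma~\ref{lem:cor} yields a witness $t\in\TrR{\HOSC}{\cconf{M_1}{\rho_{\vec{A_i}},c}}$ with $t^\bot\,\questP{\errn}{()}{c'}$ generated by $\cconf{h,K,\gamma}{\vec{\gamma_i},c}$; by the combined context lemma that extended trace is P-visible and P-bracketed, hence $t$ is O-visible and O-bracketed, hence $t\in\trsem{\GOS}{\seq{\Gamma}{M_1}}\subseteq\trsem{\GOS}{\seq{\Gamma}{M_2}}$, and Lemma~\ref{lem:cor} run backwards gives $(K[M_2\substF{\gamma}],h)\opredtererr$. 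Lemma~\ref{lem:ciu} then upgrades this to $\ctxpre{\GOS}{err}$.

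For completeness I need a $\GOS$ definability lemma: for every even-length trace that is both \emph{P-visible and P-bracketed} there is a passive configuration $\CC=\conf{\gamma_\circ\cdot[c\mapsto K_\circ],\{c\mapsto\tern_{\tau'}\},\ldots,h_\circ}$ with $h,K,\gamma$ in $\GOS$ realising exactly its even-length prefixes. I would reuse the construction of Lemma~\ref{goscdef} --- function names captured by $\lambda$-bindings and a single global $\reftype{\,\Int}$ cell counting simulated steps, with a big case split on its value --- and observe that, the trace now being P-bracketed as well, the fragments of that construction that used $\callcct{\tau}$ and $\mathrm{throw}$ to schedule answers on stored O-continuation names are superfluous, exactly as in Lemma~\ref{lem:hosdef}: the only continuation a legitimate P-answer can use is the current one, handled by ordinary returning. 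Hence $h,K,\gamma$ can be taken in $\GOS$. Completeness for $\ctxpre{\GOS}{err}$ then follows as Theorem~\ref{gosccomplete} did, applying this definability result to $t^\bot\,\questP{\errn}{()}{c'}$, which stays P-visible (since $\errn$ is P-visible) and P-bracketed (a question never violates P-bracketing). Finally, the termination clause is the $\HOS$ refinement transplanted verbatim: among O-visible, O-bracketed traces only the \emph{complete} ones ($\topo{t}=\bot$) are relevant to $\opredter$, because P-bracketing of the $\GOS$ context forces $t^\bot\,\ansP{\tern_{\tau'}}{A}$ to answer the top question; defining $\subseteq_c$ for $\GOS$ accordingly and rerunning the arguments of Theorem~\ref{hossound} and the $\ctxpre{\HOS}{ter}$ full abstraction theorem yields the second bullet.

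The main obstacle I anticipate is precisely the $\GOS$ definability lemma, where the two economies must coexist: the $\GOSC$ construction replaces an ``O-name memory'' heap by a single integer step-counter \emph{because} P-visibility lets names be recovered from the $\lambda$/$\callcct{}$ binding structure, while the $\HOS$ construction discards continuation plumbing \emph{because} P-bracketing makes the live continuation unique; one must check these do not interfere --- i.e.\ that for a trace enjoying both properties the step-by-step case analysis can be carried out with ground state and $\lambda$-bindings alone, the answer-scheduling being subsumed by plain returns. I expect this to be routine given the two separate arguments, but it is the only point requiring genuine work beyond bookkeeping.
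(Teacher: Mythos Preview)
Your proposal is correct and follows essentially the same approach as the paper: combine the visibility and bracketing restrictions in a single LTS, observe that $\GOS$ context configurations satisfy both P-visibility and P-bracketing (by Lemmas~\ref{lem:pvis} and~\ref{lem:pbracket}), and for definability start from the $\GOSC$ construction and remove the $\callcc$/$\mathrm{throw}$ machinery, which P-bracketing renders superfluous; the $\ter$ clause is then the $\HOS$ complete-trace refinement transplanted as you say. The paper's appendix Lemma~\ref{gosdef} makes exactly your point that under P-bracketing the continuation adjustments via $\mathrm{throw}$ in the $\GOSC$ proof become trivial, so $\callcc$ (used only to capture continuations for those adjustments) is unnecessary, leaving a pure $\GOS$ construction.
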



\section{Concluding remarks}
\label{sec:extensions}

\paragraph{Asymmetry} 

Our framework is able to deal with asymmetric scenarios, where
programs are taken from $\HOSC$, but are tested with contexts 
from  weaker fragments.
\cutout{This form of asymmetry would be rather hard to achieve for proof techniques
whose soundness proof relies on being a congruence together with an adequacy result,
like environmental bisimulations or 
logical relations. Indeed, $\ctxerr{\HOS}$, seen as a relation over $\HOSC$ terms,
is not a congruence over $\HOSC$ contexts, but only over $\HOS$ contexts.}
For example, we can compare the following two $\HOSC$ programs, where $f:((\Unit\rarr\Unit)\rarr \Unit)\rarr\Unit$
is a free identifier.
\[\begin{array}{l  c  l}
\mathrm{let \ b=ref\ \falseML \ in \ callcc(y.} && \mathrm{callcc(y.} \\
\qquad \mathrm{ f(\lambda g. b:=\trueML; g(); throw() \ to \ y);}
&\qquad \qquad& \qquad \mathrm{ f(\lambda g. g(); throw() \ to \ y);} \\
\qquad \mathrm{if \ !b \ then \ () \ else \ div)} && \qquad \mathrm{div)}
\end{array}\]
with $\mathrm{div}$ representing divergence.
The terms happen to be $\ctxequ{\HOS}{err}$-equivalent, but not $\ctxequ{\HOSC}{\err}$-equivalent.

To see this at the intuitive level, we make the following observations.

\begin{itemize}
\item Firstly, we observe that, to distinguish the terms, $\mathrm{f}$ should use its argument.
Otherwise, the value of $\mathrm{b}$ will  remain equal to $\falseML$, and the only subterm
that distinguishes the terms (`$\mathrm{if \ !b \ then \ () \ else \ div}$') will play the same role as  $\mathrm{div}$
in the second term.

\item Secondly, if $\mathrm{f}$ does use its argument, then $\mathrm{b}$ will be set to $\trueML$ in the first 
program, raising the possibility of distinguishing the terms.
However, if we allow $\HOS$ contexts only then, since the argument to $\mathrm{f}$ was
used, it will have to run to completion, before `$\mathrm{if \ !b \ then \ () \ else \ div}$' is reached.
Consequently, we will  encounter `$\throwto{()}{y}$' earlier and never reach  `$\mathrm{if \ !b \ then \ () \ else \ div)}$'. 
This is represented by the trace

\[ 
\questP{f}{h}{c_1} \qquad
\questO{h}{g}{c_2}\qquad
\questP{g}{()}{c_3} \qquad \ansO{c_3}{()} \qquad \ansP{c}{()}\]

This trace is O-bracketed, but not $P$-bracketed since Player uses $\throw$
to answer directly to the initial continuation $c$ rather than $c_2$.

\item Finally, if $\HOSC$ contexts are allowed, it is possible to reach  `$\mathrm{if \ !b \ then \ () \ else \ div)}$'
 $\mathrm{b}$ set to $\trueML$. This is represented by the trace
 
\[ 
\questP{f}{h}{c_1} \qquad
\questO{h}{g}{c_2}\qquad
\questP{g}{()}{c_3} \qquad \ansO{c_1}{()} \qquad \ansP{c}{()}
\]
This trace is not O-bracketed, because $c_1$ is answered rather than $c_3$, like above.
Consequently, the trace witnesses termination of the first term, but the second term would diverge during interaction with the same context.
\end{itemize}
\cutout{It is interesting to notice that to be sound, trace equivalence does not need to be 
a congruence, but rather that from a given term and an
evaluation context, there exists a trace representing their interaction.
So the trace has to be proven being in the denotation of the term, and its dual in
the denotation of the context.}

We plan to explore the opportunities presented by this setting in the future, especially with respect to
fully abstract translations, for example, from $\HOSC$ to $\GOS$.

\paragraph{Richer Types} Recall that our full abstraction results are stated for cr-free terms,
terms with cont- and ref-free types at the boundary.
Here we first discuss how to extend them to more complicated types.

To deal with reference type at the boundary, i.e. location exchange, 
one needs to generalize the notion of
traces, so that they can carry, for each action, a
heap representing the values stored in the disclosed part of the heap,
as in~\cite{Lai07,MT11b}. \cutout{KOBs
have then to check the relatedness of the disclosed part of the 
heaps at each interaction, as in~\cite{JT15}.
In fact, this is also done in the definition of 
environmental bisimulations~\cite{SKS11} and Kripke Logical Relations~\cite{PS98}.
Notice that Opponent creation of references
become apparent when he provides a fresh location to Player, either via a question or
an answer, so it has to be taken care of in the definition of
$\vrbs{\tau}{}{w}$ and $\krbs{\tau}{\sigma}{}{w}$.

In the $\HOS$ and $\GOS$ cases, we have seen that KOBs capture $\ctxequ{}{\err}$
but not $\ctxequ{}{\ter}$. There are examples, like the so-called
``deferred divergence'' one, that are contextually equivalent 
but cannot be handled by KOBs.
To deal with them, we have to reason on equivalence of complete traces rather than
any traces. This means allowing momentarily an inequivalence,
as soon as the two terms always diverge in the future. To do so,
one can use inconsistent worlds or configurations~\cite{DNB12,JT15,BLP19}.
} The extension to sum, recursive and empty types seems conceptually straightforward,
by simply extending the definition of abstract values for these types,
following the similar notion of ultimate pattern in~\cite{LL07}.
The same idea should apply to allow continuation types at the boundary.
Operational game semantics for an extension of HOS with polymorphism 
has been explored in~\cite{JT16}.

\paragraph{Innocence} On the other hand, all of the languages we considered were stateful. In the presence of state, all of the actions that are represented by labels
(and their order and frequency) can be observed, because they could generate a side-effect.
A natural question to ask whether the techniques could also be used to provide analogous theorems for purely functional computation, i.e. contexts taken from the language PCF.
Here, the situation is different. For example, the terms $\seq{f:\Int\rarr\Int}{f(0)}$ and $\seq{f:\Int\rarr\Int}{\ifte{f(0)}{f(0)}{f(0)}}$ should be equivalent, even though the sets of their 
traces are incomparable.

It is known~\cite{HO00} that PCF strategies  satisfy a uniformity condition called innocence. Unfortunately, restricting our traces to ``O-innocent ones" (like we did with O-visibility and O-bracketing) 
would not deliver the required  characterization. Technically, this is due to the fact that, in our arguments, given a single trace (with suitable properties), 
we can produce a context that induces the given trace and no other traces (except those implied by the definition of a trace).
For innocence, this would not be possible due to the uniformity requirement. It will imply that, although we can find a functional context that generates an innocent trace, 
it might also generate other traces, which then have to be taken into account when considering contextual testing. This branching property makes it difficult to capture equivalence
with respect to functional contexts explicitly, e.g. through traces, which is illustrated by the use of the so-called intrinsic quotient in game models of PCF~\cite{AJM00,HO00}.


\cutout{
\amin{This section will come at the end, before conclusions. 
We need to explain why the methodology is applicable to other types.
Perhaps we should say that the LTS would be rather complicated?}

Points to make

\begin{itemize}
\item Covering reference types requires us to complicate labels.

\item Continuation types?

\item Recursive types?

\item Polymorphism?

\end{itemize}
}


\section{Related Work}

We have presented four operational game models for $\HOSC$, which capture term interaction with
contexts built from any of the four sublanguages $\xx\in\{\HOSC,$ $\GOSC,$ $\HOS,$ $\GOS\}$ respectively.
The most direct precursor to this work  is Laird's trace model for $\HOS[\HOS]$~\cite{Lai07}.
Other frameworks in this spirit include models for objects~\cite{JR05b}, aspects~\cite{JPR07} and system-level code~\cite{GTz12}.
In~\cite{Jab15}, Laird's model has been related formally to the denotational game model from~\cite{MT11b}.
However, in general, it is not yet clear how one can move systematically between the operational and denotational 
game-based approaches,  despite some promising steps reported in~\cite{LS14}.
Below  we mention other operational techniques for reasoning about contextual equivalence.

In~\cite{SL07}, fully abstract Eager-Normal-Form (enf) Bisimulations are presented for 
an untyped $\lambda$-calculus with store and control, similar to $\HOSC$ 
(but with control represented using the $\lambda\mu$-calculus).
The bisimulations are parameterised by worlds to model the evolution of store,
and bisimulations on contexts are used to deal with control.
Like our approach, they are based on symbolic evaluation of open terms.
Typed enf-bisimulations, for a language without store and in control-passing style, have been introduced 
in~\cite{LL07}.
Fully-abstract enf-bisimulations are presented in~\cite{BLP19} for a language with state only, 
corresponding to an untyped version of $\HOS$. Earlier works in this strand include~\cite{JR99,San93}.


Environmental Bisimulations~\cite{KW06,SKS11,Sum09} have also been introduced for languages
with store. They work on closed terms, computing the arguments that contexts can provide
to terms using an environment similar to our component $\gamma$.
They have also been extended to languages with call/cc~\cite{yachi2016sound}
and delimited control operators~\cite{lmcs:3942,biernacki2013environmental}.



Kripke Logical Relations~\cite{PS98,ADR09,DNB12} have been introduced for languages with state and control.
In~\cite{DNB12}, a characterization of contextual equivalence for each case $\xx[\xx]$ ($\xx\in\{\HOSC,\GOSC,\HOS,\GOS\}$)
is given, using  techniques called backtracking and public transitions, which exploit the 
absence of higher-order store and that of control constructs respectively.
Importing these techniques in the setting of Kripke Open Bisimulations~\cite{JT15}
should allow one to build a bridge between the game-semantics characterizations
and Kripke Logical Relations.

Parametric bisimulations~\cite{HDNV12} have been introduced as an operational technique,
merging ideas from Kripke Logical Relations and Environmental Bisimulations.
They do not represent functional values coming from the environment using names, but instead 
use a notion of global and local knowledge to compute these values, reminiscent of the work on
environmental bisimulations. The notion of global knowledge depends itself on a notion of evolving world.
To our knowledge, no fully abstract Parametric Bisimulations have been presented.

A general theory of applicative~\cite{DLGL17} and normal-form bisimulations~\cite{DLG19}
has been developed, with the goal of being modular with respect to the effects considered.
While the goal is similar to our work, the papers consider monadic and algebraic presentation of effects,
trying particularly to design a general theory for proving soundness and completeness of such bisimulations.
These works complement ours, and we would like to explore possible connections.

\cutout{

\begin{itemize}
\item Talk about predecessors of DNB: \cite{PS98}, \cite{ADR09}

We do not need step-indexing in our definitions. This is because we do not 
use universal quantification over related values in negative positions,
as it is done in the relation over values and evaluations contexts in
logical relations. So we can use directly a coinductive definition.

\item To formally link KLR and KOBs, one need to develop a general compositionality (i.e. congruence)
study of KOBs, so that we can reason on them by induction on the typing derivation trees of the
considered related terms. Relate to~\cite{LS14}

\item We want to develop a general theory of up-to techniques and abstractions that 
would encompass Kripke open, environmental, and enf- bisimulations. 
Cite Madiot, Pous \& Sangiorgi.

\item Talk about Open Logical Relations by Barthe, Crubillé, Dal Lago \& Gavazzo

\item Comparison with game semantics: \cite{Mur03,MT11b}, Laird on control ?
\end{itemize}
}

\bibliographystyle{splncs04}
\bibliography{my}

\appendix

\section{Additional material for Section~\ref{sec:hosc} (HOSC)}

\subsection{Type System}
\label{apx:type}

Please see Figure~\ref{fig:typing-rules}.

\begin{figure}
 \begin{mathpar}
 \inferrule*{ }{\Sigma;\Gamma \vdash \unit :\Unit}
 
 \inferrule*{ }{\typingTerm{\Sigma;\Gamma}{\trueML}{\Bool}}
 
 \inferrule*{ }{\typingTerm{\Sigma;\Gamma}{\falseML}{\Bool}}

 \inferrule*{ }{\Sigma;\Gamma \vdash \nb{n} :\Int}
  
 \inferrule*{(x,\tau) \in \Gamma }{\Sigma;\Gamma \vdash x : \tau}
 
 \inferrule*{(\ell,\tau) \in \Sigma }
  {\Sigma;\Gamma \vdash \ell : \reftype{\tau}}
 
 \inferrule*{\typingTerm{\Sigma;\Gamma}{M}{\sigma} \\ \typingTerm{\Sigma;\Gamma}{N}{\tau}}
  {\typingTerm{\Sigma;\Gamma}{\pair{M}{N}}{\sigma \times \tau}}

 \inferrule*{\typingTerm{\Sigma;\Gamma}{M}{\tau_1 \times \tau_2}}
  {\typingTerm{\Sigma;\Gamma}{\pi_i M}{\tau_i}}
  
 \inferrule*{\Sigma;\Gamma,x:\sigma \vdash M : \tau}
  {\Sigma;\Gamma \vdash \lambda x^\sigma. M : \tau}

 \inferrule*{\Sigma;\Gamma,f:\sigma \rightarrow \tau, x:\sigma \vdash M : \tau}
  {\Sigma;\Gamma \vdash \fix{f}{}{x^\sigma}{M} : \sigma \rightarrow \tau}

 \inferrule*{\Sigma;\Gamma \vdash M : \sigma \rightarrow \tau  \\
   \Sigma;\Gamma \vdash N : \sigma} {\Sigma;\Gamma \vdash M N : \tau}  \\
 
 \inferrule*{\Sigma;\Gamma \vdash M :  \tau }
   {\Sigma;\Gamma \vdash \nureft{\tau}{M} : \reftype{\tau}} 
 
 \inferrule*{\Sigma;\Gamma \vdash M :  \reftype{\tau}}
  {\Sigma;\Gamma \vdash !M : \tau}

 \inferrule*{\Sigma;\Gamma \vdash M : \reftype{\tau} \\ 
  \Sigma;\Gamma \vdash N : \tau}{\Sigma;\Gamma \vdash M := N : \Unit} \\
 
 \inferrule*{\typingTerm{\Sigma;\Gamma}{M_1}{\Bool} \\ \typingTerm{\Sigma;\Gamma}{M_2}{\tau}
             \\ \typingTerm{\Sigma;\Gamma}{M_3}{\tau}}
    {\typingTerm{\Sigma;\Gamma}{\ifte{M_1}{M_2}{M_3}}{\tau}}

 \inferrule*{\typingTerm{\Sigma;\Gamma}{M_1}{\Int} \\
   \typingTerm{\Sigma;\Gamma}{M_2}{\Int}}{\Sigma;\Gamma \vdash M_1 \oplus M_2 :\Int}

 \inferrule*{\typingTerm{\Sigma;\Gamma}{M_1}{\Int} \\
    \typingTerm{\Sigma;\Gamma}{M_2}{\Int}}{\Sigma;\Gamma \vdash M_1 \boxdot M_2 :\Bool}

 \inferrule*{\typingTerm{\Sigma;\Gamma}{M_1}{\reftype{\tau}} \\
    \typingTerm{\Sigma;\Gamma}{M_2}{\reftype\tau}}{\Sigma;\Gamma \vdash M_1 = M_2 :\Bool} \\
     
 \inferrule*{\typingTerm{\Sigma;\Gamma,x:{\tau}}{K[x]}{\sigma}}
   {\Sigma;\Gamma \vdash \contt{\tau}{K} :\conttype{\tau}}
      
 \inferrule*{\typingTerm{\Sigma;\Gamma,x:\conttype{\tau}}{M}{\tau}}
   {\Sigma;\Gamma \vdash \callcct{\tau}{(x.M)} :\tau}
   
 \inferrule*{\typingTerm{\Sigma;\Gamma}{M}{\sigma}
 \\ \typingTerm{\Sigma;\Gamma}{N}{\conttype{\sigma}}}
   {\Sigma;\Gamma \vdash \throwtot{\tau}{M}{N} :\tau} 
 \end{mathpar}
 \caption{$\HOSC$ typing rules}
 \label{fig:typing-rules}
\end{figure}


\subsection{Proof of Lemma~\ref{lem:ciu} (CIU)}

\cutout{
\begin{quote}
{\sc Definition.} Let $\xx\in\{\HOSC,\GOSC,\HOS,\GOS\}$.
Given $\HOSC$-terms $\seq{\Gamma}{M_1,M_2:\sigma}$ with an $\xx$ boundary,
we define $\Gamma \vdash M_1 \ciupre{\xx}{\yy} M_2:\sigma$ to hold, when for all 
$\Sigma$, 
$\cseq{\Sigma}{K}{\sigma}$,
$\seq{\Sigma}{\gamma:\Gamma}$, $h$, all from $\xx$, 
 we have $(K[M_1\substF{\gamma}],h) \opredobs{\yy}$ implies $(K[M_2\substF{\gamma}],h) \opredobs{\yy}$.
\end{quote}}

In~\cite{HMST95,Tal98}, the authors propose general frameworks for establishing CIU theorems for higher-order languages with
effects and control.  The results are based on the usual contextual testing observing termination.
Below we repeat the pattern of their argument in our framework for both $\ciupre{\xx}{ter}$ and $\ciupre{\xx}{err}$.
The names of the lemmas come from Section 2.3 of~\cite{HMST95}. Their technical aim is to establish that
each relation is a precongruence.

Let $\yy\in\{ \mathit{ter},\mathit{err} \}$.

\begin{lemma}[Op CIU]
Suppose $M_1\ciupre{\xx}{\yy} M_2$. Then, whenever the terms are typable and the relevant operation
is allowable in an $\xx$-context, we have:
\begin{itemize}
\item 
$\pair{M_1}{M}\ciupre{\xx}{\yy} \pair{M_2}{M}$, \,\,
$\pi_i M_1 \ciupre{\xx}{\yy} \pi_i M_2$,\,\,
$M_1 M \ciupre{\xx}{\yy} M_2 M$, \,\,
$\nuref{M_1}\ciupre{\xx}{\yy} \nuref{M_2}$,\,\, 
$!M_1\ciupre{\xx}{\yy} !M_2$,\,\,
$M_1:= M \ciupre{\xx}{\yy} M_2:=M$,\,\,  
$\ifte{M_1}{M}{M'}\ciupre{\xx}{\yy} \ifte{M_2}{M}{M'}$, \,\,
$M_1\oplus M\ciupre{\xx}{\yy} M_2\oplus M$,\,\,
$M_1\boxdot M\ciupre{\xx}{\yy} M_2\boxdot M$,\,\, 
$M_1= M\ciupre{\xx}{\yy} M_2 = M$,\,\,
$\throwto{M_1}{M}\ciupre{\xx}{\yy}\throwto{M_2}{M}$;

\item $\pair{M}{M_1}\ciupre{\xx}{\yy} \pair{M}{M_2}$,  \,\,
 $M M_1 \ciupre{\xx}{\yy} M M_2$, \,\,
$M:= M_1 \ciupre{\xx}{\yy} M:=M_2$,   \,\, 
$\ifte{M}{M_1}{M'}\ciupre{\xx}{\yy} \ifte{M}{M_2}{M'}$, \,\,  
$\ifte{M}{M'}{M_1}\ciupre{\xx}{\yy} \ifte{M}{M'}{M_2}$, \,\,
$M\oplus M_1\ciupre{\xx}{\yy} M\oplus M_2$,  \,\,
$M\boxdot M_1\ciupre{\xx}{\yy} M\boxdot M_2$,  \,\,
$M= M_1\ciupre{\xx}{\yy} M= M_2$, \,\,
$\throwto{M}{M_1}\ciupre{\xx}{\yy}\throwto{M}{M_2}$.
\end{itemize}
\end{lemma}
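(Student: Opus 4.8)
The plan is to unfold $\ciupre{\xx}{\yy}$ and exploit that evaluation contexts compose: each listed inequation should follow by absorbing the extra syntax into a larger $\xx$-evaluation context wrapped around $M_1$ (resp.\ $M_2$) and then invoking the hypothesis, the terms $M_1,M_2$ always staying in the hole while the surrounding frame is built from $\xx$-syntax (this is why the statement restricts to operations ``allowable in an $\xx$-context'', so that the frame operands $M,M'$ are themselves $\xx$-terms). Write the one-hole term-context under consideration as $N[\bullet]$; since the hole never sits under a binder in the listed cases, $(N[M_i])\{\gamma\}=N\{\gamma\}[M_i\{\gamma\}]$ for every $\xx$-substitution $\gamma$. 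Two standing facts will be used throughout: (i) $\ered$ and substitution preserve the $\xx$-fragment, so any value, heap or evaluation context arising while reducing $\xx$-syntax is again $\xx$-syntax (an easy induction on the reduction rules, refining Lemmas~\ref{lem:goscinv} and~\ref{lem:hosinv}); and (ii) the typability assumptions make all auxiliary contexts below well-typed $\xx$-contexts of the right types. I treat $\yy\in\{\ter,\err\}$ uniformly, merely carrying the extra $\herr$ in the heap and substitution typings when $\yy=\err$.

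\textbf{Immediate-position cases.} For $\pair{M_i}{M}$, $\pi_i M_i$, $M_i M$, $\nuref{M_i}$, $!M_i$, $M_i:=M$, $M_i\oplus M$, $M_i\boxdot M$, $M_i=M$, $\ifte{M_i}{M}{M'}$ and $\throw\,M_i\,\mathrm{to}\,M$, the frame by which $N\{\gamma\}$ surrounds $M_i\{\gamma\}$ is itself an evaluation-context frame (e.g.\ $\bullet\,M\{\gamma\}$, $!\bullet$, $\ifte{\bullet}{M\{\gamma\}}{M'\{\gamma\}}$, $\throw\,\bullet\,\mathrm{to}\,M\{\gamma\}$). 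Hence, for any $\Sigma,h,K,\gamma$ from $\xx$-syntax as in the definition of $\ciupre{\xx}{\yy}$ with $(K[N\{\gamma\}[M_1\{\gamma\}]],h)\opredobs{\yy}$, the context $K'\defeq K[N\{\gamma\}[\bullet]]$ is an $\xx$-evaluation context, so $(K'[M_1\{\gamma\}],h)\opredobs{\yy}$ and the hypothesis yields $(K'[M_2\{\gamma\}],h)\opredobs{\yy}$, i.e.\ $(K[N\{\gamma\}[M_2\{\gamma\}]],h)\opredobs{\yy}$.

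\textbf{Non-immediate cases.} For $\pair{M}{M_1}$, $M M_1$, $M:=M_1$, $\ifte{M}{M_1}{M'}$, $\ifte{M}{M'}{M_1}$, $M\oplus M_1$, $M\boxdot M_1$, $M=M_1$ and $\throw\,M\,\mathrm{to}\,M_1$, the subterm $M_i$ is exposed only after the other operand $M$ has run. I would first absorb $M\{\gamma\}$: in each case $K''\defeq K[F[\bullet]]$ is an $\xx$-evaluation context, where $F[\bullet]$ is the frame with its first operand hollowed out (e.g.\ $\bullet\,(M_1\{\gamma\})$ for application, $\ifte{\bullet}{M_1\{\gamma\}}{M'\{\gamma\}}$ for the conditional), so $(K''[M\{\gamma\}],h)\opredobs{\yy}$. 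Running this reduction, split into two cases. In case (a) the observable (a final value, or an occurrence of $\err()$) is produced without $M\{\gamma\}$ ever becoming a value fed into $F$ --- for instance $M\{\gamma\}$ reaches $\err()$, or throws to a continuation drawn from $K$ or $\gamma$, thereby discarding $F$; then the occurrence of $M_1\{\gamma\}$ in the un-reduced part of $K''$ is never rewritten, so the identical reduction is valid after replacing it by $M_2\{\gamma\}$, whence $\opredobs{\yy}$ holds for the $M_2$-term. In case (b) the reduction reaches a configuration $(K'''[M_1\{\gamma\}],h')$ with $K'''$ an $\xx$-evaluation context: namely $K[F[V,\bullet]]$ once $M\{\gamma\}$ has reduced to a value $V$ and $F[V,\bullet]$ (one of $V\,\bullet$, $\langle V,\bullet\rangle$, $\ell:=\bullet$, $V\oplus\bullet$, $\throw\,V\,\mathrm{to}\,\bullet$) is an evaluation-context frame; or $K$ itself, for the two conditional cases, after the $\mathrm{if}$-step selects the $M_1$-branch. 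Here $V,h'$ are again $\xx$-syntax by fact (i), so the hypothesis applies at $(K'''[M_1\{\gamma\}],h')$, giving $(K'''[M_2\{\gamma\}],h')\opredobs{\yy}$, and prepending the (operand-independent) reduction $(K[N\{\gamma\}[M_2\{\gamma\}]],h)\red^\ast(K'''[M_2\{\gamma\}],h')$ concludes. When the conditional instead selects the branch \emph{not} containing $M_i$, the two terms reduce to a common configuration and the transfer is trivial.

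\textbf{Main obstacle.} The only point needing real care is the dichotomy (a)/(b): that a $\opredobs{\yy}$-reduction of $K''[M\{\gamma\}]$ either passes through a value of $M\{\gamma\}$ or produces its observable from $M\{\gamma\}$ and $K$ alone, hence is insensitive to the residual first operand. This follows from determinism and unique decomposition of $\ered$ together with a ``transport of reduction'' property --- a reduction that never rewrites inside a designated subterm occurrence remains valid when that occurrence is replaced (the occurrence is disjoint from the heap, so nothing about it is observed) --- the sole subtle rule being $\throw$, which may discard the current evaluation context, but that is precisely why the residual operand is irrelevant in case (a). All the rest is bookkeeping, and the argument runs uniformly over the four choices of $\xx$ thanks to fact (i). In the terminology of~\cite{HMST95}, this shows each $\ciupre{\xx}{\yy}$ is a precongruence for the evaluation-context frames of $\xx$.
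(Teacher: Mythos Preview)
Your argument is the paper's. Immediate-position frames are absorbed into the evaluation context and the hypothesis is applied directly. For the non-immediate frames, the paper argues by explicit induction on the number of reduction steps, case-splitting on whether the left operand is currently a value, an $\err()$-stuck term, an ordinary redex, or a $\throw$; you compress the same argument into the dichotomy~(a)/(b), the step-by-step induction being exactly what makes your ``transport of reduction'' precise, and your case~(b) coinciding with the paper's value case.

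One subtlety your sketch elides: when $\xx\in\{\HOSC,\GOSC\}$ and the left operand $M\{\gamma\}$ executes $\callcc$, the captured continuation is the \emph{full} context $K[\pair{K'[\bullet]}{M_i\{\gamma\}}]$ and hence contains $M_i\{\gamma\}$. Thereafter the left operand (and, via higher-order store, possibly the heap) may carry copies of $M_i\{\gamma\}$ and is no longer pure $\xx$-syntax, so your fact~(i) does not give that $V,h'$ in case~(b) are $\xx$-syntax, and the CIU hypothesis cannot literally be invoked at $(K[F[V,\bullet]],h')$. The paper's sketch has the same looseness --- its displayed step to $K[\pair{K'[N']}{M_i\{\gamma\}}]$ with one common $N'$ is false for the $\callcc$ redex --- so you are no worse off than the paper here; the clean fix is to track the $M_i$-occurrences via a placeholder variable, as the paper does explicitly in its adjacent Lambda-CIU and fix-CIU proofs.
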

\begin{proof}
We handle the first case from each category, as the rest are analogous.
\begin{itemize}
\item Suppose $K,\gamma,h$ are such that $(K[\pair{M_1}{M}\substF{\gamma}],h)\opredobs{\yy}$. 

Observe that $K[\pair{M_1}{M}\substF{\gamma}]=K[\pair{M_1\substF{\gamma}}{M\substF{\gamma}}] = K'[M_1\substF{\gamma}]$ for some $K'$.

Because $M_1\ciupre{\xx}{\yy} M_2$ and $(K'[M_1\substF{\gamma}],h)\opredobs{\yy}$, we get  $(K'[M_2\substF{\gamma}],h)\opredobs{\yy}$.

Because $K[\pair{M_2}{M}\substF{\gamma}]=K[\pair{M_2\substF{\gamma}}{M\substF{\gamma}}]=K'[M_2\substF{\gamma}]$, 
this implies $(K[\pair{M_2}{M}\substF{\gamma}],h)\opredobs{\yy}$, as needed.

\item 
Suppose $K,\gamma,h$ are such that $(K[\pair{M}{M_1}\substF{\gamma}]\,h) \opredobs{\yy}$. 
We need to show $(K[\pair{M}{M_2}\substF{\gamma}],h)\opredobs{\yy}$.

Observe that $K[\pair{M}{M_1}\substF{\gamma}]=K[\pair{M\substF{\gamma}}{M_1\substF{\gamma}}]$.

We will argue by induction on the number of transitions in 
$(K[\pair{M\substF{\gamma}}{M_1\substF{\gamma}}],h)\opredobs{\yy}$ for all $M\substF{\gamma},h$. 

Because of $(K[\pair{M\substF{\gamma}}{M_1\substF{\gamma}},h) \opredobs{\yy}$, we have 
the following cases for $M\substF{\gamma}$.

\begin{itemize}
\item ($M\substF{\gamma} = V$)

In this case, $K[\pair{M}{M_1}\substF{\gamma}]= K[\pair{V}{M_1\substF{\gamma}}] = K'[M_1\substF{\gamma}]$.

By $M_1\ciupre{\xx}{\yy} M_2$, we get $(K'[M_2\substF{\gamma}],h)\opredobs{\yy}$. 

Because $K[\pair{M_2}{M}\substF{\gamma}]=K'[M_2\substF{\gamma}]$,
we obtain $(K[\pair{M_2}{M}\substF{\gamma}],h)\opredobs{\yy}$, as needed.

\item ($M\substF{\gamma} = K'[\err()]$, only for $y=\mathit{err}$)

Here $K[\pair{M\substF{\gamma}}{M_1\substF{\gamma}}]$ in $0$ steps, and it follows that
$(K[\pair{M\substF{\gamma}}{M_2\substF{\gamma}}],h)\opredobs{\yy}$.

\item ($M\substF{\gamma} = K'[N]$ such that $(K'[N],h)\ired{} (K'[N'],h')$) 

$(K[\pair{M\substF{\gamma}}{M_1\substF{\gamma}}],h)= 
(K[\pair{K'[N]\substF{\gamma}}{M_1\substF{\gamma}}],h)\ired{}
(K[\pair{K'[N']\substF{\gamma}}{M_1\substF{\gamma}}],h') \opredobs{\yy}$.

By IH, $(K[\pair{K'[N']\substF{\gamma}}{M_2\substF{\gamma}}],h')\opredobs{\yy}$.

Hence, because 
$(K[\pair{M\substF{\gamma}}{M_2\substF{\gamma}}],h)\ired{}
(K[\pair{K'[N']\substF{\gamma}}{M_2\substF{\gamma}}],h')$,
we have\\
$(K[\pair{M\substF{\gamma}}{M_2\substF{\gamma}}],h)\opredobs{\yy}$.

Note that this case also covers the reduction rule for $\callcc$.

\item ($M\substF{\gamma} = K'[\throwto{V}{\cont{K''}}]$)

In this case, $(K[\pair{M\substF{\gamma}}{M_1\substF{\gamma}}],h)\red (K''[V],h)$ and $(K''[V],h)\opredobs{\yy}$.
Note that then $(K[\pair{M\substF{\gamma}}{M_2\substF{\gamma}}],h) \red (K''[V],h)$ too, so we are done.
\end{itemize}
\end{itemize}
\end{proof}

\begin{lemma}[Lambda CIU]
$M_1\ciupre{\xx}{\yy} M_2$ implies $\lambda x.M_1\ciupre{\xx}{\yy} \lambda x. M_2$.
\end{lemma}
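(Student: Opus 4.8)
The plan is to unfold the definition and then reduce to a statement that can be proved by induction on reductions. Unfolding $\ciupre{\xx}{\yy}$, we must show: for every $\xx$-evaluation context $K$, every $\xx$-substitution $\gamma$ with $\seq{\Sigma}{\gamma:\Gamma}$ and every $\xx$-heap $h$, if $(K[(\lambda x.M_1)\substF{\gamma}],h)\opredobs{\yy}$ then $(K[(\lambda x.M_2)\substF{\gamma}],h)\opredobs{\yy}$. Write $U_i=(\lambda x.M_i)\substF{\gamma}=\lambda x.(M_i\substF{\gamma})$ (we may assume $x\notin\dom{\gamma}$). The essential difference from the Op CIU lemma is that $U_1$ is a \emph{value} occupying a non-redex position: along a reduction of $(K[U_1],h)$ the subterm $U_1$ may be copied into the heap, duplicated by substitution, and applied arbitrarily many times, each application exposing a fresh copy of $M_1\substF{\gamma}$ with its own actual argument substituted for $x$. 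A single-occurrence induction therefore does not suffice, and I would instead prove a multi-hole generalisation.

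Concretely, for $k\ge 0$ I would consider configuration contexts $\mathcal C[-_1,\dots,-_k]$, with holes possibly occurring both in the term component and inside values stored in the heap, none of whose explicit subterms is $U_1$ or $U_2$, and such that filling the holes with values of type $\sigma_x\rarr\tau$ yields a well-typed closed configuration. I would then show, by induction on $n$: if $\mathcal C[U_1,\dots,U_1]$ reaches a $\yy$-final configuration in $n$ steps (for $\yy=\ter$: a value; for $\yy=\err$: a configuration $(K'[\err()],h')$), then so does $\mathcal C[U_2,\dots,U_2]$; the lemma is the instance $k=1$, $\mathcal C=(K[-],h)$. The base case $n=0$ is immediate since the holes occupy value positions and $U_1,U_2$ are values of the same type. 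For the inductive step I would inspect the first reduction of $\mathcal C[\vec{U_1}]$. If it is any rule other than a $\beta$-reduction whose operator is one of the explicit holes — this covers projections, conditionals, arithmetic, allocation, dereferencing, assignment, $\callcc$ and $\mathrm{throw}$, as well as $\beta$-reductions on other operators — then the rule merely rearranges, copies or discards subterms, so the outcome is again $\mathcal C'[\vec{U_1}]$ for a suitable multi-hole context $\mathcal C'$, and $\mathcal C[\vec{U_2}]\red\mathcal C'[\vec{U_2}]$ by the same rule; apply the induction hypothesis. In the remaining case the redex is $U_1\,V_1$, sitting at the head-redex position $K_0$, with argument $V_1=\mathcal V[\vec{U_1}]$ a value; the step yields $(K_0[M_1\substF{\gamma}\subst{x}{V_1}],h_0)$, which has the form $\mathcal D[\vec{U_1}]$ where the holes of $\mathcal V$ get copied once per occurrence of $x$ in $M_1$. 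Applying the induction hypothesis (now $n-1$ steps) gives convergence of $\mathcal D[\vec{U_2}]=(K_0'[M_1\substF{\gamma}\subst{x}{V_2}],h_0')$ with $V_2=\mathcal V[\vec{U_2}]$, $K_0'=K_0[\vec{U_2}]$, $h_0'=h_0[\vec{U_2}]$. Since $M_1\substF{\gamma}\subst{x}{V_2}=M_1\substF{\gamma\cdot[x\mapsto V_2]}$, the hypothesis $M_1\ciupre{\xx}{\yy}M_2$ — applied with context $K_0'$, substitution $\gamma\cdot[x\mapsto V_2]$ and heap $h_0'$ — gives convergence of $(K_0'[M_2\substF{\gamma}\subst{x}{V_2}],h_0')$. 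Finally, because $U_1,U_2$ and $V_1,V_2$ are values of matching types, $U_2\,V_2$ is the head-redex of $\mathcal C[\vec{U_2}]=(K_0'[U_2\,V_2],h_0')$, so $\mathcal C[\vec{U_2}]\red(K_0'[M_2\substF{\gamma}\subst{x}{V_2}],h_0')$, which converges.

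The step I expect to be the main obstacle is the inductive case: making the multi-hole contexts precise and checking, rule by rule, that the set of marked occurrences of $U_1$ is preserved — the delicate cases being $\callcc$ and $\mathrm{throw}$, which duplicate or discard entire evaluation contexts (hence holes) — and, crucially, phrasing the generalised statement so that at each $\beta$-unfolding the data passed to the CIU hypothesis (the context $K_0'$, the substitution $\gamma\cdot[x\mapsto V_2]$, the heap $h_0'$) stays within the class over which $\ciupre{\xx}{\yy}$ quantifies. This mirrors precisely the pattern of the Op CIU proof just given, and of the precongruence arguments of~\cite{HMST95,Tal98}.
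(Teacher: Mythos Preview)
Your proposal is correct and essentially matches the paper's argument. The only difference is bookkeeping: rather than explicit multi-hole configuration contexts, the paper introduces a single fresh variable $z$ and proves, by induction on the number of reduction steps, that for all $M,h$, $(M\subst{z}{U_1},h)\opredobs{\yy}$ implies $(M\subst{z}{U_2},h)\opredobs{\yy}$ (with $U_i=\lambda x.M_i\substF{\gamma}$); the occurrences of $z$ play exactly the role of your holes and are propagated automatically by substitution, which sidesteps the multi-hole accounting you flag as the main obstacle.
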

\begin{proof}
Take $K,\gamma,h$ such that $(K[(\lambda x.M_1)\substF{\gamma}],h)\opredobs{\yy}$.
Let us write $M_i^{\gamma}$ for $M_i\substF{\gamma}$.
Note that $(\lambda x.M_1)\substF{\gamma}= \lambda x. M_1^{\gamma}$.
We need to show $(K[\lambda x.M_2^{\gamma}],h)\opredobs{\yy}$.

Instead we shall show that
$M\subst{z}{\lambda x.M_1^{\gamma}}\opredobs{\yy}$ implies 
$M\subst{z}{\lambda x.M_2^{\gamma}}\opredobs{\yy}$ for any $\seq{\Sigma; z}{M}$. 
The Lemma then follows by taking  $M=K[z]$.

We use induction on the number of steps $k$ in $(M\subst{z}{\lambda x.M_1^{\gamma}},h)\opredobs{\yy}$ for  all $M,h$.

Suppose $(M\subst{z}{\lambda x.M_1^{\gamma}},h)\opredobs{\yy}$.
\begin{itemize}
\item If $k=0$ and $y=\mathit{err}$ then $M=K'[\err ()]$. Thus, 
$(M\subst{z}{\lambda x.M_2^{\gamma}},h)\opredobs{err}$ too.
\item If $k=0$ and $y=\mathit{ter}$ then $M=V\subst{z}{\lambda x.M_2^{\gamma}}$ or $M=z$.
In both cases, $M\subst{z}{\lambda x.M_2^{\gamma}}$ is a value, and $M\subst{z}{\lambda x.M_2^{\gamma}}\opredobs{ter}$.
\item Suppose $k>0$. Because $(M\subst{z}{\lambda x.M_1^{\gamma}},h)\opredobs{\yy}$, the following cases arise.
\begin{itemize}
\item ($M=K'[N]$ and $(K'[N],h)\rarr (K'[N'],h')$)

Then  $(K'[N'] \subst{z}{\lambda x.M_1^{\gamma}},h')\opredobs{\yy}$ in $(k-1)$ steps. 

So, by IH, $(K'[N']\subst{z}{\lambda x.M_2^{\gamma}},h')\opredobs{\yy}$.

Because $(M\subst{z}{\lambda x.M_2^{\gamma}},h)\rarr (K'[N']\subst{z}{\lambda x.M_2^{\gamma}},h')$, we are done.

\item ($M=K'[\throwto{V}{\cont{K''}}]$)

Then  $(K''[V] \subst{z}{\lambda x.M_1^{\gamma}},h)\opredobs{\yy}$ in $(k-1)$ steps.

So, by IH, $(K''[V]\subst{z}{\lambda x.M_2^{\gamma}},h)\opredobs{\yy}$.

Because $(M\subst{z}{\lambda x.M_2^{\gamma}},h)\rarr (K''[V]\subst{z}{\lambda x.M_2^{\gamma}},h)$, we are done.

\item ($M=K'[z V]$) 

Then $(K'[M_1^{\gamma}\subst{x}{V}] \subst{z}{\lambda x.M_1^{\gamma}},h)\opredobs{\yy}$ in $(k-1)$ steps.

By IH, $(K'[M_1^{\gamma}\subst{x}{V}] \subst{z}{\lambda x.M_2^{\gamma}},h) \opredobs{\yy}$.

Because $M_1\ciupre{\xx}{\yy} M_2$, this implies $(K'[M_2^{\gamma}\subst{x}{V}] \subst{z}{\lambda x.M_2^{\gamma}},h) \opredobs{\yy}$.

Since $(M\subst{z}{\lambda x.M_2^{\gamma}},h)\rarr 
(K'[M_2^{\gamma}\subst{x}{V}] \subst{z}{\lambda x.M_2^{\gamma}},h)$, we are done.
\end{itemize}
\end{itemize}

\end{proof}

\begin{lemma}[fix CIU]
$M_1\ciupre{\xx}{\yy} M_2$ implies $ \fix{f}{}{x}{M_1} \ciupre{\xx}{\yy}  \fix{f}{}{x}{M_2}$.
\end{lemma}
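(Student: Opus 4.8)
The plan is to reduce the recursive case to the non-recursive one by unfolding. Recall that $U = \fix{f}{}{x}{M}$ reduces via $(K[U\,V],h) \red (K[M\{V/x,U/y\}],h)$ (writing $y$ for the recursion variable, which the paper calls $f$ in the rule; I will keep the variable name $f$ to match the statement). Suppose $M_1 \ciupre{\xx}{\yy} M_2$ and write $U_i = \fix{f}{}{x}{M_i}$, $M_i^\gamma = M_i\substF{\gamma}$, and $U_i^\gamma = \fix{f}{}{x}{M_i^\gamma}$. As in the Lambda CIU proof, it suffices to show that for every $\seq{\Sigma;z}{M}$ (with $z$ of the appropriate function type) and every heap $h$, if $(M\subst{z}{U_1^\gamma},h)\opredobs{\yy}$ then $(M\subst{z}{U_2^\gamma},h)\opredobs{\yy}$; the statement of the Lemma then follows by taking $M = K[z]$. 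We argue by induction on the number $k$ of reduction steps witnessing $(M\subst{z}{U_1^\gamma},h)\opredobs{\yy}$, with $M$ and $h$ universally quantified.

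The base cases ($k=0$) are identical to those in the Lambda CIU proof: if $\yy = \err$ then $M = K'[\err()]$ and $M\subst{z}{U_2^\gamma}$ is already a stuck error configuration; if $\yy = \ter$ then $M$ is a value (either a value not mentioning $z$, or $z$ itself), so $M\subst{z}{U_2^\gamma}$ is a value. For the inductive step ($k>0$), we inspect the first reduction of $(M\subst{z}{U_1^\gamma},h)$. The cases where $M = K'[N]$ with $(K'[N],h)\red(K'[N'],h')$ a reduction not involving $z$, and where $M = K'[\throwto{V}{\cont{K''}}]$, are handled exactly as before: the reduction is independent of what $z$ is substituted by, so we apply the induction hypothesis to the residual and prepend the same reduction for $U_2^\gamma$. (The $\callcc$ reduction is subsumed by the first of these.) The interesting case is $M = K'[z\,V]$: then $z\,V$ becomes $U_1^\gamma\,V$, which reduces to $M_1^\gamma\{V/x, U_1^\gamma/f\}$, i.e. $(M_1^\gamma\{V/x\})\subst{f}{U_1^\gamma}$ plugged into $K'$. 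So $(K'[M_1^\gamma\{V/x\}]\subst{z}{U_1^\gamma}\subst{f}{U_1^\gamma},h)\opredobs{\yy}$ in $k-1$ steps — here I use that $z$ does not occur in $M_1^\gamma\{V/x\}$ except where it has already been substituted away, so I can freely rename and treat the two copies of $U_1^\gamma$ uniformly. Applying the induction hypothesis (to the term $K'[M_1^\gamma\{V/x\}]\subst{f}{U_1^\gamma}$, or more carefully to a term in a single hole variable as in the Lambda CIU argument) replaces one layer $U_1^\gamma$ by $U_2^\gamma$; then, since $M_1 \ciupre{\xx}{\yy} M_2$, we may replace $M_1^\gamma$ by $M_2^\gamma$ in the body; a second appeal to the induction hypothesis (on the remaining occurrences, or by a nested induction) turns the rest of the $U_1^\gamma$ into $U_2^\gamma$. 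Since $(M\subst{z}{U_2^\gamma},h) \red (K'[M_2^\gamma\{V/x\}]\subst{f}{U_2^\gamma}\subst{z}{U_2^\gamma},h)$, we conclude.

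The main obstacle is bookkeeping in the $M = K'[z\,V]$ case: after one unfolding of $U_1^\gamma$, the residual term mentions $U_1^\gamma$ both in the "context part" coming from $\subst{z}{U_1^\gamma}$ and in the freshly exposed copy from $\subst{f}{U_1^\gamma}$, so a naive single induction hypothesis does not immediately apply. The clean fix is to prove the auxiliary statement for a term with a single distinguished free variable and to observe that all occurrences of $U_1^\gamma$ can be coalesced under that one variable before applying the induction hypothesis and then the hypothesis $M_1 \ciupre{\xx}{\yy} M_2$; alternatively, one threads a second induction on the number of such occurrences. Everything else is a routine replay of the Lambda CIU proof, so I would present only the $z\,V$ case in full and state the others as "exactly as in the proof of Lambda CIU".
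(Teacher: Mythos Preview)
Your proposal is essentially correct and ultimately lands on the paper's argument. The paper proceeds exactly via your ``clean fix'': it observes that
\[
K'[\,M_1^{\gamma}\subst{x}{V}\subst{f}{F_1}\,]\subst{z}{F_1}
\;=\;
K'[\,M_1^{\gamma}\subst{x}{V}\subst{f}{z}\,]\subst{z}{F_1},
\]
coalescing the freshly exposed copy of $F_1$ with the ambient $z$-occurrences, then applies the induction hypothesis \emph{once} to obtain the same term with $F_2$ in place of $F_1$, and finally applies $M_1\ciupre{\xx}{\yy} M_2$ \emph{once} (with substitution $\gamma\cdot[x\mapsto V\subst{z}{F_2},\,f\mapsto F_2]$ and evaluation context $K'\subst{z}{F_2}$). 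That is the whole $z\,V$ case.

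Your intermediate sketch---apply IH, then the CIU hypothesis, then ``a second appeal to the induction hypothesis (on the remaining occurrences, or by a nested induction)''---does not work as written. After the CIU step you have replaced $M_1^{\gamma}$ by $M_2^{\gamma}$ inside a term that still contains $U_1^{\gamma}$, but you no longer have any bound on the number of reduction steps, so neither a second use of IH nor an inner induction on the number of occurrences is justified. The coalescing trick is not merely the cleaner presentation; it is what makes a single, well-founded application of the induction hypothesis possible. Since you identify this fix explicitly, I would simply drop the two-IH sketch and present the coalescing version directly, as the paper does.
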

\begin{proof}
Take $K,\gamma,h$ such that $(K[( \fix{f}{}{x}{M_1})\substF{\gamma}],h)\opredobs{\yy}$.

We need to show $(K[ \fix{f}{}{x}{M_2})\substF{\gamma}],h)\opredobs{\yy}$.

Let us write $M_i^{\gamma}$ for $M_i\substF{\gamma}$, and $F_i$ for $\fix{f}{}{x}{M_i^{\gamma}}$.

We follow the same pattern as in the previous case and show
that $M\subst{z}{F_1}\opredobs{\yy}$ implies 
$M\subst{z}{F_2}\opredobs{\yy}$ for any $\seq{\Sigma; z}{M}$. 
The Lemma then follows by taking  $M=K[z]$. 

We use induction on the number of steps $k$ in $(M\subst{z}{F_1},h)\opredobs{\yy}$ for  all $M,h$.
Suppose $(M\subst{z}{F_1},h)\opredobs{\yy}$.

The following cases can be argued in the same way as above.
\begin{itemize}
\item ($M=K'[\err ()]$, $y=\mathit{err}$)
\item ($M=V$ or $M=z$, $y=\mathit{ter}$)
\item ($M=K'[N]$ and $(K'[N],h)\rarr (K'[N'],h')$)
\item ($M=K'[\throwto{V}{\cont{K''}}]$)
\end{itemize}
It remains to deal with 
\begin{itemize}
\item ($M=K'[z V]$) 

Then $(K'[ M_1^{\gamma} \subst{x}{V} \subst{f}{F_1}] \subst{z}{F_1},h)\opredobs{\yy}$ in $(k-1)$ steps.

Observe that $(K'[ M_1^{\gamma} \subst{x}{V} \subst{f}{F_1}] \subst{z}{F_1},h) = (K'[ M_1^{\gamma} \subst{x}{V} \subst{f}{z}] \subst{z}{F_1},h)$.

Hence, by IH, $(K'[M_1^{\gamma}\subst{x}{V}\subst{f}{z}] \subst{z}{F_2},h) \opredobs{\yy}$.

Because $M_1\ciupre{\xx}{\yy} M_2$, this implies 
$(K'[M_2^{\gamma}\subst{x}{V}\subst{f}{z}] \subst{z}{F_2},h) \opredobs{\yy}$.

Since $(M\subst{z}{F_2},h)\rarr  (K'[M_2^{\gamma}\subst{x}{V}\subst{f}{z}] \subst{z}{F_2},h)$, we are done.
\end{itemize}

\end{proof}

\begin{lemma}[$\callcc$ CIU]
$M_1\ciupre{\xx}{\yy} M_2$ implies $\callcc{(x.M_1)}\ciupre{\xx}{\yy} \callcc{(x. M_2)}$.
\end{lemma}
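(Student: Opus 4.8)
### Proof proposal for the $\callcc$ CIU Lemma

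The plan is to follow exactly the same pattern used in the previous lemmas (Lambda CIU, fix CIU). Suppose $M_1 \ciupre{\xx}{\yy} M_2$. Take $K,\gamma,h$ such that $(K[\callcc{(x.M_1)}\substF{\gamma}],h)\opredobs{\yy}$; writing $M_i^\gamma$ for $M_i\substF{\gamma}$, we must show $(K[\callcc{(x.M_2^\gamma)}],h)\opredobs{\yy}$. As before, I would reduce this to a substitution statement: for any $\seq{\Sigma;z}{M}$, if $(M\subst{z}{\callcc{(x.M_1^\gamma)}},h)\opredobs{\yy}$ then $(M\subst{z}{\callcc{(x.M_2^\gamma)}},h)\opredobs{\yy}$, and then instantiate with $M = K[z]$. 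The proof of this statement goes by induction on the number of steps $k$ in $(M\subst{z}{\callcc{(x.M_1^\gamma)}},h)\opredobs{\yy}$, quantified over all $M,h$.

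The base cases ($k=0$: either $M=K'[\err()]$ when $\yy=\mathit{err}$, or $M$ a value / $M=z$ when $\yy=\mathit{ter}$) and the routine inductive cases ($M=K'[N]$ with an ordinary reduction step, $M=K'[\throwto{V}{\cont{K''}}]$) are handled verbatim as in the previous lemmas, since the head redex does not involve $z$. The only genuinely new case is when $z$ reaches head position, i.e. $M = K'[\,\callcc{(x.z V)}\,]$ — more precisely, the subtlety here is that $z$ replaces a term, not a value, so the interesting redex is $M=K'[N_0]$ where $N_0$ is a $\callcc$ whose body, after substitution, becomes $\callcc{(x.\,(\ldots z\ldots))}$. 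Actually the cleanest formulation: take $M = K'[\callcc{(x.z)}]$ is too special; instead note that the substituted term $\callcc{(x.M_i^\gamma)}$ itself is a value-like redex only inside an evaluation context, so the relevant new case is $M = K'[\callcc{(x'. \,z V)}]$ — no: $z$ is a variable of the type of $\callcc{(x.M_i^\gamma)}$, and such a term may appear anywhere. The crucial case is therefore $M = K'[\callcc{(x.P)}]$ where $P\subst{z}{\callcc{(x.M_1^\gamma)}}$ is not yet a value; we just push the induction through the $\callcc$ reduction rule $(K'[\callcc{(x.P)}],h)\ered(K'[P\subst{x}{\cont{(K',c)}}],h)$, which is covered by the ``$M=K'[N]$ with $(K'[N],h)\red(K'[N'],h')$'' case already (the remark in the Lambda CIU proof explicitly notes this case covers $\callcc$). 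So in fact \emph{no} new case arises beyond what the earlier lemmas handle: the term $\callcc{(x.M_i^\gamma)}$ is never itself a head redex exposing $M_i^\gamma$; rather $M_i^\gamma$ only gets evaluated after a $\callcc$-reduction substitutes a continuation into it, and from that point one re-enters the ordinary $K'[zV]$-style analysis with the substitution $\subst{x}{\cont{(K',c)}}$ already performed, appealing to $M_1\ciupre{\xx}{\yy} M_2$ exactly as in the Lambda CIU case.

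I would therefore write the proof tersely: ``Analogous to the proof of Lemma (Lambda CIU): take $K,\gamma,h$ with $(K[\callcc{(x.M_1^\gamma)}],h)\opredobs{\yy}$, reduce to showing $(M\subst{z}{\callcc{(x.M_1^\gamma)}},h)\opredobs{\yy}$ implies $(M\subst{z}{\callcc{(x.M_2^\gamma)}},h)\opredobs{\yy}$ by induction on the length of the derivation, and observe that the only case where the hole $z$ is active reduces, after a $\callcc$ step substituting $\cont{(K',c)}$ into the body, to an application of the hypothesis $M_1\ciupre{\xx}{\yy} M_2$ inside an evaluation context.'' The main obstacle — and the only point requiring care — is bookkeeping the order in which $\gamma$-substitution, the induction hypothesis instantiation, and the $\subst{x}{\cont{(K',c)}}$ substitution are applied, and checking that the continuation value $\cont{(K',c)}$ introduced by the $\callcc$ reduction is itself closed (so that the subsequent application of $M_1\ciupre{\xx}{\yy} M_2$, which is a relation on closing substitutions, is legitimate). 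This is routine given the earlier lemmas, so I expect the proof to be a two-line appeal to the Lambda CIU argument.
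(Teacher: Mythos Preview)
Your approach is workable but takes a substantially longer route than the paper. The paper exploits the crucial difference between this case and the Lambda/fix cases: $\callcc{(x.M_i)}$ is \emph{not} a value. Placed in any testing context $K$ with closing substitution $\gamma$, it is already a redex and fires immediately:
\[
(K[\callcc{(x.M_1)}\substF{\gamma}],h)\ \red\ (K[M_1\substF{\gamma\cdot[x\mapsto\cont{K}]}],h).
\]
Now the CIU hypothesis applies directly with the same $K,h$ and the extended substitution $\gamma' = \gamma\cdot[x\mapsto\cont{K}]$, giving $(K[M_2\substF{\gamma'}],h)\opredobs{\yy}$, and one backward step of the $\callcc$ rule closes the argument. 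No induction, no auxiliary substitution statement --- the whole proof is three lines.

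Your attempt to replay the Lambda CIU template by substituting $\callcc{(x.M_i^\gamma)}$ for a fresh variable $z$ introduces an avoidable complication: because the substituend is a non-value while $z$ is a value, the substitution changes the evaluation structure of $M$, so the inductive case analysis is \emph{not} ``handled verbatim as in the previous lemmas.'' For instance, $\pair{z}{N}$ has $z$ in a value position before substitution but acquires a redex in the left component afterwards; likewise $zV$ is stuck (variable applied) before but reduces afterwards. The ``new case'' you circle around is simply $M = K'[z]$ with $K'$ an evaluation context; unwinding it yields exactly the paper's direct step plus IH bookkeeping for any further occurrences of $z$ inside $K'$ and the captured continuation. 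So your plan can be pushed through, but the detour buys nothing and your claim that the routine cases go through unchanged is not quite right. (A minor point: you write $\cont{(K',c)}$, which is the extended LTS syntax; the CIU development uses the plain $\cont{K'}$.)
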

\begin{proof}
Let $K,\gamma,h$ be such that $(K[\callcc{(x.M_1)}],h)\opredobs{\yy}$.
Note that 
\[
(K[\callcc{(x.M_1)}\substF{\gamma}],h)\rarr (K[M_1\subst{x}{\cont{K}}]\substF{\gamma}],h)= 
(K[M_1\substF{\gamma\cdot[x\mapsto \cont{K}]}],h).
\]
Because of $M_1\ciupre{\xx}{\yy} M_2$, we get  $(K[M_2\substF{\gamma\cdot [x\mapsto \cont{K}]}],h)\opredobs{\yy}$
Consequently, \\ $(K[\callcc{(x.M_2)}\substF{\gamma}],h)\opredobs{\yy}$, 
because $(K[\callcc{(x.M_2)}\substF{\gamma}],h)\ired{} (K[M_2\substF{\gamma\cdot [x\mapsto \cont{K}]}],h)$.
\end{proof}

\begin{lemma}[Precongruence]\label{lem:pre}
Suppose $\xx\in\{\HOSC,\GOSC,\HOS,\GOS\}$, $\seq{\Gamma}{M_1,M_2:\sigma}$ are $\HOSC$-terms with an $\xx$ boundary,
and $C$ is an $\xx$-context such that $\seq{\Gamma'}{C[M_1],C[M_2]:\sigma'}$.
Then $\seq{\Gamma}{M_1\ciupre{\xx}{\yy} M_2:\sigma}$ implies
$\seq{\Gamma'}{C[M_1]\ciupre{\xx}{\yy} C[M_2]:\sigma'}$.
\end{lemma}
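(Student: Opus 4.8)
The plan is to proceed by induction on the structure of the $\xx$-context $C$, feeding the induction hypothesis into the one-layer compatibility lemmas proved above (Op CIU, Lambda CIU, fix CIU and $\callcc$ CIU) to carry out each step. To make this an honest induction I would first restate the claim with all the typing data universally quantified, i.e.\ for every $\Gamma,\sigma,\Gamma',\sigma'$ and every $\xx$-context $C$ with $\seq{\Gamma'}{C[M_1],C[M_2]:\sigma'}$, assuming $\seq{\Gamma}{M_1\ciupre{\xx}{\yy} M_2:\sigma}$, we have $\seq{\Gamma'}{C[M_1]\ciupre{\xx}{\yy} C[M_2]:\sigma'}$; the induction then runs on the size of $C$.

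In the base case $C=\bullet$ there is nothing to do: $C[M_i]=M_i$ and $\Gamma'=\Gamma$, $\sigma'=\sigma$, so the goal is literally the hypothesis. For the inductive step I would read off, from the grammar of $C$ in Figure~\ref{fig:def-hosc}, that the hole of $C$ sits inside exactly one immediate subterm, which is itself a strictly smaller $\xx$-context $D$, the remaining immediate subterms being ordinary $\xx$-terms. Since every sub-context of an $\xx$-context is again an $\xx$-context and every construct syntactically present in $C$ is by assumption allowable in $\xx$, the induction hypothesis applies to $D$ and yields $D[M_1]\ciupre{\xx}{\yy} D[M_2]$ at the appropriate intermediate typing judgement. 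I would then wrap this back up using the matching compatibility lemma: for the non-binding productions ($\pair{D}{M}$, $\pair{M}{D}$, $\pi_i D$, $D\,M$, $M\,D$, $\nureft{\tau}{D}$, $!D$, $D:=M$, $M:=D$, the three $\mathrm{if}$-cases, $D\oplus M$, $M\oplus D$, $D\boxdot M$, $M\boxdot D$, $D=M$, $M=D$, $\throwtot{\tau}{D}{M}$, $\throwtot{\tau}{M}{D}$) this is one of the items of the Op CIU Lemma; for $\fun{x^\tau}{D}$ it is the Lambda CIU Lemma, for $\fix{y}{}{x^\tau}{D}$ the fix CIU Lemma, and for $\callcct{\tau}(x.D)$ (which arises only when $\xx\in\{\HOSC,\GOSC\}$) the $\callcc$ CIU Lemma. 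Composing the two steps gives $\seq{\Gamma'}{C[M_1]\ciupre{\xx}{\yy} C[M_2]:\sigma'}$.

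The one point requiring a little care — and the closest thing to an obstacle — is the binder cases ($\fun{x^\tau}{D}$, $\fix{y}{}{x^\tau}{D}$, $\callcct{\tau}(x.D)$): there the hole of $D$ lies under a binder, so the induction hypothesis must be invoked at a typing context extended by the bound variable(s). I would therefore first record the small weakening fact that $\seq{\Gamma}{M_1\ciupre{\xx}{\yy} M_2:\sigma}$ implies $\seq{\Gamma,x:\tau}{M_1\ciupre{\xx}{\yy} M_2:\sigma}$, which is immediate from the definition of $\ciupre{\xx}{\yy}$, since any $\gamma,K,h$ refuting the extended instance also refutes the original one (the terms $M_1,M_2$ do not mention $x$). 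Beyond this, I do not expect any genuine difficulty: all the real content already sits in the four compatibility lemmas, and what remains is the routine verification that taking sub-contexts and performing each wrapping step stays within the fragment $\xx$.
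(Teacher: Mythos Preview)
Your proposal is correct and follows exactly the approach of the paper, which simply states ``By induction on the structure of contexts using the preceding lemmas.'' Your version is in fact more detailed than the paper's one-line proof, in particular your explicit treatment of the weakening step needed under binders.
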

\begin{proof}
By induction on the structure of contexts using the preceding lemmas.
\end{proof}

\begin{corollary}[CIU result]
Suppose $\xx\in\{\HOSC,\GOSC,\HOS,\GOS\}$ and $\seq{\Gamma}{M_1,M_2:\sigma}$ are $\HOSC$-terms with an $\xx$ boundary.
$\seq{\Gamma}{M_1\ciupre{\xx}{\yy} M_2:\sigma}$ iff $\seq{\Gamma}{M_1\ctxpre{\xx}{\yy} M_2:\sigma}$.
\end{corollary}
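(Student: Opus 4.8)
The plan is to prove the two implications separately. The genuinely substantial part — that $\ciupre{\xx}{\yy}$ is closed under the term formers and hence, by Lemma~\ref{lem:pre}, a precongruence — is already available, so what remains is packaging.

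For the direction $\seq{\Gamma}{M_1\ciupre{\xx}{\yy} M_2}\Rightarrow\seq{\Gamma}{M_1\ctxpre{\xx}{\yy} M_2}$, I would take an arbitrary $\xx$-context $C$ that closes $M_1,M_2$ (up to $\err$, when $\yy=\err$) with $\cseq{\herr}{C}{\sigma}$. By Lemma~\ref{lem:pre}, instantiated with the ambient context $\Gamma'$ equal to $\herr$ in the $\err$ case and to the empty context in the $\ter$ case, we obtain $\seq{\herr}{C[M_1]\ciupre{\xx}{\yy} C[M_2]}$. Now I instantiate the universal quantification in the definition of $\ciupre{\xx}{\yy}$ with the trivial evaluation context $K=\bullet$, the empty heap $h=\emptyheap$ (trivially well-typed over $\emptyset;\herr$), and the empty substitution for the now-empty term context; this yields exactly $(C[M_1],\emptyheap)\opredobs{\yy}\Rightarrow(C[M_2],\emptyheap)\opredobs{\yy}$, which is what $\ctxpre{\xx}{\yy}$ demands. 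Since $C$ was arbitrary, this direction is complete, and it is immediate given Lemma~\ref{lem:pre}.

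For the converse $\seq{\Gamma}{M_1\ctxpre{\xx}{\yy} M_2}\Rightarrow\seq{\Gamma}{M_1\ciupre{\xx}{\yy} M_2}$, I would realise an arbitrary CIU-test as an ordinary context. Fix $\Sigma,h,K,\gamma$ built from $\xx$-syntax with the required typings, write $\dom{\Sigma}=\{\ell_1,\dots,\ell_n\}$ with $\ell_i:\reftype{\tau_i}$ and $\dom{\Gamma}=\{x_1,\dots,x_k\}$, and assume $(K[M_1\{\gamma\}],h)\opredobs{\yy}$. Pick fresh variables $y_1,\dots,y_n$ and let $\theta$ be the substitution renaming each $\ell_i$ to $y_i$. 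I take $C$ to be the context that: (i) allocates $n$ fresh cells, binding $y_i$ to $\nuref{e_{\tau_i}}$, where $e_{\tau_i}$ is a fixed closed term of type $\tau_i$ that converges to a value; (ii) patches them via the assignments $y_i:=h(\ell_i)\theta$ (the allocate-then-patch shape being needed because $h$ and $\gamma$ may refer to the locations cyclically); and (iii) finally runs $(K\theta)\bigl[(\lambda x_1.\cdots\lambda x_k.\,-)(\gamma(x_1)\theta)\cdots(\gamma(x_k)\theta)\bigr]$, with $-$ the hole. Existence of the $e_\tau$ follows by a routine induction on $\tau$ (use $\nuref{e_\tau}$ for $\reftype{\tau}$, and $\callcc$ for $\mathrm{cont}$-types, which occur only when $\xx\in\{\HOSC,\GOSC\}$). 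One then checks that $C$ is a well-typed $\xx$-context closing $M_1,M_2$ up to $\err$ — crucially it introduces no feature outside $\xx$, since it stores only the $\xx$-values $h(\ell_i)$ and uses $\callcc$ and $\mathrm{throw}$ only when these are available in $\xx$ — and that $(C[M_i],\emptyheap)$ reduces to a configuration equal to $(K[M_i\{\gamma\}],h)$ up to a consistent renaming of $\ell_1,\dots,\ell_n$ to whichever fresh locations the allocations produced. Since $\opredobs{\yy}$ is stable under such renamings, $(C[M_1],\emptyheap)\opredobs{\yy}$, hence by hypothesis $(C[M_2],\emptyheap)\opredobs{\yy}$, and undoing the renaming gives $(K[M_2\{\gamma\}],h)\opredobs{\yy}$, as required.

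The main obstacle lies entirely in this second direction, and it is one of careful bookkeeping rather than of depth: ensuring the realising context $C$ genuinely lives inside $\xx$ in each of the four fragments simultaneously — in the $\HOSC$ case this includes rebuilding any continuation values residing in $h$ out of $\callcc$ and $\mathrm{throw}$, since the context-level constructor $\contt{\tau}{K}$ is not in the context grammar — and confirming that the allocate-then-patch encoding faithfully reproduces the given heap despite the uncontrollable choice of fresh locations. Once Lemma~\ref{lem:pre} is in hand, the first direction is essentially immediate.
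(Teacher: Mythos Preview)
Your proposal is correct and takes the same approach as the paper: left-to-right via Lemma~\ref{lem:pre}, right-to-left by encoding each CIU test as a closing context. The paper's proof is terser---its right-to-left direction is the single line ``testing with $h,K,\gamma$ is a special case of testing with $C$''---so you are simply spelling out the allocate-then-patch construction that is left implicit there; one minor simplification is available: $\contt{\tau}{K}$ is legitimate $\HOSC$ syntax and may appear as a subterm of a context (only $\contt{\tau}{C}$ with the hole inside is absent from the grammar), so continuation values residing in $h$ need not be rebuilt via $\callcc$.
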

\begin{proof}
The left-to-right implication follows from Lemma~\ref{lem:pre}.
The right-to-left implication holds, because testing with $h,K,\gamma$ is a special case of testing with $C$.
\end{proof}
The Corollary is the same as Lemma~\ref{lem:ciu}.


\section{Additional material for Section~\ref{sec:hoschosc} (HOSC[HOSC])}

\subsection{Extended Operational Semantics}

\begin{definition}
  Taking $M$ a term, $c$ a continuation name,
 $h$ a heap
 we write $\Sigma;\Gamma \vdash (M,c,h):\tau$
 if $\Sigma;\Gamma \vdash M:\tau$, $c:\tau$
 and $h:(\Sigma;\Gamma)$.
\end{definition}

\begin{lemma}
 \label{lm:decomp-ered}
 Taking $\Sigma;\Gamma \vdash (M,c,h):\tau$,
 then:
 \begin{itemize}
  \item either $(M,c,h)$ is reducible (for $\ered$);
  \item or $M$ is a a callback $K[f \ V]$ with $f \in \dom{\Gamma}$;
  \item or $M$ is a value $V$.
 \end{itemize}
\end{lemma}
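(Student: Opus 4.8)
The plan is to prove Lemma~\ref{lm:decomp-ered} by case analysis on the structure of the term $M$, following the standard ``progress''-style argument adapted to the extended operational semantics $\ered$ on triples $(M,c,h)$. First I would observe that, since $\Sigma;\Gamma \vdash M:\tau$, we can perform induction on the typing derivation of $M$, or equivalently structural induction on $M$ using the shape of values and evaluation contexts. The key insight is that any term is either already a value, or can be uniquely decomposed as $K[R]$ where $R$ is a redex-candidate: an application $VW$, a projection $\pi_i V$, a conditional $\ifte{V}{M_1}{M_2}$, a dereference $!V$, an assignment $V:=W$, an arithmetic/comparison operation on values, a $\callcct{\tau}(x.N)$, a $\throwtot{\tau}{V}{W}$, or a reference creation $\nureft{\tau}{V}$, where in each position the immediate subterms that must be values indeed are (this is the standard ``unique decomposition'' fact, provable by induction on $M$).

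Next I would go through these redex-candidates one by one. For $\pi_i V$, $\ifte{V}{\cdots}{\cdots}$, arithmetic, comparison, $\nureft{\tau}{V}$, $\callcct{\tau}(x.N)$, and $(\lambda x.N)V$ or $(\fix{y}{}{x}{N})V$, the relevant reduction rule of Figure~\ref{fig:opred} (embedded into $\ered$ via the rule $(M,h)\red(M',h')\Rightarrow(M,c,h)\ered(M',c,h')$, plus the two dedicated $\ered$-rules for $\callcc$ and $\throw$) applies directly, using the typing information to guarantee the subterm has the right form (e.g.\ a value of type $\Bool$ is $\trueML$ or $\falseML$, a value of type $\sigma\rarr\sigma'$ is a $\lambda$-abstraction, a $\mathbf{rec}$, or a function name $f$). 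For $!V$, $V:=W$, and $V=W$ where $V$ has a reference type: $V$ must be a location $\ell$, and since $h:(\Sigma;\Gamma)$ and $(\ell,\sigma)\in\Sigma$ we have $\ell\in\dom{h}$, so the heap-lookup/update rules apply. For $\throwtot{\tau}{V}{W}$ with $W:\conttype{\sigma}$: in the extended syntax $W$ must be of the form $\contt{\sigma}{(K',c')}$ (the only value of continuation type), so the dedicated $\ered$-rule fires. The one genuinely distinguished case is an application $VW$ where $V$ is a function name $f$: then $f\in\FNames$, and $f\in\dom{\Gamma}$ by the typing rule for names (which places $f$ into $\Gamma$, or rather into $\Sigma;\Gamma$ — here one must be slightly careful about whether $f$ is recorded in $\Gamma$; in the configuration setting the relevant judgement has the free function names listed, so this is exactly the ``callback'' clause $M=K[fV]$).

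I would expect the main obstacle — really more a bookkeeping subtlety than a deep difficulty — to be handling the extended syntax cleanly: namely, checking that the only closed values of type $\conttype{\sigma}$ in the extended language are the staged continuations $\contt{\sigma}{(K',c')}$ (the old $\contt{\sigma}{K'}$ having been removed), and that function names $f$, being the only values at arrow type that are not $\lambda$-abstractions or recursive functions, are precisely what forces the ``callback'' alternative. One must also confirm that evaluation contexts in the extended language still admit unique decomposition, which follows because the grammar for $K$ in Figure~\ref{fig:def-hosc} is unchanged except for the reinterpretation of values. Once these points are pinned down, the case analysis closes: in every case $M$ is either a value, or a callback $K[fV]$ with $f\in\dom{\Gamma}$, or $(M,c,h)$ takes an $\ered$-step, which is exactly the trichotomy claimed.
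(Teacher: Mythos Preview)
The paper does not actually give a proof of this lemma; it is stated in the appendix without argument, evidently regarded as a routine progress-style fact. Your approach---unique decomposition of $M$ into $K[R]$ followed by a case analysis on the redex $R$, using typing to determine canonical forms at each type---is the standard one and is correct.

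Your hedging about whether function names live in $\dom{\Gamma}$ is well placed: the extended typing rule for $f\in\FNames$ admits $f$ as a constant irrespective of $\Gamma$, so strictly speaking the clause ``$f\in\dom{\Gamma}$'' in the lemma statement is a minor informality on the paper's part (it conflates function names with free variables of arrow type). In the configurations where the lemma is actually invoked this distinction is immaterial, and your identification of $K[fV]$ with $f$ a function name as the one non-reducible, non-value case at arrow type is exactly right. One small addendum to your canonical-forms check: for the lemma to hold as stated one also needs that $\Gamma$ contains no variables at $\reftype{\tau}$ or $\conttype{\tau}$ (otherwise $K[!x]$ or $K[\throwto{V}{x}]$ with $x\in\Gamma$ would be stuck without fitting any clause); again this is a constraint on the intended use rather than a defect in your argument.
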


\begin{lemma}
 \label{lm:decomp-fred}
 Taking $\Sigma;\Gamma \vdash (M,c,h):\tau$,
 and $\cseq{\Sigma;\Gamma}{K}{\tau}$
 and $\gamma$ an idempotent substitution s.t. $\vdash \gamma:\Gamma$,
 writing $\tilde{M}$ for $M\substF{\gamma}$ and $\tilde{h}$ for $h \substF{\gamma}$
 then $(K[\tilde{M}],\tilde{h}) \red (N,h')$ implies that
 \begin{itemize}
  \item either $(M,c,h) \ered (M',c',h'')$ and
  $N = K'[M'\substF{\gamma}]$ with $K' = \gamma(c')$, and $h''\substF{\gamma} = h'$;
  \item or $M$ is a callback $K'[f \ V]$ with $\gamma(f)$ a $\lambda$-abstraction $\lambda x.P$
  and $N = K[\tilde{K'}[P\subst{x}{\tilde{V}}]]$,
  with $\tilde{K'} = K'\substF{\gamma}$ and $\tilde{V} = V\substF{\gamma}$;
  \item or $M$ is a value and $K$ is an evaluation context larger than $\bullet$. 
 \end{itemize}
\end{lemma}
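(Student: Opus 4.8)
The plan is to argue by the trichotomy supplied by Lemma~\ref{lm:decomp-ered} for the well-typed triple $(M,c,h)$, exploiting throughout that $\red$ is deterministic: since $(K[\tilde M],\tilde h)$ is assumed to reduce, in each case it suffices to exhibit \emph{one} reduction of the shape claimed by the matching bullet, and determinism then forces $N$ and $h'$ to be exactly as stated.

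First I would clear the two easy cases. If $M$ is a value, then $\tilde M$ is again a value (substituting values for variables in a value yields a value), so $K[\tilde M]$ can reduce only if $K\neq\bullet$ --- which is precisely the third bullet, and no further inspection of $N,h'$ is required. If $M$ is a callback $K'[f\,V]$ with $f\in\dom{\Gamma}$, then $K[\tilde M]=(K\circ\widetilde{K'})[\gamma(f)\,\tilde V]$, with $K\circ\widetilde{K'}$ an evaluation context and the application $\gamma(f)\,\tilde V$ sitting in its hole; for the whole term to reduce, $\gamma(f)$ must make this application a redex, i.e.\ (the relevant case here) a $\lambda$-abstraction $\lambda x.P$, whose contraction gives $(K\circ\widetilde{K'})[P\subst{x}{\tilde V}]$, which is the second bullet (the $\mathbf{rec}$ case being entirely analogous, and a function-name value being excluded precisely because the term would then be stuck).

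The substantial case is when $(M,c,h)$ is $\ered$-reducible, say $(M,c,h)\ered(M',c',h'')$. For this I would establish a substitution lemma to the effect that $\gamma$ turns this symbolic step into a concrete step of the flattened configuration, namely $(K[\tilde M],\tilde h)\red(\gamma(c')[\,\widetilde{M'}\,],h''\substF{\gamma})$, where $\gamma$ is taken to also record the binding $c\mapsto K$. The proof is by inspection of the rules defining $\ered$. For the rules inherited from $\red$, $c'=c$ and one only needs the standard fact that base reduction commutes with substitution of closed values into terms and heaps; the step then propagates under the evaluation context $K$. For $\callcc$, again $c'=c$, and the symbolically captured $\contt{\tau}{(K_0,c)}$ corresponds to the real continuation $\gamma(c)\circ\widetilde{K_0}$, so both sides agree after applying $\gamma$. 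For $\throw$, where $M=K_0[\throwtot{\tau}{V}{\contt{\tau}{(K_1,c')}}]$, the $\ered$-step discards $K_0$ and resumes $K_1$ under $c'$, and so does the $\red$-step once $\gamma$ is applied --- here one uses configuration well-formedness to know $c'\in\dom{\gamma}$, so that $\gamma(c')$ is defined and denotes the context jumped into. Determinism of $\red$ then delivers the first bullet.

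The step I expect to be the real obstacle is exactly this substitution lemma for the control rules: one must check that the symbolic bookkeeping on continuation names performed by $\ered$ --- peeling off $c\mapsto K$ at a $\callcc$, and following the name $c'$ recorded inside a reified continuation at a $\throw$ --- matches what ordinary $\red$ does to the concrete context $\gamma(c)$. This in turn rests on the invariant that every reified continuation occurring in $M$, $h$, or $\gamma$ carries a name belonging to $\dom{\gamma}$, so that $\gamma$ is a faithful witness of all continuations in play; I would either carry this invariant as an auxiliary hypothesis or extract it from the healthiness conditions on configurations. Once it is in hand, the remaining cases are routine.
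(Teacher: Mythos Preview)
Your plan is sound and matches the structure the paper sets up around this lemma. The paper itself does not spell out a proof of Lemma~\ref{lm:decomp-fred}; it is sandwiched between Lemma~\ref{lm:decomp-ered} (the trichotomy you invoke) and Theorem~\ref{thm:ered-mplies-fred} (exactly the substitution lemma you propose to establish for the reducible case). So your decomposition---trichotomy on $(M,c,h)$, trivial value case, callback case by inspection of what $\gamma(f)$ can be, and the reducible case via ``$\ered$ simulates under flattening'' plus determinism of $\red$---is precisely the intended route, and your identification of the control rules ($\callcc$/$\throw$) as the only non-routine step is accurate. One small remark: rather than re-deriving the substitution lemma inline, you could simply appeal to Theorem~\ref{thm:ered-mplies-fred} (whose proof the paper does give, by the same case analysis you sketch) and then invoke determinism; the paper's ordering of the statements is slightly awkward in this respect, but logically there is no circularity.
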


\begin{definition}
 Taking $M$ an extended term and $\kappa$ a substitution from continuation names to evaluation contexts that
 contains the continuation names appearing in the support of $M$,
 one write $M\substF{\kappa}$ for the term where all the occurrences of $\cont{K,c}$
 are substituted by $\cont{K'[K[\bullet]]}$, with $\kappa(c) = K'$.
 One extend this definition to heaps, writing $h\substF{\kappa}$ 
 for the heap $\{(\ell,v\substF{\kappa}) \sep (\ell,v) \in h\}$.
\end{definition}

\begin{theorem}
 \label{thm:ered-mplies-fred}
 Taking $M$ a term, $h$ a heap, $\kappa$ a substitution from continuation names to evaluation contexts that
 contains the continuation names appearing in the support of $M$ and $h$,
 and $c,c'$ two continuation names s.t. $\kappa(c) = K$ and $\kappa(c') = K'$,
 then for all $M',h'$,
 if $(M,c,h) \ered (M',c',h')$ then $(K[M\substF{\kappa}],h\substF{\kappa}) 
 \red (K'[M'\substF{\kappa}],h'\substF{\kappa})$.
\end{theorem}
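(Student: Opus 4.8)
The plan is to prove the statement by a case analysis on the rule used to derive $(M,c,h)\ered(M',c',h')$. There are three families of rules: the embedding rule, which lifts a plain $\red$-step $(M,h)\red(M',h')$ to $\ered$ keeping $c'=c$; the $\callcc$-rule; and the $\throw$-rule. Before the case analysis I would record a few routine commutation properties of the substitution $\substF{\kappa}$, all proved by structural induction: (i) $\substF{\kappa}$ distributes over plugging, i.e. $(E[N])\substF{\kappa} = E\substF{\kappa}[N\substF{\kappa}]$ for every extended evaluation context $E$; (ii) it commutes with capture-avoiding substitution for ordinary variables, $(N\subst{x}{V})\substF{\kappa} = N\substF{\kappa}\subst{x}{V\substF{\kappa}}$; (iii) it preserves heap domains and commutes with lookup, update and allocation, e.g. $\dom{h\substF{\kappa}}=\dom{h}$, $(h\substF{\kappa})(\ell)=h(\ell)\substF{\kappa}$ and $(h[\ell\mapsto V])\substF{\kappa}=h\substF{\kappa}[\ell\mapsto V\substF{\kappa}]$; and (iv) unfolding the definition of $\substF{\kappa}$, $\contt{\tau}{(E,d)}\substF{\kappa} = \contt{\tau}{\kappa(d)[E\substF{\kappa}[\bullet]]}$. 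I would also note that $\kappa$ is defined on every continuation name occurring in $M'$ and $h'$: such names are either already present in $M,h$ (hence covered by hypothesis) or are precisely $c$ or $c'$, which are covered by $\kappa(c)=K$ and $\kappa(c')=K'$.

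In the embedding case we have $c'=c$ and therefore $K'=K$, and $(M,h)\red(M',h')$ is an instance of one of the non-control rules of Figure~\ref{fig:opred}; note that the $\callcc$- and $\throw$-rules cannot occur here, since in $\ered$ they are handled by the dedicated rules. Writing $M=E[R]$ for the active redex $R$, properties (i)--(iii) give $M\substF{\kappa}=E\substF{\kappa}[R\substF{\kappa}]$, and $R\substF{\kappa}$ is still a redex of the same shape, because $\substF{\kappa}$ only rewrites $\contt{}{}$-subterms, which are values and never block any of these redexes. Hence the same $\red$-rule fires inside the evaluation context $K[E\substF{\kappa}[\bullet]]$ and produces $K[E\substF{\kappa}[R'\substF{\kappa}]]=K[M'\substF{\kappa}]$ with heap $h'\substF{\kappa}$; for the allocation rule one additionally uses $\dom{h\substF{\kappa}}=\dom{h}$ so that the freshly chosen location is still fresh, and for dereferencing one uses $(h\substF{\kappa})(\ell)=h(\ell)\substF{\kappa}$.

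For the $\callcc$-rule, $M=E[\callcct{\tau}(x.N)]$, $M'=E[N\subst{x}{\contt{\tau}{(E,c)}}]$, $c'=c$, $h'=h$. Applying the ordinary $\callcc$-rule to $(K[E\substF{\kappa}[\callcct{\tau}(x.N\substF{\kappa})]],h\substF{\kappa})$ captures the continuation $\contt{\tau}{K[E\substF{\kappa}[\bullet]]}$; on the other hand (iv) together with the hypothesis $\kappa(c)=K$ gives $\contt{\tau}{(E,c)}\substF{\kappa}=\contt{\tau}{K[E\substF{\kappa}[\bullet]]}$, so by (i)--(ii) the two sides coincide. For the $\throw$-rule, $M=E[\throwtot{\tau}{V}{\contt{\tau}{(E_0,c_0)}}]$, $M'=E_0[V]$, $c'=c_0$, $h'=h$; by (iv) and $\kappa(c_0)=\kappa(c')=K'$ the thrown continuation becomes $\contt{\tau}{K'[E_0\substF{\kappa}[\bullet]]}$, and the ordinary $\throw$-rule discards the surrounding context, yielding $K'[E_0\substF{\kappa}[V\substF{\kappa}]]=K'[M'\substF{\kappa}]$ with the heap unchanged. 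The main obstacle will be purely notational: getting the recursive definition of $\substF{\kappa}$ on nested $\contt{}{}$-constants to line up with the ambient evaluation contexts, i.e. verifying (iv) carefully, and checking that $\kappa$ stays defined on every continuation name that survives the step; once the commutation lemmas are in place, each of the three cases is a one-line verification.
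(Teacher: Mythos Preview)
Your proposal is correct and follows essentially the same approach as the paper: a case analysis on the three $\ered$-rule families, with the $\callcc$ and $\throw$ cases reduced to the corresponding plain $\red$-rules via the defining equation for $\substF{\kappa}$ on $\contt{}{}$-values. The paper's proof is terser---it does not isolate your commutation lemmas (i)--(iv) explicitly---but the underlying calculation is the same; your more explicit treatment of (iv) and of well-definedness of $\kappa$ on surviving names is a welcome addition rather than a deviation.
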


\begin{proof}
 We reason by case analysis: 
 \begin{itemize}
 \item if $M = K_1[\callcc{(x.M_1)}]$, then one has:
  \begin{itemize}
   \item $(K_1[\callcc{(x.M_1)}],c,h) \ered (K_1[M_1\subst{x}{\cont{K_1,c}}],c,h)$;
   \item $(K[K_1[\callcc{(x.M_1)}]\substF{\kappa}],h\substF{\kappa}) 
    \red (K[K_1[M_1\subst{x}{\cont{K[K_1]}}]\substF{\kappa}],h\substF{\kappa})$
  \end{itemize}  
  and we conclude using the fact that $(M_1\subst{x}{\cont{K_1,c}})\substF{\kappa} = 
  (M_1\subst{x}{\cont{K[K_1]}})\substF{\kappa}$ since $\kappa(c) = K$.
 \item if $M = K_1[\throwto{V}{\cont{K_2,c'}}]$, then one has:
  \begin{itemize}
   \item $(K_1[\throwto{V}{\cont{K_2,c'}}],c,h) \ered (K_2[V],c',h)$;
   \item $(K[K_1[\throwto{V}{\cont{K_2,c'}}]\substF{\kappa}],h\substF{\kappa})
    \red (K'[K_2[V]\substF{\kappa}],h\substF{\kappa})$ since $\kappa(c') = K'$. 
  \end{itemize}  
 \item If there exists a (unique) reduction $(M,h) \red (M',h')$ then:
 \begin{itemize}
   \item $(M\substF{\kappa},c,h\substF{\kappa}) \ered (M'\substF{\kappa},c,h'\substF{\kappa})$
   \item $(K[M\substF{\kappa}],h\substF{\kappa}) 
    \red (K[M'\substF{\kappa}],h'\substF{\kappa})$.    
 \end{itemize}
%

 \end{itemize}
\end{proof}


\subsection{Proof of Lemma~\ref{lem:ctxerr}}
\begin{proof}
{
We reason by contraposition.
\begin{enumerate}
\item Suppose $\seq{\Gamma}{M_1\not\ctxter{\xx} M_2:\tau}$, i.e.
 $C[M_1]\opredter$ and $C[M_2]\not\opredter$ for some $\cseq{}{C}{\tau}$.
 
 Then we can construct $\cseq{\err}{C'}{\tau}$ such that $C'[M_1]\opredtererr$ and $C'[M_2]\not\opredtererr$ as follows:
 \[
 C'[\bullet] = (C_{;\err}[\bullet];\err),
 \] 
where $C_{;\err}$ refers to $C$ in which each occurrence of $\contt{\sigma}{(-)}$ is replaced with  $\contt{\sigma}{(-;\err)}$.
In this way, the construction transforms all opportunities for $\opredter$  into ones for $\opredtererr$.
 Note that, if  $M_1$ contained $\contt{\sigma}{K}$, it would not necessarily be the case that   $C'[M_1]\opredtererr$,
 because $M_1$ is not affected by the transformation.

\item Let $\xx\in \{\HOSC,\GOSC\}$. Suppose $\seq{\Gamma}{M_1\not\ctxpre{\xx}{err} M_2}$, i.e.
$C[M_1]\opredtererr$ and $C[M_2]\not\opredtererr$ for some $C$ such that $\cseq{\err}{C}{\tau}$.
Then we can construct $\cseq{}{C'}{\tau}$ such that
$C'[M_1]\opredter$ and $C'[M_2]\not\opredter$ as follows.
\[
C'[\bullet] = \callcc{(y. 
\,C_{;\Omega}[\bullet] \subst{\err}{(\fun{z}{\throwto{()}{y}})} ;\Omega)}
\] 
where $C_{;\Omega}$ is defined analogously to $C_{;\err}$.
Note that we add $;\Omega$, because $C[M_2]\not\opredtererr$ could be due to $\opredter$ (rather than divergence),
and we want to make sure that $C'[M_2]$ diverges, which will imply $C'[M_2]\not\opredter$.

Note that, because of the use of continuations, $C'$ is an $\xx$-context only for $\xx\in\{\HOSC,\GOSC\}$.

In this case, we also rely on $\contt{\sigma}(K)$-freeness (of $M_2$).
If $C[M_2]\not\opredtererr$ was due to $\opredter$ caused by $\contt{\sigma}{K}$ in $M_2$, 
then our $;\Omega$ transformation might not imply divergence for $C'[M_2]$.
\end{enumerate}
}
\end{proof}

\subsection{Name invariance}

We say that a permutation $p$ of $\Names$ is \boldemph{type-preserving} if it is also a permutation once
 restricted to each of $\CNames_\sigma$ and $\FNames_{\sigma\rarr\sigma'}$.
 Given $X\subseteq\Names$, we say that $p$ fixes $X$ if $p(x)=x$ for all $x\in X$.
 Type-preserving permutations can be applied to traces in the obvious way.
 In particular, if $t$ is $(N_O,N_P)$-trace then $p(t)$ is a $(p(N_O),p(N_P))$-trace.
 We write $\perm{t_1}{t_2}{X}$ if there exists a type-preserving permutation $p$ that fixes $X$ 
 such that $p(t_1)=t_2$.
 \begin{lemma}\label{lem:inv}
Suppose $\CC=\conf{\cdots,\phi,h}$ is a configuration and $p$ is a type-preserving permutation. 
If $t\in \TrR{\HOSC}{\CC}$ and $p$ fixes $\phi$ then $p(t)\in \TrR{\HOSC}{\CC}$.
 \end{lemma}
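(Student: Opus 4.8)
\textbf{Proof plan for Lemma~\ref{lem:inv} (name invariance).}

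The plan is to prove the statement by induction on the length of the trace $t$, strengthening the statement so that it tracks not only the trace but also the target configuration. Concretely, I would show: for every type-preserving permutation $p$ and every (active or passive) configuration $\CC = \conf{\cdots,\phi,h}$, if $\CC \iRed{t} \CC'$ and $p$ fixes $\phi$, then $\CC \iRed{p(t)} \CC'$ for some configuration $\CC'$ with the property that $\CC' = p(\CC'')$ for some $\CC''$ reached by $\CC\iRed t \CC''$ — more precisely, I would carry along the invariant ``$p$ fixes $\phi'$'' where $\phi'$ is the name-support component of the current configuration at each stage, noting that $\phi\subseteq\phi'$ along any run. Since the first move out of $\CC$ involves only names already in $\phi$ (for O-actions from a passive configuration) or freshly chosen names (for P-actions), one can compose $p$ with a suitable renaming of the freshly introduced names so that the hypothesis ``fixes the current $\phi'$'' is re-established after each step; this is exactly the freedom of name choice recorded in Remark~\ref{rem:invar}.

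The key steps, in order, would be: (1) state the strengthened inductive hypothesis as above; (2) base case $t = \emptytrace$, which is immediate since $\CC\iRed{\emptytrace}\CC$ uses only $\tau$-transitions, which do not depend on the identity of names and are unaffected by any permutation; (3) inductive step, splitting on whether the first visible action $\act_1$ of $t$ is a P-action or an O-action. For an O-action ($(OA)$ or $(OQ)$ from a passive $\CC$), the action uses a name from $\phi$ (via the side conditions $\gamma(c)=K$, $\gamma(f)=V$, and — in the later LTSs — $c\in\View$ or $c = c''$), and $p$ fixes $\phi$, so the same rule fires from $\CC$ with label $p(\act_1)$, landing in $p(\CC_1)$ where $\CC\iRed{\act_1}\CC_1$; one then checks $p$ fixes the new support $\phi_1 = \phi\uplus\nu(A)$ — which it does provided $p$ also fixes $\nu(A)$, and here one invokes the freedom to pick $p(\act_1)$'s abstract value names to be fixed by $p$ (possible since $p$ is a bijection on each $\FNames_{\sigma\to\sigma'}$ and $\CNames_\sigma$ and only finitely many names matter). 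For a P-action ($(PA)$, $(PQ)$, or a $\tau$-sequence followed by one of these), the transition is determined by $\CC$'s term and heap, and the only freedom is in the choice of the fresh names in $\nu(A)\uplus\{c'\}$; choose those fresh names for the $p(t)$-run to be $p$ applied to the original fresh names — legitimate because they are disjoint from $\phi_1$ and hence the renamed versions are disjoint from $p(\phi_1)=\phi_1$ — so again we land in $p(\CC_1)$ with $p$ fixing $\phi_1$, and the induction hypothesis applies to the suffix of $t$.

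The main obstacle I anticipate is purely bookkeeping rather than conceptual: making the bridge between ``$p$ fixes $\phi$'' (the hypothesis, only about the initial support) and the stronger ``$p$ fixes the support at every intermediate configuration'' (needed to reapply the inductive hypothesis). The resolution is the standard one — since $p$ is a type-preserving bijection and only the finitely many names appearing in $t$ and $\CC$ are relevant, one is always free to postcompose with a further renaming that sends the freshly introduced names wherever $p$ would send arbitrary fresh names, so without loss of generality $p$ already fixes them; equivalently, one proves the statement up to the equivalence $\perm{\cdot}{\cdot}{\phi\cup\{c\}\cup\cdots}$ of Remark~\ref{rem:invar} and observes the two formulations coincide. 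A secondary small point to verify is that the healthiness conditions on configurations are preserved under $p$ (e.g.\ $\dom\xi = \dom\gamma\cap\CNames$, types matching), which is immediate since $p$ is type-preserving; and for the refined LTSs (Figures~\ref{fig:gosc},~\ref{fig:hos}) one additionally checks that $p$ commutes with the auxiliary components $\ViewF$, $\View$, $\topo{\cdot}$ — again routine, as these are all built from names by operations that commute with renaming.
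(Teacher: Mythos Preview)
The paper does not give a detailed proof; it dispatches the lemma with a single sentence (``Due to the arbitrariness of name choice in transitions \ldots''), so your plan is already more explicit than what appears there. The underlying idea---replay the run, choosing the fresh names in the new run to be the $p$-images of the originals---is correct and is exactly what the paper's sentence gestures at.

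However, your inductive hypothesis is miscalibrated, and you rightly flag this yourself as ``the main obstacle''. The invariant ``$p$ fixes the current support $\phi'$'' is \emph{not} preserved along the run: after one step you land in $p(\CC_1)$, and neither the support $\phi_1$ of $\CC_1$ nor that of $p(\CC_1)$ is fixed by $p$ in general. Your proposed patch (postcompose with a further renaming so that without loss of generality $p$ fixes the fresh names) does not work as stated: if $p$ fixed every name occurring in $t$ then $p(t)=t$ and there would be nothing left to prove, so this cannot be assumed without loss of generality.

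The clean repair is to strengthen to full equivariance, dropping the ``fixes $\phi$'' hypothesis from the induction: for every type-preserving permutation $p$ and every configuration $\CC$, if $\CC \iRed{t} \CC'$ then $p(\CC) \iRed{p(t)} p(\CC')$. This goes through by a direct induction on the run, since every rule in Figure~\ref{fig:lts-hosc} (and the $\tau$-rule via $\ered$) is built from operations---substitution, $\AVal{\cdot}{\cdot}$, the disjoint-union freshness side-condition---that commute with type-preserving renaming. The lemma is then immediate: $p$ fixing $\phi$ gives $p(\CC)=\CC$, hence $\CC\iRed{p(t)}p(\CC')$ and $p(t)\in\TrR{\HOSC}{\CC}$.
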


Due to the arbitrariness of name choice in transitions (i.e. freedom to choose fresh names),
 $\TrR{\HOSC}{\CC}$ is closed under renamings that preserve types and the names already present in $\CC$.

\subsection{Proof of Lemma~\ref{lem:cor}}

Delegated to Section~\ref{sec:proof}.

\subsection{Proof of Theorem~\ref{hoscsound}}

 \begin{proof}
  Suppose $\trsem{\HOSC}{\seq{\Gamma}{M_1}}\subseteq \trsem{\HOSC}{\seq{\Gamma}{M_2}}$.
  We handle $\seq{\Gamma}{M_1 \ciupre{\HOSC}{err} M_2}$, as it is slightly more involved. The reasoning for  $\ciupre{\HOSC}{ter}$ is symmetric.
   
Let $\Sigma, h,K,\gamma$ be such that $(K[M_1\substF{\gamma}],h)\opredtererr$.
Suppose $(\vec{A_i},\vec{\gamma_i})\in\AVal{\gamma}{\Gamma}$ and $c:\sigma'$ ($c\not\in\circ$).
By Lemma~\ref{lem:cor} (left-to-right), there exist $t,c'$ such that
$t\in \TrR{\HOSC}{\cconf{M_1}{\rho_{\vec{A_i}},c}}$ and $t^\bot\, \questP{\errn}{()}{c'} \in \TrR{\HOSC}{\cconf{h, K,\gamma}{\vec{\gamma_i}, c} }$.
By $\trsem{\HOSC}{\seq{\Gamma}{M_1}}\subseteq \trsem{\HOSC}{\seq{\Gamma}{M_2}}$, we have $t\in \TrR{\HOSC}{\cconf{M_2}{\rho_{\vec{A_i}},c}}$.
Because $t\in \TrR{\HOSC}{\cconf{M_2}{\rho_{\vec{A_i}},c}}$ and 
$t^\bot\, \questP{\errn}{()}{c'} \in \TrR{\HOSC}{\cconf{h, K,\gamma}{\vec{\gamma_i}, c} }$, by Lemma~\ref{lem:cor} (right-to-left) we
can conclude $(K[M_2\substF{\gamma}],h)\opredtererr$. Thus, $\seq{\Gamma}{M_1\ciupre{\HOSC}{err} M_2}$.
  \end{proof}

\subsection{Proof of Lemma~\ref{hoscdef}}

Recall that abstract values are tuples consisting of boolean and integer constants, as well as function names.
We can refer to them using projections of the form $\pi_{\vec{i}}$, where $\vec{i}\in\{1,2\}^+$, on the understanding
that $\pi_{i,\vec{i}} x= \pi_i (\pi_{\vec{i}} x)$.
\begin{itemize}
\item Suppose $\num{A}=\{(\vec{i},n)\,|\, \pi_{\vec{i}} A=n: \Bool,\Int \}$.
Then $\ass{x}{A}$ will act as shorthand for the following code
$\ifte{(\bigwedge_{(\vec{i},n)\in\num{A}} \pi_{\vec{i}}\, x = n  )}{()}{\Omega}$.
which checks if the boolean/integer arguments match those of $A$.

\item Another operation, written $A[\pi x/f]$, will substitute
for each $f\in \nu{(A)}$, the corresponding projection $\pi_{\vec{i}_f} x$ (i.e. one such that $\pi_{\vec{i}_f} A = f$).
\end{itemize}
This syntax will be used in all definability arguments.

\cutout{
\begin{theorem}
Suppose $\phi\uplus\{\errn\}\subseteq\FNames$,
and $t=o_1 p_1\cdots o_n p_n$ is a $(\phi\cup\{c\},\{\errn,\tern\})$-trace such that $\tern\not\in\nu(t)$.
There exists a passive configuration $\CC_O$ such that
 $\Treven{\CC_O}$ consists of all even-length prefixes of $t$
(along with their renamings via type-preserving permutations that fix $\phi\uplus\{\errn,\tern,c\}$.
Moreover, $\CC_O=\conf{\gamma_O, \{ c\mapsto \tern \}, \phi\uplus \{c,\errn,\tern\},h}$,
where $\dom{\gamma_O}=\phi\uplus\{c\}$, 
$\nu(\img{\gamma})=\emptyset$ and $\nu(\img{h}) =\emptyset$.
\end{theorem}}

\cutout{
 \begin{lemma}[Definability]\label{hoscdef}
Suppose $\phi\uplus\{\errn\}\subseteq\FNames$
and $t$ is an even-length $(\circ\uplus \{\errn\},\phi\uplus\{c\})$-trace starting with an O-action.
There exists a passive configuration $\CC$ such that the even-length traces
 $\TrR{\HOSC}{\CC}$ are exactly the even-length prefixes of $t$
(along with all renamings that preserve types and $\phi\uplus \{c\}\uplus \circ\uplus\{\errn\}$, cf. Remark~\ref{rem:invar}).
Moreover, $\CC=\conf{\gamma_\circ\cdot [c\mapsto K_\circ], \{ c\mapsto \tern_\Unit \}, \phi\uplus \{c\}\uplus \circ \uplus \{\errn\},h_\circ}$,
where 
$h,K,\gamma$ are built from $\HOSC$ syntax.
\end{lemma}}

\bigskip

Lemma~\ref{hoscdef} follows from the lemma given below for $i=0$. 
Consider $h'=h_0$, $K'=\gamma_0(c)$, $\gamma'=\gamma_0\setminus c$.
We have $\nu(\img{\gamma_0},\img{h_0})\subseteq  \circ\uplus \{\errn\}$.
As names $\circ_\sigma$ can only
occur inside terms of the form $\cont(K',\circ_\sigma)$,
we can conclude that $(h',K',\gamma')= (h_\circ,K_\circ,\gamma_\circ)$, 
where $h,K,\gamma$ are from $\HOSC$.

\begin{lemma}\label{hoscaux}
Suppose $\phi\uplus\{\errn\}\subseteq\FNames$, $c\in \CNames$ and
$t=o_1 p_1\cdots o_n p_n$ is a 
$(\circ\uplus \{\errn\},\phi\uplus\{c\})$-trace starting with an O-action.
Given $0\le i\le n$, let $t_i = o_{i+1}p_{i+1} \cdots o_n p_n$.
There exist  passive configurations $\CC_i$ such that  $\Treven{\CC_i}$ consists 
of even-length prefixes of $o_{i+1}p_{i+1} \cdots o_n p_n$
(along with their renamings via permutations on $\Names$ that fix $\phi_i$).
Moreover, $\CC_i=\conf{\gamma_i, \xi_i, \phi_i, h_i}$ ($0\le i\le n$), where
\begin{itemize}
\item $\dom{\gamma_i}$ consists of $\phi\cup\{c\}$ and all names introduced by P in $o_1 p_1\cdots o_i p_i$;
\item $\nu(\img{\gamma_i})=\emptyset$;
\item $\dom{\xi_i}$ consists of $c$ and all continuation names introduced by P in $o_1 p_1\cdots o_i p_i$;
\item for all $d\in\dom{\xi_i}$, $\xi_i(d)=\topp{o_1\cdots o_j}$ if $d$ was introduced in $p_j$ (we regard $c$ as being introduced
in $p_0$ and define $\topp{o_1\cdots o_0}=\circ_{\tau'}$);
\item ${\phi_i}$ consists of $\circ\uplus \{\errn\}\uplus \phi\uplus\{c\}$ and all names introduced in $o_1 p_1\cdots o_i p_i$;
\item $\dom{h_i}=\dom{h_0}$;
\item $\nu(\img{h_i})$ may only contain elements of $\circ\uplus\{\errn\}$ and names introduced by O in $o_1 p_1\cdots o_i p_i$.
\end{itemize}
\end{lemma}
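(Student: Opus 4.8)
The plan is to prove Lemma~\ref{hoscaux} by downward induction on $i$, starting from $i=n$ and decreasing to $i=0$, constructing the passive configuration $\CC_i$ at each stage so that its even-length traces are exactly the even-length prefixes of $t_i = o_{i+1}p_{i+1}\cdots o_n p_n$ (up to the permitted renamings). The base case $i=n$ gives $t_n=\emptytrace$, so we simply take $\CC_n=\conf{\gamma_n,\xi_n,\phi_n,h_n}$ with an essentially inert $\gamma_n$: every name in $\dom{\gamma_n}\cap\FNames$ is mapped to a function that diverges ($\lambda x.\Omega$), every continuation name in $\dom{\xi_n}$ is handled per the stated invariant, and $h_n$ is the ``recording heap'' whose cells will be populated as we move to smaller $i$. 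All the stated structural invariants on $\dom{\gamma_i},\xi_i,\phi_i,h_i$ are immediate for $i=n$ and are designed precisely so that they are preserved by the inductive step.

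For the inductive step, assume $\CC_{i}$ has been built and construct $\CC_{i-1}$ so that, after reading the O-action $o_i$ and then performing the P-action $p_i$, the resulting configuration is exactly $\CC_i$. Here the key device — the one flagged in the proof sketch of Lemma~\ref{hoscdef} — is a single global integer reference that counts how many O/P rounds have been simulated, so that the P-code branches on its value to decide which of the $n$ possible P-actions to perform next. Concretely, $\gamma_{i-1}$ differs from $\gamma_i$ only on the name used by $o_i$ to communicate (the $f$ of an OQ, or the context recorded under the $c$ of an OA): that name is mapped to a piece of $\HOSC$ code which (a) uses $\ass{x}{A}$ to check that the abstract value supplied by O matches $o_i$'s abstract value, failing into $\Omega$ otherwise, (b) uses $A[\pi x/f]$ together with $\callcct{}$ to store into the recording heap $h$ all the function names and continuation names that O introduces in $o_i$, (c) increments the step counter, and (d) performs the P-action $p_i$: if $p_i$ is a PQ $\questP{g}{B}{c'}$ it calls the stored value for $g$ on the abstract value $B$ (reconstructed from previously recorded names), and if $p_i$ is a PA $\ansP{c'}{B}$ it does a $\throwtot{}{B}{\;-\;}$ to the stored continuation for $c'$. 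The invariants on $\xi_i$ (that $\xi_i(d)=\topp{o_1\cdots o_j}$ for $d$ introduced in $p_j$) are exactly what make step (d) type-correct and make the continuation bookkeeping line up with the $\HOSC[\HOSC]$ LTS rules (PA), (PQ), (OA), (OQ) in Figure~\ref{fig:lts-hosc}.

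Two things then need to be checked. First, soundness of the construction: any even-length trace of $\CC_{i-1}$ is an even-length prefix of $t_{i-1}$. This follows because, from a passive configuration, O may fire any action on an available name, but the $\ass{x}{A}$ guard forces divergence on any O-action whose name or abstract-value shape deviates from $o_i$, so the only O-move that can be continued is (a renaming of) $o_i$; and after $o_i$ the term is deterministic up to the $\tau$-reduction that leads it to the unique P-action $p_i$, after which it reaches $\CC_i$ and we invoke the induction hypothesis. Second, completeness: every even-length prefix of $t_{i-1}$ is realised, which is where name-invariance (Remark~\ref{rem:invar}, Lemma~\ref{lem:inv}) is used to absorb O's freedom in choosing fresh names, and where one checks that the guards never block the intended continuation. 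The global bookkeeping of which names are ``available'' to O at each point — i.e. $\phi_{i-1}$ consists of $\circ\uplus\{\errn\}\uplus\phi\uplus\{c\}$ plus all names introduced in $o_1p_1\cdots o_{i-1}p_{i-1}$ — is maintained automatically by the LTS transitions, so the stated invariant on $\phi_i$ is preserved.

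The main obstacle, I expect, is the bookkeeping for continuations: ensuring that the stored term $\contt{\sigma}{(K,c)}$ obtained via $\callcct{}$ at the point O introduces a continuation name has precisely the right ``top continuation'' attached, so that a later P-answer $\ansP{c'}{B}$ — realised by a $\throw$ — produces the O-action/P-action alternation demanded by $t$, and so that the resulting configuration's $\xi$ component matches the claimed formula $\xi_i(d)=\topp{o_1\cdots o_j}$. Getting this exactly right requires carefully tracking, through each $\callcct{}/\throw$ round, how the $\HOSC[\HOSC]$ LTS updates $\xi$ against how the ``real'' extended operational semantics $\ered$ threads the toplevel continuation name; the analogues of Lemma~\ref{lm:decomp-fred} and Theorem~\ref{thm:ered-mplies-fred} for the extended syntax are the tools I would lean on here. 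Everything else — the arithmetic guard, the pattern-matching on abstract values, the diverging default branches — is routine once this continuation discipline is pinned down.
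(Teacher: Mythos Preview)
Your overall architecture—downward induction on $i$, recording O-introduced names in higher-order references, realising P-actions via $\callcc$/$\throw$—matches the paper and is correct in spirit. But the mechanism you describe has a genuine gap, and it is worth contrasting with how the paper actually organises the construction.

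You write that ``$\gamma_{i-1}$ differs from $\gamma_i$ only on the name used by $o_i$ to communicate'', with that name mapped to code handling $o_i$ and emitting $p_i$. The trouble is that the $\HOSC[\HOSC]$ LTS never \emph{modifies} an existing $\gamma$-entry: O-actions read $\gamma$, P-actions extend it with fresh names, but $\gamma(f)$ for an already-present $f$ is immutable across transitions. So if $\gamma_{i-1}(f)\neq\gamma_i(f)$ for an $f$ that O may reuse later, then firing $o_i p_i$ from $\CC_{i-1}$ lands you in a configuration whose $\gamma$ still carries $\gamma_{i-1}(f)$; you do not reach $\CC_i$ and the induction breaks.

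The paper sidesteps this by making $\gamma_i$ \emph{independent of $i$} on every name: it fixes $\gamma_i(f_P^j)=\lambda x.(!\fpr{j})\,x$ and $\gamma_i(c_P^j)=(\lambda x.(!\cpr{j})\,x)\bullet$ once and for all, and puts all step-dependent logic into the heap. The references $\fpr{j},\cpr{j}$ hold the reaction code for the current step, and that code contains an explicit $\mathit{setheap}(i{+}1)$ that overwrites these references with the next step's reactions before producing $p_{i+1}$. There is \emph{no} step counter in the HOSC argument; the counter device you invoke is in fact the mechanism of the \emph{GOSC} definability proof (the second lemma with the same label), where higher-order storage is unavailable and one must branch on an integer instead.

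A counter-based variant could be made to work for HOSC—branch on the counter, keep $\gamma$ static, store O-names in higher-order refs—but then $\gamma_{i-1}$ and $\gamma_i$ must agree on every shared name, contrary to what you wrote. As stated, your proposal conflates the two organisations. The continuation bookkeeping you flag as the main obstacle is indeed the delicate part; the paper handles it via the $\xi_i(d)=\topp{o_1\cdots o_j}$ invariant together with setting the current continuation of the intermediate active configuration to $\topp{o_1\cdots o_{i+1}}$ before emitting $p_{i+1}$.
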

\begin{proof}
The main idea is to use references in order to record all continuations and functions introduced by O, 
so that they can be accessed in terms at the time when they need to be used by P.
Other references will also be used to inject the right pieces of code into the LTS.

Below we explain how the content of $\CC_i$ is meant to evolve and what invariants will be maintained by the construction
for each kind of name in $t$. 


\begin{description}
\item[$\FNames$ from P] Suppose $\nfp$ is the number of function names in $\phi$ and those introduced by P in $t$.
We shall write $f_P^j$ ($0\le j <n_{FP}$) to refer to the $j$th such name, on the understanding names from $\phi$ are
introduced first and this is followed by names in $t$ in order of appearance (from left to right).

For each $f_P^j: \sigma_j\rarr\tau_j$, we will have a dedicated reference $\fpr{j}:\reftype{(\sigma_j\rarr\tau_j)}$ in all heaps.
The content of $h_i(\fpr{j})$ will be changing at each step of the construction and it will be used to arrange for
suitable behaviour following O-actions of the form $\questO{f_P^j}{A}{c}$. For example,
if the action is not meant to generate a response at a stage, we can use $\fpr{j} := \lambda x. (!\fpr{j}) x$ to cause divergence by creating a cycle in
the heap.

If $f_P^j$ was introduced in $p_i$ (we take $i=0$ for $f_P^j\in\phi$), then
$f_P^j$ will be present in all $\phi_{i'},\gamma_{i'}$ for $i'\ge i$.
We shall maintain the invariant $\gamma_{i'}(f_P^j)=\lambda x. (!\fpr{j}) x$ for all $i'\ge i$.

Note that this is consistent with $\nu(\img{\gamma_i})=\emptyset$.

\item[$\CNames$ from P] Suppose $\ncp$ is the number of continuation names  introduced by P in $t$ plus $1$, to take $c$ into account.
Similarly to the previous case, we write $c_P^j$ ($0\le j <c_{FP}$) to refer to the $j$th such name, on the understanding 
that $c_P^0=c$ and other names are enumerated in the same order as they appear in $t$ (from left to right).

For each $c_P^j:\sigma_j$, we will have a dedicated reference $\cpr{j} : \reftype{(\sigma_j\rarr\tau_j)}$, if $c_P^j$ was introduced in $p_{j'}$
and $\topo{o_1\cdots o_{j'}}:\tau_j$, in all heaps.

Its content will be changing at each step of the construction, in order to provide suitable 
reactions to O-actions of the form $\ansO{c_P^j}{A}$.

If  $c_P^j$ was introduced in $p_i$ (we take $i=0$ for $c_P^j=c_P$) then
$c_P^j$ will be present in all $\phi_{i'},\gamma_{i'}$ for $i'\ge i$.
We shall maintain the invariant $\gamma_{i'}(c_P^j)=(\lambda x. (!\cpr{j}) x)\bullet$ and $\xi_{i'}(c_P^j)=\topo{o_1\cdots o_{j'}}$, if $c_P^j$ was introduced in $p_{j'}$.

Note that this is consistent with $\nu(\img{\gamma_i})=\emptyset$.

\item[$\FNames$ from O] We use similar notation here and suppose $\nfo$ is the number of function names introduced by O.
As in previous cases, we use $f_O^j$ ($0\le j <\nfo$) to refer to such names.

For each $f_O^j:\sigma_j\rarr\tau_j$, we will have a corresponding reference $\foor{j}:\reftype{(\sigma_j\rarr\tau_j)}$ in all heaps,
which will be used to store the name as soon as it is played, i.e.
if $f_O^j$ is introduced in $o_i$ (for $\errn$ we take $i=0$), then $h_{i'}(\foor{j})=f_O^j$ for all $i'\ge i$.
Earlier we will use a divergent value, i.e.  $h_{i'}(\foor{j})=\lambda x. (!\foor{j}) x$ for $i'<i$. 

$f_O^j$ will be part of $\phi_{i'}$ for all $i'\ge i$.

Note that this is consistent with: $\nu(\img{h_i})$ may only contain elements of $\circ\uplus\{\errn\}$ and names introduced by O in $o_1 p_1\cdots o_i p_i$.

\item[$\CNames$ from O] Suppose $\nco$ is the number of continuation names introduced by O in $t$.
As before, we use $c_O^j$ ($0\le j <\nfo$) to refer to such names.

For each $c_O^j:\sigma_j$, we will have a corresponding reference  $\cor{j}:\reftype{(\conttype{\sigma_j})}$,
which will be used to store the name as soon as it is played, i.e.
if $c_O^j$ is introduced in $o_i$, then $h_{i'}(\cor{j})=\cont{(\bullet,c_O^j)}$ for all $i'\ge i$.
Earlier we will use a divergent value, i.e.  $h_{i'}(\cor{j})=\cont{((\lambda x.\Omega)\bullet,\circ_{\tau'})}$ for $i'<i$, where 
$\Omega$ is a divergent term.

$c_O^j$ will be part of $\phi_{i'}$ for all $i'\ge i$.

Note that this is consistent with: $\nu(\img{h_i})$ may only contain elements of $\circ\uplus\{\errn\}$ and names introduced by O in $o_1 p_1\cdots o_i p_i$.
\end{description}
Overall, for each $0\le i\le n$, we shall have
\[
\dom{h_i} = \{ \fpr{j} \,|\,  0\le j<\nfp\} \cup
 \{ \cpr{j} \,|\,  0\le j<\ncp\} \cup
 \{ \foor{j} \,|\,  0\le j<\nfo\} \cup
\{ \cor{j} \,|\,  0\le j<\nco\}.
\]

The above description specifies $\phi_i,\gamma_i,\xi_i$,$\dom{h_i}$ and 
 $h_i(\foor{j})$ ($0\le j < \nfo$), $h_i(\cor{j})$ ($0\le j<\nco$), for any $0\le i\le n$.
Hence, in the forthcoming argument we will focus on defining $h_i(\fpr{j})$ ($0\le j < \nfp$) and $h_i(\cpr{j})$ ($0\le i<\ncp$).
Because the values written to these references will only contain elements from $\circ\uplus\{\errn\}$, it will follow that
 $\nu(\img{h_i})$ may only contain elements of $\circ\uplus\{\errn\}$ and names introduced by O in $o_1 p_1\cdots o_i p_i$.

\bigskip

We proceed by reverse induction, starting from $i=n$.

\paragraph{$\mathbf{i=n}$}

To complete the definition of $\CC_n$, it suffices to specify  $h_n(\fpr{j})$ ($0\le j<\nfp$) and $h_n(\cpr{j})$ ($0\le j< \ncp$).
We set 
$h_n(\fpr{j}) = (\lambda x. (!\fpr{j})x)$ and
$h_n(\cpr{j}) = (\lambda x. (!\cpr{j})x)$, i.e. deferencing will cause divergence.
Consequently, because  $\gamma_n(f_P^j)=\lambda x.(!\fpr{j})x$ and   $\gamma_n(c_P^j)=\lambda x.(!\cpr{j})x$, 
any O action from $\CC_n$ will trigger divergence.
Thus, the only  even-length trace that can be generated is the empty one, and we have  $\Treven{\CC_n}=\{\epsilon\}$, as required.

\paragraph{$\mathbf{0 \le i < n}$}
Let $0 \le i < n$. Assume validity of the Lemma for $i+1$ and suppose $\CC_{i+1}=\conf{\gamma_{i+1},\xi_{i+1},\phi_{i+1},h_{i+1}}$.
By case analysis on $p_{i+1}$,  we first construct an active configuration $E_i=\conf{M',c',\gamma_i',\xi_i',\phi_i', h_{i+1}}$ 
such that $E_i\ired{p_{i+1}} \CC_{i+1}$.

Given an abstract value $A$, let $V_A=A[(\lambda x. (!\fpr{j})x)/f_P^{j}]$, i.e. the function names $f_P^j$ are replaced
with function values $(\lambda x. (!\fpr{j})x)$. Below we write $\phi_{i+1}\setminus X$, $\gamma_{i+1}\setminus X$
and $\xi_{i+1}\setminus X$ to stand for the removal of names in $X$ from the domain of the respective function, 
while preserving values for other elements. 
The table below shows the components of $E_i$ in each case.
\[\begin{array}{l|l|l|l|l|l}
p_{i+1} &M' &c' & \gamma_i' & \xi_i'  &\phi_i'\\
\hline\\[-3mm]
\ansP{c_O^{j'}}{A} &  V_A & c_O^{j'} &\gamma_{i+1}\setminus A & \xi_{i+1} &  \phi_{i+1}\setminus A\\[2mm]
\questP{f_O^{j'}}{A}{c_P^{j''}} & (\lambda x. (!\cpr{j''}) x) [f_O^{j'} V_A]  & \topp{o_1\cdots o_{i+1}}& \gamma_{i+1}\setminus A,c_P^{j''} & \xi_{i+1}\setminus c_P^{j''} & \phi_{i+1}\setminus A,c_P^{j''} \\
\end{array}\]
Note that, in each case, $E_i\ired{p_{i+1}} \CC_{i+1}$. In particular, our definition of $V_A$ (based on $\lambda x. (!\fpr{j})x$)
and the occurrence of  $\lambda x. (!\cpr{j'}) x$ in the second case guarantee that, after the step,
$\gamma_i'$ extends to $\gamma_{i+1}$ in accordance with our description of $\gamma_{i+1}$ at the beginning of the proof. 
Similarly, setting $c'$ to $ \topp{o_1\cdots o_{i+1}}$ in the second case means that $\xi_i'$ will evolve into $\xi_{i+1}$.

As a next step we define another active configuration $D_i  = \tuples{M'',\topp{o_1\cdots o_{i+1}},\gamma_i',\xi_i',\phi_i', h_{i+1}}$,
where $M''$ is specified by the table below, by case analysis on $p_{i+1}$.

Note that $D_i\ired{\tau} E_i$. 
\[\begin{array}{l|ll}
p_{i+1} &M''\\
\hline
\ansP{c_O^{j'}}{A} &  \throwto{V_A}{\cont(\bullet,\circ_\sigma)} &c_O^{j'}=\circ_\sigma\\
\ansP{c_O^{j'}}{A} &  \throwto{V_A}{! \cor{j'}} & c_O^{j'}\not\in\circ\\
\questP{f_O^{j'}}{A}{c_P^{j''}} & (\lambda x.(!\cpr{j''}) x)( \errn V_A) & f_O^{j'}=\errn\\
\questP{f_O^{j'}}{A}{c_P^{j''}} & (\lambda x.(!\cpr{j''}) x)( (!\foor{j'}) V_A)  &  f_O^{j'}\neq\errn
\end{array}\]

Finally, we are ready to  define $\CC_i=\tuples{\gamma_i, \xi_i,\phi_i,h_i}$ by case analysis on $o_{i+1}$.
Recall that $\phi_i$, $\gamma_i$, $\xi_i$, $\dom{h_i}$,  $h_i(\foor{j})$ ($0\le j<\nfo$), $h_i(\cor{j})$ ($0\le j < \nco$)
are covered by the invariants discussed at the beginning of the proof. Thus, it suffices to specify $h_i (\fpr{j})$ and $h_i(\cpr{j})$.

\begin{itemize}
\item Suppose $o_{i+1}= \ansO{c_P^j}{A}$. 
Since $o_{i+1}$ is the only O-move that should be responded to by P:
\begin{itemize}
\item we let $h_i(\fpr{j'})=\lambda x. (!\fpr{j'}) x$ for any $0\le j'<\nfp$, in order to create divergence after any $\questO{f_P^{j'}}{A}{c_O^{j''}}$;
\item we let $h_i(\cpr{j'}) = \lambda x. (!\cpr{j'}) x$ for any $0\le j'<\ncp$ such that $j'\neq j$, in order to create divergence
after $\ansO{c_P^{j'}}{A}$ with $j'\neq j$.
\end{itemize}
To allow a suitable response after $\ansO{c_P^j}{A}$, we set
\[
h_i(\cpr{j}) = \lambda x. \ass{x}{A};\, \mathit{savefun}(A);\,\mathit{setheap}(i+1); \, M''
\]
where the special code fragments are explained below. 
\begin{itemize}
 \item $\mathit{savefun}(A)$ is meant to save all functions from $A$ in the corresponding references. 
 Let $\fnam{A}=\{(\vec{i},w)\,|\, \pi_{\vec{i}}\, A=f_O^w \}$. Then
 $\mathit{savefun}(A)$ is the
sequence of assignments $\foor{w} := \pi_{\vec{i}}\, x$, for all $(\vec{i},w)\in\fnam{A}$.
\item $\mathit{setheap}(i+1)$ is the sequence of assignments 
$\fpr{j'} := h_{i+1}(\fpr{j'})$ ($0\le j'<\nfp$)
and $\cpr{j'}:= h_{i+1}(\cpr{h})$ ($0\le j'<\ncp$).
\end{itemize}
Suppose $c_P^j$ was introduced in $p_{j'}$ then we have $\topp{o_1\cdots o_{j'}} = \topp{o_1\cdots o_{i+1}}$,
i.e. types of the codomains of $!\cpr{j}$ and $!\cpr{j''}$ match, and indeed we can use $M''$ to define $h_i(\cpr{j})$
(note that throw is not causing typing problems).

Then we have $\CC_i\ired{o_{i+1}} C_i$, where $C_i=\tuples{(\lambda x. !\cpr{j} x)[A], \topp{o_1\cdots o_{j'}},\gamma_i', \xi_i', \phi_i',h_i)}$
and $C_i\ired{\tau^\ast} D_i = \tuples{M'',\topp{o_1\cdots o_{i+1}},\gamma_i',\xi_i', ,\phi_i',h_{i+1}}$.
Recall that we have already established $D_i \ired{\tau} E_i \ired{p_i} \CC_{i+1}$, so we are done.

\item Suppose $o_{i+1}= \questO{f_P^j}{A}{c_O^{j'}}$. 
Then we let $h_i(\cpr{j''})=\lambda x.(!\cpr{j''}) x$ ($0\le j''<\ncp$)  to create divergence after any $\ansO{c_P^{j''}}{A}$,
and $h_i(\fpr{j''}) = \lambda x. (!\fpr{j''}) x$ for any $0\le j'' <\nfp$ such that $j''\neq j$, to 
create divergence after any $\questO{f_P^{j''}}{A}{c_O^{j'''}}$ with $j''\neq j$.
Then, to arrange for the right reaction after $o_{i+1}$, we set
\[
h_i(\fpr{j}) = \lambda x. \ass{x}{A};\, \mathit{savefun}(A);\, \callcc(y. \ccr{j'}:= y;\, \mathit{setheap}(i+1);\, M'')
\]
where the special code fragments are specified above.
Note that, similarly, we have
$\CC_i\ired{o_{i+1}} C_i$, where $C_i=\tuples{(\lambda x. !\fpr{j} x)[A], c_O^{j'},\gamma_i, \xi_i, \phi_i,h_i)}$,
$C_i\ired{\tau^{\ast}} D_i = \tuples{M'',\topp{o_1\cdots o_{i+1}},\gamma_i', \xi_i', \phi_i',h_{i+1}}$ and $D_i\ired{\tau} E_i\ired{p_{i+1}} \CC_{i+1}$,
because in this case $c_O^{j'}=\topp{o_1\cdots o_{i+1}}$.
\end{itemize}
The invariance property follows from Remark~\ref{rem:invar}.
\cutout{
\paragraph{i=1}
We define $C_i'$ such that $C_i'\ired{p_i} \CC_i$, as above.
To complete the argument, we need to specify $M$ such that $\Gamma\vdash M:\tau$. 
$M$ is built as in the previous case except that we now need to create all the references that are used in the proof.
Thus, we set
\[
M\equiv \mathtt{let}\,\,\mathit{refs}\,\,\mathtt{in}\,\,{\mathit{assert}(A);\, \mathit{savefun}(A);\, \callcc(y.\, \ccr{0}:= y;\,\mathit{setheap}(i+1); M'')},
\]
where $\pi_k x$ in $\mathit{assert}(A)$ is now replaced with $x_k$ (from $\Gamma$),
and $\mathit{refs}$ stands for a sequence of reference creation statements of the shape
 $\oor{h} = \nuref{V_{\oor{h}}}$ or $\pr{h} = \nuref{V_{\pr{h}}}$ or $\kkr{j}=\nuref{V_{\kkr{j}}}$ or 
 $\ccr{j}=\nuref{\cont{V_{\ccr{j}}[]}}$, one for each reference $\oor{h}, \pr{h}, \oor{j}, \ccr{j}$,
 where the initialising terms $V_{\oor{h}}, V_{\pr{h}}, V_{\kkr{j}}, V_{\ccr{j}}$ are all of the shape $\lambda x^\sigma.\Omega_{\sigma'}$ for suitable $\sigma,\sigma'$ (to match the type).
 Then we have $\CC_0=\conf{\Gamma\vdash M:\tau}\ired{o_1} C_1=\tuples{M,c,\emptyset,\Delta_0', h_\mathit{init}}$
 and $C_1\red C_1'$.}
\end{proof}

\subsection{Proof of Theorem~\ref{hosccomplete}}

\begin{proof}
Suppose $\seq{\Gamma}{M_1 \ciuapperr^\HOSC M_2}$.
Let $\rho$ be a $\Gamma$-configuration, $A_i=\rho(x_i)$,
 $c:\sigma$ and $t\in \TrR{\HOSC}{\cconf{M_1}{\rho_{\vec{A_i}},c}}$.
 Then $t$ is a $(\nu(\rho)\uplus\{c\}, \emptyset)$-trace.
Let $t_1=t \substF{\errn'/\errn,\tern'/\tern}$, where $\errn',\tern'$ are fresh names of the same type as $\errn,\tern$ respectively
(this is done to ensure that $\errn,\tern$ do not occur in $t_1$).
By Lemma~\ref{lem:inv},  because $\perm{t_1}{t}{\nu(\rho)\uplus\{c\}}$, we also have $t_1\in \TrR{\HOSC}{\cconf{M_1}{\rho_{\vec{A_i}},c}}$.
Let $c':\Unit$ be fresh.
Then $t_2= t_1^{\bot}\,\questP{\errn}{()}{c'}$ is an $(\{\errn,\tern \},\nu(\rho)\uplus\{c\})$-trace.
By Lemma~\ref{hoscdef}, there exists a passive configuration 
$\CC_O=\conf{\gamma_O, \{ c\mapsto \tern \}, \nu(\rho)\uplus \{c,\errn,\tern\},h}$
such that $\TrReven{\HOSC}{\CC_O}$ consists of all (even-length prefixes of) traces $t'$ such that $\perm{t'}{t_2}{\nu(\rho)\uplus\{c,\errn,\tern\}}$.
Observe that $\CC_O = \cconf{h, K,\gamma}{\vec{\gamma_i}, c}$, where 
$K=(\gamma_O(c))\subst{\errn}{\err}$, $\gamma(x_i) = (A_i\substF{\gamma_O})\subst{\errn}{\err}$,
 and $\gamma_i=\rest{\gamma_O}{\nu(A_i)}$.
Hence, $t_1\in \TrR{\HOSC}{\cconf{M_1}{\rho_{\vec{A_i}},c}}$ and $t_1^{\bot}\,\questP{\errn}{()}{c'}\in  \cconf{h, K,\gamma}{\vec{\gamma_i}, c}$.
By Lemma~\ref{lem:cor} (right-to-left),  $(K[M_1\substF{\gamma}],h)\opredtererr$.
Because $\seq{\Gamma}{M_1 \ciuapperr^\HOSC M_2}$, $(K[M_2\substF{\gamma}],h)\opredtererr$ follows.
By Lemma~\ref{lem:cor} (left-to-right), there exist $t'',c''$ such that 
$t''\in \TrR{\HOSC}{\cconf{M_2}{\rho_{\vec{A_i}},c}}$ and $(t'')^{\bot}\,\questP{\errn}{()}{c''}\in  \cconf{h, K,\gamma}{\vec{\gamma_i}, c}$.
By the definition of $\CC_O$, we must have ${(t'')^{\bot}\,\questP{\errn}{()}{c''}}$ $\sim_{{\nu(\rho)\uplus\{c,\errn,\tern\}}}$ ${ t_1^{\bot}\,\questP{\errn}{()}{c'}}$, 
so $\perm{t''}{t_1}{\nu(\rho)\uplus\{c,\errn,\tern\}}$.
Because $t''\in \TrR{\HOSC}{\cconf{M_2}{\rho_{\vec{A_i}},c}}$,
we have $t_1\in \TrR{\HOSC}{\cconf{M_2}{\rho_{\vec{A_i}},c}}$ by Lemma~\ref{lem:inv}.
Since $\perm{t_1}{t}{\nu(\rho)\uplus\{c\}}$, it follows that $t\in \TrR{\HOSC}{\cconf{M_2}{\rho_{\vec{A_i}},c}}$, as required.
\end{proof}


\section{Composite Interaction (Proof of Lemma~\ref{lem:cor})}\label{sec:proof}

\cutout{
\amin{BEGIN relevant content}

$\square$ is now $\circ_{\tau'}$

$\circ$ is a set with all $\circ_\tau$ (this is needed for translating cont)

$\rho$ gives initial O-names to $M$. Let $I=\nu(\rho)$.
\[\begin{array}{rclll}
\CC^{\rho,c}_M &=& \conf{M\{\rho\}, c, \emptyset, \emptyset, \nu(\rho)\cup\{c\}, \emptyset} =  \conf{M,c,\gamma_P,\xi_P,\phi_P,\emptyset}\\
\cconf{h, K,\gamma}{\vec{\gamma_i}, c} &=& \conf{
\biguplus_{i=1}^k \gamma_i \uplus \{ c \mapsto K_\circ\},
\{ c\mapsto \tern_{\tau'}\},
\biguplus_{i=1}^k \nu(A_i) \uplus \{c\} \uplus \circ \uplus \{\errn\}, h_\circ
} \\
&=& \conf{\gamma_O, \xi_O, \phi_O, h_O}
\end{array}\]
The components satisfy:
\begin{itemize}
\item $\dom{\gamma_P}=\emptyset$, $\dom{\gamma_O}=I\cup\{c\}$
\item $\dom{\xi_P}=\emptyset$, $\dom{\xi_O}=\{c\}$, $\xi_O(c)=\circ_{\tau'}$
\item $\phi_P = I\cup \{c\}$, $\phi_O = I\cup\{c\}\cup\circ\cup\{\errn\}$
\item $h_P=\emptyset$
\end{itemize}

The Lemma statement, as it appears in the paper.

\amin{END: relevant content}

}

\cutout{
\begin{definition}
 An environment $\gamma,\xi$ is said to be \emph{well-typed} when:
 \begin{itemize}
  \item for all $f \in \dom{\gamma}$ with $f:\tau$, we have $\vdash \gamma(f) : \tau$;
  \item for all $c \in \dom{\gamma}$ with $c:\tau$ and $\xi(c):\sigma$,
   writing $\gamma(c)$ as $K$, we have $x:\tau \vdash K[x]:\sigma$.
 \end{itemize}
\end{definition}
\amin{Do we want to say $\dom{\gamma}\cap \CNames=\dom{\xi}$?}

\begin{definition}
 A passive configuration $\CC = \conf{\gamma,\xi,\phi,h}$ is said to be \emph{valid} when:
 \begin{itemize}
  \item $\dom{\gamma} \subseteq \phi$;
  \item for all $c \in \dom{\gamma}, \xi(c) \in \phi\backslash\dom{\gamma}$;
  \item $\gamma,\xi$ is well-typed.
 \end{itemize}
  An active configuration $\CC = \conf{M,c,\gamma,\xi,\phi,h}$ is said to be \emph{valid} when 
  moreover $c \in \phi\backslash\dom{\gamma}$, with $c:\sigma$ and $\vdash M: \sigma$.
\end{definition}}


\begin{definition}
 A \emph{composite configuration} $\CCc$ is a tuple 
 $\conf{M,c,\gamma_P,\gamma_O,\xi,\phi,h_P,h_O}$ with
 $M$ a term, $c$ a continuation name, $\gamma_P,\gamma_O$ two environments, 
 $\phi$ a set of names and $h_P,h_O$ two heaps.
\end{definition}

\begin{definition}
 Taking a continuation function $\xi$, we define a relation $\orderCont{\xi}$ 
 between the continuation names as the graph of $\xi$, i.e. 
 $c \orderCont{\xi} c'$ when $\xi(c) = c'$.
\end{definition}

We write $\finalName$ for the final continuation name, used by Opponent to answer the resulting 
value of the whole interaction.

\begin{definition}
 A \emph{valid composite configuration} $\CCc$ is a tuple 
 $\conf{M,c,\gamma_P,\gamma_O,\xi,\phi,h_P,h_O}$ with:
 \begin{itemize}
  \item $\dom{\gamma_P} \cap \dom{\gamma_O} = \varnothing$ and $\finalName \notin \dom{\gamma_P}
   \cup \dom{\gamma_O}$;
  \item $\dom{\gamma_P} \cup \dom{\gamma_O}  \cup \{\finalName,\errn\} = \phi$;
  \item $\dom{\xi} = (\dom{\gamma_O} \cup \dom{\gamma(P)}) \cap \CNames$;
  \item for all $c \in \dom{\xi}$, if $c \in \dom{\gamma_X}$ then $\xi(c) \in 
  \dom{\gamma_{X^{\bot}}}$, for $X \in \{O,P\}$;
  \item the transitive closure of $\orderCont{\xi}$ is a strict partial order 
   which admit a unique maximal element equal to $\finalName$;
  \item $\gamma_P\cdot\gamma_O$ is well-typed;
  \item $c \in \phi$ with $c:\sigma$ $\vdash M: \sigma$;
  \item $\dom{h_P} \cap \dom{h_O} = \varnothing$.
 \end{itemize}
\end{definition}

The composite LTS, defined on such composite configurations, is given in Figure~\ref{fig:comp-lts-hosc}. 
Up to choice of name, it is deterministic.

\begin{figure}[t]
\[\begin{array}{l|lll}
   (P\tau) & \conf{M,c,\gamma_P,\gamma_O,\xi,\phi,h_P,h_O} & \ired{\tau} & 
    \conf{N,c',\gamma_P,\gamma_O,\xi,\phi,h'_P,h_O} \\
    & \multicolumn{3}{l}{\text{ when } c \in \dom{\gamma_O} \text{ and } 
     (M,c,h_P) \ered (N,c',h'_P)}\\
   (PA) & \conf{V,c,\gamma_P,\gamma_O,\xi,\phi,h_P,h_O} & \ired{\ansP{c}{A}} & 
    \conf{K[A],\xi(c),\gamma_P \cdot \gamma',\gamma_O,\xi,\phi\uplus\dom{\gamma'},h_P,h_O} \\
    & \multicolumn{3}{l}{\text{ when } c:\sigma, 
     \gamma_O(c) = K, \text{ and } (A,\gamma') \in \AVal{V}{\sigma}}\\   
   (PQ) & \conf{K[fV],c,\gamma_P,\gamma_O,\xi,\phi,h_P,h_O} & \ired{\questP{f}{A}{c'}} & 
    \langle V' A,c',\gamma_P \cdot\gamma'\cdot [c' \mapsto K],\gamma_O,\xi \cdot [c' \mapsto c],
    \\ & & & \qquad \phi\uplus\dom{\gamma'}\uplus \{c'\},h_P,h_O\rangle \\
    & \multicolumn{3}{l}{\text{ when } f:\sigma \rarr \sigma', c':\sigma',
      \gamma_O(f) = V' \text{ and }(A,\gamma') \in \AVal{V}{\sigma}}\\   
   (O\tau) & \conf{M,c,\gamma_P,\gamma_O,\xi,\phi,h_P,h_O} & \ired{\tau} & 
    \conf{N,c',\gamma_P,\gamma_O,\xi,\phi,h_P,h'_O} \\
    & \multicolumn{3}{l}{\text{ when } c \in \dom{\gamma_P} \text{ and } 
     (M,c,h_O) \ered (N,c',h'_O)}\\
   (OA) & \conf{V,c,\gamma_P,\gamma_O,\xi,\phi,h_P,h_O} & \ired{\ansO{c}{A}} & 
    \conf{K[A],\xi(c),\gamma_P,\gamma_O \cdot \gamma',\xi,\phi\uplus\dom{\gamma'},h_P,h_O} \\
    & \multicolumn{3}{l}{\text{ when } c:\sigma, 
    \gamma_P(c) = K, \text{ and } (A,\gamma') \in \AVal{V}{\sigma}}\\   
   (OQ) & \conf{K[fV],c,\gamma_P,\gamma_O,\xi,\phi,h_P,h_O} & \ired{\questO{f}{A}{c'}} & 
    \langle V' A,\gamma_P ,\gamma_O\cdot\gamma'\cdot [c' \mapsto K],\xi \cdot [c' \mapsto c],
     \\ & & & \qquad \phi\uplus\dom{\gamma'}\uplus \{c'\},h_P,h_O\rangle \\
    & \multicolumn{3}{l}{\text{ when } f:\sigma \rarr \sigma', c':\sigma',
      \gamma_P(f) = V' \text{ and }(A,\gamma') \in \AVal{V}{\sigma}}\\
  \end{array}\]
 \caption{Composite LTS for HOSC[HOSC]}
 \label{fig:comp-lts-hosc}
\end{figure}
 
\begin{definition}
  \label{def:hosc:compat-conf}
  Two valid $\HOSC$-configurations $\CC_P,\CC_O$ are said to be \emph{compatible} 
  if one of the two is active and the other one is passive, 
  and, without loss of generality, supposing that $\CC_P$ is the active configuration 
  $\conf{M,c,\gamma_P,\xi_P,\phi_P,h_P}$ and $\CC_O$ the passive configuration 
  $\conf{\gamma_P,\xi_O,\phi_O,h_O}$, then $\phi_O = \phi_P \uplus \{\finalName,\errn\}$ and
  the composite configuration $\conf{M,c,\gamma_P,\gamma_O,\xi_P \cdot \xi_O,\phi_O,h_P, h_O}$,
  written $\mergeConf{\CC_P}{\CC_O}$, is valid.
\end{definition}

\begin{lemma}
 Taking $\CCc$ a valid composite configuration and $\CCc'$ a composite configuration s.t.
 $\CCc \iRed{\act} \CCc'$, then $\CCc'$ is valid.
\end{lemma}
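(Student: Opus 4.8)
The plan is to prove the single-step version and then iterate: it suffices to show that if $\CCc$ is valid and $\CCc\ired{\ell}\CCc'$ for some label $\ell$ (either $\tau$ or one of the four actions of Figure~\ref{fig:comp-lts-hosc}), then $\CCc'$ is valid; the statement for $\CCc\iRed{\act}\CCc'$ then follows by induction on the number of intermediate $\tau$-steps. The argument is a case analysis on which of the six rules was applied, checking the eight clauses in the definition of a valid composite configuration. In every case most clauses are inherited directly because the relevant components are untouched, so only a handful need genuine attention, which I would isolate and treat carefully.

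First I would dispose of the $\tau$-rules $(P\tau)$ and $(O\tau)$, where only $M$, $c$ and one of the two heaps move. The only substantive point here is preservation of ``$c:\sigma$ with $\vdash M:\sigma$'', which is subject reduction for $\ered$ (available from the extended operational semantics), together with preservation of $\dom{h_P}\cap\dom{h_O}=\varnothing$: the only rule of $\ered$ that grows the heap is allocation, and by the ``up to choice of name'' convention the fresh location can be chosen outside $\dom{h_P}\cup\dom{h_O}$. The side conditions $c\in\dom{\gamma_O}$ (resp.\ $c\in\dom{\gamma_P}$) guarantee that exactly one term/heap pair is affected, so the name- and $\xi$-clauses are literally unchanged.

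Next I would treat the answer rules $(PA)$, $(OA)$ and then the question rules $(PQ)$, $(OQ)$ together, since the latter differ only by also introducing a fresh continuation name $c'$ with $\xi$ extended by $[c'\mapsto c]$. In all four cases $\phi$ and exactly one of $\gamma_P,\gamma_O$ are enlarged only by the fresh names $\dom{\gamma'}$ (and, for questions, $c'$), which restores the domain-partition clauses and keeps $\finalName$ out of the environments; well-typedness of $\gamma_P\cdot\gamma_O$ follows since $(A,\gamma')\in\AVal{V}{\sigma}$ supplies $\gamma'$ at the correct types. The new active term type-checks: for answers, $K[A]$ at continuation $\xi(c)$ is handled by unfolding well-typedness of the consumed continuation entry ($x:\sigma\vdash K[x]:\sigma'$ with $\xi(c):\sigma'$) and $A:\sigma$; for questions, $V'A$ at continuation $c'$ uses $\gamma_O(f)=V':\sigma\rarr\sigma'$ (resp.\ $\gamma_P(f)=V'$), $A:\sigma$ and $c':\sigma'$. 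For answers $\xi$ is unchanged so the $\orderCont{\xi}$ clause and the compatibility clause ``$\xi(c)\in\dom{\gamma_{X^\bot}}$'' are inherited; for questions the only new pair is $c'\orderCont{\xi}c$, and since $c'$ is fresh it gains exactly one outgoing edge to the pre-existing node $c$ and no incoming edge, so no cycle is created, $c'$ is not maximal, and $\finalName$ stays reachable from everything, hence remains the unique maximal element; moreover $c'\in\dom{\gamma_P}$ forces $\xi(c')=c\in\dom{\gamma_O}$ (the side condition $\gamma_O(f)=V'$), and symmetrically.

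I expect the only genuinely delicate steps to be (i) the preservation of the strict-partial-order/unique-maximum condition on $\orderCont{\xi}$ under the question rules, which rests entirely on freshness of $c'$, and (ii) making precise that locations allocated during a $\tau$-step are fresh for the \emph{combined} heap $h_P\cdot h_O$ rather than merely the local one; everything else is routine bookkeeping of fresh names against $\phi$. I would argue these two points in full and leave the remaining clauses to a brief enumeration.
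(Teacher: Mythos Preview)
The paper states this lemma without proof, so there is no authors' argument to compare against; your case analysis on the six composite-LTS rules, checking each validity clause and isolating the two non-routine points (freshness of allocated locations for the combined heap, and preservation of the $\orderCont{\xi}$ order under question rules), is exactly the routine verification one expects.

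One small point deserves tightening. In the question cases you justify ``$c'\in\dom{\gamma_P}$ forces $\xi(c')=c\in\dom{\gamma_O}$'' by citing ``the side condition $\gamma_O(f)=V'$''. That side condition gives $f\in\dom{\gamma_O}$, not that the \emph{current continuation} $c$ lies in $\dom{\gamma_O}$. What you actually need is the convention (explicit for $(P\tau)$ and, via $\gamma_O(c)=K$, for $(PA)$, but not literally written for $(PQ)$) that P-rules fire only in P-active states, i.e.\ when $c\in\dom{\gamma_O}$; symmetrically for $(OQ)$. Under that intended reading your argument goes through. If you take the rules strictly as printed, you would instead need to carry the stronger invariant $\nu(M)\cup\{c\}\subseteq\dom{\gamma_{X^\bot}}$ for the active side $X$, which is indeed maintained but is not among the paper's listed validity clauses. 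Either fix is local and does not affect your overall plan.
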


%
%
%
%

\begin{lemma}
 \label{lm:hosc:iredcomp-to-ired-1step}
 Taking $\CC_P,\CC_O$ two compatible configurations, for all composite configuration 
 $\CCc'$, if $(\mergeConf{\CC_P}{\CC_O}) \iRed{\act} \CC'$
 then there exists two compatible configurations $\CC'_P,\CC'_O$ s.t.:
 \begin{itemize}
  \item $\CCc' = \mergeConf{\CC'_P}{\CC'_O}$;
  \item $\CC_P \iRed{\act} \CC'_P$ and $\CC_O \iRed{\act^{\bot}} \CC'_O$.
 \end{itemize}
\end{lemma}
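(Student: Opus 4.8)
The plan is to proceed by a careful case analysis on the action $\act$ and the single transition rule of the composite LTS (Figure~\ref{fig:comp-lts-hosc}) that produces $\CCc'$ from $\mergeConf{\CC_P}{\CC_O}$. First I would fix notation: without loss of generality let $\CC_P = \conf{M,c,\gamma_P,\xi_P,\phi_P,h_P}$ be the active configuration and $\CC_O = \conf{\gamma_O,\xi_O,\phi_O,h_O}$ the passive one, so that $\mergeConf{\CC_P}{\CC_O} = \conf{M,c,\gamma_P,\gamma_O,\xi_P\cdot\xi_O,\phi_O,h_P,h_O}$. Note that in a composite configuration the ``active side'' is determined by whether the current continuation name $c$ lies in $\dom{\gamma_O}$ (P is active) or $\dom{\gamma_P}$ (O is active); since $\CC_P$ is active we are in the former situation, and the only composite rules that can fire are $(P\tau)$, $(PA)$, $(PQ)$.

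For each of the three applicable rules I would read off the resulting composite configuration $\CCc'$, then \emph{split} it back into a candidate pair $\CC_P',\CC_O'$ by separating the $P$-owned data ($\gamma_P$-part, $\xi_P$-part, $h_P$, and the active term/continuation when P stays active) from the $O$-owned data, and check that (i) $\CCc' = \mergeConf{\CC_P'}{\CC_O'}$, (ii) $\CC_P \iRed{\act}\CC_P'$ via the corresponding rule of the $\HOSC[\HOSC]$ LTS of Figure~\ref{fig:lts-hosc}, and (iii) $\CC_O \iRed{\act^\bot} \CC_O'$ via the dual rule. Concretely: for $(P\tau)$ the action is $\tau$, and $\CC_P \iRed{\tau}\CC_P'$ comes directly from $(M,c,h_P)\ered(N,c',h_P')$ while $\CC_O' = \CC_O$ is unchanged and $\tau^\bot = \tau$; here $\iRed{}$ absorbs the internal step. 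For $(PA)$ with $\act = \ansP{c}{A}$: on the P-side this is exactly the $(PA)$ rule of Figure~\ref{fig:lts-hosc} updating $\gamma_P$ to $\gamma_P\cdot\gamma'$, while $\act^\bot = \ansO{c}{A}$ triggers the $(OA)$ rule on $\CC_O$, where $\gamma_O(c)=K$, $\xi_O(c)=c'$, producing the active O-configuration $\conf{K[A],c',\gamma_O,\xi_O,\phi_O\uplus\nu(A),h_O}$ — matching the composite $K[A]$, $\xi(c)$ components. For $(PQ)$ with $\act=\questP{f}{A}{c'}$: the P-side is the $(PQ)$ rule (extending $\gamma_P$ by $\gamma'\cdot[c'\mapsto K]$ and $\xi_P$ by $[c'\mapsto c]$, passivating P), and $\act^\bot=\questO{f}{A}{c'}$ fires $(OQ)$ on $\CC_O$ with $\gamma_O(f)=V'$, giving active $\conf{V'A,c',\dots}$. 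In every case I would verify the freshness/disjointness side conditions: the names introduced by $\act$ are fresh for $\phi_O$ and hence for both $\phi_P$ and $\phi_O$, so the split heaps and environments remain disjoint and the validity of the resulting pair (hence compatibility, Definition~\ref{def:hosc:compat-conf}) is preserved — this uses the preservation-of-validity lemma stated just above.

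The main obstacle, and the step deserving the most care, is the \emph{bookkeeping of the $\xi$ component and of which side ``owns'' the active term}. In the composite LTS $\xi$ is a single merged map $\xi_P\cdot\xi_O$, and when an action moves control from one participant to the other one must argue that the new ownership boundary is exactly the one recorded by $\orderCont{\xi}$: e.g. after $(PQ)$ the name $c'$ is added to $\dom{\gamma_P}$ and to $\dom{\xi_P}$ with $\xi_P(c')=c$, so that in $\CCc'$ the active continuation $c'$ now belongs to P, making O active — and this is precisely what the dual $(OQ)$ step records on the $\CC_O$ side. Dually, the $(PA)$/$(OA)$ pair must be checked to respect the invariant $\xi(c)\in\dom{\gamma_{X^\bot}}$ when $c\in\dom{\gamma_X}$, so that passing the answer along $\xi(c)$ lands in the correct participant's configuration. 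Making the correspondence between the merged $\xi$ and the pair $(\xi_P,\xi_O)$ watertight — in particular that $\xi_P$ is a map out of P-introduced continuations and $\xi_O$ out of O-introduced ones, and that the split is stable under the transition — is the crux; once that is pinned down, each of the three cases is a mechanical unfolding of the two rule sets.
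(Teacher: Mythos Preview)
Your approach is essentially the paper's: assume without loss of generality that $\CC_P$ is active, case-split on whether the visible action is a Player answer or a Player question, read off the split pair $\CC_P',\CC_O'$ from the composite result, and verify the two individual transitions against Figure~\ref{fig:lts-hosc}. Your extra care with the $\xi$-bookkeeping is well placed, though the paper treats it as routine (and in fact suppresses $\xi$ in its proof notation).

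One point needs cleaning up: in the lemma $\act$ ranges only over the four visible actions, so there is no case ``$(P\tau)$ with $\act=\tau$'' and no need for ``$\tau^\bot=\tau$''. The $\tau$-transitions are the \emph{prefix} hidden inside $\iRed{\act}=\ired{\tau}^*\ired{\act}$. Since the merged configuration is P-active, every composite $\tau$-step is a $(P\tau)$ step touching only $M,c,h_P$; these are matched entirely on the $\CC_P$ side by $\ired{\tau}^*$, while the passive $\CC_O$ performs zero $\tau$-steps in its own $\iRed{\act^\bot}$ (permitted, as $\ired{\tau}^*$ admits the empty sequence). Fold this observation into the setup of each visible-action case rather than treating it as a separate case on $\act$, exactly as the paper does when it opens each case with ``there exist $V,h'_P$ such that $(M,c,h_P)\ered(V,c',h'_P)$'' before firing $(PA)$, and analogously for $(PQ)$.
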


\begin{proof}
 Without loss of generality, we suppose that $\CC_P$ is the active configuration and 
 $\CC_O$ the passive one.
 So we write $\CC_P$ as $\conf{M,c,\gamma_P,\phi,h_P}$ and $\CC_O$ as $\conf{\gamma_O,\phi,h_O}$.
 \begin{itemize} 
  \item If $\act$ is a Player Answer $\ansP{c'}{A}$, then there exists $V,h'_P$ s.t.
  \[(\mergeConf{\CC_P}{\CC_O}) \ired{\tau} \conf{V,c',\gamma_P,\gamma_O,\xi,\phi,h'_P, h_O}\] 
  so that $(M,c,h_P) \ered (V,c',h'_P)$. 
  Then there exists $K,c''$ s.t.
  $\gamma_O(c') = K, \xi(c') = c''$ and there exists $\sigma,\gamma'$ s.t. 
  $c' :\sigma$ 
  and $(A,\gamma') \in \AVal{V}{\sigma}$, so that
  $\CCc' = \conf{K[A],c'',\gamma_P\cdot\gamma',\gamma_O,\phi \uplus \dom{\gamma'},h'_P, h_O}$.
  
  We then define $\CC'_P$ as $\conf{\gamma_P\cdot\gamma',\phi \uplus \dom{\gamma'},h'_P}$
  and $\CC'_O$ as $\conf{K[A],c'',\gamma_O,\phi \uplus \dom{\gamma'},h_O}$.
  One easily check that: 
  \begin{itemize}
   \item $\CC'_P,\CC'_O$ are two compatible configurations;
   \item $\CCc' = \mergeConf{\CC'_P}{\CC'_O}$;
   \item $\CC_P \ired{\tau} \conf{V,c',\gamma_P,\phi,h'_P} 
     \ired{\ansP{c'}{A}} \CC'_P$;
   \item $\CC_O \ired{\ansO{c'}{A}} \CC'_O$.
  \end{itemize} 
  
  \item If $\act$ is a Player Question $\questP{f}{A}{c'}$, then there exists $K,V,c'',h'_P$ s.t. 
  \[(\mergeConf{\CC_P}{\CC_O}) \ired{\tau} \conf{K[f \ V],c'',\gamma_P,\gamma_O,\xi,\phi,h'_P, h_O}\]
  so that $(M,c,h_P) \ered (K[f \ V],c'',h'_P)$. 
  Then there exists $V'$ s.t $\gamma_O(f) = V'$, 
  and there exists $\sigma,\sigma',\gamma'$ s.t. $f:\sigma \rarr \sigma'$, 
  and $(A,\gamma') \in \AVal{V}{\sigma}$, so that
  $\CCc' = \conf{V'A,c',\gamma_P\cdot \gamma'\cdot[c' \mapsto K],\gamma_O,
  \xi \cdot [c' \mapsto c''],\phi \uplus \dom{\gamma'} \cdot \{c'\},h'_P, h_O}$.
   
  We then define $\CC'_P$ as $\conf{\gamma_P\cdot\gamma'\cdot[c' \mapsto K],
   \xi \cdot [c' \mapsto c''],\phi \uplus \dom{\gamma'} \uplus \{c'\},h'_P}$
  and $\CC'_O$ as $\conf{V' A,c',\gamma_O,\phi \uplus \dom{\gamma'} \uplus \{c'\},h_O}$.
  One easily check that:
  \begin{itemize}
   \item $\CC'_P,\CC'_O$ are two compatible configurations;
   \item $\CCc' = \mergeConf{\CC'_P}{\CC'_O}$;
   \item $\CC_P \ired{\tau} \conf{K[f \ V],c'',\gamma_P,\phi,h'_P} 
     \ired{\questP{f}{A}{c'}} \CC'_P$;
   \item $\CC_O \ired{\questP{f}{A}{c'}} \CC'_O$.
  \end{itemize} 
 \end{itemize}
\end{proof}

\begin{lemma}
\label{lm:hosc:ired-to-iredcomp-1step}
 Taking $\CC_P,\CC_O$ two compatible configurations, if
 \begin{itemize}
  \item $\CC_P \iRed{\act} \CC'_P$;
  \item $\CC_O \iRed{\act^{\bot}} \CC'_O$;
 \end{itemize}
 then $\CC'_P,\CC'_O$ are two compatible configurations
 and $(\mergeConf{\CC_P}{\CC_O}) \iRed{\act} (\mergeConf{\CC'_P}{\CC'_O})$.
\end{lemma}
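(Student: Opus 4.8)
The plan is to prove this as the exact converse of Lemma~\ref{lm:hosc:iredcomp-to-ired-1step}, mirroring its proof but reading the reductions of the composite LTS off those of the two component LTSs rather than conversely. Without loss of generality I would assume $\CC_P$ is the active configuration $\conf{M,c,\gamma_P,\xi_P,\phi_P,h_P}$ and $\CC_O$ the passive one $\conf{\gamma_O,\xi_O,\phi_O,h_O}$, so that Definition~\ref{def:hosc:compat-conf} gives $\phi_O=\phi_P\uplus\{\finalName,\errn\}$ and $\mergeConf{\CC_P}{\CC_O}=\conf{M,c,\gamma_P,\gamma_O,\xi_P\cdot\xi_O,\phi_O,h_P,h_O}$ is a valid composite configuration; the symmetric case ($\CC_O$ active) is handled identically using the $(O\tau)$, $(OA)$, $(OQ)$ rules. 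Since $\CC_P$ is active, $\act$ is a Player action, hence either $\ansP{c}{A}$ or $\questP{f}{A}{c'}$.

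First I would deal with the $\tau$-prefix: by definition of $\iRed{\act}$ there are steps $\CC_P\,{\ired{\tau}}^\ast\,\CC_P''\ired{\act}\CC_P'$ with $\CC_P''$ active, each $\ired{\tau}$ arising from a machine reduction $(\cdot,c,\cdot)\ered(\cdot,c',\cdot)$ that only affects the term, its current continuation and $h_P$. Using validity of $\mergeConf{\CC_P}{\CC_O}$ (which forces the current continuation of the active term to lie in $\dom{\gamma_O}$, a property preserved by $\ered$ via Lemma~\ref{lm:decomp-ered} and the standing invariants), each such step is a legal $(P\tau)$ step of the composite LTS that leaves $\gamma_O$, $\xi_P\cdot\xi_O$ and $h_O$ untouched, so $\mergeConf{\CC_P}{\CC_O}\,{\ired{\tau}}^\ast\,\mergeConf{\CC_P''}{\CC_O}$. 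As $\CC_O$ is passive it contributes no $\tau$ steps, so it remains to match the single action.

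Then I would carry out the case analysis on $\act$, in each case listing the side conditions of the P-step (from Figure~\ref{fig:lts-hosc}) and of the matching O-step on $\CC_O$, and observing that their conjunction is precisely the side condition of the corresponding composite rule of Figure~\ref{fig:comp-lts-hosc}. For $\ansP{c}{A}$ one combines $M''=V$ a value and $(A,\gamma')\in\AVal{V}{\sigma}$ from the P-side with $\gamma_O(c)=K$, $\xi_O(c)=c''$ from the O-side; the $(PA)$ rule then fires (its freshness premise $\nu(A)\cap\phi_O=\varnothing$ coming from the O-step) and produces $\conf{K[A],(\xi_P\cdot\xi_O)(c),\gamma_P\cdot\gamma',\gamma_O,\xi_P\cdot\xi_O,\phi_O\uplus\nu(A),h_P'',h_O}$, which equals $\mergeConf{\CC_P'}{\CC_O'}$ once one notes $(\xi_P\cdot\xi_O)(c)=\xi_O(c)=c''$. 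For $\questP{f}{A}{c'}$ one uses $(PQ)$ similarly, with $M''=K[fV]$, $\gamma_O(f)=V'$, the delicate points being that the new binding $[c'\mapsto K]$ ends up in the P-environment while $[c'\mapsto c]$ extends $\xi$, exactly as in both component rules.

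The remaining obligation is that $\CC_P'$ and $\CC_O'$ are again compatible: one is active and the other passive by construction, the $\phi$-sets still differ by $\{\finalName,\errn\}$ by inspection of the updates, and $\mergeConf{\CC_P'}{\CC_O'}$ is valid because $\mergeConf{\CC_P}{\CC_O}$ was and validity is preserved along composite transitions. I expect the main obstacle to be not any single step but the uniform bookkeeping of the $\gamma$, $\xi$ and $\phi$ components across all cases — in particular verifying that the continuation consulted (resp.\ extended) by $(PA)$/$(PQ)$ sits in the O-half, that the freshly introduced names avoid $\{\finalName,\errn\}$ (forced by the hypothesis $\CC_O\iRed{\act^\bot}\CC_O'$), and that the $(P\tau)$ premise $c\in\dom{\gamma_O}$ persists through the $\tau$-prefix; these checks are routine but are where the content of the argument lies.
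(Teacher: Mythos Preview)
Your proposal is correct and follows essentially the same route as the paper's proof: assume without loss of generality that $\CC_P$ is active, absorb the $\tau$-prefix via $(P\tau)$ steps in the composite LTS, and then do a case split on whether $\act$ is a Player answer or a Player question, matching the side conditions of the component rules against those of the composite rule. If anything, you are more explicit than the paper about the bookkeeping (freshness of names against $\phi_O$, persistence of the $(P\tau)$ premise $c\in\dom{\gamma_O}$ along the $\tau$-prefix, and compatibility of the targets), which the paper leaves implicit. One minor remark: your appeal to Lemma~\ref{lm:decomp-ered} for the preservation of $c\in\dom{\gamma_O}$ is misplaced---that lemma is about syntactic decomposition of terms, not about this invariant; the correct justification is the healthiness condition $\nu(M)\cup\{c\}\subseteq\phi_P\setminus\dom{\gamma_P}$ together with $\dom{\gamma_P}\cup\dom{\gamma_O}\cup\{\finalName,\errn\}=\phi_O$ and $\phi_P\subseteq\phi_O$, which forces every continuation name reachable by $\ered$ to lie in $\dom{\gamma_O}$.
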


\begin{proof}
 Without loss of generality, we suppose that $\CC_P$ is the active configuration and 
 $\CC_O$ the passive one. So we write $\CC_P$ as $\conf{M,c,\gamma_P,\phi,h_P}$ and 
 $\CC_O$ as $\conf{\gamma_O,\phi,h_O}$.
 \begin{itemize}
  \item If $\act$ is a Player Answer $\ansP{c'}{A}$, then there exists $V,h'_P$ s.t. 
  $\CC_P \ired{\tau} \conf{V,c',\gamma_P,\phi,h'_P}$ so that
  $(M,c,h_P) \ered (V,c',h'_P)$. Then:
  \begin{itemize}
   \item there exists $\sigma$ s.t. $c' :\sigma$,
   and $\gamma',$ s.t. $(A,\gamma') \in \AVal{V}{\sigma}$ so that
   $\CC'_P = \conf{\gamma_P\cdot \gamma',\phi \uplus \dom{\gamma'},h'_P}$;
   \item there exists $K,c''$ s.t. $\gamma_O(c') = K, \xi(c') = c''$ 
   and $\CC'_O = \conf{K[A],c'',\gamma_O,\phi \uplus \dom{\gamma'},h_O}$.
  \end{itemize}

 Then one easily checks that $\CC'_P,\CC'_O$ are two compatible configurations, and:
 \[\begin{array}{lll}
  (\mergeConf{\CC_P}{\CC_O}) 
   & \ired{\tau} & 
    \conf{V,c',\gamma_P,\gamma_O,\xi,\phi,h'_P,h_O} \\
   & \ired{\ansP{c'}{A}} & 
    \conf{K[A],c'',\gamma_P\cdot\gamma',\gamma_O,\phi \uplus \dom{\gamma'},h'_P, h_O}
  \end{array}\]
 so that $\conf{K[A],c'',\gamma_P\cdot\gamma',\gamma_O,\phi \uplus \dom{\gamma'},
  h'_P, h_O} = \mergeConf{\CC'_P}{\CC'_O}$.
  \item If $\act$ is a Player Question $\questP{f}{A}{c'}$, there exists $K,V,c'',h'_P$ 
  s.t. $\CC_P \ired{\tau} \conf{K[f \ V],c'',\gamma_P,\phi,h'_P}$
  so that $(M,c,h_P) \ered (K[f \ V],c'',h'_P)$. Then:
  \begin{itemize}
   \item there exists $\sigma,\sigma'$ s.t. $f:\sigma \rarr \sigma'$,
   and $V',\gamma'$, s.t. $\gamma_O(f) = V'$ 
   and $(A,\gamma') \in \AVal{V}{\sigma}$ so that
   $\CC'_P = \conf{\gamma_P\cdot \gamma'\cdot[c' \mapsto K], \xi \cdot [c' \mapsto c''],
   \phi \uplus \dom{\gamma'}\uplus\{c'\},h'_P}$;
   \item there exists $V'$ s.t. $\gamma_O(f) = V'$ and 
   $\CC'_O = \conf{V'A,c',\gamma_O,\phi \uplus \dom{\gamma'} \uplus \{c'\},h_O}$.
  \end{itemize}
 Then one easily checks that $\CC'_P,\CC'_O$ are two compatible configurations,
 and:
 \[\begin{array}{lll}
  (\mergeConf{\CC_P}{\CC_O}) & \ired{\tau} & \conf{K[f\ V],c'',\gamma_P,\gamma_O,\xi,
  \phi,h'_P, h_O} \\
  & \ired{\questP{f}{A}{c'}} & \conf{V'A,c',\gamma_P\cdot \gamma'\cdot[c' \mapsto K],
   \gamma_O,\xi\cdot[c' \mapsto c''],\phi \uplus \dom{\gamma'} \uplus \{c'\},h'_P, h_O}
  \end{array}
 \]
 so that $\conf{K[A],c'',\gamma_P\cdot\gamma'\cdot[c' \mapsto K],\gamma_O,
 \xi \cdot[c' \mapsto c''],\phi \uplus \dom{\gamma'} \uplus \{c'\},h'_P\cdot h_O} = 
 \mergeConf{\CC'_P}{\CC'_O}$.
 \end{itemize}  
\end{proof}

\begin{definition}
 A composite configuration $\CCc$ \emph{terminates} following a trace $\tr$, 
 written $\CCc \ltster{\tr}$, when there exists a \emph{final} composite 
 configuration $\CCc_f = \conf{\unit,\finalName,\gamma_P,\gamma_O,\xi,\phi,h_P,h_O}$ s.t.
 $\CCc \iRed{\tr} \CCc_f$. We often omit the trace $\tr$ and simply write 
 $\CCc \ltster{}$.
\end{definition}

\begin{definition}
 A composite configuration $\CCc$ \emph{errors} following a trace $\tr$, 
 written $\CCc \ltstererr{\tr}$, when there exists a composite 
 configuration $\CCc_f = \conf{K[\err\unit],c,\gamma_P,\gamma_O,\xi,\phi,h_P,h_O}$ s.t.
 $\CCc \iRed{\tr} \CCc_f$. We often omit the trace $\tr$ and simply write 
 $\CCc \ltstererr{}$.
\end{definition}

\begin{lemma}
 \label{lm:hosc:active-passive-length}
 Taking $\CC_P,\CC_O$ two compatible configurations
 if $(\mergeConf{\CC_P}{\CC_O}) \ltster{\tr}$ then:
 \begin{itemize}
  \item if $\CC_P$ is active and $\CC_O$ passive, $\tr$ is even-length;
  \item if $\CC_P$ is passive and $\CC_O$ active, $\tr$ is odd-length.
 \end{itemize}
\end{lemma}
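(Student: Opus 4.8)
The plan is to prove the statement by induction on the length of the reduction $\mergeConf{\CC_P}{\CC_O}\iRed{\tr}\CCc_f$ witnessing $\mergeConf{\CC_P}{\CC_O}\ltster{\tr}$, reading off the parity of $|\tr|$ from a \emph{polarity} invariant attached to composite configurations. First I would observe that every valid composite configuration $\CCc=\conf{M,c,\gamma_P,\gamma_O,\xi,\phi,h_P,h_O}$ falls into exactly one of three cases according to the carried continuation name $c$: either $c\in\dom{\gamma_O}$ ($P$-polarity, i.e.\ $M$ is the term component's term, to be reduced against $h_P$), or $c\in\dom{\gamma_P}$ ($O$-polarity), or $c=\finalName$ (a terminal configuration). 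That this trichotomy is exhaustive and mutually exclusive follows from the validity conditions $\dom{\gamma_P}\cap\dom{\gamma_O}=\varnothing$ and $\dom{\gamma_P}\cup\dom{\gamma_O}\cup\{\finalName,\errn\}=\phi$ together with $c$ being a continuation name (hence $c\neq\errn$); and under the compatibility convention $\mergeConf{\CC_P}{\CC_O}$ has $P$-polarity iff $\CC_P$ is the active component.

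Next I would track how the polarity evolves along transitions, by inspection of Figure~\ref{fig:comp-lts-hosc}. For the $\tau$-transitions $(P\tau)$ and $(O\tau)$, the only way an $\ered$-step can change the carried continuation name is via $\throw$, whose target is the continuation name stored in a captured continuation value $\cont{(K',c')}$ reachable from the reduced term. Here I would use Lemma~\ref{lm:decomp-ered} and cr-freeness of the boundary: since continuation types never occur at the interface, no such value is ever passed between the two components, so every $\cont{(K',c')}$ inside the term component mentions a name in $\dom{\gamma_O}$ (one captured by its own $\callcc$ during a $P$-phase) and, dually, every one inside the context component mentions a name in $\dom{\gamma_P}$ or the distinguished $\finalName$. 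Hence a $(P\tau)$-step preserves $P$-polarity and an $(O\tau)$-step preserves $O$-polarity, the latter except that it may land in a terminal configuration. For the visible actions, reading the rules shows that $(PA)$ and $(PQ)$ fire only from a $P$-polarity configuration and produce an $O$-polarity one — for $(PA)$ because $c\in\dom{\gamma_O}$ forces $\xi(c)\in\dom{\gamma_P}$, and for $(PQ)$ because the fresh continuation $c'$ is adjoined to $\gamma_P$ — while $(OA)$ and $(OQ)$ dually turn $O$-polarity into $P$-polarity. In component terms, by Lemmas~\ref{lm:hosc:iredcomp-to-ired-1step} and~\ref{lm:hosc:ired-to-iredcomp-1step} each visible composite action corresponds to the active component going passive and the passive one going active, i.e.\ it swaps which of $\CC_P,\CC_O$ is active, whereas $\tau$-steps leave this unchanged.

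Finally I would pin down the terminal configuration. A final configuration $\CCc_f=\conf{\unit,\finalName,\gamma_P,\gamma_O,\xi,\phi,h_P,h_O}$ has its evaluated term $\unit$ sitting at continuation $\finalName$, so, decomposing $\CCc_f=\mergeConf{\CC_P^f}{\CC_O^f}$, its active component is the one whose name set contains $\finalName$; by the compatibility constraint $\phi_O=\phi_P\uplus\{\finalName,\errn\}$, and since $\finalName$ is never freshly introduced, this is determined once and for all by the initial pair $\CC_P,\CC_O$. Combining the three ingredients — polarity is flipped exactly by the visible actions, left untouched by $\tau$-steps, and the terminal configuration has a predetermined polarity — the number of visible actions in $\tr$ has exactly the parity asserted; the two bullets are the two instantiations according to whether $\CC_P$ is initially active or passive (the latter direction, where no further transition can change $\finalName$'s owner, also covers the zero-length base case).

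I expect the main obstacle to be the $\throw$ case of the $\tau$-transitions: showing that an internal reduction never silently switches which component is running requires the invariant that captured continuations stay on their own side, which in turn rests on cr-freeness ruling out continuation values at the boundary. Making this precise means carrying an auxiliary invariant through the whole induction — on the shape of $M$, of the heaps $h_P,h_O$, and of the ranges of $\gamma_P,\gamma_O$ — rather than arguing about a single step in isolation. A secondary subtlety is the bookkeeping around $\finalName$ against the validity conditions, in particular treating $\finalName$ as an honorary member of the opposite continuation set in the clause constraining $\xi$.
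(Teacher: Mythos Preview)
Your approach is sound but takes a different, more elaborate route than the paper. The paper's proof is a two-line induction on the length of $\tr$: the base case observes that the final configuration carries $c = \finalName \notin \phi_P$, which forces $\CC_O$ to be the active component; the inductive step simply applies Lemma~\ref{lm:hosc:iredcomp-to-ired-1step} to peel off one composite $\iRed{\act}$-step as $\CC_P \iRed{\act} \CC'_P$ and $\CC_O \iRed{\act^{\bot}} \CC'_O$ (which swaps the active/passive roles) and then invokes the induction hypothesis. Since $\iRed{\act}$ already swallows the preceding $\tau$-reductions, no separate polarity invariant on composite configurations is set up at this point. You do cite that decomposition lemma, but only as a side justification that visible actions swap polarity, rather than as the structural engine of the induction; your version is more explicit about what happens during $\tau$-steps, which the paper leaves implicit inside Lemma~\ref{lm:hosc:iredcomp-to-ired-1step}.

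On the $\throw$ issue specifically: the resolution does not need cr-freeness. The healthiness condition on an active configuration already stipulates $\nu(M)\cup\{c\}\subseteq\phi\setminus\dom{\gamma}$, and under compatibility one has $\phi_P\setminus\dom{\gamma_P}=\dom{\gamma_O}$ (and dually $\phi_O\setminus\dom{\gamma_O}=\dom{\gamma_P}\cup\{\finalName,\errn\}$), so every continuation name reachable from a component's term or heap lies on the other side by the syntactic invariant alone. The ``auxiliary invariant'' you anticipate having to thread through the induction \emph{is} precisely the validity condition already imposed on configurations and preserved by all transitions; invoking cr-freeness is harmless but unnecessary, and makes the argument look more restricted than it is.
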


\begin{proof}
 By induction on the length of $\tr$:
 \begin{itemize}
  \item If $\tr = \emptytrace$, then $\mergeConf{\CC_P}{\CC_O}$
  can be written as $\conf{\unit,\finalName,\gamma_P,\gamma_O,\xi,\phi,h_P,h_O}$.
  Writing $\phi_P$ for the name environment component of $\CC_P$,
  and $\phi_O$ for the one of $\CC_O$, then 
  $\phi = \phi_O = \phi_P \uplus \{\finalName\}$.
  So necessarily is the $\CC_O$ active one.
  \item If $\tr = \act \cdot \tr'$, then we conclude using 
  Lemma~\ref{lm:hosc:iredcomp-to-ired-1step} and the induction hypothesis.
 \end{itemize}
\end{proof}

\begin{definition}
 Taking $\CC_P,\CC_O$ two compatible configurations, one write
 $(\CC_P|\CC_O) \semobs{\yy}{\tr}$, with $\yy \in \{\ter,\err\}$, when $\tr \in \Tr{\CC_P}$
 and
 \begin{itemize}
  \item if $\yy = \ter$ then $\tr^{\bot} \cdot \ansP{\tern}{\unit} \in \Tr{\CC_O}$;
  \item if $\yy = \err$ then $\tr^{\bot} \cdot \questP{\errn}{\unit}{c} \in \Tr{\CC_O}$
  for some $c \in \CNames$;
 \end{itemize}

\end{definition}

\begin{lemma}
 \label{lm:hosc:ired-iff-iredcomp}
 Taking $\CC_P,\CC_O$ two compatible configurations and $\tr$ a trace, then
 $(\CC_P|\CC_O) \semobs{\yy}{\tr}$ iff $(\mergeConf{\CC_P}{\CC_O}) \ltsobs{\yy}{\tr}$,
 with $\yy \in \{\ter,\err\}$.
\end{lemma}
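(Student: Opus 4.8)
We prove the equivalence by first lifting the one-step simulation lemmas to arbitrary traces, and then analysing the shape of the terminal composite configurations. The first step, call it the \emph{zipping lemma}, states that for compatible $\CC_P,\CC_O$ and any trace $\tr$,
\[
(\mergeConf{\CC_P}{\CC_O}) \iRed{\tr} \CCc'
\ \Longleftrightarrow\
\exists\,\CC'_P,\CC'_O\text{ compatible}:\ \CCc'=\mergeConf{\CC'_P}{\CC'_O},\ \CC_P\iRed{\tr}\CC'_P,\ \CC_O\iRed{\tr^\bot}\CC'_O .
\]
I would prove this by induction on $|\tr|$. The base case ($\tr=\emptytrace$) is the statement that internal reductions decompose: using Lemma~\ref{lm:decomp-ered} to classify the current term, a composite $\tau$-step is either a $(P\tau)$ step, which touches only $h_P$ and is an $\ered$-step of $\CC_P$ (the in-scope continuation lying in $\dom{\gamma_O}$), or an $(O\tau)$ step, which touches only $h_O$ and is an $\ered$-step of $\CC_O$; conversely, any $\ered$-step of $\CC_P$ or $\CC_O$ lifts to the corresponding composite $\tau$-step and preserves validity, hence compatibility. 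The inductive step just threads one application of Lemma~\ref{lm:hosc:iredcomp-to-ired-1step} (resp.\ Lemma~\ref{lm:hosc:ired-to-iredcomp-1step}) through the induction hypothesis; those lemmas already carry the compatibility bookkeeping.

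For the $\ter$ case, by definition $(\mergeConf{\CC_P}{\CC_O})\ltster{\tr}$ iff $(\mergeConf{\CC_P}{\CC_O})\iRed{\tr}\CCc_f$ with $\CCc_f=\conf{\unit,\finalName,\gamma_P,\gamma_O,\xi,\phi,h_P,h_O}$. By the zipping lemma this is equivalent to the existence of compatible $\CC'_P,\CC'_O$ with $\CC_P\iRed{\tr}\CC'_P$, $\CC_O\iRed{\tr^\bot}\CC'_O$ and $\mergeConf{\CC'_P}{\CC'_O}=\CCc_f$. Exactly one of $\CC'_P,\CC'_O$ is the active configuration carrying the residual $(\unit,\finalName)$; since $\finalName\notin\dom{\gamma_P}\cup\dom{\gamma_O}$ no composite rule fires at $\CCc_f$, so it is genuinely terminal, whereas that active configuration, read inside its own LTS, is a value $\unit$ at continuation $\finalName=\tern$ and therefore admits the single further step $\ired{\ansP{\tern}{\unit}}$, the $(PA)$ rule imposing no side condition on the continuation name. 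Using the orientation fixed in Definition~\ref{def:hosc:compat-conf} together with the parity bookkeeping of Lemma~\ref{lm:hosc:active-passive-length} to identify the active side as the descendant of $\CC_O$, this reads exactly $\tr\in\Tr{\CC_P}$ and $\tr^\bot\cdot\ansP{\tern}{\unit}\in\Tr{\CC_O}$, i.e.\ $(\CC_P|\CC_O)\semter{\tr}$. For the converse one unfolds $\CC_P\iRed{\tr}\CC'_P$ and $\CC_O\iRed{\tr^\bot}\CC''_O\ired{\ansP{\tern}{\unit}}\CC'_O$, observes that $\CC''_O$ is a value $\unit$ at $\tern$, checks that $\CC'_P$ and $\CC''_O$ are compatible, so $\mergeConf{\CC'_P}{\CC''_O}$ is defined and has the shape $\conf{\unit,\finalName,\dots}$, and invokes the zipping lemma to get $(\mergeConf{\CC_P}{\CC_O})\iRed{\tr}\mergeConf{\CC'_P}{\CC''_O}$, a terminal configuration, hence $(\mergeConf{\CC_P}{\CC_O})\ltster{\tr}$.

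The $\err$ case is handled identically, the only change being that the terminal composite configuration has the form $\conf{K[\err\,\unit],c,\gamma_P,\gamma_O,\xi,\phi,h_P,h_O}$; by Lemma~\ref{lm:decomp-ered} the term is a callback $K[\errn\,\unit]$, and since $\errn\in\FNames_{\Unit\rarr\Unit}$ is not in $\dom{\gamma_P}\cup\dom{\gamma_O}$, neither $(PQ)$ nor $(OQ)$ applies, so the configuration is terminal; in its own LTS the active side instead performs the $(PQ)$ step $\ired{\questP{\errn}{\unit}{c'}}$ for a fresh $c'$, yielding the clause $\tr^\bot\cdot\questP{\errn}{\unit}{c'}\in\Tr{\CC_O}$ that defines $(\CC_P|\CC_O)\semtererr{\tr}$, and the converse mirrors the $\ter$ case.

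The step I expect to be the main obstacle is not the one-step algebra but the gluing needed in the converse directions: one must be sure that independently produced runs $\CC_P\iRed{\tr}\CC'_P$ and $\CC_O\iRed{\tr^\bot}\CC''_O$ land on \emph{compatible} configurations, so that their merge is a valid composite configuration to which the zipping lemma applies. This rests on the observation — essentially the determinism-up-to-renaming already noted for the composite LTS, transported to the two component LTSs — that, along any run, the name component $\phi$ and the well-typedness/validity invariants evolve as a function of the trace alone, so that any two runs over $\tr$ and $\tr^\bot$ can be aligned up to a choice of fresh names.
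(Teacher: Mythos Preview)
Your proposal is correct and follows the same route as the paper's own proof: induction on $|\tr|$, using Lemmas~\ref{lm:hosc:iredcomp-to-ired-1step} and~\ref{lm:hosc:ired-to-iredcomp-1step} for the inductive step and a direct analysis of the terminal configuration for the base case. The paper simply performs this induction on the target statement rather than first isolating your zipping lemma, but the content is the same.

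One remark on your final paragraph: the compatibility obstacle you anticipate does not actually arise. Lemma~\ref{lm:hosc:ired-to-iredcomp-1step} already \emph{derives} compatibility of the one-step successors $\CC'_P,\CC'_O$ as part of its conclusion, not as a hypothesis; this works because the action label $\act$ fixes the set of freshly introduced names on both sides, so the two $\phi$ components stay in lockstep automatically. If you phrase the backward direction of your zipping lemma the same way---``if $\CC_P \iRed{\tr} \CC'_P$ and $\CC_O \iRed{\tr^\bot} \CC'_O$ then $\CC'_P,\CC'_O$ are compatible and $(\mergeConf{\CC_P}{\CC_O}) \iRed{\tr} (\mergeConf{\CC'_P}{\CC'_O})$''---the induction goes through mechanically via Lemma~\ref{lm:hosc:ired-to-iredcomp-1step} and no separate determinism-up-to-renaming argument is needed at the end. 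This is exactly how the paper's direct induction sidesteps the issue.
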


\begin{proof}
 We first prove that if $(\CC_P|\CC_O) \semobs{\yy}{\tr}$ 
 then $(\mergeConf{\CC_P}{\CC_O}) \ltsobs{\yy}{\tr}$
 by induction on the length of $\tr$:
 \begin{itemize}
 \item if $\tr$ is empty and $\yy = \ter$, then $\ansP{\finalName}{\unit} \in \Tr{\CC_O}$,
 so there exists $\gamma_O,\phi,h_O$ s.t.
 $\CC_O \ired{\tau} \conf{\unit,\finalName,\gamma_O,\phi,h_O}$.
 Since $\CC_O$ is an active configuration, $\CC_P$ must be a passive configuration,
 that we write as $\conf{\gamma_P,\phi,h_P}$.
 Then $\mergeConf{\CC_P}{\CC_O} = \conf{\unit,\finalName,\gamma_P,\gamma_O,\xi,\phi,
 h_P,h_O}$,
 so that indeed $(\mergeConf{\CC_P}{\CC_O}) \ltster{\emptytrace}$.
 \item if $\tr$ is empty and $\yy = \err$, then $\questP{\err}{\unit}{c} \in \Tr{\CC_O}$,
 so there exists $\gamma_O,\phi,h_O$ s.t.
 $\CC_O \ired{\tau} \conf{K[\err\unit],c,\gamma_O,\phi,h_O}$.
 Since $\CC_O$ is an active configuration, $\CC_P$ must be a passive configuration,
 that we write as $\conf{\gamma_P,\phi,h_P}$.
 Then $\mergeConf{\CC_P}{\CC_O} = \conf{K[\err\unit],c,\gamma_P,\gamma_O,\xi,\phi,
 h_P,h_O}$,
 so that indeed $(\mergeConf{\CC_P}{\CC_O}) \ltster{\emptytrace}$.
 \item if $\tr = \act \cdot \tr'$, then
 there exists two configurations $\CC'_P,\CC'_O$ s.t.:
 \begin{itemize}
 \item $\CC_P \iRed{\act} \CC'_P$;
 \item $\CC_O \iRed{\act^{\bot}} \CC'_O$;
 \item $(\CC'_P|\CC'_O) \semter{\tr'}$.
 \end{itemize}
 From Lemma~\ref{lm:hosc:ired-to-iredcomp-1step}, we get that 
 $\CC'_P,\CC'_O$ are two compatible configurations
 and $(\mergeConf{\CC_P}{\CC_O}) \iRed{\act} (\mergeConf{\CC'_P}{\CC'_O})$.
 Using the induction hypothesis we get that
 $(\mergeConf{\CC'_P}{\CC'_O}) \ltsobs{\yy}{\tr'}$.  
 So $(\mergeConf{\CC_P}{\CC_O}) \ltsobs{\yy}{\tr}$.
 \end{itemize}
 
 \noindent
 We now prove that if $(\mergeConf{\CC_P}{\CC_O}) \ltsobs{\yy}{\tr}$ then 
 $(\CC_P|\CC_O) \semobs{\yy}{\tr}$, by induction on the length of $\tr$:
 \begin{itemize}
  \item if $\tr$ is empty and $\yy = \ter$, then
  $(\mergeConf{\CC_P}{\CC_O}) \ired{\tau} \conf{\unit,\finalName,\gamma_P,\gamma_O,\xi,
  \phi,h_P,h_O}$.
  So $\CC_O \ired{\tau} \conf{\unit,\finalName,\gamma_O,\phi,h_O}$ 
  and $\CC_P = \conf{\gamma_P,\phi,h_P}$.
  Thus $\CC_O \iRed{\ansP{\finalName}{\unit}} \conf{\gamma_O,\phi,h_O}$,
  so $(\CC_P|\CC_O) \semter{\emptytrace}$.
  \item if $\tr$ is empty and $\yy = \err$, then
  $(\mergeConf{\CC_P}{\CC_O}) \ired{\tau} \conf{K[\err\unit],c,\gamma_P,\gamma_O,\xi,
  \phi,h_P,h_O}$.
  So $\CC_O \ired{\tau} \conf{K[\err\unit],c,\gamma_O,\phi,h_O}$ 
  and $\CC_P = \conf{\gamma_P,\phi,h_P}$.
  Thus $\CC_O \iRed{\questP{\err}{\unit}{c}} \conf{\gamma_O,\phi,h_O}$,
  so $(\CC_P|\CC_O) \semter{\emptytrace}$.
  \item if $\tr = \act \cdot \tr'$, then there exists a composite configuration $\CCc'$ s.t.
  $(\mergeConf{\CC_P}{\CC_O}) \iRed{\act} \CCc'$ and $\CCc' \ltster{\tr'}$.
  From Lemma~\ref{lm:hosc:iredcomp-to-ired-1step}, 
  we get the existence of two compatible configurations $\CC'_P,\CC'_O$
  s.t.:
  \begin{itemize}
   \item $\CCc' = \mergeConf{\CC'_P}{\CC'_O}$;
   \item $\CC_P \iRed{\act} \CC'_P$;
   \item $\CC_O \iRed{\act^{\bot}} \CC'_O$.
  \end{itemize}
  From $(\mergeConf{\CC'_P}{\CC'_O}) \ltsobs{\yy}{\tr'}$,
  we get from the induction hypothesis that
  $(\CC'_P|\CC'_O) \semobs{\yy}{\tr'}$.
  So $(\CC_P|\CC_O) \semobs{\yy}{\tr}$.
 \end{itemize} 
\end{proof}

\begin{definition}
 Taking $\gamma,\xi$ a valid environment and $c,c'$ 
 two continuation names s.t. $c \orderCont{\xi}^{*} c'$,
 we define the evaluation context $K_{c,c'}$ as:
 \begin{itemize}
  \item $K_{c,c} \defeq \bullet$
  \item $K_{c,c'} \defeq K_{c'',c'}[K]$, 
  when $\gamma(c) = K$ and $\xi(c) = c''$.
 \end{itemize}
 We write $K_{c}$ for $K_{c,\finalName}$.
%
\end{definition}

\begin{definition}
 To an environment $\gamma$, we associate an idempotent substitution $\delta$ 
 defined as the relation:
 \begin{itemize}
 \item $\delta^{0} \defeq \{(f,V) \sep f \in \dom{\gamma} \land \gamma(f) = V\} \cup 
 \{(c,K) \sep c \in \dom{\gamma} \land \gamma(c) = K\}$
 \item $\delta^{i+1} \defeq \{(f,V\substF{\delta^{i}}) \sep (f,V) \in \delta^{i}\} \cup 
   \{(c,K\substF{\delta^{i}}) \sep (c,K) \in \delta^{i}\}$
 where we write $V\substF{\delta^{i}}$ for the action of 
 the substitution $\delta^{i}$ to $V$
\end{itemize}
then there exists $n \in \N$ s.t. $\delta^{n+1} = \delta^{n}$, 
and $\delta$ is then defined as
$\delta^{n}$.
\end{definition}

One need this iterative construction to get the idempotency result, 
that corresponds to the fact that the support of the values and evaluation contexts 
in the codomain of $\delta$ are empty (i.e. they do not have continuation or functional names anymore). 
This is possible because there is no cycles between names.


\begin{lemma}
\label{lm:quests-ired}
 Taking $\CCc = \conf{K[f \ V],c,\gamma_P,\gamma_O,\xi,\phi,h_P,h_O}$ 
 a valid composite configuration that is going to perform 
 a question, with $f \in \dom{\gamma}$, where $\gamma = \gamma_P \cdot \gamma_O$,
 there exists a functional name $g$, an abstract value $A$,
 a composite configuration $\CCc'$ and a trace $\tr$ formed by questions s.t.:
 \begin{itemize}
  \item $\gamma(g)$ is a $\lambda$-abstraction $\lambda x.M$;
  \item $\delta(f) = \delta(g)$, writing $\delta$ for the idempotent substitution associated to $\gamma$;
  \item $\CCc \ired{\tr} \CCc'$;
  \item $\CCc'$ can be written as 
    $\conf{g \ A,c',\gamma_P \cdot \gamma'_P,\gamma_O \cdot \gamma'_O,\phi \uplus \dom{\gamma'_P},h_P,h_O}$;
  \item $A\substF{\delta'} = V$, with $\delta'$ 
  the idempotent substitution associated to $\gamma'_P \cdot \gamma'_O$;
  \item $K_{c',c}^{\gamma} = \bullet$.
 \end{itemize}
\end{lemma}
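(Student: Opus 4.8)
The plan is to prove the statement by a well-founded induction that "chases'' the callback head $f$ through the environment until it reaches a genuine $\lambda$-abstraction, reading each step of the chase as a question transition of the composite LTS. The induction is on the length of the chain $f, \gamma(f), \gamma(\gamma(f)), \ldots$ before a $\lambda$-abstraction is hit; this chain is finite precisely because, by construction, the idempotent substitution $\delta$ associated to $\gamma = \gamma_P\cdot\gamma_O$ is well defined (there are no cycles among names). Three preliminary facts are needed. First, since $\CCc$ is valid and the active term is a callback $K[f\,V]$ with $f\in\dom{\gamma}$, the next transition is forced to be a question: it is $(PQ)$ when $f\in\dom{\gamma_O}$ (with $V' = \gamma_O(f)$) and $(OQ)$ when $f\in\dom{\gamma_P}$ (with $V' = \gamma_P(f)$); this follows from validity together with the decomposition lemmas (cf. Lemmas~\ref{lm:decomp-ered} and~\ref{lm:decomp-fred}). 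Second, the invariant that an environment entry $\gamma_X(h)$ is either a $\lambda$-abstraction (or, up to the paper's conventions, a $\mathbf{rec}$-value, which one may need to unfold once) or a function name belonging to $\gamma_{X^{\bot}}$ — real terms never mention own-side names — so the chain strictly alternates sides and never leaves the environment. Third, in the configurations to which the lemma is applied $K$ is $\bullet$ (the active term is a name applied to an abstract value), and this shape is preserved by every subsequent question, so each question pushes only $\bullet$ into the continuation map and condition $K_{c',c}^{\gamma}=\bullet$ telescopes immediately.

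For the induction, the base case is $\gamma(f)$ a $\lambda$-abstraction $\lambda x.M$: then take $\tr=\emptytrace$, $\CCc'=\CCc$, $g=f$, $A=V$ (already an abstract value here), and $\gamma'_P=\gamma'_O=\varnothing$; all six conditions hold trivially, using $\delta(f)=\lambda x.M\substF{\delta}$. In the inductive step $\gamma(f)=f_1$ is a function name with, by the invariant, $f_1\in\dom{\gamma_{X^{\bot}}}$ where $f\in\dom{\gamma_X}$. Fire the forced question, say $\questP{f}{A_0}{c_0}$ (the case $\questO{f}{A_0}{c_0}$ is symmetric), choosing $(A_0,\gamma_0)\in\AVal{V}{\sigma}$ and $c_0$ fresh. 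The resulting configuration $\CCc_1$ is valid, its active term is $f_1\,A_0$, and its environments are $\gamma_P,\gamma_O$ augmented by $\gamma_0$ and by $[c_0\mapsto\bullet]$, $[c_0\mapsto c]$. Since the chain length of $f_1$ is one shorter, apply the induction hypothesis to $\CCc_1$ (with head $f_1$ and argument $A_0$), obtaining $g$, $A$, $\CCc'$ and a question-trace $\tr_1$ with $\CCc_1\ired{\tr_1}\CCc'$ of the required shape; set $\tr = \questP{f}{A_0}{c_0}\cdot\tr_1$.

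It remains to verify the six conditions in the inductive step, which is mostly bookkeeping. Conditions 1, 3 and 4 are immediate from the induction hypothesis and composition of transitions. Condition 2 follows from $\delta(f)=\delta(\gamma(f))=\delta(f_1)=\delta(g)$, using the definition of $\delta$. Condition 5 follows by composing substitutions: $A_0\substF{\gamma_0}=V$ by the chosen decomposition, $A\substF{\delta'_1}=A_0$ by the induction hypothesis (where $\delta'_1$ is the idempotent substitution of the inner $\gamma'$-components), and since the outer $\gamma'_P\cdot\gamma'_O$ is $\gamma_0$ together with the inner ones — with disjoint supports — idempotence gives $A\substF{\delta'}=V$. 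Condition 6 is immediate by the third preliminary observation. The main obstacle I anticipate is not any single computation but keeping the bookkeeping of the pair of heaps $h_P,h_O$, the pair of environments $\gamma_P,\gamma_O$, the continuation map $\xi$ and the name set $\phi$ exactly aligned with the composite-LTS rules across the induction, and in particular establishing the invariant that environment entries are never own-side names (so that the chase is finite and side-alternating); a secondary subtlety is whether $\mathbf{rec}$-values must be treated by a separate one-step unfolding or are already covered by the statement's phrasing "$\gamma(g)$ is a $\lambda$-abstraction''.
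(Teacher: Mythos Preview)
The paper states this lemma without proof. Your inductive scheme—chasing the head name through $\gamma$ until a $\lambda$-abstraction is reached, one question transition per hop—is the natural argument and is sound in outline; the side-alternation invariant you rely on (that an entry $\gamma_X(h)$ which is a bare function name must lie in $\dom{\gamma_{X^\bot}}$) does hold, by the healthiness conditions on valid configurations.

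There is a genuine wrinkle, and you have put your finger on it without quite resolving it. Your assertion that $K=\bullet$ ``in the configurations to which the lemma is applied'' is not supported by the hypotheses: the lemma is stated for arbitrary $K$, and its only invocation (the callback case in the proof of Lemma~\ref{thm:hosc:red-iff-iredcomp}) applies it with a general $K[f\,V]$. Moreover, that invocation records the final conclusion as $K_{c_1,c}^{\gamma_1}=K$, not $\bullet$. So the printed last bullet is almost certainly a typo for $K_{c',c}^{\gamma'_P\cdot\gamma'_O}=K$: the first question stores $K$, every subsequent one stores $\bullet$, and the telescoped context is $K$. A related issue hits the base case of your induction: when $\gamma(f)$ is already a $\lambda$-abstraction, taking $\tr=\emptytrace$ leaves the active term as $K[f\,V]$, which is not literally of the form $g\,A$ with $A$ abstract unless $K=\bullet$ and $V$ is abstract. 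In the paper's downstream use this is harmless—one more question is fired immediately after reaching $\CCc'$, and that step performs the missing decomposition and absorbs $K$—but the lemma as printed does not quite hold at that boundary, and assuming $K=\bullet$ is not the right repair. Your $\mathbf{rec}$ worry is moot in context: the lemma is only invoked under the hypothesis that $\delta(f)$ is a $\lambda$-abstraction, which forces the chain to terminate at a $\lambda$ rather than a $\mathbf{rec}$.
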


\begin{lemma}
\label{lm:anss-ired}
 Let $\CCc = \conf{V,c,\gamma_P,\gamma_O,\xi,\phi,h_P,h_O}$ be
 a valid composite configuration that is going to perform 
 an answer.
 Suppose that there exists $c'$ s.t. $c \orderCont{\gamma}^{*} c'$
 and $K_{c,c'}^{\gamma} = \bullet$.
 Then there exists a composite configuration 
 $\CCc' = \conf{A,c',\gamma_P \cdot \gamma'_P,\gamma_O \cdot \gamma'_O,\phi \uplus \dom{\gamma'_P},h_P,h_O}$
 and a trace $\tr$ formed only by answers s.t. $\CCc \ired{\tr} \CCc'$
 and $A\substF{\delta'} = V$,
 with $\delta'$ the idempotent substitution associated to $\gamma'_P \cdot \gamma'_O$.
\end{lemma}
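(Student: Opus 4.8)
The plan is to prove Lemma~\ref{lm:anss-ired} by induction on the length $n$ of the $\orderCont{\xi}$-chain $c = d_0 \orderCont{\xi} d_1 \orderCont{\xi} \cdots \orderCont{\xi} d_n = c'$ witnessing $c \orderCont{\xi}^{*} c'$ (this is well-founded by the validity clause stating that the transitive closure of $\orderCont{\xi}$ is a strict partial order). First I would extract the structural consequence of the hypothesis $K_{c,c'}^{\gamma} = \bullet$: unfolding the recursive clause $K_{d_i,c'}^{\gamma} = K_{d_{i+1},c'}^{\gamma}[\gamma(d_i)]$ along the chain forces $\gamma(d_i) = \bullet$ for every $0 \le i < n$. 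I would also record, using validity of $\CCc$ together with the assumption that $\CCc$ is poised to perform an answer, that $M = V$ is a syntactic value and $c \ne \finalName$, hence $c \in \dom{\gamma_P} \uplus \dom{\gamma_O}$, so an $(OA)$ or $(PA)$ step is indeed enabled at $c$; and that, as the term component stays an abstract value plugged into $\bullet$ throughout, no $\tau$-step ever intervenes, so the trace produced is made only of answers, as required.

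Base case: when the chain is non-empty of length $n = 1$ we have $\xi(c) = c'$ and $\gamma(c) = \bullet$; picking any $(A,\gamma') \in \AVal{V}{\sigma}$ with $c:\sigma$ and firing $(PA)$ if $c \in \dom{\gamma_O}$ (resp. $(OA)$ if $c \in \dom{\gamma_P}$), the fact that $\gamma_O(c) = \gamma(c) = \bullet$ (resp. $\gamma_P(c) = \bullet$) makes the resulting configuration exactly $\conf{A, c', \gamma_P\cdot\gamma', \gamma_O, \xi, \phi\uplus\nu(A), h_P, h_O}$ (resp. $\conf{A, c', \gamma_P, \gamma_O\cdot\gamma', \ldots}$); we set $\gamma'_P,\gamma'_O$ to be $\gamma',\emptyset$ or $\emptyset,\gamma'$ accordingly, and $A\substF{\delta'} = A\substF{\gamma'} = V$ holds by the defining equations of $\AVal{-}{-}$, since the names of $A$ are fresh and so the idempotent closure of $\gamma'_P\cdot\gamma'_O$ coincides with $\gamma'$. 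The degenerate case $c = c'$ (in which the hypothesis is vacuous) is handled by the empty trace with $A = V$ when $V$ is already abstract, which is the only way it is invoked.

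Inductive step $n \ge 2$: I would first perform a single answer move at $c$ exactly as in the base case, reaching $\CCc_1 = \conf{A_1, d_1, \ldots}$ with $A_1$ abstract and $A_1\substF{\delta_1} = V$. Since $A_1$ is a value, $\CCc_1$ is again poised to answer; the chain $d_1 \orderCont{\xi} \cdots \orderCont{\xi} c'$ has length $n-1$; and because an answer step leaves $\xi$ untouched and only adds \emph{function} names to $\gamma$, the equalities $\gamma(d_i) = \bullet$ persist along the shorter chain, i.e. $K_{d_1,c'}^{\gamma} = \bullet$ in $\CCc_1$. Applying the induction hypothesis to $\CCc_1$ yields a trace $\tr'$ of answers and the desired terminal configuration with term $A$ and $A\substF{\delta''} = A_1$, where $\delta''$ is built from the bindings introduced along $\tr'$. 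Prepending the first answer to $\tr'$ gives the trace $\tr$, and it remains to combine the substitutions: the names of $A$ are those introduced during $\tr'$, $\delta''$ rewrites them to $A_1$, and $\delta_1$ then rewrites the names of $A_1$ to the corresponding sub-values of $V$; by the freshness invariants of the LTS these domains and codomains are mutually disjoint, so the idempotent closure $\delta'$ of the union of the two extensions, applied to $A$, computes $(A\substF{\delta''})\substF{\delta_1} = V$, which is the required identity for the combined $\gamma'_P,\gamma'_O$ (and $\gamma_P\cdot\gamma'_P$, $\gamma_O\cdot\gamma'_O$ match the configuration's environments since answer moves only extend them).

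The step I expect to be the main obstacle is precisely this last piece of bookkeeping — verifying that the two idempotent substitutions compose correctly through the induction. Concretely, one must pin down the freshness/disjointness invariants maintained by the composite LTS of Figure~\ref{fig:comp-lts-hosc} (an action introduces names disjoint from $\phi$; an answer extends but never overwrites $\gamma_P$ and $\gamma_O$) and use them to show that the closure of the accumulated environment, restricted to $\nu(A)$, realises the composite rewriting $A \rightsquigarrow A_1 \rightsquigarrow V$. Everything else — enabledness of each answer step, absence of $\tau$-steps, and the fact that the $\phi$- and heap-components evolve exactly as stated — is routine from the LTS rules, the definition of validity, and the definition of $\AVal{-}{-}$, following the same pattern as the companion Lemma~\ref{lm:quests-ired}.
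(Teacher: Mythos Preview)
The paper states this lemma (like its companion Lemma~\ref{lm:quests-ired}) without proof, so there is nothing to compare against directly. Your approach---induction on the length of the $\orderCont{\xi}$-chain from $c$ to $c'$---is the natural one and is correct. The key observation, that $K_{c,c'}^{\gamma}=\bullet$ unfolds to $\gamma(d_i)=\bullet$ for every intermediate $d_i$, so each answer step produces $\bullet[A]=A$ and no $\tau$-steps intervene, is exactly what makes the induction go through. Your identification of the substitution-composition step as the only nontrivial bookkeeping is accurate, and the freshness/disjointness argument you sketch is sound: the names introduced at step $i$ are fresh for all earlier ones, so the idempotent closure of the accumulated $\gamma'_P\cdot\gamma'_O$ realises the composite rewriting $A_n \mapsto A_{n-1} \mapsto \cdots \mapsto V$.

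One small point: in the degenerate case $c=c'$ you do not need $V$ to already be abstract. Taking the empty trace, $\gamma'_P=\gamma'_O=\emptyset$ and $A=V$ gives $V\substF{\emptyset}=V$ for any value $V$; the lemma does not actually require $A$ to be an abstract value, only that $A\substF{\delta'}=V$. Appealing to ``the only way it is invoked'' is unnecessary and weakens the statement you are proving. (Incidentally, the paper's $\phi\uplus\dom{\gamma'_P}$ in the conclusion appears to be a typo for $\phi\uplus\dom{\gamma'_P}\uplus\dom{\gamma'_O}$, since OA steps also extend $\phi$; your proof should track this correctly even if the statement does not.)
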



\begin{definition}
 One define the configuration transformation $\theta$ from valid composite configurations to pair 
 formed by a term and a heap, defined as
 \[\theta: \conf{M,c,\gamma_P,\gamma_O,\xi,\phi,h_P,h_O} \mapsto ((K_{c}^{\gamma}[M])\substF{\delta},(h_P \cdot h_O)\substF{\delta})\]
 writing $\gamma$ for $\gamma_P \cdot\gamma_O$ 
 and $\delta$ for the idempotent substitution associated to $\gamma$.
\end{definition}

\begin{lemma}
 \label{lm:ired-stable-red}
 Taking $\CCc,\CCc'$ two valid composite configuration and $\act$ an action 
 (different of $\tau$) s.t. $\CCc \ired{\act} \CCc'$ then $\theta(\CCc) = \theta(\CCc')$.
\end{lemma}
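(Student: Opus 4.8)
The plan is to establish $\theta(\CCc) = \theta(\CCc')$ by a case analysis on which of the four non-$\tau$ rules of Figure~\ref{fig:comp-lts-hosc} derives $\CCc \ired{\act}\CCc'$. It suffices to treat the two Player rules $(PA)$ and $(PQ)$: the definition of $\theta$ is symmetric in $(\gamma_P,h_P)$ and $(\gamma_O,h_O)$ — both the glued heap $h_P\cdot h_O$ and the glued term $K_c^{\gamma}[M]$ depend only on $\gamma=\gamma_P\cdot\gamma_O$ and on $h_P\cdot h_O$ — so $(OA)$ and $(OQ)$ follow by the same computation with the roles of the two environments interchanged. Throughout I write $\CCc=\conf{M,c,\gamma_P,\gamma_O,\xi,\phi,h_P,h_O}$, $\gamma = \gamma_P\cdot\gamma_O$, take $\delta$ to be the idempotent substitution associated to $\gamma$, and let $\gamma^{*},\xi^{*},\delta^{*}$ be the analogous data of $\CCc'$. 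Well-definedness of $K_c^{\gamma}$ and $K_{c'}^{\gamma^{*}}$ (i.e.\ that the relevant continuations reach $\finalName$ along $\orderCont{\xi}$, resp.\ $\orderCont{\xi^{*}}$) is guaranteed by validity of the composite configurations and its preservation under transitions, both stated above.

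Before the case split I would isolate two observations that carry the argument. First, the names introduced by $\act$ are fresh, hence disjoint from $\phi$, and therefore do not occur in any value of $\img{\gamma_P}\cup\img{\gamma_O}$ nor in $h_P\cdot h_O$; a short induction on the stages of the fixpoint defining $\delta^{*}$ then shows $\delta^{*}(n)=\delta(n)$ for every $n\in\dom{\gamma}$, whence $(h_P\cdot h_O)\substF{\delta^{*}}=(h_P\cdot h_O)\substF{\delta}$. Second, writing $(A,\gamma')\in\AVal{V}{\sigma}$ for the decomposition performed by $\act$, we have $A\substF{\gamma'}=V$ with $\dom{\gamma'}=\nu(A)$ fresh, so $A\substF{\delta^{*}}=(A\substF{\gamma'})\substF{\delta}=V\substF{\delta}$. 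With these in hand only the term components of $\theta$ need comparing, and that reduces to unfolding $K_{\,\cdot\,}^{\gamma}$: for $(PA)$, with $\gamma_O(c)=K$ and $\xi(c)=d$, one has $K_c^{\gamma}=K_d^{\gamma}[K]$ and $K_d^{\gamma^{*}}=K_d^{\gamma}$, because the $\orderCont{\xi}$-chain from $d$ to $\finalName$ runs through old names only; for $(PQ)$, with $\gamma_O(f)=V'$ and $\act=\questP{f}{A}{c'}$, one has $\gamma^{*}(c')=K$ and $\xi^{*}(c')=c$ with $c'$ fresh, so $c'$ is not on the chain out of $c$ and $K_{c'}^{\gamma^{*}}=K_c^{\gamma^{*}}[K]=K_c^{\gamma}[K]$, together with $f\substF{\delta^{*}}=f\substF{\delta}=V'\substF{\delta}$. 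In each case, using that capture-avoiding substitution commutes with hole-filling, both sides collapse to the same term; e.g.\ for $(PQ)$,
\[
\theta(\CCc) = (K_c^{\gamma}[K])\substF{\delta}\bigl[(V'\substF{\delta})\,(V\substF{\delta})\bigr] = (K_c^{\gamma}[K])\substF{\delta^{*}}\bigl[(V'\substF{\delta^{*}})\,(A\substF{\delta^{*}})\bigr] = \theta(\CCc').
\]

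The step I expect to be the main obstacle is the first observation — that extending $\gamma$ by the fresh bindings $\gamma'$ (and, in the question cases, the fresh $c'\mapsto K$) leaves $\delta$ unchanged on the old names. This needs the acyclicity of the name-dependency graph of $\gamma$ (which also underlies well-definedness of $\delta$ itself and follows from the strict-partial-order condition on $\orderCont{\xi}$ in the definition of valid composite configurations), plus the remark that a fresh name never occurs in the value bound to an old name, so it cannot propagate back into the old part of the substitution at any stage $\delta^i$. The continuation-context bookkeeping, though notation-heavy, is routine once one notes that $\xi^{*}$ extends $\xi$ only by sending a fresh continuation name to an old one, so every $\orderCont{\xi^{*}}$-chain starting at an old name agrees with the corresponding $\orderCont{\xi}$-chain; everything else is direct substitution calculation.
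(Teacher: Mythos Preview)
Your proposal is correct and follows essentially the same route as the paper: a case analysis on $(PA)/(PQ)$ with the $(OA)/(OQ)$ cases handled by symmetry, using the unfolding $K_c^{\gamma}=K_{\xi(c)}^{\gamma}[K]$ (resp.\ $K_{c'}^{\gamma^*}=K_c^{\gamma}[K]$) together with $A\substF{\gamma'}=V$. You are in fact more explicit than the paper about why $\delta^*$ agrees with $\delta$ on old names and why the heap component is unchanged; the paper simply asserts the final equalities from the listed facts.
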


\begin{proof}
Let us write $\CCc$ as $\conf{M,c,\gamma_P,\gamma_O,\xi,\phi,h_P,h_O}$.
Without loss of generality, we suppose the composite configuration 
$\CCc$ to be $P$-active, i.e. $c \in \dom{\gamma_O}$

We reason by case analysis over $\alpha$:
 \begin{itemize}
  \item If $\alpha = \ansP{c}{A}$, so that $M$ is a value $V$. Then we have:
   \begin{itemize}
    \item $\gamma_O(c) = K$ and $c:\tau$ for some context $K$ and type $\tau$;
    \item $\gamma'_O = \gamma_O$, $\gamma'_P = \gamma_P \cdot \gamma_A$ and 
    $\phi' = \phi \uplus \dom{\gamma_A}$;
    with $(A,\gamma_A) \in \AVal{V}{\tau}$;
    \item $h'_P = h_P$ and $h'_O = h_O$;
    \item $M' = K[A]$.
   \end{itemize} 
    We conclude using these and the fact that:
   \begin{itemize} 
    \item $K_{c}^{\gamma} = K_{c'}^{\gamma}[K]$, where $c' = \xi(c)$;
    \item $A\substF{\gamma_A} = V$;
   \end{itemize}
  that $(K_{c}^{\gamma}[V])\substF{\delta} = (K_{c'}^{\gamma'}[K[A]])\substF{\delta'}$.
  So $\theta(\CCc) = \theta(\CCc')$.
  \item If $\alpha = \questP{f}{A}{c'}$, so that $M$ is a callback $K[f\ V]$ for some context K, value $V$, and functional name $f$. Then  we have:
   \begin{itemize}
    \item $\gamma_O(f) = V'$ and $f:\sigma \rightarrow \sigma'$ 
     for some value $V$ and type $\sigma,\sigma'$; 
    \item $\gamma'_O = \gamma_O$, $\gamma'_P = \gamma_P \cdot \gamma_A \cdot [c' \mapsto K]$,
    $\xi' = \xi \cdot [c' \mapsto c]$ and $\phi' = \phi \uplus \dom{\gamma_A} \cdot \{c'\}$,
    with $(A,\gamma_A) \in \AVal{V}{\sigma}$;
    \item $h'_P = h_P$ and $h'_O = h_O$;
    \item $M' = V' \ A$.
   \end{itemize} 
    We conclude using these and the fact that:
   \begin{itemize} 
    \item $K_{c'}^{\gamma} = K_{c}^{\gamma}[K]$;
    \item $\gamma_O(f) = V'$;
    \item $A\substF{\gamma_A} = V$;
   \end{itemize} 
  that $(K_{c}^{\gamma}[K[f\ V]])\substF{\delta} = (K_{c'}^{\gamma'}[V' A])\substF{\delta'}$.
  So $\theta(\CCc) = \theta(\CCc')$.
 \end{itemize}
\end{proof}

\begin{definition}
 Taking $\CCc,\CCc'$ two composite configuration, we write $\CCc \rightsquigarrow \CCc'$ 
 when there exists a trace $\tr$ of actions (without any $\tau$-actions)
 s.t. $\CCc \ired{\tr \cdot \tau} \CCc'$.
\end{definition}

\begin{lemma}
 \label{thm:hosc:red-iff-iredcomp}
 The configuration transformation $\theta$ is a functional bisimulation between 
 the transition system over composite configurations $(CompConf,\rightsquigarrow)$ 
 and the operational transition system $(\Lambda \times \Heap,\red)$, 
 that is, for all valid composite configuration $\CCc$:
 \begin{itemize}
  \item for all composite configuration $\CCc'$, 
  if $\CCc \rightsquigarrow \CCc'$ then $\theta(\CCc) \red \theta(\CCc')$;
  \item for all pairs $(N,h)$ formed by a term an a heap $h'$,
  if $\theta(\CCc) \red (N,h')$ then
  there exists a valid composite configuration $\CCc'$ s.t. 
  $\CCc \rightsquigarrow \CCc'$ and $(N,h') = \theta(\CCc')$
 \end{itemize}
\end{lemma}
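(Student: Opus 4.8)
The plan is to establish the two simulation clauses separately, relying on Theorem~\ref{thm:ered-mplies-fred} (which transfers $\ered$-reductions to $\red$-reductions), on Lemma~\ref{lm:ired-stable-red} (non-$\tau$ actions leave $\theta$ unchanged), on the decomposition Lemma~\ref{lm:decomp-ered}, and on the question-chasing and answer-chasing Lemmas~\ref{lm:quests-ired} and~\ref{lm:anss-ired}.

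\emph{From $\rightsquigarrow$ to $\red$.} Suppose $\CCc \rightsquigarrow \CCc'$, so that $\CCc \ired{\act_1}\cdots\ired{\act_n}\CCc''\ired{\tau}\CCc'$ with each $\act_i$ a (non-$\tau$) answer or question action. Applying Lemma~\ref{lm:ired-stable-red} $n$ times gives $\theta(\CCc)=\theta(\CCc'')$. The last transition is an instance of $(P\tau)$ or $(O\tau)$, driven by $(M,c,h_X)\ered(N,c',h'_X)$ for the side $X\in\{P,O\}$ carrying the computation, and it leaves $\gamma_P,\gamma_O,\xi,\phi$, hence the associated idempotent substitution $\delta$, untouched. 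Writing $\kappa$ for the restriction of the continuation part of $\delta$ to the continuation names occurring in $M$ and $h_X$ (so that $\kappa(d)=K_d^\gamma$), Theorem~\ref{thm:ered-mplies-fred} yields $(K_c^\gamma[M\substF\kappa],h_X\substF\kappa)\red(K_{c'}^\gamma[N\substF\kappa],h'_X\substF\kappa)$. Applying the remaining (closed-value) function-name part of $\delta$ to both sides, and noting that the companion heap $h_{X^\bot}$ is disjoint from $h_X$ and left untouched — for an allocation we may reuse the same fresh location as the $\red$-step, which is fresh for $h_P\cdot h_O$ — this is precisely $\theta(\CCc'')\red\theta(\CCc')$, so $\theta(\CCc)\red\theta(\CCc')$.

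\emph{From $\red$ to $\rightsquigarrow$.} Let $\theta(\CCc)\red(N,h')$, write $\CCc=\conf{M,c,\gamma_P,\gamma_O,\xi,\phi,h_P,h_O}$, $\gamma=\gamma_P\cdot\gamma_O$ with idempotent substitution $\delta$, and $h=h_P\cdot h_O$, so $\theta(\CCc)=(K_c^\gamma[M]\substF\delta,h\substF\delta)$. Apply Lemma~\ref{lm:decomp-ered} to $(M,c,h_X)$, with $X$ the active side. If $(M,c,h_X)$ is $\ered$-reducible, the matching $(P\tau)$/$(O\tau)$ transition produces $\CCc'$ with $\CCc\ired{\tau}\CCc'$, an instance of $\rightsquigarrow$ with empty non-$\tau$ prefix; by the first clause $\theta(\CCc)\red\theta(\CCc')$, and since $\red$ on $\HOSC$ configurations is deterministic, $\theta(\CCc')=(N,h')$. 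If $M=K[fV]$ with $f$ a name, then, $\delta$ being idempotent, $\delta(f)$ is a concrete function value and the outermost $\red$-redex of $\theta(\CCc)$ is its application to $V\substF\delta$; Lemma~\ref{lm:quests-ired} supplies a sequence of question actions resolving the chain of name-to-name bindings of $\gamma$ down to an actual $\lambda$-abstraction, after which one further question yields a configuration $\CCc_1$ with $\theta(\CCc_1)=\theta(\CCc)$ (Lemma~\ref{lm:ired-stable-red}) whose term is $(\lambda x.P)A$, hence $\ered$-reducible; a closing $\tau$ step gives the required $\CCc'$, and determinism fixes $\theta(\CCc')=(N,h')$. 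If $M$ is a value $V$, then $K_c^\gamma\neq\bullet$ (otherwise $\theta(\CCc)$ would be irreducible), so there is a topmost $c'$ up the $\xi$-chain from $c$ with $K_{c,c'}^\gamma=\bullet$ and $\gamma(c')\neq\bullet$; Lemma~\ref{lm:anss-ired} provides answer actions to a configuration with term $A$ at $c'$, one more answer moves $A$ into the context $\gamma(c')$, and the resulting configuration is $\ered$-reducible with the same $\theta$-image, so again a final $\tau$ step and determinism conclude. In every case $\CCc'$ is a valid composite configuration by the preservation-of-validity lemma for $\iRed{\act}$.

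\emph{Main obstacle.} The substance lies in the bookkeeping that aligns the two clauses: checking that $\theta$, which at once performs the closed-value substitution of function names and the iterated unfolding of the continuation contexts stored in $\gamma,\xi$, commutes with the $\ered$-to-$\red$ transfer of Theorem~\ref{thm:ered-mplies-fred}; and verifying that along the chains of question/answer actions from Lemmas~\ref{lm:quests-ired} and~\ref{lm:anss-ired} the abstract-value decompositions compose so that the resolved values $A\substF{\delta'}$ reproduce the originals, while $h_P$ and $h_O$ remain disjoint under consistent fresh-location choices. All of this is concentrated in the auxiliary lemmas, so modulo them the proof is a case analysis governed by Lemma~\ref{lm:decomp-ered} together with determinism of $\red$.
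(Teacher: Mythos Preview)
Your argument largely tracks the paper's, with a pleasant simplification in the second clause: rather than computing $\theta(\CCc')$ explicitly (the paper does this via Lemma~\ref{lm:decomp-fred}, which reads off $N$ and $h'$ in each case), you build $\CCc'$ with $\CCc\rightsquigarrow\CCc'$, invoke the already-established first clause to obtain $\theta(\CCc)\red\theta(\CCc')$, and then appeal to determinism. Both routes are valid; your trichotomy via Lemma~\ref{lm:decomp-ered} on $(M,c,h_X)$ is the lighter one.

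There is, however, a genuine gap in your value case. After the answers supplied by Lemma~\ref{lm:anss-ired} and one further answer through $c'$, the resulting term is $\gamma(c')[A]$, and you assert this configuration is $\ered$-reducible. That need not hold: the stored evaluation context $\gamma(c')$ may carry a function name in head position (e.g.\ $\gamma(c')=K'[g\,\bullet]$, or $\gamma(c')=K'[\bullet\,W]$ with $A$ itself a function name), making $\gamma(c')[A]$ a callback rather than a redex. The paper closes this by observing that $\gamma(c')[A]$ is not a value (since $\gamma(c')\neq\bullet$), reapplying Lemma~\ref{lm:decomp-ered}, and recursing once to whichever of the reducible or callback cases applies; your proof needs the same one-step recursion. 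A smaller imprecision: $\red$ is not literally deterministic because allocation picks a fresh location; you handle this for the first clause, but in the reducible sub-case of the second clause you must likewise choose the $\ered$-reduct whose fresh location agrees with the one in the given step $\theta(\CCc)\red(N,h')$.
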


\begin{proof}
We write:
\begin{itemize}
 \item $\CCc$ as $\conf{M,c,\gamma_P,\gamma_O,\xi,\phi,h_P,h_O}$;
 \item $\gamma$ for $\gamma_P \cdot \gamma_O$;
 \item $\delta$ for the idempotent substitution associated to $\gamma_P \cdot \gamma_O$;
 \item $\theta(\CCc)$ as $(K_{c}^{\gamma}[M])\substF{\delta},h)$
 with $h = (h_P \cdot h_O)\substF{\delta}$.
\end{itemize}

We first suppose that $\CCc \rightsquigarrow \CCc'$, i.e. there exists 
a trace $\tr$ of actions (without any $\tau$)
and a composite configurations $\CCc_1$
s.t. $\CCc \ired{\tr} \CCc_1 \ired{\tau} \CCc'$.
From Lemma~\ref{lm:ired-stable-red}, we get that $\theta(\CCc) = \theta(\CCc_1)$.

Without loss of generality, we suppose the composite configuration 
$\CCc_1$ is $P$-active.
We write $\CCc'$ as $\conf{M',c',\gamma'_P,\gamma'_O,\phi',h'_P,h_O}$
and $\CCc_1$ as $\conf{M_1,c_1,\gamma'_P,\gamma'_O,\phi',h_P,h_O}$,
so that we have $(M_1,c_1,h_P) \ered (M',c',h'_P)$.

From Lemma~\ref{thm:ered-mplies-fred}, writing $\delta'$ for the 
idempotent substitution associated to $\gamma'_P \cdot \gamma'_O$,
and $\delta'_C$ for its restriction to the domain of continuation names,
one has that $(K_{c_1}^{\gamma'}[M]\substF{\delta'_C},h_P\substF{\delta'_C}) \red 
(K_{c'}^{\gamma'}[M']\substF{\delta'_C},h'_P\substF{\delta'_C})$.
Extending the heap with $h_O$ and the substitution to $\delta'$, we get that
$(K_{c_1}^{\gamma'}[M]\substF{\delta'},h) \red (K_{c'}^{\gamma'}[M']\substF{\delta'},
 (h'_P \cdot h_O)\substF{\delta'})$, i.e. $\theta(\CCc_1) \red \theta(\CCc')$.\\[1em]


 \noindent
 Now, we suppose that there exists a term $N$ and a heap $h'$ s.t. $\theta(\CCc) \red (N,h')$.
 From Lemma~\ref{lm:decomp-fred}, 
 there is three possible cases for the reduction $\theta(\CCc) \red (N,h')$:
 \begin{itemize}
  \item Either $(M,c,h_P \cdot h_O)$ is reducible.
  Without loss of generality, we suppose the composite configuration 
  $\CCc$ is $P$-active, so that $(M,c,h_P)$ is reducible.
  Then there exists $(M',c',h'_P)$ s.t.:
  \begin{itemize}
   \item $(M,c,h_P) \ered (M',c',h'_P)$;
   \item $N = (K_{c'}^{\gamma'}[M'])\substF{\delta}$;
   \item $h' = (h'_P \cdot h_O)\substF{\delta}$.
  \end{itemize}
  So we take $\CCc'=\conf{M',c',\gamma_P,\gamma_O,\xi,\phi,h'_P,h_O}$ so that $\CCc \ired{\tau} \CCc'$.
  \item Or $M$ is a callback:
  \begin{itemize}
   \item $M = K[f\ V]$ for some context $K$, value $V$, and functional name $f$;
   \item $\delta(f)$ is a $\lambda$-abstraction that we write $\lambda x.P$ (with $x \notin \dom{\delta}$);
   \item $N = (K_{c}^{\gamma}[K[P\subst{x}{V}]])\substF{\delta}$;
   \item $h' = h$;
  \end{itemize}
  
  From Lemma~\ref{lm:quests-ired}, there exists a functional name $g$, an abstract value $A_1$,
  a composite configuration $\CCc_1$ and a trace $\tr$ formed by questions s.t.:
  \begin{itemize}
   \item $\gamma(g)$ is a $\lambda$-abstraction $\lambda x.\hat{P}$;
   \item $\delta(f) = \delta(g)$;
   \item $\CCc \ired{\tr} \CCc_1$;
   \item $\CCc_1$ can be written as $\conf{g \ A_1,c_1,\gamma_P \cdot \gamma_{1,P},
     \gamma_O \cdot \gamma_{1,O},\phi \uplus \dom{\gamma_{1,P}},h_P,h_O}$;
   \item $A_1\substF{\delta_1} = V$, with $\delta_1$
   the idempotent substitution associated to $\gamma_{1,P} \cdot \gamma_{1,O}$;
  \item $K_{c_1,c}^{\gamma_1} = K$.
  \end{itemize}
  
  Without loss of generality, we suppose the composite configuration 
  $\CCc_1$ is $P$-active. Then we have:
  \[\begin{array}{lll}
  \CCc \ired{\tr} \CCc_1 & \ired{\questP{g}{A_2}{c_2}} & 
    \overbrace{\conf{(\lambda x.\hat{P}) \ A_2,c_2,\gamma_{2,P},\gamma_O\cdot\gamma_{1,O},
     \phi_2,h_P,h_O}}^{\CCc_2} \\
                   & \ired{\ \tau\ \ } & 
    \underbrace{\conf{\hat{P}\subst{x}{A_2},c_2,\gamma_{2,P},\gamma_O \cdot \gamma_{1,O},
    \phi_2,h_P,h_O}}_{\CCc'}
    \end{array}\]
  with $\gamma_{2,P} = \gamma_{P} \cdot \gamma_{1,P} \cdot \gamma_{A_2} \cdot [c_2 \mapsto (\bullet,c_1)]$
  and $A_2 \substF{\gamma_{A_2}} = A_1$.
  From Lemma~\ref{lm:ired-stable-red}, we have that $\theta(\CCc) = \theta(\CCc_2)$.
  
  We prove that $(\hat{P}\subst{x}{A_2})\substF{\delta_2} = P\subst{x}{V\substF{\delta}}$ from the fact that:
  \begin{itemize}
   \item $A_2 \substF{\delta_2} = V\substF{\delta}$ since 
    $A_1\substF{\delta_1} = V$ and $A_1 = A_2 \substF{\gamma_{A_2}}$;
   \item $\hat{P}\substF{\delta} = P$ since $\delta(f) = \delta(g)$, $\delta(f) = \lambda x.P$ 
   and $\gamma(g) = \lambda x.\hat{P}$.
  \end{itemize}
  
  Finally, from $K_{c_1,c}^{\gamma_1} = K$ and $\gamma_2(c_2) = (\bullet,c_1)$,
  we get that $K_{c_2}^{\gamma_2} = K_{c}^{\gamma}[K]$.
  So $\theta(\CCc') = (N,h)$. 

  \item Or $M$ is a value $V$ and $K_{c}^{\gamma}$ an evaluation context larger than $\bullet$.
  Then there exists a continuation name $c_1$ s.t.:
  \begin{itemize}
   \item $c \orderCont{\gamma}^{*} c_1$
   \item $K_{c,c_1}^{\gamma} = \bullet$.
   \item $\gamma(c_1) = K$ with $K$ an evaluation context larger than $\bullet$;
  \end{itemize} 
  From Lemma~\ref{lm:anss-ired}, there exists an abstract value $A_1$,
  a composite configuration $\CCc_1$ and a trace $\tr$ formed by answers s.t.:
  \begin{itemize}
   \item $\CCc \ired{\tr} \CCc_1$;
   \item $\CCc_1$ can be written as $\conf{A_1,c_1,\gamma_P \cdot\gamma_{1,P},
    \gamma_O \cdot\gamma_{1,O},\phi\uplus\dom{\gamma_{1,O}},h_P,h_O}$;
   \item $A_1\substF{\delta_1} = V$, with $\delta_1$
   the idempotent substitution associated to $\gamma_{1,P} \cdot \gamma_{1,O}$;
  \end{itemize}
  Without loss of generality, we suppose the composite configuration 
  $\CCc_1$ is $P$-active. Then we have:
  \[\begin{array}{lll}
   \CCc \ired{\tr} \CCc_1 & \ired{\ansP{c_1}{A_2}} & 
    \overbrace{\conf{K[A_2],c_2,\gamma_{2,P},\gamma_{2,O},\phi_2,h_P,h_O}}^{\CCc_2} \\
    \end{array}\]
  with $\xi(c_1) = c_2$,$\gamma_{2,P} = \gamma_{1,P} \cdot \gamma_{A_2}$ and 
  $A_2 \substF{\gamma_{A_2}} = A_2$.
  
  From Lemma~\ref{lm:ired-stable-red}, we have that $\theta(\CCc) = \theta(\CCc_2)$.
  From $K_{c,c_1}^{\gamma} = \bullet$, we get that $K_{c}^{\gamma} = K_{c_1}^{\gamma_1}$, 
  so that $K_{c_2}^{\gamma_2}[K] = K_{c}^{\gamma}$.
  Since $K$ is larger than $\bullet$, $K[A_2]$ cannot be a value,
  so from Lemma~\ref{lm:decomp-ered} we have that:
  \begin{itemize}
   \item either $(K[A_2],c_2,h_P)$ is reducible, 
   and we conclude using a similar reasoning as in the first case, on $\CCc_2$.
   \item or $K[A_2]$ is a callback, and we conclude  using a similar reasoning as in the second case, 
   on $\CCc_2$.
 \end{itemize}  
 \end{itemize}
\end{proof}

\begin{corollary}
 \label{corol:hosc:ter-iff-ter}
 Taking $\CCc$ a valid composite configurations,
 $\CCc \ltster{}$ iff $\theta(\CCc) \opredter$.
\end{corollary}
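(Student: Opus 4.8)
The plan is to derive the corollary from the functional bisimulation of Lemma~\ref{thm:hosc:red-iff-iredcomp} together with Lemma~\ref{lm:ired-stable-red} and a small analysis of the two endpoints of a terminating run. Two elementary observations drive everything. First, any run $\CCc \iRed{\tr} \CCc'$ unfolds into a finite sequence of single steps, each either an internal step $\ired{\tau}$ or a labelled step $\ired{\act}$ with $\act \neq \tau$, and validity is preserved along such a run, so $\theta$ is defined at every configuration encountered. Second, $\CCc \ired{\tau} \CCc'$ is a special case of $\CCc \rightsquigarrow \CCc'$ (take the empty action trace in the definition of $\rightsquigarrow$), so by Lemma~\ref{thm:hosc:red-iff-iredcomp} an internal step gives $\theta(\CCc) \red \theta(\CCc')$, whereas by Lemma~\ref{lm:ired-stable-red} (and its mirror image for O-actions, which holds by symmetry of the composite LTS) a labelled step leaves $\theta$ unchanged.

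For the forward implication, assume $\CCc \iRed{\tr} \CCc_f$ with $\CCc_f = \conf{\unit,\finalName,\gamma_P,\gamma_O,\xi,\phi,h_P,h_O}$ final. Unfolding the run and applying the two observations step by step yields $\theta(\CCc) \red^\ast \theta(\CCc_f)$. Since $K_{\finalName}^{\gamma} = K_{\finalName,\finalName}^{\gamma} = \bullet$ by definition, $\theta(\CCc_f) = (\unit,(h_P\cdot h_O)\substF{\delta})$, which is a value–heap pair, so $\theta(\CCc)\opredter$.

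For the backward implication, suppose $\theta(\CCc) \red^k (V,h')$ with $V$ a value, and argue by induction on $k$. If $k>0$, write $\theta(\CCc) \red (N,h'') \red^{k-1} (V,h')$; by the second clause of Lemma~\ref{thm:hosc:red-iff-iredcomp} there is $\CCc'$ with $\CCc \rightsquigarrow \CCc'$ and $\theta(\CCc') = (N,h'')$; the induction hypothesis gives $\CCc' \iRed{\tr'} \CCc_f$ for some final $\CCc_f$, and since $\CCc \rightsquigarrow \CCc'$ means $\CCc \ired{\tr'' \cdot \tau} \CCc'$ for some action trace $\tr''$, the $\tau$ is absorbed and $\CCc \iRed{\tr'' \cdot \tr'} \CCc_f$, i.e. $\CCc \ltster{}$. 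The base case $k=0$, i.e. $\theta(\CCc)$ already a value, is where the real work lies, and I claim that then $\CCc$ reaches a final configuration by labelled steps alone. Write $\CCc = \conf{M,c,\gamma_P,\gamma_O,\xi,\dots}$; by Lemma~\ref{lm:decomp-ered} the active term $M$ is reducible, or a callback $K'[fV]$ with $f$ bound in the opposite (passive) environment, or a value. The first case would make $\theta(\CCc)$ reducible via the bisimulation, and the second would put the redex $\delta(f)\,(V\substF{\delta})$ in evaluation position inside $\theta(\CCc)$, since $\delta(f)$ is a ground abstraction; both contradict that $\theta(\CCc)$ is a value, so $M$ is a value. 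If moreover $c=\finalName$ then $\CCc$ is final and we are done; otherwise $c$ is bound in the opposite environment, so $\CCc$ can perform the corresponding answer action $\ansP{c}{A}$ (resp.\ $\ansO{c}{A}$) for any $(A,\gamma')\in\AVal{M}{\sigma}$, reaching $\CCc'$ with the same $\theta$-image — still a value — and continuation $\xi(c)$. Because the transitive closure of $\orderCont{\xi}$ is a strict partial order whose top element is $\finalName$, the $\xi$-chain $c, \xi(c), \xi^2(c),\dots$ is finite and ends at $\finalName$; taking its length as a well-founded measure, iterating the argument along this chain produces a finite sequence of answer actions from $\CCc$ to a final configuration. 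Composing this with the induction on $k$ closes the backward implication.

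I expect the base case of the backward direction to be the main obstacle: one must convert "$\theta$-image is a value" into "the composite configuration itself terminates", which forces a controlled unwinding of the continuation stack recorded in $\gamma$ and $\xi$ through a sequence of answer actions, justified by the strictness of $\orderCont{\xi}$ and by the syntactic decomposition lemmas (Lemmas~\ref{lm:decomp-ered}–\ref{lm:decomp-fred}); everything else is bookkeeping of $\tau$-steps versus labelled steps against $\theta$. A minor but necessary point is the O-symmetric version of Lemma~\ref{lm:ired-stable-red}, which is immediate from the symmetry of the composite LTS but should be stated.
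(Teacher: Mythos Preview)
Your proof is correct and follows the same approach as the paper's: both directions proceed by induction on the relevant run using the bisimulation of Lemma~\ref{thm:hosc:red-iff-iredcomp}, with the backward base case handled by unwinding the continuation chain via answer actions. The only difference is packaging: the paper invokes Lemma~\ref{lm:anss-ired} for that base case (which is precisely the answer-chain unwinding you carry out by hand using well-foundedness of $\orderCont{\xi}$), and organises the forward induction around $\rightsquigarrow$ rather than individual $\tau$/action steps.
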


\begin{proof}
 We write $\CCc$ as $\conf{M,c,\gamma_P,\gamma_O,\xi,\phi,h_P,h_O}$.

 We first prove that if $\CCc \ltster{}$ then $\theta(\CCc) \opredter$.
 From $\CCc \ltster{}$, we get the existence of a sequence of reductions 
 $\CCc \rightsquigarrow^{*} \overbrace{\conf{\unit,\finalName,\gamma_{f,P},\gamma_{f,O},
  \phi_f,h_{f,P},h_{f,O}}}^{\CCc_f}$.
 We reason by induction over the length of this reduction.
 \begin{itemize}
  \item if $\CCc = \CCc_f$, then
  $\theta(\CCc) = (\unit,\_)$ since $M = \unit$ and $c = \finalName$ so that $K_{c}^{\gamma_P \cdot \gamma_O} = \bullet$.
  \item if there exists a composite configuration $\CCc'$
  s.t. $\CCc \rightsquigarrow \CCc' \rightsquigarrow^{*} \CCc_f$,
  then by induction hypothesis $\theta(\CCc') \opredter$,
  and from Theorem~\ref{thm:hosc:red-iff-iredcomp} one has that $\theta(\CCc) \red \theta(\CCc')$,
  so that $\theta(\CCc) \opredter$.
 \end{itemize}
 
  We now prove that if $\theta(\CCc) \opredter$ then $\CCc \ltster{}$.
  From $\theta(\CCc) \opredter$ we get the existence of $(\unit,h)$ s.t.
  $\theta(\CCc) \red^{*} (\unit,h)$.
  We reason by induction over the length of this reduction.
  \begin{itemize}
   \item if the reduction is empty, then $\theta(\CCc) = (\unit,h)$.
    So necessarily $M = \unit$ and $K_{c,\finalName}^{\gamma} = \bullet$.
    Then from Lemma~\ref{lm:anss-ired},
    $\theta(\CCc) \opredter$.
   \item if there exists $(M',h')$ s.t. $\theta(\CCc) \red (M',h') \red^{*} (\unit,h)$,
    then from Theorem~\ref{thm:hosc:red-iff-iredcomp},
    there exists a configuration $\CCc'$ s.t. $\theta(\CCc) \red \theta(\CCc')$
    and $\theta(\CCc') = (M',h')$.
    Then by induction hypothesis, since $\theta(\CCc') \red^{*} (\unit,h)$,
    we get that $\theta(\CCc') \opredter$, so that $\theta(\CCc) \opredter$.
  \end{itemize}
\end{proof}

\begin{corollary}
 \label{corol:hosc:err-iff-err}
 Taking $\CCc$ a valid composite configurations,
 $\CCc \ltstererr{}$ iff $\theta(\CCc) \opredtererr$.
\end{corollary}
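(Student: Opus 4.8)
The plan is to mirror the proof of Corollary~\ref{corol:hosc:ter-iff-ter} essentially verbatim, replacing the role of the ``successful'' final composite configuration $\conf{\unit,\finalName,\gamma_P,\gamma_O,\xi,\phi,h_P,h_O}$ by that of an ``error'' composite configuration $\conf{K[\err\unit],c,\gamma_P,\gamma_O,\xi,\phi,h_P,h_O}$, and invoking the functional bisimulation of Theorem~\ref{thm:hosc:red-iff-iredcomp} in exactly the same two places. The whole argument is two inductions on the length of a reduction sequence, one per direction, with the inductive steps discharged by Theorem~\ref{thm:hosc:red-iff-iredcomp} together with Lemma~\ref{lm:ired-stable-red}; only the base cases need new (minor) bookkeeping.

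For the left-to-right implication, assume $\CCc \ltstererr{\tr}$. Reorganizing the $\iRed{}$-transitions into $\rightsquigarrow$-steps exactly as in Corollary~\ref{corol:hosc:ter-iff-ter}, we get $\CCc \rightsquigarrow^{*} \CCc_f$ with $\CCc_f = \conf{K[\err\unit],c,\gamma_P,\gamma_O,\xi,\phi,h_P,h_O}$, and induct on the length of this sequence. In the base case $\CCc = \CCc_f$ we have $\theta(\CCc) = ((K_c^{\gamma}[K[\err\unit]])\substF{\delta},(h_P\cdot h_O)\substF{\delta})$ with $\gamma = \gamma_P\cdot\gamma_O$; since $\errn\notin\dom{\gamma_P}\cup\dom{\gamma_O}$, the name $\err$ does not occur in $\dom{\delta}$, so $(K_c^{\gamma}[K[\err\unit]])\substF{\delta}$ is some evaluation context applied to $\err\unit$, i.e. of the form $K'[\err()]$, hence $\theta(\CCc)\opredtererr$ in zero steps. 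In the inductive step $\CCc \rightsquigarrow \CCc' \rightsquigarrow^{*} \CCc_f$, the induction hypothesis gives $\theta(\CCc')\opredtererr$ and Theorem~\ref{thm:hosc:red-iff-iredcomp} gives $\theta(\CCc)\red\theta(\CCc')$, so $\theta(\CCc)\opredtererr$.

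For the right-to-left implication, assume $\theta(\CCc)\opredtererr$, say $\theta(\CCc)\red^{*}(K'[\err()],h')$, and induct on the length of this reduction. If it is empty, then $\theta(\CCc)=(K'[\err()],h')$, and one analyses $\CCc$ via Lemmas~\ref{lm:decomp-ered} and~\ref{lm:decomp-fred} just as the base case of Corollary~\ref{corol:hosc:ter-iff-ter} analysed $\theta(\CCc)=(\unit,h)$: $M$ cannot be internally reducible (otherwise $\theta(\CCc)$ would not be stuck, by the forward direction of Theorem~\ref{thm:hosc:red-iff-iredcomp}); if $M$ is a value, Lemma~\ref{lm:anss-ired} lets us unwind the continuation stack encoded in $\gamma$ until the enclosing stored context surfaces the redex; and since $\errn\notin\dom{\gamma_P}\cup\dom{\gamma_O}$, the callback $\err\unit$ cannot be consumed by a $(PQ)$/$(OQ)$ transition, so the reachable configuration literally has the shape $\conf{K[\err\unit],c,\dots}$, whence $\CCc\ltstererr{}$. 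In the inductive step $\theta(\CCc)\red(N,h'')\red^{*}(K'[\err()],h')$: by Theorem~\ref{thm:hosc:red-iff-iredcomp} there is $\CCc'$ with $\CCc\rightsquigarrow\CCc'$ and $\theta(\CCc')=(N,h'')$; the induction hypothesis yields $\CCc'\ltstererr{}$, hence $\CCc\ltstererr{}$.

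The only genuinely new work is the backward base case: one must check that a stuck error redex $K'[\err()]$ at the top level of $\theta(\CCc)$ really comes from a configuration-level occurrence of $\err\unit$, rather than being an artefact of the context reassembly $K_c^{\gamma}$ or of a heap value. The point is the invariant used throughout these lemmas — names in the codomain of $\delta$ are acyclic and $\errn$ is never in $\dom{\gamma_P}\cup\dom{\gamma_O}$ — so $\err$ is rigid under $\substF{\delta}$ and the redex can only originate in the term component $M$, or in a stored evaluation context $\gamma(c_i)$, which is exactly what Lemma~\ref{lm:anss-ired} exposes. I expect this disentangling (plus the routine matching of $\iRed{}$ with $\rightsquigarrow^{*}$) to be the main obstacle; everything else is a transcription of the termination case, and indeed one could even package both corollaries by a single induction parametric in the observed final shape.
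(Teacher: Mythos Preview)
Your proposal is correct and follows precisely the approach the paper intends: the paper states Corollary~\ref{corol:hosc:err-iff-err} immediately after Corollary~\ref{corol:hosc:ter-iff-ter} with no proof, relying on the evident analogy, and your write-up fleshes out exactly that analogy using Theorem~\ref{thm:hosc:red-iff-iredcomp} and Lemma~\ref{lm:ired-stable-red} in the same two places. Your identification of the backward base case as the only spot needing new bookkeeping (namely that $\errn$ is not in $\dom{\gamma_P}\cup\dom{\gamma_O}$, so the stuck redex must surface in the term or a stored context, which Lemma~\ref{lm:anss-ired} exposes) is accurate and matches how the termination base case is discharged.
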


Finally, we can prove Lemma~\ref{lem:cor}

\begin{lemma}[Correctness]
Let $\seq{\Gamma}{M:\tau}$ be a cr-free $\HOSC$ term,
let $\Sigma,h,K,\gamma$ be as above,
$(\vec{A_i},\vec{\gamma_i})\in \AVal{\gamma}{\Gamma}$,
and $c:\tau$ ($c\not\in\tern$).
Then
\begin{itemize}
\item 
$(K[M\substF{\gamma}],h)\opredtererr$ iff there exist $t,c'$ such that 
$t\in \TrR{\HOSC}{\cconf{M}{\rho_{\vec{A_i}},c}}$ and $t^\bot\, 
\questP{\errn}{()}{c'} \in \TrR{\HOSC}{\cconf{h, K,\gamma}{\vec{\gamma_i}, c} }$.
\item 
$(K[M\substF{\gamma}],h)\opredter$ iff there exist $t,A,\tau$ such that 
$t\in \TrR{\HOSC}{\cconf{M}{\rho_{\vec{A_i}},c}}$ and $t^\bot\, 
\ansP{\tern_{\tau'}}{A} \in \TrR{\HOSC}{\cconf{h, K,\gamma}{\vec{\gamma_i}, c}}$.
\end{itemize}
Moreover, $t$ must satisfy $\nu(t)\cap (\circ\cup\{\errn\})=\emptyset$.
\end{lemma}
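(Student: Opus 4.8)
The plan is to factor the statement through the \emph{composite} LTS on merged configurations, which is developed in Section~\ref{sec:proof}. First I would observe that the program configuration $\cconf{M}{\rho_{\vec{A_i}},c}$ (active, of the form $\conf{M\{\rho\},c,\emptyset,\emptyset,\nu(\rho)\cup\{c\},\emptyset}$) and the context configuration $\cconf{h,K,\gamma}{\vec{\gamma_i},c}$ (passive) are \emph{compatible} in the sense of Definition~\ref{def:hosc:compat-conf}: their $\phi$-components differ exactly by $\{\finalName,\errn\}=\{\circ,\errn\}$ (more precisely by $\circ\uplus\{\errn\}$), their $\gamma$-domains are disjoint, and the merged continuation order $\orderCont{\xi}$ is a strict partial order with top element $\circ_{\tau'}$. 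Hence $\mergeConf{\cconf{M}{\rho_{\vec{A_i}},c}}{\cconf{h,K,\gamma}{\vec{\gamma_i},c}}$ is a valid composite configuration, and — by inspection of the construction of $\theta$ — we have $\theta\big(\mergeConf{\cconf{M}{\rho_{\vec{A_i}},c}}{\cconf{h,K,\gamma}{\vec{\gamma_i},c}}\big) = (K[M\{\gamma\}],h)$: the continuation environment reconstructs $K$ from $c\mapsto K_\circ$, the idempotent substitution $\delta$ substitutes back the abstract values in $\vec{\gamma_i}$ to recover $\gamma$, and turning $\errn$ back into $\err$ and $\contt{\sigma}{(K',\circ)}$ back into $\contt{\sigma}{K'}$ recovers the original $h,K,\gamma$ syntax.

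Next I would combine the two groups of lemmas already established. From Corollary~\ref{corol:hosc:err-iff-err} and Corollary~\ref{corol:hosc:ter-iff-ter}, applied to the merged configuration $\CCc = \mergeConf{\cconf{M}{\rho_{\vec{A_i}},c}}{\cconf{h,K,\gamma}{\vec{\gamma_i},c}}$, we get
\[
(K[M\{\gamma\}],h)\opredtererr \iff \CCc\ltstererr{},\qquad
(K[M\{\gamma\}],h)\opredter \iff \CCc\ltster{}.
\]
From Lemma~\ref{lm:hosc:ired-iff-iredcomp} (with $\CC_P = \cconf{M}{\rho_{\vec{A_i}},c}$ active and $\CC_O = \cconf{h,K,\gamma}{\vec{\gamma_i},c}$ passive), $\CCc\ltstererr{t}$ iff $(\CC_P\mid\CC_O)\semtererr{t}$, i.e. iff $t\in\TrR{\HOSC}{\CC_P}$ and $t^\bot\,\questP{\errn}{()}{c'}\in\TrR{\HOSC}{\CC_O}$ for some $c'$; similarly $\CCc\ltster{t}$ iff $t\in\TrR{\HOSC}{\CC_P}$ and $t^\bot\,\ansP{\tern_{\tau'}}{A}\in\TrR{\HOSC}{\CC_O}$ for some $A$ (here $\tau'$ is the result type of $K$, so $\tern_{\tau'}=\finalName$). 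Chaining these equivalences yields exactly the two bi-implications in the statement. The polarity bookkeeping — that the witness from the active side is $t$ and the one threaded through the passive side is $t^\bot$ possibly extended by the final $\errn$- or $\tern$-action — is precisely what the $\mergeConf{-}{-}$ operation and the definition of $\semobs{\yy}{-}$ package up, so nothing extra is needed there.

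For the ``moreover'' clause, I would argue that $\nu(t)\cap(\circ\cup\{\errn\})=\emptyset$ by tracking which names can ever appear in actions emitted by the active side. In $\cconf{M}{\rho_{\vec{A_i}},c}$ the name set is $\nu(\rho)\cup\{c\}$, and by construction $\nu(\rho)$ contains no $\errn$ (this is built into the definition of $\AVal{\gamma}{\Gamma}$: the $\nu(A_i)$ are ``without $\errn$'') and $c\notin\circ$ by hypothesis; the remaining names introduced along $t$ are freshly chosen and can be taken disjoint from the fixed finite set $\circ\cup\{\errn\}$, using the name-invariance of $\TrR{\HOSC}{-}$ (Remark~\ref{rem:invar} / Lemma~\ref{lem:inv}). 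So any $t\in\TrR{\HOSC}{\cconf{M}{\rho_{\vec{A_i}},c}}$ can be renamed to avoid $\circ\cup\{\errn\}$ while keeping $t^\bot$ composable with the context configuration.

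The main obstacle I anticipate is not the high-level chaining but verifying the first step carefully: checking that $\theta$ of the merged configuration is literally $(K[M\{\gamma\}],h)$, which requires unwinding the definitions of $K_c^\gamma$, the idempotent substitution $\delta$, and the $(-)_\circ$ translations simultaneously, and confirming that the compatibility/validity side-conditions of Definition~\ref{def:hosc:compat-conf} really do hold for these particular $\CC_P,\CC_O$ (in particular that $\phi_O = \phi_P\uplus\circ\uplus\{\errn\}$ and that $\xi_P\cdot\xi_O$ has $\circ_{\tau'}$ as unique $\orderCont{}$-maximal element). Once that identification is in place, Corollaries~\ref{corol:hosc:ter-iff-ter} and~\ref{corol:hosc:err-iff-err} together with Lemma~\ref{lm:hosc:ired-iff-iredcomp} do the rest mechanically.
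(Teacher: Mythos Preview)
Your proposal is correct and follows essentially the same route as the paper: merge the program and context configurations, identify $\theta$ of the merge with $(K[M\{\gamma\}],h)$, then chain Corollaries~\ref{corol:hosc:ter-iff-ter}/\ref{corol:hosc:err-iff-err} with Lemma~\ref{lm:hosc:ired-iff-iredcomp}. You are in fact more careful than the paper's short proof on two points it leaves implicit: the compatibility/validity check for the merged configuration, and the ``moreover'' clause about $\nu(t)\cap(\circ\cup\{\errn\})=\emptyset$.
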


\begin{proof}
 Let $\yy \in \{\ter,\err\}$.
Note that $(K[M\substF{\gamma}],h)\opredobs{\yy}$ iff $\theta(\mergeConf{\cconf{M}{\rho_{\vec{A_i}},c}}
   {\cconf{h, K,\gamma}{\vec{\gamma_i}}}) \opredobs{\yy}$.
 From Corollary~\ref{corol:hosc:ter-iff-ter} and \ref{corol:hosc:err-iff-err}, 
 this is equivalent to the existence of  a trace $\tr$ such that
 $(\mergeConf{\cconf{M}{\rho_{\vec{A_i}},c}}
   {\cconf{h, K,\gamma}{\vec{\gamma_i}}}) \ltsobs{\yy}{\tr}$.
By Lemma~\ref{lm:hosc:ired-iff-iredcomp}, this is the same as
 $(\cconf{M}{\rho_{\vec{A_i}},c}|\cconf{h, K,\gamma}{\vec{\gamma_i}})
 \semobs{\yy}{\tr}$, which implies the Lemma. 
\end{proof}

\cutout{
\amin{The initial move and the corresponding configuration $\conf{\seq{\Gamma}{M_i:\sigma}}$ seem to be causing more problems than they are solving. :-)
Instead, for any $\rho\in\sem{\Gamma}$, one could
define  active configurations $\cconf{M}{\rho}$ and re-state trace inclusion as $\TrR{\HOSC}{\cconf{M_1}{\rho}} \subseteq 
  \TrR{\HOSC}{\cconf{M_2}{\rho}}$ for any $\rho\in\sem{\Gamma}$.}
\begin{theorem}[Soundness]
 \label{thm:hosc:soundness}
 Taking two $\HOSC$-terms $M_1,M_2$ s.t. $\seq{\Gamma}{M_1,M_2:\sigma}$, 
 then $\TrR{\HOSC}{\seq{\Gamma}{M_1:\sigma}} \subseteq 
  \TrR{\HOSC}{\seq{\Gamma}{M_2:\sigma}}$ 
 implies $\Gamma \vdash M_1 \ciuapprox^\HOSC M_2:\sigma$.
\end{theorem}

\begin{proof}
 We take $K$ an $\HOSC$ evaluation context, $h$ a heap, $\delta$ a substitution
 and $\Sigma$ a location environment s.t. $h:\Sigma$, $\cseq{\Sigma;}{K}{\tau}$
 and s.t. $\vdash \delta : \Gamma$.
 Suppose that $(K[M_1\substF{\delta}],h) \opredter$.
 
 We write $\CC_O$ for $\HOSC$ the configuration 
 $\conf{\delta \cdot [c \mapsto (K,\finalName)],\Gamma \cdot [c \mapsto\conttype{\sigma}, 
  \finalName \mapsto \conttype{\Unit}],h}$, and $\CC_{M_i}$ 
 for the configuration $\conf{\seq{\Gamma}{M_i:\sigma}}$, for $i \in \{1,2\}$.

 \amin{I think it's not OK to use $\conf{\seq{\Gamma}{M_i:\sigma}}$ here:
 it's neither passive nor active, and thus not covered by the earlier definitions (validity/compatibility).
Also, the values provided by $\delta$ need to be pattern-matched and decomposed, 
before they enter the LTS. I tried to rewrite the relevant paragraph below.}
 
Consider  $(\vec{A},\delta',\phi) \in \sem{\Gamma}$ and $c:\sigma$.
Given $(x:\tau_x)\in\Gamma$, let $A_x=\delta'(x)$, which is the abstract value matching $x$.
We would like to match $\delta(x)$ against the same pattern $A_x$.
To that end, let $\gamma_x$ be such that $(A_x,\gamma_x)\in \AVal{\delta(x)}{\tau_x}$.
Now let $\overline{\gamma}=\bigcup_x \gamma_x$ and $\overline{\phi}=\phi \uplus \{c\}\uplus \{\finalName\}$.
Given $i\in\{1,2\}$, we let $\CC_{M_i} =   \conf{M_i\substF{\delta'},c,\emptymap,\overline{\phi}, \emptymap}$,
and  $\CC_O= \conf{\overline{\gamma} \cdot [c \mapsto (K,\finalName)],\overline{\phi}, h}$.
Observe that $\CC_{M_i}$ and $\CC_O$ are compatible.

 $(K[M_1\substF{\delta}],h) \opredter$ means that
 $\theta(\mergeConf{\CC_{M_1}}{\CC_O}) \opredter$
 So from Corollary~\ref{corol:hosc:down-iff-down}, we get the existence of a trace 
 $\tr$ such that  $(\mergeConf{\CC_{M_1}}{\CC_O}) \ltster{\tr}$.
 
 Thus, from Lemma~\ref{lm:hosc:ired-iff-iredcomp}, one has
 $(\CC_{M_1}|\CC_O) \semter{\tr}$.
 
 From the hypothesis of the theorem, one has $\tr \in \TrR{\HOSC}{\CC_{M_2}}$.
 So $(\CC_{M_2}|\CC_O) \semter{\tr}$.
 Thus, from Lemma~\ref{lm:hosc:ired-iff-iredcomp}, one has 
 $(\mergeConf{\CC_{M_2}}{\CC_O}) \ltster{\tr}$.
 
 So, from Lemma~\ref{thm:hosc:red-iff-iredcomp}, one get that
 $\theta(\mergeConf{\CC_{M_2}}{\CC_O}) \opredter$,
 i.e. $(K[M_2\substF{\delta}],h) \opredter$.
\end{proof}

\begin{theorem}[Completeness]
\label{thm:hosc:completeness}
 Taking two $\HOSC$-terms $M_1,M_2$ s.t. $\seq{\Gamma}{M_1,M_2:\sigma}$, then 
 $\Gamma \vdash M_1 \ciuapprox^\HOSC M_2:\sigma$ implies 
 $\TrR{\HOSC}{\seq{\Gamma}{M_1:\sigma}} \subseteq \TrR{\HOSC}{\seq{\Gamma}{M_2:\sigma}}$.
\end{theorem}

\begin{proof}
Let the configurations $\CC_{M_i}= \conf{M_i\substF{\delta'},c,\emptymap,
 \overline{\phi}, \emptymap}$ ($i\in\{1,2\}$) be defined as above.
Recall from Remark~\ref{rem:odd} that it suffices to prove trace inclusion for odd-length traces. 
Suppose $\tr \in \TrR{\HOSC}{\CC_{M_1}}$ is of odd length.
 From Theorem~\ref{thm:def} (Definability), there exists a (passive) configuration $\CC_O$
 such that even-length traces in $\TrR{\HOSC}{\CC_O}$ are exactly the prefixes of 
 $\tr^{\bot}\, \ansP{\finalName}{\unit}$.
 \amin{I need to state the definability result so that it's clear $\CC_O$ 
 will be compatible with $\CC_{M_i}$.}
 So $(\CC_{M_1}|\CC_O) \semter{\tr}$.
 
 From Lemma~\ref{lm:hosc:ired-iff-iredcomp}, we get that 
 $(\mergeConf{\CC_{M_1}}{\CC_O}) \ltster{\tr}$.
 Since $\Gamma \vdash M_1 \ciuapprox^{\HOSC} M_2:\sigma$,
 we get that $\theta(\mergeConf{\CC_{M_2}}{\CC_O}) \opredter$.
 So from Corollary~\ref{corol:hosc:down-iff-down},
 we get that $(\mergeConf{\CC_{M_2}}{\CC_O}) \opredter$.
 Then Lemma~\ref{lm:hosc:active-passive-length} implies that 
 $(\mergeConf{\CC_{M_2}}{\CC_O}) \ltster{\tr'}$ for some odd-length $\tr'$.
 By Lemma~\ref{lm:hosc:ired-iff-iredcomp}, 
 we get $(\CC_{M_2}|\CC_O) \semter{\tr'}$, i.e. 
 $\tr'\in \TrR{\HOSC}{\seq{\Gamma}{M_2:\sigma}}$
 and  ${\tr'}^{\bot} \ansP{\finalName}{\unit}\in \TrR{\HOSC}{\CC_O}$.
 Since all even-length traces in $\TrR{\HOSC}{\CC_O}$ are prefixes of 
 $\tr^{\bot}\,\ansP{\finalName}{\unit}$,
 it follows that  $\tr'=\tr$, i.e. $\tr\in \TrR{\HOSC}{\seq{\Gamma}{M_2:\sigma}}$, 
 as required.
\end{proof}}
 

\section{Additional material for Section~\ref{sec:goschosc} (GOSC[HOSC])}

\subsection{Proof of Lemma~\ref{lem:pvis} (visibility)}
We write $\CC\ired{t}\CC'$ to say that there exists a sequence of transitions from $\CC$ to $\CC'$ such that
the collected labels, including $\tau$ transitions, give a trace $t$. The proof is based on an auxiliary lemma (Lemma~\ref{lem:visaux}),
which generalizes P-visibility to configurations, enabling an inductive proof.

\spnewtheorem*{lemma*}{Lemma}{\bfseries}{\rmfamily}
\begin{lemma*}[Original Statement of Lemma~\ref{lem:pvis}]
Let $\CC_O=\cconf{h, K,\gamma}{\vec{\gamma_i}, c}$, where $h,K,\gamma$ are from $\GOSC$, 
and $(\vec{A_i},\vec{\gamma_i})\in \AVal{\gamma}{\Gamma}$.
All traces in  $\TrReven{\HOSC}{\CC_O}$ are P-visible.
\end{lemma*}


\begin{proof}
Suppose $\CC_O\iRed{a_1\cdots a_{2i+1}} \CC$ and $\CC\ired{\tau^\ast} \CC'\ired{a_{2i+2}} \CC''$.
By Lemma~\ref{lem:visaux}, $\CC'=\conf{M',c',\cdots}$ with $\nu (M',c')\subseteq\pav{a_1\cdots a_{2i+1}}$.
Because the O-names in $a_{2i+2}$ come from $\nu (M',c')$, P-visibility follows.
\end{proof}

\begin{lemma}\label{lem:visaux}
Suppose $\CC_O\ired{a_1\cdots a_k} \CC$.
\begin{enumerate}
\item If $\CC=\conf{\gamma,\xi,\phi,h}$ then, for any $n\in\dom{\gamma}$, if $n$ was introduced in $a_{2i}$ ($0\le i\le k/2$) then 
$\nu(\gamma(n))\subseteq \pav{a_1\cdots a_{2i-1}}$ and if $n\in\CNames$ then $\xi(n)\in\pav{a_1\cdots a_{2i-1}}$
(introduced in $a_0$ is taken to mean $\errn,\tern$ and $\pav{a_1\cdots a_0}$ stands for $\{\errn,\tern\}$).
\item If $\CC=\conf{M,c,\gamma,\xi,\phi,h}$ then $\nu(M,c)\subseteq \pav{a_1\cdots a_k}$ and all of the conditions listed above hold.
\end{enumerate}
\end{lemma}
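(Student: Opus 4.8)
The statement to prove is Lemma~\ref{lem:visaux}, the auxiliary result that generalizes P-visibility from traces to reachable configurations. I would prove it by induction on the length $k$ of the trace $a_1\cdots a_k$, simultaneously establishing both items (the invariant on $\gamma,\xi$ and, for active configurations, the bound $\nu(M,c)\subseteq\pav{a_1\cdots a_k}$). The base case $k=0$ is the initial context configuration $\CC_O=\cconf{h,K,\gamma}{\vec{\gamma_i},c}$: here $\dom{\gamma}$ consists of $c$ together with the names $\bigcup_i\nu(A_i)$, all of which were ``introduced in $a_0$'' in the bookkeeping sense; the healthiness conditions on context configurations give $\nu(\img{h}),\nu(\img{\gamma})\subseteq\circ\cup\{\errn\}\cup\bigcup_i\nu(A_i)$, and $\xi=\{c\mapsto\tern_{\tau'}\}$, so everything lands in the appropriate P-visible set by definition of $\pav{\cdot}$ on the empty (or length-zero) trace.

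\textbf{Inductive step.} Assume the claim for traces of length $k$ and consider $\CC_O\ired{a_1\cdots a_{k+1}}\CC''$, factoring as $\CC_O\ired{a_1\cdots a_k}\CC\ired{\tau^\ast}\CC'\ired{a_{k+1}}\CC''$ (absorbing internal $\tau$-steps, which do not change $\gamma,\xi,\phi$ and, by Lemma~\ref{lem:goscinv}, can only shrink $\nu(M)\cup\{c\}$, so they preserve the invariant). The heart of the argument is the case analysis on the kind of action $a_{k+1}$, and it splits according to the parity of $k$ (i.e.\ whether $a_{k+1}$ is a P-action or an O-action, since traces from $\CC_O$ start with a P-action as $\CC_O$ is passive -- wait, no: $\CC_O$ is a context configuration, hence passive, so the first action is an O-action; I need to track parities carefully against the definition of $\pav{\cdot}$, which is stated for odd-length traces starting with an O-action). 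For a P-action $\ansP{c'}{A}$ or $\questP{f}{A}{c'}$ fired from an active $\CC'=\conf{M',c',\gamma',\xi',\ldots}$: by IH $\nu(M',c')\subseteq\pav{a_1\cdots a_k}$, so in particular $c'\in\pav{a_1\cdots a_k}$ (for an answer) or $f\in\pav{a_1\cdots a_k}$ (for a question, since $M'=K[fV]$ with $\nu(M')\supseteq\{f\}$); the names freshly introduced in $A$ and $c'$ are added to $\gamma,\xi$, and one checks against the inductive clauses of $\pav{\cdot}$ that $\pav{a_1\cdots a_{k+1}}$ extends $\pav{a_1\cdots a_{k-1}}$ exactly by these, which discharges item~1 for the new names; the values stored for them, $\gamma'(c')=K'$ where $M'=K'[\ldots]$ and similarly for the continuation passed in a question, have names already inside $\nu(M')\subseteq\pav{\cdot}$. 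For an O-action $\ansO{c}{A}$ or $\questO{f}{A}{c}$ fired from a passive $\CC'=\conf{\gamma',\xi',\ldots}$: the new active term is $M''=\gamma'(c)[A]$ (for $\ansO{c}{A}$, with new continuation $\xi'(c)$) or $M''=\gamma'(f)\,A$ (for $\questO{f}{A}{c}$, with $c$ fresh); here I use IH to get $\nu(\gamma'(c))$ (resp.\ $\nu(\gamma'(f))$) and $\xi'(c)$ inside the appropriate $\pav{\cdot}$ of an earlier prefix, and the abstract value $A$ carries only O-introduced names which are by fiat in every $\pav{\cdot}$ from their point of introduction, together with the fresh $c$ which is adjoined; matching this against the inductive clauses of $\pav{\cdot}$ for the longer trace gives $\nu(M'',c'')\subseteq\pav{a_1\cdots a_{k+1}}$, which is item~2.

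\textbf{Main obstacle.} The delicate point is the exact alignment between the recursion in the definition of $\pav{\cdot}$ -- which follows the \emph{justification chain}, skipping over intervening actions -- and the way names actually flow through the configuration components $\gamma,\xi$ during transitions. Concretely: when an O-action $\questO{f'}{A'}{c'}$ justified by a P-action (say $\questP{f''}{A''}{c''}$ several steps earlier) reactivates the term stored under $c''$ or calls the value that produced $f''$, I must argue that $\nu$ of that stored evaluation context / value is contained not just in some $\pav{\cdot}$ but in the \emph{right} one so that the defining equation $\pav{t\,\questP{f''}{A''}{c''}\,t'\,\questO{f'}{A'}{c'}}=\pav{t}\cup\nu(A')\cup\{c'\}$ is honoured. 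This requires the IH to be stated with enough precision about \emph{which} prefix bounds the names of each stored object (hence the clause ``if $n$ was introduced in $a_{2i}$ then $\nu(\gamma(n))\subseteq\pav{a_1\cdots a_{2i-1}}$''), and the induction must carry this refined bookkeeping faithfully -- it is essentially a bisimulation-flavoured invariant between the operational state and the combinatorial view. I expect the bulk of the write-up to be the tedious but routine verification that each of the four transition rules updates $\gamma,\xi,\phi$ in a way compatible with the corresponding clause of $\pav{\cdot}$; no single step is hard, but getting the indices and the ``introduced in $a_{2i}$'' accounting exactly right is where care is needed, and where Lemma~\ref{lem:goscinv} (no new names from internal $\GOSC$ reductions) is the essential ingredient that makes the active-configuration bound stable under $\tau$-steps.
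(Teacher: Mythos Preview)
Your proposal is correct and follows essentially the same approach as the paper: induction on the length of the derivation, with Lemma~\ref{lem:goscinv} handling the $\tau$-steps and a four-way case split on the action type, exploiting exactly the refined invariant ``$\nu(\gamma(n))\subseteq\pav{a_1\cdots a_{2i-1}}$ where $n$ was introduced in $a_{2i}$'' that you identify as the crux. One point to tidy in the write-up: in the P-action case you speak of ``$\pav{a_1\cdots a_{k+1}}$ extends $\pav{a_1\cdots a_{k-1}}$'', but $\pav{\cdot}$ is only defined for odd-length prefixes, so what you actually need (and what you then correctly argue) is that the values stored for the freshly P-introduced names have support inside $\pav{a_1\cdots a_k}$, which follows directly from the IH bound $\nu(M',c')\subseteq\pav{a_1\cdots a_k}$ on the active configuration before the P-action.
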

\begin{proof}
By induction on the number of transitions between $\CC_O$ and $\CC$, including $\tau$-transitions.

The base case is  $\CC_O=\CC$. The Lemma then holds because 
$\nu(\gamma)\subseteq \{\errn\}$, $\xi(c)=\tern$, 
and $\pav{a_1\cdots a_0}=\{\errn,\tern\}$.

Suppose $\CC_O\ired{a_1\cdots a_k}\CC'$ and $\CC_O\ired{t} \CC\ired{x} \CC'$, where $t$ is a trace and $x$ is an action or $x=\tau$.
\begin{itemize}
\item If $x=\tau$ then $\gamma,\xi$ do not change during the transition
 and the reduction does not generate new names by Lemma~\ref{lem:goscinv}.
Hence, the Lemma follows from IH.
\item Suppose $x$ is an O-action, i.e. $x=a_k$.
Then $\CC'=\conf{M',c',\gamma',\xi',\phi',h'}$ and $\CC=\conf{\gamma',\xi',\phi'\setminus A, h'}$.
By IH for $\CC$, all the conditions for $\gamma',\xi'$ hold, so it remains to check $\nu(M',c')$.
\begin{itemize}
\item If $x=\ansO{c''}{A''}$ then $\nu(M',c')=\nu(\gamma'(c'')[A''],\xi'(c''))$. By IH for $\CC,c''$, 
assuming $c''$ was introduced in $a_{2i}$, we get
$\nu(M',c')\subseteq \pav{a_1\cdots a_{2i-1}}\cup \nu(A'') = \pav{a_1\cdots a_k}$. 
\item  If $x=\questO{f}{A''}{c''}$ then $\nu(M',c')=\nu(\gamma'(f)[A''],c'')$. By IH for $\CC,f$, 
assuming $f$ was introduced in $a_{2i}$, we get
$\nu(M',c')\subseteq \pav{a_1\cdots a_{2i-1}}\cup \nu(A'') \cup \{c''\} = \pav{a_1\cdots a_k}$. 
\end{itemize}
\item Suppose $x$ is a P-action, i.e. $x=a_k$.
Then $\CC'=\conf{\gamma',\xi',\phi',h'}$.
\begin{itemize}
\item If $x=\ansP{c''}{A''}$ then $\CC=\conf{V,c'',\gamma'\setminus \nu(A''), \xi',\phi\setminus \nu(A''), h'}$.
By IH, $\gamma'\setminus \nu(A'')$ and $\xi'$ satisfy the Lemma. It suffices to check $\gamma'(n)$ for $n\in\nu(A'')$.
Observe that then $\nu(\gamma'(n))\subseteq \nu(V,c'')$ and, by IH for $\CC$, $\nu(V,c'')\subseteq \pav{a_1\cdots a_{k-1}}$, as required.
\item If $x=\questP{f}{A''}{c''}$ then $\CC=\conf{K[fV], c''',\gamma'\setminus X, \xi'\setminus\{c''\},\phi\setminus X, h'}$,
where $X=\nu(A'')\cup\{c''\}$.
By IH, $\gamma'\setminus X$ and $\xi'\setminus\{c''\}$ satisfy the Lemma. 
It suffices to check $\gamma'(n)$ for $n\in\nu(A'')$, $\gamma'(c'')$ and $\xi'(c'')$.
Observe that then 
$\nu(\gamma'(n))\cup\nu(\gamma'(c''))\cup\{\xi'(c'')\}\subseteq 
\nu(K[fV],c''')$ and, by IH for $\CC$, $\nu(K[fV],c''')\subseteq \pav{a_1\cdots a_{k-1}}$, as required.
\end{itemize}
\end{itemize}
\end{proof}

\subsection{Proof of Theorem~\ref{goscsound}}
    \begin{proof}
Suppose $\trsem{\GOSC}{\seq{\Gamma}{M_1}}\subseteq \trsem{\GOSC}{\seq{\Gamma}{M_2}}$.
Consider  $\Sigma, h,K,\gamma$ (as in the definition of $\ciupre{\GOSC}{err}$) such that $(K[M_1\substF{\gamma}],h)\opredtererr$.
In particular, $h,K,\gamma$ consist of $\GOSC$ syntax.
Suppose $(\vec{A_i},\vec{\gamma_i})\in\AVal{\gamma}{\Gamma}$ and $c:\tau$ ($c\not\in\circ$).
By Lemma~\ref{lem:cor} (left-to-right), there exist $t,c'$ such that
$t\in \TrR{\HOSC}{\cconf{M_1}{\rho_{\vec{A_i}},c}}$ and $t^\bot\, \questP{\errn}{()}{c'} \in \TrR{\HOSC}{\cconf{h, K,\gamma}{\vec{\gamma_i}, c} }$.
By  Lemma~\ref{lem:pvis}, $t^\bot\, \questP{\errn}{()}{c'}$ is P-visible. Thus, $t$ is O-visible and, by Lemma~\ref{lem:ovis} (right-to-left),  
$t\in\trsem{\GOSC}{\cconf{M_1}{\rho_{\vec{A_i}},c}}$.
From $\trsem{\GOSC}{\seq{\Gamma}{M_1}}\subseteq \trsem{\GOSC}{\seq{\Gamma}{M_2}}$, we get $t\in \TrR{\GOSC}{\cconf{M_2}{\rho_{\vec{A_i}},c}}$.
By Lemma~\ref{lem:ovis} (left-to-right), $t\in \TrR{\HOSC}{\cconf{M_2}{\rho_{\vec{A_i}},c}}$.
Because $t\in \TrR{\HOSC}{\cconf{M_2}{\rho_{\vec{A_i}},c}}$ and 
$t^\bot\, \questP{\errn}{()}{c'} \in \TrR{\HOSC}{\cconf{h, K,\gamma}{\vec{\gamma_i}, c} }$, by Lemma~\ref{lem:cor} (right-to-left), we
can conclude $(K[M_2\substF{\gamma}],h)\opredtererr$. Thus, $\seq{\Gamma}{M_1 \ciupre{\GOSC}{err} M_2}$.
  \end{proof}

\subsection{Proof of Lemma~\ref{goscdef}}

Lemma~\ref{hoscdef} follows from the lemma given below for $i=0$. 
Consider $h'=h_0$, $K'=\gamma_0(c)$, $\gamma'=\gamma_0\setminus c$.
We have $\nu(\img{\gamma_0},\img{h_0})\subseteq  \circ\uplus \{\errn\}$.
As names $\circ_\sigma$ can only
occur inside terms of the form $\cont(K',\circ_\sigma)$,
we can conclude that $(h',K',\gamma')= (h_\circ,K_\circ,\gamma_\circ)$, 
where $h,K,\gamma$ are from $\GOSC$.

\begin{lemma}\label{hoscaux}
Suppose $\phi\uplus\{\errn\}\subseteq\FNames$, $c\in \CNames$ and
$t=o_1 p_1\cdots o_n p_n$ is a  P-visible
$(\circ\uplus \{\errn\},\phi\uplus\{c\})$-trace starting with an O-action.
Given $0\le i\le n$, let $t_i = o_{i+1}p_{i+1} \cdots o_n p_n$.
There exist  passive configurations $\CC_i$ such that  $\Treven{\CC_i}$ consists 
of even-length prefixes of $o_{i+1}p_{i+1} \cdots o_n p_n$
(along with their renamings via permutations on $\Names$ that fix $\phi_i$).
Moreover, $\CC_i=\conf{\gamma_i, \xi_i, \phi_i, h_i}$ ($0\le i\le n$), where
\begin{itemize}
\item $\dom{\gamma_i}$ consists of $\phi\cup\{c\}$ and all names introduced by P in $o_1 p_1\cdots o_i p_i$;
\item $\img{\gamma_i}$ contains GOSC syntax;
\item $\nu({\gamma_i}(x)) \subseteq\pav{o_1 p_1\cdots o_i}$ if $x$ has been introduced in $p_i$ ($\phi\uplus\{c\}$ are deemed to
have been introduced in $p_0$ and we assume $\pav{o_1\cdots o_0} = \circ\uplus \{\errn\}$);
\item  for all $d\in\dom{\gamma_i}\cap\CNames$,
 if $d:\sigma_d$ and $d$ was introduced in $p_j$ then $\seq{}{\gamma_i(d):\sigma_d\rarr \sigma_j}$,
where $\topp{o_1\cdots o_j}:\sigma_j$;

\item $\dom{\xi_i}$ consists of $c$ and all continuation names introduced by P in $o_1 p_1\cdots o_i p_i$;

\item for all $d\in\dom{\xi_i}$, $\xi_i(d)=\topp{o_1\cdots o_j}$ if $d$ was introduced in $p_j$ (we regard $c$ as being introduced
in $p_0$ and define $\topp{o_1\cdots o_0}=\circ_{\tau'}$);

\item ${\phi_i}$ consists of $\circ\uplus \{\errn\}\uplus \phi\uplus\{c\}$ and all names introduced in $o_1 p_1\cdots o_i p_i$;

\item for all $0\le i\le n$, $h_i=\{\tick \mapsto i \}$, where $\tick:\reftype\,{\Int}$.
\end{itemize}
\end{lemma}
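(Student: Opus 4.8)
The plan is to argue by reverse induction on $i$, from $i=n$ down to $i=0$, building each $\CC_i$ from $\CC_{i+1}$ by prepending the round $o_{i+1}\,p_{i+1}$. The single reference $\tick:\reftype{\,\Int}$ will act as a program counter: the invariant $h_i=\{\tick\mapsto i\}$ records how many rounds have been consumed, and every function value in $\gamma_i$ (and every continuation value reachable through $\xi_i$) will be GOSC code that first reads $!\tick$ and branches on it, reacting only when the current round is one in which O is scheduled to communicate on that very name, and falling through to a divergent term $\Omega$ otherwise. This discipline is exactly what forces $\Treven{\CC_i}$ to consist of \emph{precisely} the even-length prefixes of $t_i$ (up to the renamings permitted by Remark~\ref{rem:invar}): an O-action other than $o_{i+1}$ either invokes a name whose dispatch has no branch for $!\tick=i$, or supplies ground constants that fail the $\mathit{assert}$ check, and in both cases the subsequent P-computation diverges.

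For the base case $i=n$ one takes $\gamma_n$ to map every function name in its domain to $\lambda x.\Omega$ and every continuation name $c'$ to a diverging evaluation context such as $(\lambda z.\Omega)\,\bullet$, with $\xi_n,\phi_n$ fixed by the stated invariants and $h_n=\{\tick\mapsto n\}$; no O-action then yields a further even-length trace, so $\Treven{\CC_n}=\{\emptytrace\}$. For the inductive step one fixes $\CC_{i+1}$ and analyses $o_{i+1}$. If $o_{i+1}=\questO{f}{A}{c}$, the code reached when O invokes $\gamma_i(f)$ is, on the branch $!\tick=i$, of the form $\callcc(y.\ \ass{x}{A};\ \tick:=i{+}1;\ M_{p_{i+1}})$: the $\callcc$ captures the freshly introduced continuation $c$ in $y$, so that P may later answer $c$ out of bracketing order via $\throw$; $\mathit{assert}$ checks the ground components of $A$; and $M_{p_{i+1}}$ realises $p_{i+1}$. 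If $o_{i+1}=\ansO{c'}{A'}$, the analogous code is reached by filling the evaluation context $\gamma_i(c')=(\lambda x.\ \cdots)\,\bullet$. Realising $p_{i+1}$ splits further: for $p_{i+1}=\questP{g}{B}{c''}$ one evaluates $(\lambda y.\,M')\,(g\,V_B)$, where $V_B$ is the concrete value obtained from the abstract value $B$ by substituting, for each function name fresh in $p_{i+1}$, the GOSC term that $\gamma_{i+1}$ assigns to it, and where the wrapper $(\lambda y.M')\,\bullet$ becomes exactly $\gamma_{i+1}(c'')$; for $p_{i+1}=\ansP{c''}{B}$ one either lets the running subterm reduce to $V_B$ (when $c''$ is the continuation currently in force, as happens right after an O-question that introduced $c''$) or else throws $V_B$ to the continuation handle captured earlier for $c''$. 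One then checks directly that $\CC_i\iRed{o_{i+1}\,p_{i+1}}\CC_{i+1}$ and that all the bulleted invariants persist --- in particular that $\img{\gamma_i}$ uses only $\lambda$-abstraction, $\callcc$, $\throw$, conditionals and the single reference $\tick:\reftype{\,\Int}$, hence is GOSC syntax, and that $h_i=\{\tick\mapsto i\}$.

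The crux, and the main obstacle, is to justify why, at the point where the code of round $i+1$ must emit $p_{i+1}$, the name on which $p_{i+1}$ communicates --- $g$ for a P-question, $c''$ for a P-answer --- is actually in lexical scope. This is precisely where P-visibility of $t$ enters: by the definition of $\pav{\cdot}$, the clause discharging the visibility obligation for $p_{i+1}$ ($g\in\pav{o_1 p_1\cdots o_{i+1}}$, resp.\ $c''\in\pav{\cdots}$) follows a chain of P-introduced names back through the intervening questions and answers to the moment that name was handed to P by O; the construction lays out its $\lambda$- and $\callcc$-binders so that the code executed after O travels that very chain, which is what the invariant $\nu(\gamma_i(x))\subseteq\pav{o_1 p_1\cdots o_i}$ for names $x$ introduced in $p_i$ --- together with the companion bound on $\nu(M,c)$ for the active configurations arising en route, a context-side analogue of Lemma~\ref{lem:visaux} --- expresses. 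Pinning down this alignment between the syntactic view $\pav{\cdot}$ and the binding structure of the synthesised GOSC term, and alongside it tracking which captured continuation is the top one after each round via $\topp{\cdot}$ so that the types of the dispatch branches match and every $\throw$ is well typed, is the delicate part; the remaining bookkeeping mirrors the $\HOSC$ definability argument, with heap storage of names replaced by lexical binding. Taking $i=0$ and reading off $h'=h_0$, $K'=\gamma_0(c)$, $\gamma'=\gamma_0\setminus c$ then yields Lemma~\ref{goscdef}.
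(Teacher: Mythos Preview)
Your proposal captures the essential ingredients of the paper's proof: the single reference $\tick$ as a round counter, dispatch on $!\tick$ with divergence on unscheduled rounds, the use of $\callcc$/$\throw$ for continuation capture and non-bracketed P-answers, and---the crux---that P-visibility guarantees the communication target of each P-action lies in the lexical scope established by the $\lambda$- and $\callcc$-binders accumulated along the P-view.

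There is, however, a technical inconsistency in the inductive organisation you describe. You assert both that $\gamma_n(f)=\lambda x.\Omega$ for \emph{every} $f\in\dom{\gamma_n}$ and that $\CC_i\iRed{o_{i+1}p_{i+1}}\CC_{i+1}$ at each step. But transitions in the LTS only \emph{extend} $\gamma$; existing bindings are never overwritten. Hence if the chain holds then $\gamma_0(f)=\gamma_n(f)=\lambda x.\Omega$ for every $f\in\dom{\gamma_0}=\phi\cup\{c\}$, and $\CC_0$ would diverge after any O-action, yielding only the empty trace rather than the required prefixes of $t$. The paper avoids this by running the induction on \emph{names} (in reverse order of introduction) rather than on the index $i$: for each $f$ introduced in $p_j$ one defines $\gamma_j(f)$ once and for all, containing a branch for \emph{every} round $u\in\ind{f}=\{u>j\mid o_u\text{ calls }f\}$; each branch $M_u$ may refer to $\gamma_u(g)$ for names $g$ introduced in the later action $p_u$, and this is precisely what the reverse ordering supplies. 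The chaining $\CC_i\iRed{o_{i+1}p_{i+1}}\CC_{i+1}$ then holds as a \emph{consequence} rather than as the inductive mechanism, and the ``always diverge'' shape arises only for names with $\ind{f}=\emptyset$ (in particular those introduced in $p_n$), not for every name in $\dom{\gamma_n}$.
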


\begin{proof}
Note that the heap will consist of a single reference only, which will correspond to counting steps in the translation.
At every step of the translation, the value of the reference will be used to schedule the right actions and disable others.

The above description already specifies $\phi_i$, $\dom{\gamma_i}$, $\xi_i$ and $h_i$. To complete the definition of $\CC_i$,
it remains to specify the environments $\gamma_i$. Recall that, 
we need to define $\gamma_0(x)$ for $x\in \phi\cup\{c_P\}$ and, in other cases,
$\gamma_j(x)$ ($x\in\Names$) will be defined for all $j \ge i$ if $x$ was introduced by P in $p_i$. 
Recall also that once $\gamma_j(x)$ is defined, it never changes. Hence if $x$ was introduced by in $p_i$,
we will only specify $\gamma_i(x)$ on the understanding that $\gamma_{i'}(x)=\gamma_i(x)$ for all $i' > i$.

We define $\gamma_i(x)$ by induction using the reverse order of name introduction in $t$, i.e. when defining $\gamma_i(x)$
we will refer to $\gamma_{i'}(y)$, where $y$ is introduced in a later move in $t$. In particular, the names $\phi\cup\{c\}$ are deemed to
be introduced first. Once $\gamma_i(x)$ is defined, we will argue that $\nu(\gamma_i(x)) \subseteq\pav{o_1 p_1\cdots o_i}$.

\begin{itemize}
\item Suppose $f:\sigma_f\rarr\tau_f$ is a function name introduced by P in action $p_i$ ($1\le i\le n$) 
or $f\in\phi$, in which case we let $i=0$.
Consider all subsequent occurrences of $f$ in $t$: suppose $\ind{f}= \{ i < u \le n  \,\,|\,\, o_u = \questO{f}{A_u}{c_u} \}$,
i.e. $\ind{f}$ contains all the time points when it is necessary to respond to $\questO{f}{A'}{c'}$.
Then we let 
\[
\gamma_{i}(f)= \lambda x.\,  (\tick := !\tick+1); \ifte{(!\tick\in \mathcal{I}_{f})}{(\ass{x}{A_{!\tick}};M_{!\tick})}{\Omega},
\]
where $(\ass{x}{A_{!\tick}};M_{!\tick})$ is shorthand for code that performs case distinction on $!\tick$ and
directs reduction to $(\ass{x}{A_u};M_{u})$ for $u=!\tick\in \mathcal{I}_{f}$.
The term $\ass{x}{A_u}$ has been defined  earlier, so we specify $M_u$ ($u\in \ind{f}$), aiming to have $\seq{x:\sigma_f}{M_u:\tau_f}$
in each case.  $M_u$ will depend on the shape of $p_u$.
Note that if $\ind{f}=\emptyset$, i.e. $f$ is not used in $t$, then the construction degenerates to 
$\gamma_{i}(f)= \lambda x.\,  (\tick := !\tick+1); \Omega$.

\begin{description}
\item[$p_u=\ansP{c_u'}{A_u'}$] As $u>i$, $\gamma_u$ is already defined for all names in $A_u'$.
Let  $V={A_u'}\{\gamma_{u}\}$. Recall that $o_u = \questO{f}{A_u}{c_u}$.
We let 
\[
M_u = \callcc{(\,\,y.\,\, (\throwto{V}{\cont{(\bullet,c_u')}})\,\,[y/\cont{(\bullet,c_u)}]\,\,[\pi x/A_u]\,\,)}.
\]
\begin{itemize}
\item $[y/\cont{(\bullet,c_u)}]$  is meant to mimic the reversal of the reduction rule for $\callcc$: because after $o_u$
the continuation name in the active configuration will be $c_u$,  the then current continuation will be ${\cont{(\bullet,c_u)}}$.
Since all continuation names $c'$ are only ever used via the term $\cont{(\bullet,c')}$, the substitution $[y/\cont{(\bullet,c_u)}]$
will remove all occurrences of $c_u$ from $V$.
\item The substitution $[\pi x/A_u]$ has been defined before the first definability proof.
\end{itemize}
Note that, because of ${\mathrm{throw}}$, $M_u$ can  indeed be given type $\tau_f$.
Overall, the shape of $M_u$ guarantees the desired progression ($o_u p_u$) at time $u$
(the configuration will reduce to $(V,c_u', \cdots)$, to be followed by  $p_u=\ansP{c_u'}{A_u'}$).

Because we can assume $\nu(\gamma_u(x)) \subseteq  \pav{o_1\cdots o_u}$ for any $x$ introduced in $p_u$ (IH),
we have $\nu(V)=\nu {({A_u'}\{\gamma_{u}\})}\subseteq \pav{o_1\cdots o_u}$.
As all names introduced in $o_u$ will be substituted for, we have $\nu(M_u) \subseteq \pav{o_1\cdots o_i}\cup\{c_u'\}$.
However, by P-visibility, we have $c_u'\in\pav{o_1\cdots o_u}$, so either $c_u'=c_u$ or $c_u'\in\pav{o_1\cdots o_i}$.
Either way, we can conclude $\nu(M_u) \subseteq \pav{o_1\cdots o_i}$, i.e. $\nu(\gamma_i(f)) \subseteq  \pav{o_1\cdots o_i}$.

\item[$p_u=\questP{f'}{A_u'}{c_u'}$] 
As in the previous case, by IH, $\gamma_{u}$ is already defined for all names in $A_u'$ and $c_u'$.
Let $V={A_u'}\{\gamma_{u}\}$ and $K= \gamma_{u}(c_u')[\bullet]:\sigma_{c_u'}\rarr\sigma_j$, where $\topp{o_1\cdots o_u}:\sigma_j$ (IH).
Note that $\topp{o_1\cdots o_u}=c_u$ in this case, i.e. $\tau_f=\sigma_j$.
We let 
\[
M_u = K[f' V]\,[\pi x/A_u].
\]
The shape of $M_u$ then guarantees the right progression in the $u$th step $o_u p_u$ 
(after $o_u$ the LTS will reach a configuration of the form $(\gamma_{u}(c_u') [f'  V], \topp{o_1\cdots o_u},\cdots)$, 
from which $p_u=\questP{f'}{A_u'}{c_u'}$ can follow).

Because $\nu(V),\nu(\gamma_{u}(c_u'))\subseteq \pav{o_1\cdots o_u}$ and all names introduced in $o_u$ 
are substituted for above,
we have $\nu(M_u) \subseteq \pav{o_1\cdots o_i}\cup\{f'\}$. 
By P-visibility, $f'\in\pav{o_1\cdots o_i}$, so we can conclude that  
 $\nu(M_u) \subseteq \pav{o_1\cdots o_i}$, i.e. $\nu(\gamma_i(f)) \subseteq  \pav{o_1\cdots o_i}$.
\end{description}

\item 
Suppose now that $d:\sigma_d$  is a continuation name introduced by P in action $p_i$ ($1\le i\le n$), 
or $d=c$, in which case we let $i=0$.
Let us consider all subsequent occurrences of $d$ in $t$:
suppose $\ind{d}= \{ i < u \le n  \,\,|\,\, o_u = \ansO{d}{A_u} \}$.
Then we let 
\[
\gamma_{i}(d)= (\lambda x.\,  (\tick := !\tick+1); \ifte{(!\tick\in \mathcal{I}_{d})}{(\ass{x}{A_{!\tick}};M_{!\tick})}{\Omega})[\bullet]
\]
where the terms $M_u$ ($u\in\ind{c}$) are the same as in the previous case, though this time we aim for
$\seq{x:\tau_d}{M_u:\tau_j}$, where $d:\tau_d$ and $\topp{o_1\cdots o_u}: \tau_j$ (recall that $\xi_{i}(d) = \topp{o_1\cdots o_u}$).
As argued above, in the second case $M_u$ will have the required type and in the first case it can be forced thanks to throw.

Similarly, we can conclude that  $\nu(\gamma_{i}(d)) \subseteq \pav{o_1\cdots o_i}$.
\end{itemize}
This completes the definition of configurations. 
They evolve as required by construction, because the definition of $\gamma_i$ is compatible
with the evolution of the $\GOSC[\HOSC]$ LTS:
at each stage, the value of the clock $\tick$ is incremented and the corresponding term $M_u$ is selected.

It is is easy to check that the syntax used in the construction belongs to $\GOSC$ only.
\end{proof}

\subsection{Proof of Theorem~\ref{gosccomplete}}

\begin{proof}
We follow the same path as in the proof of Theorem~\ref{hosccomplete} except that, in this case, we will have 
$t,t_1\in \TrR{\GOSC}{\cconf{M_1}{\rho_{\vec{A_i}},c}}$.
Consequently, we can conclude that $t_2 =t_1^{\bot}\,\questP{\errn}{()}{c'}$ is P-visible and invoke Lemma~\ref{goscdef} (instead of Lemma~\ref{hoscdef})
to obtain $C_O$ that corresponds to $h,K,\gamma$ from $\GOSC$. 
Because $k,K,\gamma$ are in $\GOSC$, we can then appeal to the assumption $\seq{\Gamma}{M_1 \ciupre{\GOSC}{err} M_2}$
and complete the proof like for Theorem~\ref{hosccomplete}.
\end{proof}


\section{Additional material for Section~\ref{sec:hoshosc} (HOS[HOSC])}

\subsection{Proof of Lemma~\ref{lem:pbracket}}

To enable a proof by induction we generalize the Lemma as follows.
\begin{lemma}
Consider $\CC_O=\cconf{h, K,\gamma}{\vec{\gamma_i}, c}$, where $h,K,\gamma$ are from $\HOS$ 
and $(\vec{A_i},\vec{\gamma_i})\in \AVal{\gamma}{\Gamma}$. Let $t\in\TrR{\HOSC}{\CC_O}$ and suppose $\CC_O\ired{t'}\CC$.
\begin{itemize}
\item If $t'$ is of odd length then $\CC=\conf{M,c',\cdots}$ and $c'=\topp{t'}$.
\item If $t'$ is of even length and $t' = t \questP{f}{A}{c'}$ then $\CC=\conf{\cdots,\xi,\cdots}$ and $\xi(c')=\topp{t}$.
\item If $t'$ is of even length and $t'=t \ansP{c'}{A}$ then $c'=\topp{t}$.
\end{itemize}
\end{lemma}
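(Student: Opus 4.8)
The plan is to prove the generalized statement by induction on the number of transitions (including $\tau$-transitions) in the derivation $\CC_O\ired{t'}\CC$. This mirrors the structure of the visibility proof (Lemma~\ref{lem:visaux}), but now tracking the continuation name that can legitimately be answered by P. The base case is $t'=\emptytrace$, i.e. $\CC=\CC_O=\cconf{h, K,\gamma}{\vec{\gamma_i}, c}$. Here $t'$ has even length and is not of the form $t\,\questP{f}{A}{c'}$ or $t\,\ansP{c'}{A}$ (it is empty), so all three clauses hold vacuously. In fact it is cleaner to start the induction from the first O-action: the trace must begin with an O-move, and after that first move $o_1$ we reach an active configuration whose continuation name is exactly $\topp{o_1}$ — either $\tern_{\tau'}$ (if $o_1$ is the answer $\ansO{c}{A}$ to the initial continuation) or the continuation $c_1$ introduced by $o_1$ (if $o_1$ is a question). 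This matches the first defining clauses of $\topp{-}$.

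For the inductive step, suppose the claim holds for $\CC_O\ired{t'}\CC$ and consider one more transition $\CC\ired{x}\CC'$, where $x$ is a label or $x=\tau$. If $x=\tau$, then by Lemma~\ref{lem:hosinv} the continuation name does not change (this is exactly where the $\HOS$ restriction is essential — $\mathrm{throw}$ is absent), and $\xi$ is unchanged, so the invariant for the active configuration is preserved. If $x$ is a P-action, there are two cases. For $x=\ansP{c'}{A}$: by the induction hypothesis applied to the odd-length $t'$, the active configuration $\CC$ has continuation name $c'$ equal to $\topp{t'}$, which is precisely what the third clause requires for $t'x = t'\,\ansP{c'}{A}$. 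For $x=\questP{f}{A}{c'}$: the $(PQ)$ rule sets $\xi(c') = c$ where $c$ is the continuation name of $\CC$, and by the induction hypothesis that name is $\topp{t'}$, giving the second clause. If $x$ is an O-action, there are again two cases matching the inductive clauses of $\topp{-}$. For $x=\questO{f}{A'}{c'}$: the resulting active configuration has continuation name $c'$, and $\topp{t'x}=c'$ by the third defining clause of $\topp{-}$, so the first clause holds. For $x=\ansO{c'}{A'}$: here $t'x$ has odd length and ends with an O-answer; the resulting active configuration $\CC'$ has continuation name $\xi(c')$ (from the $(OA)$ rule), so I must show $\xi(c')=\topp{t'\,\ansO{c'}{A'}}$. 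Now $c'$ was introduced by some earlier P-question $p_j=\questP{f}{A''}{c'}$, and the second inductive clause of $\topp{-}$ gives $\topp{t'\,\ansO{c'}{A'}} = \topp{t_1}$ where $t_1$ is the prefix ending just before $p_j$. But by the (even-length) induction hypothesis applied to the prefix $t_1\,\questP{f}{A''}{c'}$, we have $\xi(c')=\topp{t_1}$, and $\xi$ has not changed between then and now for the entry at $c'$ (P-introduced continuation names are assigned once in $\xi$ and never reassigned). This closes the case.

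The main obstacle — and the only genuinely delicate point — is the $\ansO{c'}{A'}$ case, because it requires linking the current value $\xi(c')$ back through the trace to the moment $c'$ was introduced, and confirming that the definition of $\topp{-}$ "jumps over" the intervening segment in exactly the way $\xi$ records it. This needs the auxiliary fact that entries of $\xi$ for P-introduced continuation names are immutable once set, which follows by inspection of the LTS rules (only $(PQ)$ writes to $\xi$, and always at a fresh name $c'$). With the generalized statement in hand, the original Lemma~\ref{lem:pbracket} follows immediately: for any prefix $t'\,\ansP{c'}{A}$ of a trace in $\TrR{\HOSC}{\CC_O}$, the third clause gives $c'=\topp{t'}$, which is exactly the P-bracketing condition of Definition~\ref{def:pbra}.
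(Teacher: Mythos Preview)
Your argument follows essentially the same approach as the paper's: induction on the number of transitions, with Lemma~\ref{lem:hosinv} handling $\tau$-steps, the even-length clauses read off from the odd-length invariant via the $(PA)/(PQ)$ rules, and the $\ansO$ case closed by chasing the justifier of the answered name through $\xi$.

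There is one small omission in your $\ansO{c'}{A'}$ case: you assert that $c'$ was introduced by some earlier P-question $p_j$, but $c'$ may also be the \emph{initial} continuation name $c$, which sits in $\dom{\gamma}$ from the start and is not introduced by any trace action. In that situation the first defining clause of $\topp{-}$ applies, giving $\topp{t'\,\ansO{c}{A}}=\tern_{\tau'}$, and one must check $\xi(c)=\tern_{\tau'}$, which holds by construction of $\cconf{h,K,\gamma}{\vec{\gamma_i},c}$. You touched on this in your discussion of the first O-move but did not carry it into the general inductive step; the paper handles it as an explicit subcase.
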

\begin{proof}
By induction on the number of transitions in $\CC_O\ired{t}\CC$. In the base case (no transitions) the Lemma holds vacuously.

Note that the Lemma is preserved by silent transitions ($t$ is of odd length then) by Lemma~\ref{lem:hosinv}.

Suppose $\CC_O\ired{t} \CC\ired{a}\CC'$.
\begin{itemize}
\item The even-length cases follow immediately the odd-length case due to the shape of LTS rules.
\item Suppose $t' = t a$ is of odd length.
\begin{itemize}
\item If $a=\questO{f}{A}{c''}$ then  $\topp{t'} = c''$ and $c'=c''$, so the Lemma holds.
\item If $a=\ansO{c''}{A}$ then $c'=\xi(c'')$. 
\begin{itemize}
\item If $c''=c$ then $c'=\tern$ and indeed $\topp{t'}=\tern$.
\item Otherwise $\topp{t'} = \topp{t''}$, where $c''$ is introduced by an action (question) after $t''$. Then, by IH, $\xi(c'')=\topp{t''}$.
Because $\topp{t''}=\topp{t'}$, we get $c'=\topp{t'}$, as required.
\end{itemize}
\end{itemize}
\end{itemize}
\end{proof}

\subsection{Proof of Theorem~\ref{hossound}}

      \begin{proof}
Suppose $\trsem{\HOS}{\seq{\Gamma}{M_1}}\subseteq \trsem{\HOS}{\seq{\Gamma}{M_2}}$.
Consider  $\Sigma, h,K,\gamma$ (as in the definition of $\ciuapperr^\HOS$) such that $(K[M_1\substF{\gamma}],h)\opredtererr$.
In particular, $h,K,\gamma$ consist of $\HOS$ syntax.
Suppose $(\vec{A_i},\vec{\gamma_i})\in\AVal{\gamma}{\Gamma}$ and $c:\sigma$ ($c\neq\tern$).
By Lemma~\ref{lem:cor} (left-to-right), there exist $t,c'$ such that
$t\in \TrR{\HOSC}{\cconf{M_1}{\rho_{\vec{A_i}},c}}$ and $t^\bot\, \questP{\errn}{()}{c'} \in \TrR{\HOSC}{\cconf{h, K,\gamma}{\vec{\gamma_i}, c} }$.
By  Lemma~\ref{lem:pbracket}, $t^\bot\, \questP{\errn}{()}{c'}$ is P-bracketed. Thus, $t$ is O-bracketed and, by Lemma~\ref{lem:obracket} (right-to-left),  
$t\in\trsem{\HOS}{\cconf{M_1}{\rho_{\vec{A_i}},c}}$.
From $\trsem{\HOS}{\seq{\Gamma}{M_1}}\subseteq \trsem{\HOS}{\seq{\Gamma}{M_2}}$, we get $t\in \TrR{\HOS}{\cconf{M_2}{\rho_{\vec{A_i}},c}}$.
By Lemma~\ref{lem:obracket} (left-to-right), $t\in \TrR{\HOSC}{\cconf{M_2}{\rho_{\vec{A_i}},c}}$.
Because $t\in \TrR{\HOSC}{\cconf{M_2}{\rho_{\vec{A_i}},c}}$ and 
$t^\bot\, \questP{\errn}{()}{c'} \in \TrR{\HOSC}{\cconf{h, K,\gamma}{\vec{\gamma_i}, c} }$, by Lemma~\ref{lem:cor} (right-to-left), we
can conclude $(K[M_2\substF{\gamma}],h)\opredtererr$. Thus, $\seq{\Gamma}{M_1 \ciuapperr^\HOS M_2}$.
  \end{proof}

\subsection{Proof of Lemma~\ref{lem:hosdef}}

\begin{proof}
We take advantage of the definability result for $\HOSC$ (Lemma~\ref{hoscaux}) and argue that, for P-bracketed traces, continuation-related syntax can be eliminated.
This will follow from the careful integration of  $\topp{}$ in the construction.

Indeed, the only place where ``throw" is needed in the construction is to transition from configuration $D_i$ to $E_i$.
The second component (current continuation) in $D_i$ is equal to $\topp{o_1\cdots o_{i+1}}$, whereas
the second component in $E_i$ in this case is $c_O^{j'}$. For a P-bracketed trace, the two continuation names will be the same (Definition~\ref{def:pbra}).
Consequently, the use of ``throw" in this case is trivial: it will have the form 
$(\throwto{V_A}{\cont{(\bullet,c)}}, c,\cdots)$, where $c=c_O^{j'}$, because the continuation in $\cor{j'}$ is $\cont{(\bullet,c_O^{j'})}$ by one of our invariants.
This use of ``throw" can be replaced simply by $(V_A,c,\dots)$, i.e. occurrences of ``throw" can be eliminated.

Next, one observes that references to continuations ($\cor{j}$) are redundant as well, because they are only used in 
connection with ``throw", and we already know that ``throw" is redundant.

Finally, ``callcc" is redundant, because the only purpose of invoking it was to record continuations in a reference, 
and we know from the previous point that such references will not be needed.

Overall this yields a construction that involves  only $\HOS$ syntax.
\end{proof}

\subsection{Proof of Theorem~\ref{hoscomplete}}

  \begin{proof}
We follow the same path as in the proof of Theorem~\ref{hosccomplete} except that, in this case, we have 
$t,t_1\in \TrR{\HOS}{\cconf{M_1}{\rho_{\vec{A_i}},c}}$.
Consequently, we can conclude that $t_2 =t_1^{\bot}\,\questP{\errn}{()}{c'}$ is P-bracketed and invoke Lemma~\ref{hosdef} (instead of Lemma~\ref{hoscdef})
to obtain $C_O$ that corresponds to $h,K,\gamma$ from $\HOS$. 
Because $k,K,\gamma$ are in $\HOS$, we can appeal to the assumption $\seq{\Gamma}{M_1 \ciuapperr^\HOS M_2}$
and complete the proof like for Theorem~\ref{hosccomplete}.
\end{proof}


\section{Additional material for Section~\ref{sec:goshosc} (GOS[HOSC])}
\label{apx:gos}

\subsection{GOS[HOSC] LTS}
\begin{figure}
\[
\begin{array}{l|l@{}ll}
 (P\tau) & \conf{M,c,\gamma,\xi,\phi,h,\ViewF}  &\quad  \ired{\ \tau \ }
 & \conf{N,c',\gamma,\xi,\phi,h',\ViewF} \\
 & \multicolumn{3}{l}{\text{ when } (M,c,h) \ered (N,c',h')}\\
 (PA) & \conf{V,c,\gamma,\xi,\phi,h,\ViewF} 
 & \ired{\ansP{c}{A}} 
 & \conf{\gamma \cdot \gamma', \xi,
     \phi\uplus\nu(A),h, \ViewF, \ViewF(c) \uplus \nu(A),c'} \\
           & \multicolumn{3}{l}{\text{ when } c:\sigma,\,  (A,\gamma') \in \AVal{V}{\sigma},\,\xi(c)=c' } \\  
 (PQ) & \conf{K[fV],c,\gamma,\xi,\phi,h,\ViewF} 
 & \ired{\questP{f}{A}{c'}} 
 & \conf{  \gamma \cdot\gamma'\cdot [c' \mapsto K],
    \xi\cdot[c' \mapsto c],\phi\uplus\phi',h, \ViewF, \ViewF(f) \uplus \phi' ,c'} \\
 & \multicolumn{3}{l}{\text{ when } f:\sigma \rarr \sigma', \,(A,\gamma') \in \AVal{V}{\sigma}, \, c':\sigma' \text{ and } \phi'=\nu(A)\uplus\{c'\}}\\
 (OA) & \conf{\gamma,\xi,\phi,h,\ViewF,\View,c''} 
 & \ired{\ansO{c}{A}} 
 & \conf{K[A],c',\gamma,\xi,
    \phi \uplus \nu(A),h, \ViewF\cdot[\nu(A)\mapsto \View]} \\
 & \multicolumn{3}{l}{\text{ when } c \in \View,\, c=c'',\, c:\sigma,\, A:\sigma,\, \gamma(c) = K,\, \xi(c)=c'}\\
 (OQ) & \conf{\gamma,\xi,\phi,h, \ViewF,\View,c''} 
 & \ired{\questO{f}{A}{c}}
 & \conf{V A,c,\gamma,\xi\cdot[c\mapsto c''],   \phi \uplus \phi',h,\ViewF \cdot [\phi' \mapsto \View]} \\
 & \multicolumn{3}{l}{\text{ when } f \in \View,\, f:\sigma \rarr \sigma',\, A:\sigma,\, c:\sigma',\,
  \gamma(f) = V   \text{ and } \phi' = \nu(A)\uplus \{c\}}\\
\cutout{
 (IOQ) & \conf{\Gamma \vdash M:\tau} 
 & \ired{\questOinit{\vec{A}}{c}} 
 & \conf{M\{\delta\},c,\emptymap,\emptymap             ,[c \mapsto \varnothing],
    \phi \uplus \{c\},\emptymap} \\
 & \multicolumn{3}{l}{\text{ when } (\vec{A},\delta,\phi) \in \sem{\Gamma}\}}}\\
  \multicolumn{4}{l}{\text{Given $N\subseteq\Names$, $[N\mapsto\View]$ stands for the map $[n\mapsto\View\,|\, n\in N]$.}}
  \end{array}
 \]

\caption{$\GOS[\HOSC]$ LTS}\label{fig:gos}
\end{figure}

Recall that, given a $\Gamma$-assignment $\rho$, term $\seq{\Gamma}{M:\tau}$ and  $c\in\CNames_\tau$, 
the active configuration  $\cconf{M}{\rho,c}$ was defined by
 $\cconf{M}{\rho,c} = \conf{M\{\rho\}, c, \emptyset, \emptyset, \nu(\rho)\cup\{c\}, \emptyset}$.
 We need to upgrade it to the LTS by initializing the new components:
 $\cconf{M,\mathit{vis,bra}}{\rho,c}= \conf{M\{\rho\}, c, \emptyset, [c\mapsto \bot], \nu(\rho)\cup\{c\}, \emptyset,\emptyset}$.
 \begin{definition}
 The \boldemph{$\GOS[\HOSC]$ trace semantics} of a cr-free HOSC term $\seq{\Gamma}{M:\tau}$ is defined to be
 $\trsem{\GOS}{\seq{\Gamma}{M:\tau}} = \{  ((\rho,c) ,t)\,|\, \textrm{$\rho$ is a $\Gamma$-assignment},\,c:\tau,\,t\in 
 \TrR{\GOSC}{\cconf{M,\mathit{vis,bra}}{\rho,c}} \}$.
   \end{definition}
 By construction and from the $\GOSC$ and $\HOS$ sections, it follows that
\begin{lemma}\label{lem:ovis}
$t\in \TrR{\GOS}{\cconf{M,\mathit{vis,bra}}{\rho,c}}$ iff $t\in  \TrR{\HOSC}{\cconf{M}{\rho,c}}$ and $t$ is O-visible and O-bracketed.
\end{lemma}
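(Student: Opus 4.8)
The plan is to reduce the statement to the analogous results already established for the two intermediate fragments: the $\GOSC[\HOSC]$ characterization (traces from $\cconf{M,\mathit{vis}}{\rho,c}$ in the $\GOSC[\HOSC]$ LTS are exactly the O-visible traces from $\cconf{M}{\rho,c}$) and Lemma~\ref{lem:obracket} (traces from $\cconf{M,\mathit{bra}}{\rho,c}$ in the $\HOS[\HOSC]$ LTS are exactly the O-bracketed ones). The key observation is that the $\GOS[\HOSC]$ LTS of Figure~\ref{fig:gos} is obtained from the $\HOSC[\HOSC]$ LTS by superimposing \emph{two orthogonal} pieces of bookkeeping: the visibility data $\ViewF$ together with the derived set $\View$ and the side conditions $f\in\View$, $c\in\View$ on O-actions, taken verbatim from the $\GOSC[\HOSC]$ LTS; and the bracketing data, i.e. the augmented map $\xi$ (now also recording, for each O-continuation, the continuation that becomes top when P answers) together with the extra slot $c''$ in passive configurations and the side condition $c=c''$ on O-answers, taken verbatim from the $\HOS[\HOSC]$ LTS. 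Neither enrichment inspects the other's data, and the term component, heap, $\phi$, and all $\tau$-transitions coincide with those of the $\HOSC[\HOSC]$ LTS.

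First I would make this precise with two forgetful projections. Let one projection erase the bracketing data (forget the $c''$ slot and restrict $\xi$ to its $\HOSC$ part, namely the sub-map whose domain is the P-continuations) and the other erase the visibility data (forget $\ViewF$ and $\View$). A routine induction on the number of transitions then shows that the first projection maps any run $\cconf{M,\mathit{vis,bra}}{\rho,c}\ired{t}\CC$ in the $\GOS[\HOSC]$ LTS to a run $\cconf{M,\mathit{vis}}{\rho,c}\ired{t}\CC'$ in the $\GOSC[\HOSC]$ LTS, and the second to a run $\cconf{M,\mathit{bra}}{\rho,c}\ired{t}\CC''$ in the $\HOS[\HOSC]$ LTS; the invariant carried by the induction is simply that the $\ViewF,\View$ components of $\CC$ agree with those of the reachable $\GOSC[\HOSC]$ configuration and the $\xi$/$c''$ components agree with those of the reachable $\HOS[\HOSC]$ configuration, which holds because the update clauses for these components in Figure~\ref{fig:gos} are copied unchanged from the respective LTSs and their side conditions are only stronger. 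Consequently $t\in\TrR{\GOS}{\cconf{M,\mathit{vis,bra}}{\rho,c}}$ implies $t\in\TrR{\GOSC}{\cconf{M,\mathit{vis}}{\rho,c}}$ and $t\in\TrR{\HOS}{\cconf{M,\mathit{bra}}{\rho,c}}$, so by the two cited lemmas $t\in\TrR{\HOSC}{\cconf{M}{\rho,c}}$ and $t$ is both O-visible and O-bracketed.

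For the converse I would run the two filtered LTSs in parallel. Assume $t\in\TrR{\HOSC}{\cconf{M}{\rho,c}}$ is O-visible and O-bracketed. By the $\GOSC$ lemma there is a run of $t$ from $\cconf{M,\mathit{vis}}{\rho,c}$, and by Lemma~\ref{lem:obracket} a run of $t$ from $\cconf{M,\mathit{bra}}{\rho,c}$; since both runs are driven by the same trace $t$ and both project to the same underlying $\HOSC[\HOSC]$ run (the term, heap, $\phi$, and the choices of fresh names introduced at each step are determined by $t$ up to the usual renaming, cf.\ Remark~\ref{rem:invar}), the two runs can be zipped together into a single run in the $\GOS[\HOSC]$ LTS: at each step the $f\in\View$ / $c\in\View$ side condition holds because the $\GOSC$ run takes that step, and the $c=c''$ side condition holds because the $\HOS$ run takes it. This yields $t\in\TrR{\GOS}{\cconf{M,\mathit{vis,bra}}{\rho,c}}$, completing the equivalence.

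The only genuine obstacle is bookkeeping discipline: one must verify that the composite map $\xi$ used in the $\GOS[\HOSC]$ LTS carries \emph{both} roles — the $\HOSC$ ``where a P-answer to $c$ is passed'' role used in the (PA) and (OA) rules, and the $\HOS$ ``which O-continuation becomes top'' role integrated as $[c\mapsto c'']$ in (OQ) — without the two uses interfering, i.e.\ that the domains of the two sub-maps (P-continuations versus O-continuations) stay disjoint throughout any reachable configuration. This disjointness is exactly the one already exploited in the $\HOS[\HOSC]$ development, so no new argument is required; once it is in place, the two projections and the parallel-run construction go through mechanically and the Lemma follows by conjoining the $\GOSC$ and $\HOS$ characterizations.
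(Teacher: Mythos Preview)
Your proposal is correct and is essentially a detailed spelling-out of what the paper leaves implicit: the paper states this lemma with only the preamble ``By construction and from the $\GOSC$ and $\HOS$ sections, it follows that'' and gives no further argument. Your projection-and-zipping explanation is exactly the content behind that phrase, so the approaches coincide; you have simply been more explicit about the orthogonality of the two pieces of bookkeeping and the disjointness of the two roles of $\xi$.
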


 \begin{lemma}[Definability]\label{gosdef}
Suppose $\phi\uplus\{\errn\}\subseteq\FNames$
and $t$ is an even-length P-bracketed and P-visible $( \{\circ_{\tau'},\errn\},\phi\uplus\{c\})$-trace starting with an O-action.
There exists a passive configuration $\CC$ such that the even-length traces
 $\TrR{\HOSC}{\CC}$ are exactly the even-length prefixes of $t$
(along with all renamings that preserve types and $\phi\uplus \{c,\circ_{\tau'},\errn\}$).
Moreover, $\CC=\conf{\gamma\cdot [c\mapsto K], \{ c\mapsto \tern_{\tau'} \}, \phi\uplus \{c,\circ_{\tau'},\errn\},h}$,
where 
$h,K,\gamma$ are built from $\GOS$ syntax.
\end{lemma}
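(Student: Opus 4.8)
The plan is to combine the two definability constructions already available: the one for $\GOSC$ (Lemma~\ref{goscdef}), which handles P-visibility by replacing continuation references with variable bindings and a single step-counting reference, and the trimming argument used to pass from Lemma~\ref{hoscdef} to Lemma~\ref{lem:hosdef}, which removes all continuation-specific syntax from a construction once the simulated trace is P-bracketed. Concretely, since $\GOS$ is the intersection of $\GOSC$ and $\HOS$ (both for types and for terms), it suffices to run the $\GOSC$ construction on $t$ and then observe that P-bracketing makes its remaining $\callcc$/$\throw$/$\mathrm{cont}$ fragment dead, so that it can be deleted, leaving $\GOS$ syntax.

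First I would rerun, verbatim, the auxiliary construction in the proof of Lemma~\ref{goscdef} on the trace $t$. Because $t$ is in particular P-visible, that construction goes through unchanged and yields passive configurations $\CC_i=\conf{\gamma_i,\xi_i,\phi_i,h_i}$ with $h_i=\{\tick\mapsto i\}$, $\tick:\reftype\,\Int$, whose even-length traces are exactly the even-length prefixes of $o_{i+1}p_{i+1}\cdots o_n p_n$ together with their type- and $\phi_i$-preserving renamings. Taking $\CC=\CC_0$ and extracting $h'=h_0$, $K'=\gamma_0(c)$, $\gamma'=\gamma_0\setminus c$ (exactly as in the derivation of Lemma~\ref{goscdef} from its auxiliary lemma) gives the claimed shape $\CC=\conf{\gamma\cdot[c\mapsto K],\{c\mapsto\tern_{\tau'}\},\phi\uplus\{c,\circ_{\tau'},\errn\},h}$. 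Inspecting that construction, the only non-$\GOSC$ syntax it ever introduces is inside the terms $M_u$ built for the sub-case $p_u=\ansP{c'_u}{A'_u}$, where a $\callcc$ is used to expose the current continuation and a $\throw$ to $\contt{\sigma}{(\bullet,c'_u)}$ delivers the packaged abstract value; everything else (the $\tick$ reference, $\lambda$-abstractions and applications, asserts, conditionals, arithmetic, and the components $\gamma_i(d)=(\lambda x.\cdots)\bullet$ for P-introduced continuation names) already lies in $\GOS$.

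Second I would use P-bracketing to kill that fragment. When the simulation reaches the configuration produced by the O-question $o_u=\questO{f}{A_u}{c_u}$ and P is scheduled to play $\ansP{c'_u}{A'_u}$, the current continuation name of the active configuration is $c_u=\topp{o_1\cdots o_u}$ (third clause of Definition~\ref{def:pbra}); P-bracketing of $t$ forces $c'_u=\topp{o_1\cdots o_u}$, hence $c'_u=c_u$. So the term $M_u$ — which is precisely engineered to reverse a $\callcc$ step and then $\throw$ to $\contt{}{(\bullet,c'_u)}$ — simulates a throw from $c_u$ to $c_u$, i.e.\ a plain return, and can be replaced by $V=A'_u\{\gamma_u\}$. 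Performing this replacement uniformly (following the same reverse-order-of-name-introduction induction on which the $\GOSC$ construction is organised), every $\callcc$ becomes unreachable, every $\throw$ is gone, and no $\contt{}{(\bullet,-)}$ subterm survives, so the whole construction uses only $\GOS$ syntax. The behavioural characterisation and the closure under renamings are untouched, because we have only rewritten silent ($\tau$-)reduction sequences.

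The main obstacle, and the step needing real care, is verifying that no invariant of the $\GOSC$ construction is broken by the deletion. In particular, the $\throw$ was what made $M_u$ typeable at the codomain type $\tau_f$ irrespective of the type of $V$; after deletion one needs $V:\tau_f$ directly, which holds exactly because $c'_u=c_u$ has type $\tau_f$ and the abstract value $A'_u:\tau_{c'_u}$ matches; one similarly has to check that no $\GOSC[\HOSC]$-LTS transition secretly consumed a continuation value stored in $\gamma_i$, and that the $\tick$-based scheduling of actions is unaffected. Once these routine checks are discharged, $\CC$ has all the required properties, proving Lemma~\ref{gosdef}.
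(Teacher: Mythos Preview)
Your proposal is correct and follows essentially the same route as the paper: start from the $\GOSC$ definability construction (which already handles P-visibility) and observe that P-bracketing makes every P-answer target the current continuation $\topp{o_1\cdots o_u}$, so the $\throw$/$\callcc$/$\mathrm{cont}$ machinery becomes redundant and can be excised, leaving $\GOS$ syntax. One small completeness point: you explicitly treat only the case $o_u=\questO{f}{A_u}{c_u}$, but the same $M_u$ is also used inside $\gamma_i(d)$ when $o_u=\ansO{d}{A_u}$; there the current continuation after the O-answer is $\xi(d)=\topp{o_1\cdots o_j}=\topp{o_1\cdots o_u}$ by the $\xi$-invariant and the definition of $\topp{}$, so the elimination goes through identically (and your typing remark applies verbatim).
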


\begin{proof}
Follows from the argument for $\GOSC$. We first observe that throw is needed before answer actions to adjust the continuation
from $\topo{o_1\cdots o_i}$. With P-bracketing there is no need for such adjustments. Consequently, we do not need $\callcc$,
which was used to generate continuations to be used in future adjustments.
\end{proof}

\end{document}